\newcolumntype{M}[1]{>{\centering\arraybackslash}m{#1}}
\newcolumntype{N}{@{}m{0pt}@{}}
\tikzstyle{block} = [rectangle, draw, 
\tikzstyle{rblock} = [rectangle, draw, 
\tikzstyle{rec} = [rectangle, draw]
\tikzstyle{line} = [draw, -latex]
\tikzstyle{circ} = [circle, draw, inner sep=0em, minimum width=2em,
\tikzstyle{circ2} = [circle, draw, inner sep=0em, minimum width=2.2em,
\tikzset{snake arrow/.style=
{
decorate,
decoration={snake,amplitude=.4mm,segment length=2mm,pre length=1mm, post length=1mm}}
}
\tikzset{snake arrow0/.style=
{
decorate,
decoration={snake,amplitude=.4mm,segment length=2mm,post length=1mm}}
}
\theoremstyle{plain}
\newtheorem{theorem}{Theorem}[section]
\newtheorem{definition}[theorem]{Definition}
\newtheorem{lemma}[theorem]{Lemma}
\newtheorem{proposition}[theorem]{Proposition}
\newtheorem{conjecture}[theorem]{Conjecture}
\newtheorem{postulate}[theorem]{Postulate}
\theoremstyle{definition}
\newtheorem{example}[theorem]{Example}
\newcommand{\cB}{\mathcal{B}}
\newcommand{\cC}{\mathcal{C}}
\newcommand{\cD}{\mathcal{D}}
\newcommand{\cE}{\mathcal{E}}
\newcommand{\cF}{\mathcal{F}}
\newcommand{\cH}{\mathcal{H}}
\newcommand{\cL}{\mathcal{L}}
\newcommand{\cM}{\mathcal{M}}
\newcommand{\cN}{\mathcal{N}}
\newcommand{\cP}{\mathcal{P}}
\newcommand{\cS}{\mathcal{S}}
\newcommand{\cT}{\mathcal{T}}
\newcommand{\cV}{\mathcal{V}}
\newcommand{\cX}{\mathcal{X}}
\newcommand{\Her}[1]{\cL_{\mathrm{H}}(#1)}
\newcommand{\Psd}[1]{\cL_{\mathrm{H}}^+(#1)}
\DeclareMathOperator{\Tr}{Tr}
\DeclareMathOperator{\id}{id}
\newcommand{\ketbra}[2]{\ket{#1}\!\bra{#2}}
\DeclareMathOperator{\conv}{conv}
\DeclareMathOperator{\sco}{sc} 
\DeclareMathOperator{\nege}{neg}
\title{
\textbf{The Diversity of Entanglement Structures with Self-Duality and Non-Orthogonal State Discrimination in General Probabilistic Theories}\\
\vspace{2.5em}
{\large A dissertation for the degree of\\
Doctor of Philosophy (Mathematical Science)
}
}
\author{
{\large Hayato Arai}\\[3.2em]
{\large  Graduate School of Mathematics}\\
{\large  Nagoya University}\\
\includegraphics[width=3cm]{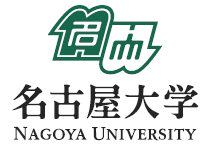}
}
\date{2022}
\begin{document}

\maketitle

\frontmatter

\chapter*{Abstract}

It is an important problem for mathematical physics to derive the mathematical model of quantum theory.
A modern approach, called General Probabilistic Theories (GPTs), starts from informational and operational aspects about states and measurements.
From a mathematical perspective, studies of GPTs aim to characterize proper positive cones by physically meaningful conditions.
Studies of GPTs have been widespread recently,
but they remain incomplete.

One of the essential problems of GPTs is to characterize the Standard Entanglement Structure (SES) from Entanglement Structures (ESs).
ES is a possible structure of a quantum composite system in GPTs.
It is strongly believed in standard quantum theory (or traditional physics)
that
a model of a composite system is uniquely determined as the SES.
However, a model of composite system in GPTs is not uniquely determined.
Moreover, because the definition of ESs is derived only from physically reasonable postulates,
ESs other than the SES are not denied as a model of some physical composite systems.
In other words, there is a theoretical possibility that some physical composite systems obey other ESs instead of the SES.
Therefore, it is an important problem to find (additional) reasonable postulates that determine various ESs as the standard one.
For this problem,
we need to investigate the diversity of ESs and characterize them by physically reasonable postulates.

In order to investigate the diversity of ESs,
this thesis considers a fundamental informational task called state discrimination.
State discrimination is a task whose success probability depends on the performance of measurements in models,
and the equivalent condition of perfect state discrimination is orthogonality of states in standard quantum theory.
On the other hand, the prior works \cite{Arai2019,YAH2020} have revealed that some ESs of GPTs have extraordinary performances for discrimination tasks,
i.e., some models enable to discriminate non-orthogonal states.
The preceding studies \cite{Arai2019,YAH2020} entirely depend on a certain type of concrete measurements beyond standard quantum theory.
This thesis then explores general measurements
and gives equivalent conditions for a given measurement to have a performance superior to standard quantum theory.
These equivalent conditions moreover give two important applications.
One is the derivation of the SES by the bound of the performance for discrimination tasks when we impose an additional condition.
Another one is non-simulability of beyond-quantum measurement.
These two applications mean that models in beyond-quantum measurement is distinguished from the SES by state discrimination.



Next, this thesis focuses on self-duality.
In GPTs, it is known that self-duality has an important role in characterizing Euclidean Jordan Algebras when considering the combination of self-duality and a kind of symmetric condition called homogeneity.
However,
it is an open problem how drastically just one of the above two conditions restricts models
even when considering ESs instead of all models.
Therefore, we first explore a group-symmetric condition in ESs.
As a result, we reveal that a group-symmetric condition weaker than homogeneity derives the SES uniquely from ESs.

On the other hand,
it is so challenging to explore self-dual ESs
that no example of self-dual ESs is known except for the SES.
Moreover, considering general models in GPTs,
only a few examples of self-dual and non-homogeneous models are known.
Then, in order to clarify the diversity of self-dual models,
this thesis develops a general theory of self-duality.
Applying the general theory,
we show the existence of self-dual ESs except for the SES.
Moreover, the general theory also gives an important classes of self-dual ESs called Pseudo Standard Entanglement Structures (PSESs).
A PSES is a self-dual ES that cannot be distinguished from the SES by physical experiments with small errors.
We show the existence of an infinite number of PSESs.
Furthermore, we also show that some of PSESs discriminate non-orthogonal states,
i.e.,
we show that some models enable non-orthogonal discrimination even though the ES is self-dual and near the SES.

\def\bibname{List of Publications\\[2em]

\textnormal{\normalsize 
	The thesis is based on the following publications.\\[-2em]}
}

This thesis contains all topics in \cite{AH2021} and some examples given in \cite{MAB2022}.

\chapter*{Acknowledgements}

First and foremost,
I am deeply grateful to my advisor Masahito Hayashi.
He is always willing to discuss my topics and provided his helpful knowledge and insights throughout my doctor course.
Also, he gave me many instructive advice and valuable experiences for my academic career.
Moreover, he pointed out some errors in my early proofs.
I also appreciate to another advisor Fran\c cois Le Gall for offering many advice about my plan of studies.
Next, I am grateful to my collaborators Shintaro Minagawa and Francesco Buscemi in Graduate school of Informatics, Nagoya University.
They actively discussed our topic and continuously worked with me for a long period.
I would like to thank Shintaro Minagawa for pointing out a small error in Section~\ref{appe:2-1} of this thesis.
I would also like to thank Francesco Buscemi for a good comment on this thesis especially about the definition of the example in Appendix~\ref{appe:2-3}.
Similarly, I would also like to thank Taro Nagao for many comments on this thesis.
Next, I thanks my colleagues in Graduate school of mathematics, Nagoya University.
I would especially thank Yuuya Yoshida and Seunghoan Song for answering many questions about doctor course in early period of doctor course.
Also, I thank Daiki Suruga, Ziyu Liu, and members of Le Gall's group for many active conversations about academic topics.
Finally,
my graduate career is supported by a JSPS Grant-in-Aids for JSPS Research Fellows No. JP22J14947, a Grant-in-Aid for JST SPRING No. JPMJSP2125, research assistant of a JSPS Grant-in-Aids for Scientific Research (B) Grant No. JP20H04139, and research assistant of Graduate school of Informatics, Nagoya University.

\mainmatter
\pagenumbering{arabic}

\cleardoublepage
\phantomsection
\addcontentsline{toc}{chapter}{Contents}
\tableofcontents
\cleardoublepage

\clearpage

\listoffigures

\clearpage

\listoftables

\begin{table*}[htb]
	\caption{Notations 1}
	\centering
	\begin{tabular}{clc}
	\hline
	notation & meaning & equation \\ \hline \hline
	$\top$ & transposition map from matrix to matrix & -\\
	$\Gamma$ & partial transposition map with $\Gamma=\id\otimes \top$ & -\\
	$\mathrm{ER}(\cC)$ & the set of all extremal rays of a proper cone $\cC$ & Def.~\ref{def:ray}\\
	$\mathrm{EP}(X)$ & the set of all extremal points of a convex set $X$ & -\\
	$\cC^\ast$ & the dual cone of a proper cone $\cC$&\eqref{def:dual-cone}\\
	$\cS(\cC,u)$ & the state space of the model $\cC$ with the unit $u$ &\eqref{def:state}\\
	$\cE(\cC,u)$ & the effect space of the model $\cC$ with the unit $u$  &\eqref{def:eff} \\
	\multirow{2}{*}{$\cM(\cC,u)$} & the measurement space of the model $\cC$  &\multirow{2}{*}{Def.~\ref{def:measurement}} \\
	&\multicolumn{1}{r}{ with the unit $u$ } & \\
	\multirow{2}{*}{$\cM_n(\cC,u)$} & the measurement space of the model $\cC$&\multirow{2}{*}{Def.~\ref{def:measurement}} \\
	&\multicolumn{1}{r}{ with the unit $u$ and $n$-outcome } & \\
	\multirow{2}{*}{$\Her{\cH}$} & the set of all Hermitian matrices &\multirow{2}{*}{-}\\
	&\multicolumn{1}{r}{on a Hilbert space $\cH$ } & \\
	\multirow{2}{*}{$\Psd{\cH}$} & the set of all Positive semi-definite matrices&\multirow{2}{*}{-}\\
	&\multicolumn{1}{r}{ on a Hilbert space $\cH$ }&\\
	$\cC_1\otimes\cC_2$ \quad& the tensor product of positive cones & \eqref{eq:tensor} \\ 
	$P_x$ \quad& the projetion map defined by $x$ & \eqref{def:projection} \\ 
	$\mathrm{SEP}(A;B)$\quad & the positive cone that has only separable states &\eqref{eq:sep}\\
	$\mathrm{SES}(A;B)$\quad & the standard entanglement structure &-\\
	\multirow{2}{*}{$\mathrm{Err}(\rho_1;\rho_2;\bm{M})$}\quad & the sum of error probability of discrimination  &\multirow{2}{*}{\eqref{def:error-1}}\\
	&\multicolumn{1}{r}{of $\rho_1$ and $\rho_2$ by a measurement $M$}  &\\
	\multirow{2}{*}{$\mathrm{Err}_{\cC}(\rho_1;\rho_2)$}\quad & the minimization of the sum of error probability& \multirow{2}{*}{\eqref{def:error-2}}\\
	&\multicolumn{1}{r}{of discrimination of $\rho_1$ and $\rho_2$} & \\
	$\rm{DOVM}(A;B)$\quad & the class of dual-operator-valued measures & \eqref{def:dovm}\\
	\multirow{2}{*}{$\lambda_k(X)$} \quad& the $k$-th eigenvalue of a Hermitian matrix $X$ & \multirow{2}{*}{-}\\
	&\multicolumn{1}{r}{in ascending order} & \\
	$\rm{BQ}(A;B)$\quad & the class of BQ measures on $\cH_A\otimes\cH_B$& Def.~\ref{def:bq}\\
	$\rm{AQ}(A;B)$\quad & the class of AQ measures on $\cH_A\otimes\cH_B$& Def.~\ref{def:aq}\\
	$\rm{BQ}(A;B)$\quad & the class of NAQ measures on $\cH_A\otimes\cH_B$& Def.~\ref{def:naq}\\
	$\rm{POVM}(A;B)$\quad & the class of POVM on $\cH_A\otimes\cH_B$ & Def.~\ref{def:povm}\\
	$\cD(X)$\quad & the domain of an operator $X$ & \eqref{def:domain-1}\\
	$\cD(\bm{M})$\quad & the domain of a measurement $\bm{M}$ & \eqref{def:domain-2}\\
	\hline
	\end{tabular}
\end{table*}

\clearpage

\begin{table*}[htb]
	\caption{Notations 2}
	\centering
	\begin{tabular}{clc}
	\hline
	notation & meaning & equation \\ \hline \hline
	$\mathrm{ME}(A;B)$\quad & the set of all maximally entangled states &-\\
	\multirow{2}{*}{$D(\cC\|\sigma)$\quad }& the distance between an ES $\cC$ &\multirow{2}{*}{\eqref{def:distance1}}\\
	&\multicolumn{1}{r}{  and a state $\sigma$}&\\
	$D(\cC_1\|\cC_2)$\quad & the distance between ESs $\cC_1$ and $\cC_2$ &\eqref{def:distance2}\\
	\multirow{2}{*}{$D(\cC)$\quad} & the distance between an ES $\cC$ &\multirow{2}{*}{\eqref{def:distance}}\\
	&\multicolumn{1}{r}{ and the SES} &\\
	$\mathrm{Aut}(\cC)$ & the set of automorphism on proper cone $\cC$ & \eqref{def:auto}\\
	$\bigoplus_{i=1}^k \cC_i$ & a direct sum over more than 1 positive cones & \eqref{def:direct-sum}\\
	$\mathrm{GU}(A;B)$\quad & the group of global unitary maps&\eqref{eq:gu}\\
	$\tilde{\cC}$\quad & a self-dual modification of pre-dual cone $\cC$ \quad &-\\
	\multirow{2}{*}{$\mathrm{MEOP}(A;B)$}\quad & the set of maximally entangled&\multirow{2}{*}{\eqref{eq:proj}}\\
	&\multicolumn{1}{r}{ orthogonal projections }&\\
	$\mathrm{NPM}_r(A;B)$\quad & a set of non-positive matrices &\eqref{def:NPM}\\
	$\cC_r(A;B)$\quad & a set of non-positive matrices with parameter $r$ &\eqref{def:Kr}\\
	$r_0(A;B)$\quad & the parameter given in Proposition~\ref{prop:construction1} &\eqref{def:r0}\\
	\multirow{2}{*}{$\cP_0(\vec{P})$\quad} & a family belonging to $\mathrm{MEOP}(A;B)$ &\multirow{2}{*}{\eqref{def:PE}}\\
	&\multicolumn{1}{r}{ defined by a vector $\vec{P}\in\mathrm{MEOP}(A;B)$ }&\\
	$\mathrm{LU}(A;B)$\quad & the group of local unitary maps&\eqref{eq:lu}\\
	\multirow{2}{*}{$N(r;\{E_k\})$\quad} & a non-positive matrix with a parameter $r\ge0$&\multirow{2}{*}{\eqref{def:Nr}}\\
	&\multicolumn{1}{r}{ and a family $\{E_k\}\in\mathrm{MEOP}(A;B)$ }&\\
	\hline
	\end{tabular}
\end{table*}

\chapter{Introduction}\label{chap:1}
 
This thesis addresses General Probabilistic Theories (GPTs) \cite{AH2021,MAB2022,PR1994,BBLW2007,Pawlowski2009,Short2010,Barnum2012,Plavala2017,Matsumoto2018,Takagi2019,Yoshida2020,CDP2010,Spekkens2007,KBBM2017,Stevens:2013,Barnum.Steering:2013,Janotta2014,Lami2017,Aubrun2020,Plavala2021,Arai2019,YAH2020,ALP2019,ALP2021,Kimura2010,Bae2016,Yoshida2021,Muller2013,Barnum2019,Janotta2013}.
The studies of GPTs aim to characterize models of physical theory, including quantum theory.

The mathematical model of quantum theory describes physical systems very precisely.
However, such consistency to physical systems is the almost only reason why the model of quantum theory is given as the present form.
Many researchers therefore have studied foundations of quantum theory \cite{PR1994,Neumann1932,Bell1964,Spekkens2005,DCPbook2017,Davies1970,Ozawa1984,Luders1951}.
Recently, as quantum information theory has been developed,
operational and informational aspects of quantum theory have been investigated.

The approach of GPTs is one of such frameworks for a foundation of quantum theory to discuss operational and informational aspects in mathematical models.
Because GPT only imposes fundamental postulates on models,
there exist continuously many models in GPTs.
The main aim of the studies of GPTs is to find reasonable postulates that derive the model of quantum theory from such models.
As seen in the next section,
many preceding studies have provided not only deep knowledge about a foundation of quantum theory \cite{PR1994,BBLW2007,Short2010,ALP2019,Barnum2019} but also many contributions to quantum information theory \cite{Pawlowski2009,Takagi2019,Stevens:2013}.
However, many problems are not solved completely, and this thesis tackles such problems.

In terms of derivation of the model of quantum theory,
it is an essential problem to distinguish quantum theory from models with similar structures to quantum theory.
An entanglement structure \cite{Janotta2014,Lami2017,Aubrun2020,Plavala2021,Arai2019,YAH2020,ALP2019,ALP2021} is a typical example of such similar models.
This thesis investigates how drastically some properties characterize entanglement structures
and reveal the diversity of entanglement structures.

\section{Concept and Motivation of General Probabilistic Theories}\label{sect:1-1}

A model of GPTs is given as a generalized model of quantum theory
with states and measurements.
In quantum theory,
a state is assigned information about the system,
and a measurement is a process to extract partially the information with a certain probability distribution.
Such probability distributions are applied to topics of quantum information theory.
As a generalization of quantum theory and quantum information theory,
GPT also starts with the following concept of measurement process on a state and information with probability distribution.

Apply a measurement $\bm{M}=\{M_i\}_{i\in I}$ to a given state $\rho$.
Then, only one outcome $i$ is obtained with a certain probability $p_i$ (Figure~\ref{figure_measurement}).
Here, we assume the probability $p_i$ is exactly given as a function of $\rho$ and $M_i$,
which corresponds to the empirical law that the same state and the same measurement give the same probability distribution.
As a prerequisite for dealing with probabilistic measurement processes,
a model of GPTs contains such a structure mathematically.
Also, as an operationally inevitable requirement,
we assume that the state space and measurement space are convex
because we can consider mixture objects with a certain probability.

\begin{figure}[t]
	\centering
	\includegraphics[width=6cm]{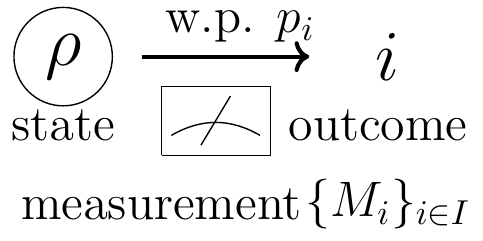}
	\caption[Measurement Processing]{
	When a state $\rho$ is measured by a measurement $\bm{M}=\{M_i\}_{i\in I}$,
	an outcome $i\in I$ is obtained with probability $p_i$.
	}
	\label{figure_measurement}
\end{figure}

As a minimum structure to deal with the above probabilistic measurement process,
a model of GPTs is defined by a proper positive cone in a real-vector space with an inner product (Definition~\ref{def:model}).
The rigorous definition is given in Section~\ref{sect:2-2-1},
but roughly speaking,
a model of GPTs is defined by each different proper positive cone.
This definition is mathematically so general that there exist continuously many different models in GPTs
even if the dimension of the vector space is finite.
The main motivation of the studies of GPTs is to derive the model of quantum theory from such many different models
by reasonable conditions about probability distributions of states and measurements.

\section{Approaches to Characterization of Physical Models}\label{sect:1-2}

Roughly speaking,
there are three approaches to characterize the traditional models of physical systems (especially quantum theory).
The first approach is to impose traditional postulates in theoretical physics or quantum information theory,
which is well-considered in early studies of GPTs.
A typical example is \textit{no-signaling principle}, a property in both classical and quantum theory.
It was believed that quantum theory is a model that has the most correlated causality with no-signaling principle.
However, the reference \cite{PR1994} found a model with no-signaling principle with more correlated causality than quantum theory.
Another typical example is \textit{no-cloning principle}, which is a property of quantum theory but not classical theory.
In early studies, no-cloning principle is considered a peculiar property in quantum theory.
However, the reference \cite{BBLW2007} showed that the availability of cloning processes is a peculiar property in classical theory.
In this way, traditional postulates often do not characterize quantum theory
(they sometimes characterize classical theory).

The second approach is to impose a limit of a performance for an information task.
Traditional postulates in the first approach are often qualitative.
In order to characterize quantum theory,
we need indicators more quantitative than traditional postulates.
A typical example of quantitative postulate is given as a limit of a performance for certain information tasks.
The reference \cite{Pawlowski2009} introduced a performance for a communication task,
and the reference \cite{Pawlowski2009} showed that the upper bound of the performance characterizes the amount of correlation in models.
Such an informational approach is not only accurate but also well-motivating not only for foundation of physics but also for practical physics.
An upper bound of a performance for information tasks regulates our operational limit to physical systems or information processing.
In other words,
such characterizations provide a rigorous statement why an unknown model cannot be implemented on any physical system.
Therefore,
many information tasks have been considered recently \cite{Pawlowski2009,Takagi2019,Yoshida2020,Lami2017,Aubrun2020,Arai2019,YAH2020,Kimura2010,Bae2016,Yoshida2021}.
However, a condition of performance for information tasks to characterize quantum theory has not been found yet.

The third approach focuses on mathematical properties.
Recently, the reference \cite{Barnum2019} showed that a strongly symmetric condition derives  \textit{essentially finite types of models},
including quantum theory.
Also, the reference \cite{KBBM2017} imposes the same condition and a condition about spectrality,
and it derives the thermodynamical behavior of quantum theory.
In this way, mathematically strong conditions determine models drastically.
However, such strong conditions do not often have reasonable physical meanings.
In order to give a physical meaning of such conditions,
we need to investigate the relationship between such mathematical conditions and performances for some informational tasks or physical operations.

This thesis deals with the second and the third approach for the problem in the next section.

\section{Non-uniqueness of Entanglement Structures}\label{sect:1-3}

For a foundation of quantum theory,
it is essential to distinguish the model of quantum theory from similar models.
In GPTs,
all models are roughly divided into two types;
the first type is a model with a finite number of extremal rays (Definition~\ref{def:ray}) and the second type is a model with an infinite number of extremal rays.
The model of quantum theory has an infinite number of extremal rays,
and therefore, it is more important to investigate such models.
One important class of such models is Entanglement Structures (ESs).
ES is a possible structure of a quantum composite system in GPTs \cite{Janotta2014,Lami2017,Aubrun2020,Plavala2021,Arai2019,YAH2020,ALP2019,ALP2021}.
Even though a model of a composite system is uniquely determined in standard quantum theory,
it is not uniquely determined in GPTs.
Because the definition of ES is physically reasonable,
there is a theoretical possibility of another ES in physics.
Therefore,
an investigation of ESs contributes not only to a foundation of quantum theory
but also to an evaluation of other possibilities of quantum-like physical models.

In order to investigate the diversity of ESs,
this thesis studies the following five theme.
\begin{enumerate}
	\item[A.] Characterization of Dual-Operator-Valued Measurement (Section~\ref{sect:3-2})
	\item[B.] Non-Simulability of Beyond Quantum Measurement (Section~\ref{sect:3-2-3})
	\item[C.] Entanglement Structures with Group Symmetry (Section~\ref{sect:4-1-2})
	\item[D.] Self-Dual Modification (Section~\ref{sect:4-2})
	\item[E.] Existence of PSES and Difference from the SES (Section~\ref{sect:4-3})
\end{enumerate}
These themes are related to both of the second and the third approaches in Section~\ref{sect:1-2}.
Here, we roughly divide them into two parts as follows in terms of the approaches. 

First,
corresponding to the second approach in Section~\ref{sect:1-2},
this thesis focuses on a fundamental informational task called state discrimination.
State discrimination is a task whose success probability depends on the performance of measurements in a given model,
and it is equivalent to orthogonality of states in standard quantum theory.
On the other hand, preceding studies \cite{Arai2019,YAH2020} revealed that some models of GPTs discriminate non-orthogonal states.
The preceding studies \cite{Arai2019,YAH2020} entirely depend on some usable concrete measurements beyond standard quantum theory.
As Theme~A,
this thesis explores in detail such measurements beyond standard quantum theory.
As a result,
we give an equivalent condition when a measurement has a performance superior to standard quantum theory.
By applying the equivalent condition to the characterization of ESs,
we additionally show that a bound of the performance for a discrimination task determines ESs with certain condition as the SES uniquely.
Also,
as Theme~B,
by applying the results about beyond-quantum measurements,
we discuss the simulability of beyond-quantum measurement in standard quantum theory and reveal its impossibility.

Second,
corresponding to the third approach in Section~\ref{sect:1-2},
this thesis focuses on two mathematical properties, \textit{self-duality} and \textit{homogeneity}.
Self-duality means the equivalence between the proper cone and its dual cone.
Homogeneity is a strongly symmetric condition about a group acting on the vector space.
If a proper cone satisfies self-duality and homogeneity,
the cone induces the structure of Euclidean Jordan Algebra.
Cones corresponding to Euclidean Jordan Algebra are classified into essentially finite types of cones, including the model of quantum theory.
This result is given by the references \cite{Jordan1934,Koecher1957} in pure mathematics,
but it has been open for a long time how drastically only one of the conditions determines a structure of proper positive cones.
Moreover, only a few examples of self-dual cones are known \cite{Levent2001}.
Other known examples of self-dual cones also satisfy homogeneity, and therefore, such examples are classified essentially finitely.
This thesis investigates the diversity of ESs with one of the above two conditions.
As Theme~C,
we show that a weaker condition about group symmetry is sufficient to derive the standard entanglement structure than homogeneity.
On the other hand,
as Theme~E,
we show that self-duality does not derive the standard entanglement structure
even if we impose an additional condition that the model cannot be distinguished from the standard entanglement structure by physical experiments.
In order to show the result about self-duality,
we give a general theory about self-dual cones as Theme~D.

\section{Outlines and Contributions}\label{sect:1-4}

First, we explain the outline of this thesis.
In Chapter~\ref{chap:2}, we give mathematical definitions and fundamental propositions about GPTs.
In Section~\ref{sect:2-1}, we give a definition and fundamental properties of positive cones.
In Section~\ref{sect:2-2} and Section~\ref{sect:2-3}, we introduce a model of GPTs and entanglement structures, respectively.
In Section~\ref{sect:2-3}, we give some important examples of entanglement structures.
Properties of the examples are written in Appendix~\ref{appe:2}.
In Chapter~\ref{chap:3}, we investigate state discrimination in ESs.
In Section~\ref{sect:3-1}, we introduce state discrimination and preceding studies about it.
In Section~\ref{sect:3-2}, as our Main result 1,
we characterize Dual-Operator-Valued Measurements (DOVMs) by the performance for state discrimination.
Also, as our Main result 2,
we give the setting of simulability of DOVMs and show its impossibility.
In Section~\ref{sect:3-4},
we give proofs about the results in Chapter~\ref{chap:3}.
In Chapter~\ref{chap:4},
we investigate the diversity of ESs with self-duality and group symmetry.
In Section~\ref{sect:4-1},
we introduce preceding studies about self-duality and homogeneity,
and we show that a weak symmetric condition determines the standard entanglement structures uniquely as the main result 3.
In Section~\ref{sect:4-2}, as our Main result 4,
we give a general theory about self-dual cones.
In Section~\ref{sect:4-3}, as our Main result 5,
we introduce Pseudo Standard Entanglement Structures (PSESs),
and we show the existence of an infinite number of PSESs and their performance for state discrimination.
In Section~\ref{sect:4-4},
we give proofs of the statements in Chapter~\ref{chap:4}.
In Chapter~\ref{chap:5},
we conclude this thesis.

Next, we mention the contributions of the results in this thesis.
About Theme~A, the Author has fully contributed to the all contents.
The Author's contributions to Theme~B are the first setting, results, and its proof.
The present setting of Theme~B (Definition~\ref{def:simulable}) is obtained by discussing with Prof. Masahito Hayashi.
Theme~C, Theme~D, and Theme~E are the main results of \cite{AH2021}.
The Author's contributions to Theme~C are the setting, results, and its proof.
The settings of Theme~C, Theme~D, and Theme~E especially Definition~\ref{def:pre-dual} and $\epsilon$-distinguishability (in Section~\ref{sect:4-3-1}) were obtained by discussing with Prof. Masahito Hayashi.
Also, in Section~\ref{sect:2-3} and Appendix~\ref{appe:2}, this thesis gives some examples of entanglement structures that play an important role in the reference \cite{MAB2022}.
The Author's main contributions in the reference \cite{MAB2022} are the mathematical parts and the construction of the examples in Appendix~\ref{appe:2}.

\chapter{Mathematical description of GPTs}\label{chap:2}

In this chapter,
we introduce GPTs and their mathematical description.

First, as a preliminary,
we introduce positive cones and their fundamental properties in Section~\ref{sect:2-1}.
Positive cones play an essential role to define a model of GPTs.
Roughly speaking,
a model of GPTs, especially ESs,
is determined by a proper positive cone.
Therefore, a study of GPTs is mathematically regarded as a characterization of certain proper positive cones.

Next,
we introduce a model of GPTs and give some important examples of models in Section~\ref{sect:2-2}.
A model of GPTs is defined by a proper positive cone in a real vector space.
In a model of GPTs, states, measurements, and other objects are defined by the structure of the positive cone and its dual cone.

Finally, we introduce a bipartite composite model of two submodels in GPTs and entanglement structures in Section~\ref{sect:2-3}.
In GPTs,
a model of the composite system is not uniquely determined even if the subsystems are equivalent.
An entanglement structure is defined as a possible composite model whose submodels are equivalent to quantum theory.

\section{Preliminaries}\label{sect:2-1}

In this section,
we enumerate properties about positive cones.
Section~\ref{sect:2-1-1} gives definitions and fundamental properties.
Section~\ref{sect:2-1-2} gives properties about set-operations over continuous indices.
Section~\ref{sect:2-1-1} gives properties about group symmetry.
Many properties are applied in whole of this thesis,
but some properties are not applied here.
We give such properties in order to explain the mathematical aspects about positive cones.

\subsection{Definition and Fundamental Properties of Positive Cones}\label{sect:2-1-1}

In this thesis,
we assume that any vector space $\cV$ is finite-dimensional.

First,
we define positive cones and proper cones.

\begin{definition}[Positive Cone and Proper Cone]
	Let $\cC$ be a subset of a finite-dimensional real vector space $\cV$.
	We say that $\cC$ is a positive cone
	if $\cC$ satisfies the following two conditions:
	\begin{enumerate}
		\item $rx\in\cC$ for any $r\ge0$ and any $x\in \cC$.
		\item $\cC$ is closed convex set with non-empty interior.
	\end{enumerate}
	Also, we say that a positive cone $\cC$ is proper
	if $\cC$ satisfies the equation $\cC\cap(-\cC)=\{0\}$\footnote{In some references, a proper cone is also called a pointed cone.}.
\end{definition}

Next, we define an extremal ray of a positive cone,
which is convenient for discussion of positive cones. 

\begin{definition}[Extremal Ray]\label{def:ray}
	We say that a subset $R\subset\cV$ is a ray
	if $R$ is written as
	\begin{align}
		R=\left\{rx\middle|r\ge0\right\}
	\end{align}
	for an element $x\in\cV$.
	Also, given a positive cone $\cC$,
	we say that a ray $R\subset\cC$ is an extremal ray of $\cC$
	if any convex decomposition of an arbitrary element $x\in R$ over any element $x_i\in\cC$ with
	\begin{align}
		x=\sum_i a_i x_i
	\end{align}
	implies $x_i\in R$.
\end{definition}

Hereinafter, we denote a set of all extremal rays of a proper cone $\cC$ as $\mathrm{ER}(\cC)$.
Similarly, given a convex set $X$,
we denote a set of all extremal points of the set $X$ as $\mathrm{EP}(X)$.

Here, we give another perspective of the structure of proper cones.
Given a proper cone $\cC$, we define a relation $\le_{\cC}$ as
\begin{align}
	x\le_{\cC} y \quad \Leftrightarrow \quad y-x\in\cC.
\end{align}
Then, the relation $\le_{\cC}$ is a partial order.
Actually, the following proposition holds.
\begin{proposition}[{\cite[Section 2.4.1]{BoydBook2004}}]
	Let $\cC$ be a proper positive cone.
	Then, the relation $\le_{\cC}$ is a partial order.
\end{proposition}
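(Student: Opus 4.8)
The plan is to verify directly the three defining properties of a partial order—reflexivity, antisymmetry, and transitivity—drawing on exactly the three structural features of a proper positive cone: closure under non-negative scaling, convexity, and the pointedness condition $\cC\cap(-\cC)=\{0\}$. Each property will turn out to be powered by essentially one of these features, so the proof is a matter of matching the right hypothesis to the right axiom rather than performing any computation.

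First I would dispatch reflexivity. For any $x\in\cV$ we have $x-x=0$, and $0\in\cC$ because the scaling condition with $r=0$ applied to any element of $\cC$ produces $0$; hence $x\le_{\cC}x$. Before treating transitivity, I would record the one auxiliary observation that the proof really needs, namely that $\cC$ is closed under addition. Given $a,b\in\cC$, convexity yields $\tfrac{1}{2}(a+b)\in\cC$, and then the scaling property with $r=2$ gives $a+b\in\cC$. With this in hand, transitivity is immediate: if $x\le_{\cC}y$ and $y\le_{\cC}z$, then $y-x\in\cC$ and $z-y\in\cC$, and writing $z-x=(z-y)+(y-x)$ and invoking additive closure gives $z-x\in\cC$, i.e. $x\le_{\cC}z$.

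Finally, antisymmetry is exactly the step where properness is used. Suppose $x\le_{\cC}y$ and $y\le_{\cC}x$; then $y-x\in\cC$ and $x-y=-(y-x)\in\cC$, so $y-x\in\cC\cap(-\cC)$. The pointedness condition $\cC\cap(-\cC)=\{0\}$ then forces $y-x=0$, hence $x=y$. The only step that is not a one-line consequence of a single axiom is the additive closure of $\cC$, where convexity and the cone property must be combined; I expect this to be the sole (and very mild) point requiring thought, with every other implication following directly from a single defining property. Consequently I anticipate no genuine obstacle, and the statement follows once these three verifications are assembled.
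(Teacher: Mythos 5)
Your proof is correct and complete: reflexivity from $0\in\cC$, additive closure from convexity plus the scaling property, transitivity from additive closure, and antisymmetry from pointedness are exactly the right ingredients, each matched to the right axiom. The paper itself gives no proof of this proposition---it simply cites \cite[Section 2.4.1]{BoydBook2004}---and your argument is precisely the standard verification found there, so there is nothing to reconcile.
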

In this way,
the definition of proper cone is characterized by the structure of partial order.
In GPTs, a model is defined by a partial order which is derived from a proper cone.

Next,
we define an order unit of proper cone,
which is regarded as a normalized factor of a model of GPTs.
\begin{definition}[order unit]
	Given a proper cone $\cC$,
	we say that an element $u\in\cC^{\circ}$ is an order unit of $\cC$,
	where the set $X^\circ$ is denoted as interior of $X$.
\end{definition}
An order unit is characterized by the order relation defined by the proper cone.
\begin{proposition}\label{prop-unit1}
	Given a proper cone $\cC$ and an element $u\in\cC$,
	the following two conditions are equivalent:
	\begin{enumerate}
		\item $u$ is an order unit of $\cC$.
		\item For any element $x\in\cV$, there exists a natural number $n$
		such that $x\le_{\cC}nu$.
	\end{enumerate}
\end{proposition}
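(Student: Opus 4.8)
The plan is to prove the two implications separately. The implication $(1)\Rightarrow(2)$ is the routine one: it follows directly from $u$ lying in the interior $\cC^{\circ}$ together with the scaling axiom of a positive cone. The implication $(2)\Rightarrow(1)$ is the substantive direction, where I must upgrade the purely algebraic ``absorbing'' property of $u$ into the topological statement $u\in\cC^{\circ}$; finite-dimensionality of $\cV$ is what makes this possible.

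For $(1)\Rightarrow(2)$, suppose $u\in\cC^{\circ}$, so there is $\epsilon>0$ with the open ball $B(u,\epsilon)\subseteq\cC$. Given any $x\in\cV$, I would pick a natural number $n>\lVert x\rVert/\epsilon$, so that $\lVert x/n\rVert<\epsilon$ and hence $u-x/n\in B(u,\epsilon)\subseteq\cC$. Since $\cC$ is a positive cone, scaling by $n\ge 0$ gives $nu-x=n(u-x/n)\in\cC$, which is exactly $x\le_{\cC}nu$.

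For $(2)\Rightarrow(1)$, fix a basis $v_1,\dots,v_d$ of $\cV$, where $d=\dim\cV$. Applying the hypothesis to each of $\pm v_i$ produces natural numbers with $n u\pm v_i\in\cC$; since $u\in\cC$ and a convex cone is closed under addition, I can enlarge these to a single common $N$ so that $Nu\pm v_i\in\cC$ for every $i$. The key computation is then to write every nearby point $u+\sum_i t_i v_i$ as a nonnegative combination of the $2d$ cone elements $Nu\pm v_i$: choosing weights $w_i^{\pm}=\frac{1}{2dN}\pm\frac{t_i}{2}$ yields
\begin{align}
\sum_{i}\Bigl(w_i^{+}(Nu+v_i)+w_i^{-}(Nu-v_i)\Bigr)=u+\sum_i t_i v_i .
\end{align}
These weights are nonnegative precisely when $\lvert t_i\rvert\le \frac{1}{dN}$, so the whole coordinate box $\{\,u+\sum_i t_i v_i : \lvert t_i\rvert\le \tfrac{1}{dN}\,\}$ lies in $\cC$. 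Since this box is a neighborhood of $u$, it follows that $u\in\cC^{\circ}$.

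The main obstacle is exactly this second direction: the hypothesis only tells us that $u$ absorbs every vector in the algebraic (order-theoretic) sense, and one must convert this into membership in the topological interior. I would emphasize that finite-dimensionality is essential, since it lets a finite basis control all directions simultaneously and lets a single coordinate box serve as a neighborhood of $u$. Equivalently, one could phrase the argument as showing that $u$ lies in the algebraic core of the convex set $\cC$ and then invoke the coincidence of core and interior for convex sets in finite dimensions, but I prefer the explicit basis construction above because it is self-contained and avoids appealing to an unstated result.
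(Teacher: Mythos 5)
Your proof is correct, and its overall strategy coincides with the paper's: the direction $(1)\Rightarrow(2)$ uses the $\epsilon$-ball around $u$ together with the scaling axiom (identical to the paper), and the direction $(2)\Rightarrow(1)$ uses a finite basis to manufacture a whole neighborhood of $u$ inside $\cC$. The execution of the second direction differs in a way worth noting, and your version is actually the more careful one. The paper takes an orthonormal basis $\{x_i\}$, obtains $x_i\le_\cC n_iu$ for each $i$, and then asserts $x\le_\cC\sum_i|a_i|^2n_iu$ for an arbitrary unit vector $x=\sum_ia_ix_i$; as written this step glosses over the signs, since a bound $n_iu-x_i\in\cC$ alone does not control the term $a_ix_i$ when $a_i<0$ (one also needs the hypothesis applied to $-x_i$), nor does it directly yield $a_in_iu-a_ix_i\in\cC$ for $0<a_i<1$. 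Your argument applies the hypothesis to both $v_i$ and $-v_i$, passes to a common constant $N$ (legitimately, since adding multiples of $u\in\cC$ keeps you in the cone), and then writes every point of the coordinate box $\{u+\sum_it_iv_i:|t_i|\le\frac{1}{dN}\}$ as an explicit nonnegative combination of the $2d$ elements $Nu\pm v_i$; this closes exactly the gap the paper leaves, and the final appeal to finite-dimensionality (a coordinate box is a neighborhood) plays the same role as the paper's $\frac{1}{n_0}$-ball.
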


\begin{proof}[Proof of Proposition~\ref{prop-unit1}]
	\textbf{[STEP1]}
	(i)$\Rightarrow$(ii)
	
	Because $u$ is an order unit of $\cC$,
	there exists a parameter $\epsilon>0$ such that the epsilon ball $N_\epsilon(u)$ of $u$ satisfies $N_\epsilon(u)\subset \cC$.
	Therefore, any element $x\in\cV$ satisfies the relation $u-\frac{\epsilon}{||x||} x\in\cC$.
	Because of the definition of $\le_{\cC}$, we obtain the following inequalities:
	\begin{gather}
		u-\frac{\epsilon}{||x||} x\ge_{\cC}0\nonumber\\
		x\le_{\cC} \frac{||x||}{\epsilon}u.
	\end{gather}
	Hence, there exists a natural number $n$ such that $x\le_{\cC}nu$.
	\\
	
	\textbf{[STEP1]}
	(ii)$\Rightarrow$(i)
	
	Let $\{x_i\}_{i=1}^d$ be an orthonormal basis in $\cV$.
	For any $i$, there exists a natural number $n_i$ such that $x_i\le_{\cC}n_iu$.
	Then, consider the set $X:=\{x\in \cV\mid ||x||=1\}$.
	Any element $x\in X$ has a decomposition over $\{x_i\}$ as $x=\sum_i a_i x_i$,
	where $a_i$ is a real number satisfying $\sum_i |a_i|^2=1$.
	Therefore, we obtain the inequality $x\le_{\cC} \sum_i |a_i|^2 n_i u$.
	Here, the number $\sum_i |a_i|^2 n_i$ is bounded for any $x\in X$.
	Hence,
	there exists a natural number $n_0$
	such that $x\le_{\cC}n_0u$ for any $x\in X$.
	Therefore, the relation $u-\frac{1}{n_0} x\in\cC$ holds for any $||x||=1$,
	which implies that the $\frac{1}{n_0}$ ball of $u$ is contained by $\cC$.
	Thus, the element $u$ belongs to interior of $\cC$.
\end{proof}

Next,
we define the dual cone of a positive cone.
There are some parallel ways to define a dual cone.
In this thesis, we define the dual cone embedded onto the original vector space $\cV$ by the inner product.
\begin{definition}[Dual Cone]
	Given a positive $\cC$,
	we define its dual cone $\cC^\ast$ as
	\begin{align}\label{def:dual-cone}
		\cC^\ast:=\left\{x\in\cV\middle|\langle x,y\rangle\ge0 \ \forall y\in\cC\right\}.
	\end{align}
\end{definition}
Then, the following proposition ensures that a dual cone is a proper cone when the original cone is proper.
\begin{proposition}[{\cite[Section 2.6.1]{BoydBook2004}}]
	Given a proper positive $\cC$,
	the dual cone $\cC^\ast$ is proper positive cone.
	Also, given a proper positive $\cC$,
	the dual of dual cone is equal to the original cone,
	i.e.,
	the equation $\left(\cC^\ast\right)^\ast=\cC$ holds. 
\end{proposition}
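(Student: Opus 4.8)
The plan is to establish the two assertions separately, both resting on the finite-dimensional separating hyperplane theorem. For the first claim I must verify that $\cC^\ast$ satisfies the defining properties of a proper cone. That $\cC^\ast$ is closed under non-negative scaling and convex is immediate from linearity of the inner product: if $x_1,x_2\in\cC^\ast$ and $r_1,r_2\ge0$, then $\langle r_1 x_1+r_2 x_2,y\rangle=r_1\langle x_1,y\rangle+r_2\langle x_2,y\rangle\ge0$ for every $y\in\cC$. Closedness follows because $\cC^\ast=\bigcap_{y\in\cC}\{x\in\cV\mid\langle x,y\rangle\ge0\}$ is an intersection of closed half-spaces. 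The two remaining properties, non-empty interior and pointedness, are exactly where the hypothesis that $\cC$ is proper enters.

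For the non-empty interior, I would first observe that $x$ lies in the interior of $\cC^\ast$ precisely when $\langle x,y\rangle>0$ for every nonzero $y\in\cC$, so it suffices to produce one such strictly positive functional. Consider the compact set $B:=\{y\in\cC\mid\|y\|=1\}$ and its convex hull $\conv(B)$, which is again compact in finite dimensions. The key step is to show $0\notin\conv(B)$: a convex combination $\sum_i\lambda_i y_i=0$ with $\lambda_i>0$ and $y_i\in B$ would force $\lambda_1 y_1=-\sum_{i\ge2}\lambda_i y_i\in-\cC$, whence $y_1\in\cC\cap(-\cC)=\{0\}$, contradicting $\|y_1\|=1$; here pointedness of $\cC$ is used. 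Separating $0$ from the compact convex set $\conv(B)$ then yields $x$ and $c>0$ with $\langle x,y\rangle\ge c>0$ on $B$, and rescaling gives $\langle x,y\rangle>0$ for all nonzero $y\in\cC$, so $x\in(\cC^\ast)^\circ$. Pointedness of $\cC^\ast$ is easier: if $x\in\cC^\ast\cap(-\cC^\ast)$ then $\langle x,y\rangle=0$ for all $y\in\cC$, and since $\cC$ has non-empty interior it spans $\cV$, forcing $x=0$.

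For the bidual identity, one inclusion holds for any subset: if $y\in\cC$ then $\langle x,y\rangle\ge0$ for every $x\in\cC^\ast$ by definition of $\cC^\ast$, so $y\in(\cC^\ast)^\ast$, giving $\cC\subseteq(\cC^\ast)^\ast$. The reverse inclusion is the substantive part and is where closedness and convexity of $\cC$ are indispensable. Given $z\notin\cC$, I would separate $z$ from the closed convex set $\cC$ to obtain $a\in\cV$ and $b$ with $\langle a,z\rangle<b\le\langle a,y\rangle$ for all $y\in\cC$; using that $\cC$ is a cone containing $0$ one checks $b\le0$ and $\langle a,y\rangle\ge0$ for all $y\in\cC$ (otherwise scaling $y$ by a large factor violates the bound), so $a\in\cC^\ast$ while $\langle a,z\rangle<0$, showing $z\notin(\cC^\ast)^\ast$. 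This establishes $(\cC^\ast)^\ast\subseteq\cC$ and hence equality. I expect the main obstacle to be organizing the two applications of separation correctly, in particular the argument that $0\notin\conv(B)$ for the interior claim and the cone-scaling deduction $a\in\cC^\ast$ in the bidual claim, since these are the only places where properness and closedness are genuinely consumed.
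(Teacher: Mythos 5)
Your proof is correct. Note that the paper offers no proof of this proposition at all --- it is imported verbatim from Boyd--Vandenberghe (Section 2.6.1) --- so there is no internal argument to compare against; your two separating-hyperplane arguments (separating $0$ from $\conv(B)$ to produce an interior point of $\cC^\ast$, and separating a point $z\notin\cC$ from the closed convex cone $\cC$ to get $(\cC^\ast)^\ast\subseteq\cC$) are exactly the standard ones the cited textbook relies on, and you correctly track where properness is consumed: pointedness of $\cC$ gives $0\notin\conv(B)$ and hence $(\cC^\ast)^\circ\neq\emptyset$, while the non-empty interior of $\cC$ (so that $\cC$ spans $\cV$) gives pointedness of $\cC^\ast$.
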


About dual cones,
the following two propositions are very important for the proof of whole of this thesis.
\begin{proposition}[{\cite[Section 2.6.1]{BoydBook2004}}]\label{prop:dual-inc}
	Given two proper cones $\cC_1$ and $\cC_2$,
	the following two conditions are equivalent:
	\begin{enumerate}
		\item $\cC_1\subset\cC_2$.
		\item $\cC_2^\ast\subset\cC_1^\ast$.
	\end{enumerate}
\end{proposition}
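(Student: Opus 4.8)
The plan is to prove the two implications separately, exploiting the fact that one direction is essentially immediate from the definition of the dual cone while the other follows from the first by invoking the bidual identity $\left(\cC^\ast\right)^\ast=\cC$ established in the preceding proposition.

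First I would handle the implication (i)$\Rightarrow$(ii). Assuming $\cC_1\subset\cC_2$, I take an arbitrary $x\in\cC_2^\ast$, so that $\langle x,y\rangle\ge0$ for every $y\in\cC_2$ by the definition of the dual cone. Since every $y\in\cC_1$ is also an element of $\cC_2$, the same inequality $\langle x,y\rangle\ge0$ holds for all $y\in\cC_1$, which is exactly the condition $x\in\cC_1^\ast$. As $x$ was arbitrary, this yields $\cC_2^\ast\subset\cC_1^\ast$. Note that this argument is purely set-theoretic and does not even use properness of the cones: it simply records that shrinking the cone enlarges the constraint set defining its dual.

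For the converse (ii)$\Rightarrow$(i), the idea is to apply the already-proved direction to the pair of dual cones. Starting from the inclusion $\cC_2^\ast\subset\cC_1^\ast$ and treating $\cC_2^\ast$ and $\cC_1^\ast$ in the role of the hypothesis, the forward implication gives $\left(\cC_1^\ast\right)^\ast\subset\left(\cC_2^\ast\right)^\ast$. This is the one place where properness genuinely matters: precisely because $\cC_1$ and $\cC_2$ are proper, I may invoke the bidual identity $\left(\cC^\ast\right)^\ast=\cC$, which collapses the last inclusion to $\cC_1\subset\cC_2$ and completes the equivalence.

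Since both implications are short, I do not anticipate any real obstacle. The only point requiring care is to confirm that the hypotheses of the bidual identity are met, namely that $\cC_1$ and $\cC_2$ are proper, so that the reduction of the converse to the forward direction is legitimate.
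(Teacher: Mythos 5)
Your proof is correct. The paper gives no proof of its own for this proposition --- it simply cites \cite[Section 2.6.1]{BoydBook2004} --- and your argument is exactly the standard one behind that citation: the implication (i)$\Rightarrow$(ii) directly from the definition \eqref{def:dual-cone} of the dual cone (indeed without using properness), and the converse by applying the forward implication to the pair $\cC_2^\ast\subset\cC_1^\ast$ and collapsing $\left(\cC_i^\ast\right)^\ast$ via the bidual identity $\left(\cC^\ast\right)^\ast=\cC$, which the paper records for proper cones in the proposition immediately preceding this one, so the properness hypothesis is invoked exactly where it is needed.
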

\begin{proposition}\label{prop-dual}
	Given a proper cone $\cC_1$ and $\cC_2$,
	the following two equations hold:
	\begin{align}
		\left(\cC_1+\cC_2\right)^\ast&=\cC_1^\ast\cap\cC_2^\ast.\label{prop-dual1}\\
		\left(\cC_1\cap\cC_2\right)^\ast&=\cC_1^\ast+\cC_2^\ast.\label{prop-dual2}
	\end{align}
\end{proposition}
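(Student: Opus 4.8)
The plan is to obtain \eqref{prop-dual1} directly from the definition of the dual cone, and then to deduce \eqref{prop-dual2} from \eqref{prop-dual1} together with the biduality relation $(\cC^\ast)^\ast=\cC$ recorded above. For \eqref{prop-dual1} I would argue purely from the defining inequality. By definition $x\in(\cC_1+\cC_2)^\ast$ means $\langle x,y_1+y_2\rangle\ge0$ for every $y_1\in\cC_1$ and $y_2\in\cC_2$. Since each positive cone contains $0$, taking $y_2=0$ gives $\langle x,y_1\rangle\ge0$ for all $y_1\in\cC_1$, and taking $y_1=0$ gives $\langle x,y_2\rangle\ge0$ for all $y_2\in\cC_2$; hence $x\in\cC_1^\ast\cap\cC_2^\ast$. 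Conversely, if $x\in\cC_1^\ast\cap\cC_2^\ast$ then $\langle x,y_1\rangle\ge0$ and $\langle x,y_2\rangle\ge0$ add to $\langle x,y_1+y_2\rangle\ge0$. This half is unconditional and needs nothing beyond the definition.

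For \eqref{prop-dual2} I would not attack the intersection directly but recycle \eqref{prop-dual1}. Because $\cC_1^\ast$ and $\cC_2^\ast$ are again proper cones, applying \eqref{prop-dual1} to them and using biduality gives
\begin{align}
\left(\cC_1^\ast+\cC_2^\ast\right)^\ast=\left(\cC_1^\ast\right)^\ast\cap\left(\cC_2^\ast\right)^\ast=\cC_1\cap\cC_2.
\end{align}
Dualizing both sides and using biduality once more yields $(\cC_1\cap\cC_2)^\ast=\left(\cC_1^\ast+\cC_2^\ast\right)^{\ast\ast}$. Since the bidual of a convex cone equals its closure, this says $(\cC_1\cap\cC_2)^\ast=\overline{\cC_1^\ast+\cC_2^\ast}$, so \eqref{prop-dual2} is equivalent to the single assertion that the cone $\cC_1^\ast+\cC_2^\ast$ is already closed. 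The same reduction appears if one argues by separation: the inclusion $\cC_1^\ast+\cC_2^\ast\subseteq(\cC_1\cap\cC_2)^\ast$ is as immediate as \eqref{prop-dual1}, and for the reverse inclusion one takes $x\in(\cC_1\cap\cC_2)^\ast$, assumes $x\notin\cC_1^\ast+\cC_2^\ast$, strictly separates $x$ from the convex sum by some $y$, and reads off from the separating inequality that $y\in(\cC_1^\ast)^\ast\cap(\cC_2^\ast)^\ast=\cC_1\cap\cC_2$ while $\langle x,y\rangle<0$, contradicting $x\in(\cC_1\cap\cC_2)^\ast$; the strict separation step is exactly what forces the sum to be closed.

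Thus the main obstacle is closedness of $\cC_1^\ast+\cC_2^\ast$. This is genuinely the crux, since in finite dimensions the sum of two arbitrary closed convex cones can fail to be closed; the hypothesis that $\cC_1,\cC_2$, and hence $\cC_1^\ast,\cC_2^\ast$, are proper (full-dimensional and pointed) is what rescues it. I would establish closedness by a recession argument: given $p_n+q_n\to z$ with $p_n\in\cC_1^\ast$ and $q_n\in\cC_2^\ast$, if $\{p_n\}$ is bounded one passes to convergent subsequences and uses closedness of each summand, whereas if $\|p_n\|\to\infty$ one normalizes to extract a unit direction $u=\lim p_n/\|p_n\|\in\cC_1^\ast$ with $-u=\lim q_n/\|p_n\|\in\cC_2^\ast$, and then subtracts a multiple of this shared direction to produce a decomposition of $z$ with strictly smaller norms, contradicting a minimal-norm choice; pointedness and solidity are what make this reduction terminate. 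In all the applications in this thesis the two cones moreover share an interior point (for instance the normalized identity lies in the interior of every entanglement structure), so one may alternatively invoke the standard constraint qualification $\mathrm{int}\,\cC_1^\ast\cap\mathrm{int}\,\cC_2^\ast\ne\emptyset$, under which the closure is automatically superfluous and \eqref{prop-dual2} follows at once.
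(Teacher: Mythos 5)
Your treatment of \eqref{prop-dual1} is correct and is exactly the paper's argument: one inclusion from $0\in\cC_i$, the other from additivity of the inner product. Your analysis of \eqref{prop-dual2} departs from the paper, and in one respect it is sharper: the paper's proof consists of the single sentence that \eqref{prop-dual2} ``is shown by the same way as \eqref{prop-dual1},'' whereas, as you correctly observe, only the inclusion $\cC_1^\ast+\cC_2^\ast\subseteq(\cC_1\cap\cC_2)^\ast$ is elementary; by biduality the reverse inclusion is \emph{equivalent} to closedness of $\cC_1^\ast+\cC_2^\ast$, which is a genuinely different (and nontrivial) issue.

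However, your closedness argument has a fatal gap, and in fact no argument can close it, because \eqref{prop-dual2} is false as stated. In your recession step, obtaining $u\in\cC_1^\ast$ and $-u\in\cC_2^\ast$ with $\|u\|=1$ is not a contradiction: pointedness of each cone separately does not force $\cC_1^\ast\cap(-\cC_2^\ast)=\{0\}$. And the ``subtract the shared direction'' move is unavailable: the only modification the cone memberships permit is $z=(p+tu)+(q-tu)$ (since $u\in\cC_1^\ast$ and $-u\in\cC_2^\ast$), which \emph{increases} the norms; moreover you would be minimizing norms over decompositions of a point $z$ that is not yet known to have any decomposition, which is circular. Concretely, take in $\mathbb{R}^3$ the proper, self-dual Lorentz cones $\cC_1=\{(x,y,z):z\ge\sqrt{x^2+y^2}\}$ and $\cC_2=\{(x,y,z):-x\ge\sqrt{y^2+z^2}\}$, so that $\cC_1^\ast+\cC_2^\ast=\cC_1+\cC_2$. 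The points $\bigl(\lambda-\tfrac{1}{2\lambda},\,1,\,\sqrt{\lambda^2+\tfrac{1}{4\lambda^2}}\bigr)+\lambda(-1,0,-1)$ lie in $\cC_1+\cC_2$ and converge to $(0,1,0)$ as $\lambda\to\infty$, yet $(0,1,0)\notin\cC_1+\cC_2$: writing $(0,1,0)=p+q$ with $p\in\cC_1$, $q\in\cC_2$ and eliminating $q$ gives the chain $p_1\ge\sqrt{q_2^2+p_3^2}\ge p_3\ge\sqrt{p_1^2+p_2^2}\ge p_1$, forcing $p_2=q_2=0$ and contradicting $p_2+q_2=1$. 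Hence $\cC_1+\cC_2$ is not closed, while $(\cC_1\cap\cC_2)^\ast$, like every dual cone, is closed, so equality in \eqref{prop-dual2} fails for this pair (this also shows the paper's own one-line proof is not repairable as stated). The correct general statement is $(\cC_1\cap\cC_2)^\ast=\mathrm{Clo}\left(\cC_1^\ast+\cC_2^\ast\right)$, and the closure can be dropped under the qualification $\cC_1^{\circ}\cap\cC_2^{\circ}\neq\emptyset$; note that this condition must be imposed on the \emph{primal} cones, not on their duals as in your last sentence. In the thesis's applications the cones in play all contain the identity in their interiors, so the qualified statement suffices downstream, but both your proof and the paper's need this hypothesis or the closure operation.
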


\begin{proof}[Proof of Proposition~\ref{prop-dual}]
	\textbf{[OUTLINE]}
	The equation \eqref{prop-dual2} is shown by the same way as \eqref{prop-dual1}.
	Then, we show only \eqref{prop-dual1} here.\\
	
	\textbf{[STEP1]}
	Proof of the inclusion relation $\left(\cC_1+\cC_2\right)^\ast\subset\cC_1^\ast\cap\cC_2^\ast$ of \eqref{prop-dual1}.
	
	Let $x$ be an element in $\left(\cC_1+\cC_2\right)^\ast$.
	Because two cones $\cC_1$ and $\cC_2$ contain the element $0$,
	arbitrary elements $y_1\in\cC_1$ and $y_2\in\cC_2$ satisfy
	$y_i\in\cC_1+\cC_2$ for $i=1,2$.
	Therefore, the elements satisfy $\langle x,y_i\rangle\ge0$ for $i=1,2$.
	Because $y_1,y_2$ are arbitrary,
	the element $x$ belongs to $\cC_1^\ast$ and $\cC_2^\ast$,
	which implies $x\in\cC_1^\ast\cap\cC_2^\ast$.\\

	\textbf{[STEP2]}
	Proof of the inclusion relation $\supset$ of \eqref{prop-dual1}.
	
	Let $x$ be an element in $\cC_1^\ast\cap\cC_2^\ast$.
	Therefore, arbitrary elements $y_1\in\cC_1$ and $y_2\in\cC_2$ satisfy
	$\langle x,y_i\rangle\ge0$ for $i=1,2$,
	which implies the inequality $\langle x,y_1+y_2\rangle\ge0$.
	Because $y_1,y_2$ are arbitrary,
	the element $x$ belongs to $\left(\cC_1+\cC_2\right)^\ast$.	
\end{proof}

\subsection{Properties about Uncountable Operation of Positive Cones}\label{sect:2-1-2}

First, the following proposition guarantees that the intersection of uncountable positive cones is also a positive cone.
\begin{proposition}\label{prop:def-cone1}
	Let $\{\cC_\lambda\}_{\lambda\in\Lambda}$ be a family of positive cones with an uncountably infinite set $\Lambda$.
	There exists a positive cone $\cC$ such that the relation $\cC_\lambda\supset\cC$ holds for any $\lambda\in\Lambda$.
	Then, the set $\bigcap_{\lambda\in\Lambda} \cC_\lambda$ is a positive cone,
	i.e.,
	$\bigcap_{\lambda\in\Lambda} \cC_\lambda$ satisfies the following three conditions:
	\begin{enumerate}[(i)]
		\item $\bigcap_{\lambda\in\Lambda} \cC_\lambda$ is closed and convex.
		\item $\bigcap_{\lambda\in\Lambda} \cC_\lambda$ has an inner point.
		\item $\bigcap_{\lambda\in\Lambda} \cC_\lambda\cap\left(-\bigcap_{\lambda\in\Lambda} \cC_\lambda\right)=\{0\}$.
	\end{enumerate}
\end{proposition}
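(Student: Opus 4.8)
The plan is to verify the three listed conditions separately, after noting that closure under nonnegative scaling is inherited immediately: if $x\in\bigcap_{\lambda\in\Lambda}\cC_\lambda$ then $x\in\cC_\lambda$ for every $\lambda$, so $rx\in\cC_\lambda$ for all $r\ge0$, whence $rx\in\bigcap_{\lambda\in\Lambda}\cC_\lambda$. Among the three conditions I expect (i) to be purely formal, (ii) to be the one place where the hypothesis on the common subcone $\cC$ is genuinely used, and (iii) to rely on pointedness of the constituent cones.

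For condition (i), I would invoke two standard facts that hold for arbitrary (in particular uncountable) index sets: an arbitrary intersection of closed sets is closed, and an arbitrary intersection of convex sets is convex. Since each $\cC_\lambda$ is closed and convex by the definition of a positive cone, $\bigcap_{\lambda\in\Lambda}\cC_\lambda$ is closed and convex as well. No cardinality restriction on $\Lambda$ intervenes here, which is the only subtle point worth checking.

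For condition (ii), I would exploit the common subcone. Since $\cC\subset\cC_\lambda$ for every $\lambda$, we have $\cC\subset\bigcap_{\lambda\in\Lambda}\cC_\lambda$. Because $\cC$ is a positive cone it has non-empty interior, so I may choose $x\in\cC^\circ$ together with some $\epsilon>0$ such that $N_\epsilon(x)\subset\cC$. Then $N_\epsilon(x)\subset\bigcap_{\lambda\in\Lambda}\cC_\lambda$, so $x$ is an interior point of the intersection. This is precisely the step that fails without the hypothesis: an intersection of infinitely many cones can easily collapse to a lower-dimensional set (or to $\{0\}$), and it is the fixed interior ball supplied by $\cC$ that rules this out.

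For condition (iii), I would argue pointwise. Let $x\in\bigcap_{\lambda\in\Lambda}\cC_\lambda\cap\left(-\bigcap_{\lambda\in\Lambda}\cC_\lambda\right)$; then $x\in\cC_\lambda$ and $-x\in\cC_\lambda$ for every $\lambda$, i.e.\ $x\in\cC_\lambda\cap(-\cC_\lambda)$. Invoking pointedness of each $\cC_\lambda$ gives $\cC_\lambda\cap(-\cC_\lambda)=\{0\}$, so $x=0$. The point to be careful about is that this step really does require each factor to be proper: for merely positive (non-pointed) cones the conclusion is false, as a common half-space shows, so here I am relying on the standing convention of the thesis that the cones in play are proper. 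The hard part is thus not any single computation but bookkeeping — tracking which hypothesis each condition consumes, namely the common subcone for (ii) and properness of the factors for (iii).
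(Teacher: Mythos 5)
Your proof is correct and follows essentially the same route as the thesis: (i) via arbitrary intersections of closed/convex sets, (ii) via the interior point supplied by the common subcone $\cC$, and (iii) via $\cC_\lambda\cap(-\cC_\lambda)=\{0\}$ for each $\lambda$. Your side remark that (iii) genuinely requires the $\cC_\lambda$ to be proper (pointed) is apt — the thesis's own proof uses exactly that identity even though the statement only says ``positive cones'' — but this matches, rather than departs from, the paper's argument.
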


\begin{proof}
	\textbf{[STEP1]}
	Proof of (i).
	
	Because $\cC_\lambda$ is a positive cone for any $\lambda\in\Lambda$,
	$\cC_\lambda$ is closed and convex.
	Because $\cC_\lambda$ is closed, $\bigcap_{\lambda\in\Lambda} \cC_\lambda$ is also closed.
	Take any two elements $x,y\in\left(\bigcap_{\lambda\in\Lambda} \cC_\lambda\right)$.
	Because $x$ and $y$ satisfy $x,y\in\cC_\lambda$ for any $\lambda$,
	$px+(1-p)y\in\cC_\lambda$ for any $p\in[0,1]$,
	which implies that $px+(1-p)y\in\left(\bigcap_{\lambda\in\Lambda} \cC_\lambda\right)$i.e.,
	$\bigcap_{\lambda\in\Lambda} \cC_\lambda$ is convex.
	\\
	
	\textbf{[STEP2]}
	Proof of (ii).
	
	Because the assumption $\cC_\lambda\supset\cC$ holds for any $\lambda\in\Lambda$,
	$\left(\bigcap_{\lambda\in\Lambda} \cC_\lambda\right)\supset\cC$ holds.
	Also, because $\cC$ is a positive cone, $\cC$ has an inner point,
	which also belongs to the interior of $\bigcap_{\lambda\in\Lambda} \cC_\lambda$.
	\\
	
	\textbf{[STEP3]}
	Proof of (iii).
	
	This is shown because the set $\bigcap_{\lambda\in\Lambda} \cC_\lambda\cap\left(-\bigcap_{\lambda\in\Lambda} \cC_\lambda\right)$ is written as
	\begin{align}
		\bigcap_{\lambda\in\Lambda} \cC_\lambda\cap\left(-\bigcap_{\lambda\in\Lambda} \cC_\lambda\right)
		=&\bigcap_{\lambda\in\Lambda} \left(\cC_\lambda\cap(-\cC_\lambda)\right)
		=\bigcap_{\lambda\in\Lambda} \{0\}=\{0\}.
	\end{align}
\end{proof}

Next,
we discuss the sum of uncountable sets.
We remark the definition of  the sum of uncountable sets.

\begin{definition}\label{def:cone-2}
	Let $\{X_\lambda\}_{\lambda\in\Lambda}$ be a family of sets $X_\lambda$ with an uncountably infinite set $\Lambda$.
	We define the set $\sum_{\lambda\in\Lambda} X_\lambda$ as
	\begin{align}
		&\sum_{\lambda\in\Lambda} X_\lambda
		:=\mathrm{Clo}\left(\left\{\sum_{i\in I} x_i\middle| x_i\in X_i, I\subset\Lambda \mbox{ is a finite subset set}\right\}\right),
	\end{align}
	where $\mathrm{Clo}(Y)$ is the closure of a set $Y$.
\end{definition}

Then, the following proposition guarantees that the sum of uncountable positive cones is also a positive cone.

\begin{proposition}\label{prop:def-cone2}
	Let $\{\cC_\lambda\}_{\lambda\in\Lambda}$ be a family of positive cones with an uncountably infinite set $\Lambda$.
	There exists a positive cone $\cC$ such that the positive cone $\cC_\lambda$ satisfies $\cC_\lambda\subset\cC$ for any $\lambda\in\Lambda$.
	Then, the set $\sum_{\lambda\in\Lambda} \cC_\lambda$ is a positive cone,
	i.e.,
	$\sum_{\lambda\in\Lambda} \cC_\lambda$ satisfies the following three conditions:
	\begin{enumerate}[(i)]
		\item $\sum_{\lambda\in\Lambda} \cC_\lambda$ is closed and convex.
		\item $\sum_{\lambda\in\Lambda} \cC_\lambda$ has an inner point.
		\item $\sum_{\lambda\in\Lambda} \cC_\lambda\cap\left(-\sum_{\lambda\in\Lambda} \cC_\lambda\right)=\{0\}$.
	\end{enumerate}
\end{proposition}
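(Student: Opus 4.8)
The plan is to verify the three listed conditions separately, in parallel with the proof of Proposition~\ref{prop:def-cone1}, treating the defining finite-sum set and its closure apart from one another. Write $S_0$ for the set of finite sums $\{\sum_{i\in I}x_i \mid x_i\in\cC_i,\ I\subset\Lambda \text{ finite}\}$ appearing in Definition~\ref{def:cone-2}, so that $\sum_{\lambda\in\Lambda}\cC_\lambda=\mathrm{Clo}(S_0)$. First I would establish condition (i). Closedness is immediate, since $\sum_{\lambda\in\Lambda}\cC_\lambda$ is by definition a closure. For convexity I would first show $S_0$ is convex: given finite sums $x=\sum_{i\in I}x_i$ and $y=\sum_{j\in J}y_j$ and $p\in[0,1]$, I enlarge both index sets to $K=I\cup J$, padding missing terms by $0\in\cC_k$, and use that each $\cC_k$ is a convex cone (hence closed under both addition and nonnegative scaling, since $a+b=2(\tfrac12 a+\tfrac12 b)\in\cC_k$) to get $px_k+(1-p)y_k\in\cC_k$ for each $k\in K$; summing gives $px+(1-p)y\in S_0$. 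As the closure of a convex set is convex, $\sum_{\lambda\in\Lambda}\cC_\lambda$ is convex. The same index-union argument shows $S_0$ is stable under nonnegative scaling, and this passes to the closure, which also supplies the cone property $rx\in\sum_{\lambda\in\Lambda}\cC_\lambda$ for $r\ge0$.

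Next I would prove condition (ii). Fixing any single index $\lambda_0\in\Lambda$ and taking $I=\{\lambda_0\}$ shows $\cC_{\lambda_0}\subset S_0\subset\sum_{\lambda\in\Lambda}\cC_\lambda$. Since $\cC_{\lambda_0}$ is a positive cone it has an interior point, and any interior point of a subset is an interior point of the ambient set; hence $\sum_{\lambda\in\Lambda}\cC_\lambda$ has an inner point. I note that this step, like condition (i), does not use the bounding cone $\cC$.

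Finally, condition (iii) is where the hypothesis that all the $\cC_\lambda$ are contained in a common proper cone $\cC$ becomes essential; this is the dual counterpart of the lower bound used for the intersection in Proposition~\ref{prop:def-cone1}. I would first show $S_0\subset\cC$: any finite sum $\sum_{i\in I}x_i$ with $x_i\in\cC_i\subset\cC$ lies in $\cC$ because $\cC$, being a positive cone, is closed under addition and nonnegative scaling. Taking closures and using that $\cC$ is closed yields $\sum_{\lambda\in\Lambda}\cC_\lambda=\mathrm{Clo}(S_0)\subset\cC$. Consequently $\sum_{\lambda\in\Lambda}\cC_\lambda\cap(-\sum_{\lambda\in\Lambda}\cC_\lambda)\subset\cC\cap(-\cC)=\{0\}$, the last equality being the properness of $\cC$; and since $0\in\cC_\lambda$ for every $\lambda$, the vector $0$ lies in the sum, so the intersection is exactly $\{0\}$.

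The only genuine subtlety, and hence the main obstacle, is condition (iii): without a common upper bound the sum of infinitely many proper cones need not be proper (the sum could fill an entire linear subspace), so the argument must invoke the boundedness hypothesis decisively, and it crucially requires the bounding cone $\cC$ to itself be proper. Conditions (i) and (ii) are routine consequences of the convex-cone structure of the individual $\cC_\lambda$ together with the good behavior of the closure under the continuous operations of scaling and translation.
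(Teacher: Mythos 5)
Your proof is correct and follows essentially the same route as the paper's: closedness is immediate from the definition as a closure, convexity comes from convex combinations of finite sums (the paper argues via approximating sequences, you via convexity of the finite-sum set $S_0$ followed by taking the closure, which is the same idea done slightly more carefully), an inner point comes from any single $\cC_{\lambda_0}$ contained in the sum, and condition (iii) follows from $\mathrm{Clo}(S_0)\subset\cC$ together with $\cC\cap(-\cC)=\{0\}$. Your explicit observation that the bounding cone $\cC$ must be \emph{proper} for step (iii) makes precise an assumption that the paper's statement omits but its proof uses implicitly.
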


\begin{proof}
	\textbf{[STEP1]}
	Proof of (i).
	
	By the deinifion~\ref{def:cone-2}, $\sum_{\lambda\in\Lambda} \cC_\lambda$ is closed.
	Take any two elements $x,y\in\left(\sum_{\lambda\in\Lambda} \cC_\lambda\right)$.
	By the definition~\ref{def:cone-2}, the elements $x,y$ are written as $x=\lim_{n\to \infty} x_n$ and $y=\lim_{n\to \infty} y_n$,
	where $x_n,y_n\in\cC_n$.
	Therefore, the element $z_n(p)=px_n+(1-p)y_n$ belongs to $\cC_n$ for $p\in[0,1]$.
	Because $\lim_{n\to\infty} z_n(p)=px+(1-p)y$,
	the element $px+(1-p)y$ belongs to $\sum_{\lambda\in\Lambda} \cC_\lambda$.
	\\
	
	\textbf{[STEP2]}
	Proof of (ii).
	
	Because the inclusion relation $\cC_{\lambda_0}\subset\sum_{\lambda\in\Lambda} \cC_\lambda$ holds for any element $\lambda_0\in\Lambda$,
	$\sum_{\lambda\in\Lambda} \cC_\lambda$ has an inner point that is also an inner point of $\cC_{\lambda_0}$.
	\\
	
	\textbf{[STEP3]}
	Proof of (iii).
	
	By the definition~\ref{def:cone-2},
	the element $0$ belongs to $\sum_{\lambda\in\Lambda} \cC_\lambda$,
	which implies the relation $\{0\}\subset \sum_{\lambda\in\Lambda} \cC_\lambda\cap\left(-\sum_{\lambda\in\Lambda} \cC_\lambda\right)$.
	Then, we show $\{0\}\supset \sum_{\lambda\in\Lambda} \cC_\lambda\cap\left(-\sum_{\lambda\in\Lambda} \cC_\lambda\right)$ as follows.
	Because the assumption $\cC_\lambda\subset\cC$ holds for any $\lambda\in\Lambda$
	and because the positive cone $\cC$ is closed,
	the set $\sum_{\lambda\in\Lambda} \cC_\lambda$ defined as a closure satisfies the inclusion relation
	$\cC\supset\sum_{\lambda\in\Lambda} \cC_\lambda$.
	The inclusion relation $-\cC\supset-\sum_{\lambda\in\Lambda} \cC_\lambda$ also holds.
	Therefore, the following inclusion relation holds:
	\begin{align}
		&\sum_{\lambda\in\Lambda} \cC_\lambda\cap\left(-\sum_{\lambda\in\Lambda} \cC_\lambda\right)
		\subset\cC\cap(-\cC)=\{0\}.
	\end{align}
\end{proof}

Finally,
the following lemma gives a relation between an intersection and a sum over uncountable positive cones.

\begin{lemma}\label{lem:sum-cone}
	Let $\{\cC_\lambda\}_{\lambda\in\Lambda}$ be a family of positive cones with an uncountably infinite set $\Lambda$,
	and let $\cC_1,\cC_2$ be positive cones satisfying $\cC_1\subset\cC_\lambda\subset\cC_2$ for any $\lambda\in\Lambda$.
	Then, the dual cone $\left(\bigcap_{\lambda\in\Lambda} \cC_\lambda\right)^\ast$
	is given by
	$\sum_{\lambda\in\Lambda} \cC_\lambda^\ast$.
\end{lemma}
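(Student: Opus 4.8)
The plan is to treat this as the uncountable analogue of equation~\eqref{prop-dual2}, and to deduce it from the easier companion identity (the analogue of~\eqref{prop-dual1}) together with the biduality $(\cC^\ast)^\ast=\cC$ for proper cones. First I would record the setup that makes every object well defined: applying Proposition~\ref{prop:dual-inc} to the hypotheses $\cC_1\subset\cC_\lambda\subset\cC_2$ yields $\cC_2^\ast\subset\cC_\lambda^\ast\subset\cC_1^\ast$ for every $\lambda$. Hence $\cC_1$ is a lower bound for the family $\{\cC_\lambda\}$ and $\cC_1^\ast$ is an upper bound for the family $\{\cC_\lambda^\ast\}$, so Proposition~\ref{prop:def-cone1} shows that $\bigcap_{\lambda\in\Lambda}\cC_\lambda$ is a proper positive cone and Proposition~\ref{prop:def-cone2} shows that $\sum_{\lambda\in\Lambda}\cC_\lambda^\ast$ is a proper positive cone. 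In particular the biduality identity is available for both sides.

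The core step is to prove, for any family $\{\cD_\lambda\}_{\lambda\in\Lambda}$ bounded above by a positive cone (so that $\sum_{\lambda}\cD_\lambda$ is a positive cone by Proposition~\ref{prop:def-cone2}), the identity
\begin{equation}
  \Bigl(\sum_{\lambda\in\Lambda}\cD_\lambda\Bigr)^\ast=\bigcap_{\lambda\in\Lambda}\cD_\lambda^\ast .
\end{equation}
Both inclusions are elementary. For $\subset$, each $\cD_{\lambda_0}$ sits inside $\sum_{\lambda}\cD_\lambda$, so any $x$ in the left-hand dual satisfies $\langle x,y\rangle\ge0$ for all $y\in\cD_{\lambda_0}$ and every $\lambda_0$, giving $x\in\bigcap_{\lambda}\cD_\lambda^\ast$. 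For $\supset$, if $x\in\bigcap_{\lambda}\cD_\lambda^\ast$ then $\langle x,\sum_{i\in I}y_i\rangle=\sum_{i\in I}\langle x,y_i\rangle\ge0$ for every finite sum of elements $y_i\in\cD_i$, and continuity of the inner product promotes this to the closure, which by Definition~\ref{def:cone-2} is exactly $\sum_{\lambda}\cD_\lambda$.

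Finally I would apply this identity to $\cD_\lambda:=\cC_\lambda^\ast$. Using the biduality $(\cC_\lambda^\ast)^\ast=\cC_\lambda$ gives $\bigl(\sum_{\lambda}\cC_\lambda^\ast\bigr)^\ast=\bigcap_{\lambda}\cC_\lambda$. Taking the dual of both sides and invoking biduality once more on the proper cone $\sum_{\lambda}\cC_\lambda^\ast$ (legitimate by the first paragraph) yields $\sum_{\lambda}\cC_\lambda^\ast=\bigl(\bigcap_{\lambda}\cC_\lambda\bigr)^\ast$, which is the claim.

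The only genuinely delicate point is the closure appearing in Definition~\ref{def:cone-2}: a direct proof of the nontrivial inclusion $\bigl(\bigcap_{\lambda}\cC_\lambda\bigr)^\ast\subset\sum_{\lambda}\cC_\lambda^\ast$ would a priori require a separating-hyperplane argument, since one must realize an abstract dual element as a limit of finite combinations drawn from the $\cC_\lambda^\ast$. The double-dual route above avoids this entirely, trading it for the closedness and properness of $\sum_{\lambda}\cC_\lambda^\ast$, which is precisely what Proposition~\ref{prop:def-cone2} supplies; thus the continuity argument in the companion identity is the main (and essentially only) place where the uncountable closure must be handled with care.
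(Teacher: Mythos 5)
Your proposal is correct and follows essentially the same route as the paper's proof: both establish that $\bigcap_{\lambda\in\Lambda}\cC_\lambda$ and $\sum_{\lambda\in\Lambda}\cC_\lambda^\ast$ are proper positive cones via Propositions~\ref{prop:dual-inc}, \ref{prop:def-cone1}, and~\ref{prop:def-cone2}, then prove $\bigcap_{\lambda\in\Lambda}\cC_\lambda=\bigl(\sum_{\lambda\in\Lambda}\cC_\lambda^\ast\bigr)^\ast$ by the same pair of arguments (monotonicity of duals plus biduality in one direction, finite sums plus continuity of the inner product in the other), and finish by dualizing. The only difference is presentational: you package the core as a general identity $\bigl(\sum_\lambda\cD_\lambda\bigr)^\ast=\bigcap_\lambda\cD_\lambda^\ast$ and make the final biduality step explicit, whereas the paper works directly with $\cC_\lambda$ and $\cC_\lambda^\ast$ and leaves that last dualization implicit.
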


\begin{proof}[Proof of lemma~\ref{lem:sum-cone}]
	Because of the assumption $\cC_1\subset\cC_\lambda\subset\cC_2$,
	Proposition~\ref{prop:dual-inc} implies  $\cC_1^\ast\supset\cC_\lambda^\ast\supset\cC_2^\ast$.
	Therefore, proposition~\ref{prop:def-cone1} and proposition~\ref{prop:def-cone2} imply that
	the two sets
	$\left(\bigcap_{\lambda\in\Lambda} \cC_\lambda\right)^\ast$ and $\sum_{\lambda\in\Lambda} \cC_\lambda^\ast$ are positive cones.

	Now, we show the duality.
	Take an arbitrary element $y\in\bigcap_{\lambda\in\Lambda} \cC_\lambda$ and
	an arbitrary element  $x\in\sum_{\lambda\in\Lambda} \cC_\lambda^\ast$.
	Then, there exists a sequence $x_n\in\sum_{\lambda\in\Lambda} \cC_\lambda^\ast$ such that
	$\lim_{n\to\infty} x_n=x$ and $x_n$ is a finite sum of elements in $\cC_\lambda$ for each $n$.
	Because $x_n$ is a finite sum of elements in $\cC_\lambda$, the inequality $\langle x_n,y\rangle\ge0$ holds.
	Therefore,
	$x,y$ satisfies the following inequality:
	\begin{align}\label{eq:lem:sum-cone}
		\langle x,y\rangle=\lim_{n\to\infty} \langle x_n,y\rangle\ge\lim_{n\to\infty} 0=0,
	\end{align}
	and thus,
	we obtain the relation $y\in\left(\sum_{\lambda\in\Lambda} \cC_\lambda^\ast\right)^\ast$,
	i.e.,
	$\bigcap_{\lambda\in\Lambda} \cC_\lambda\subset\left(\sum_{\lambda\in\Lambda} \cC_\lambda^\ast\right)^\ast$.
	The opposite inclusion relation is shown as follows.
	For any $\lambda\in\Lambda$,
	the inclusion relation $\sum_{\lambda\in\Lambda} \cC_\lambda^\ast\supset\cC_\lambda^\ast$,
	and Proposition~\ref{prop:dual-inc} implies the inclusion relation
	$\left(\sum_{\lambda\in\Lambda} \cC_\lambda^\ast\right)^\ast\subset\left(\cC_\lambda^\ast\right)^\ast=\cC_\lambda$.
	This inclusion relation holds for any $\lambda\in\Lambda$,
	and therefore,
	we obtain the inclusion relation 
	$\left(\sum_{\lambda\in\Lambda} \cC_\lambda^\ast\right)^\ast\subset\bigcap_{\lambda\in\Lambda} \cC_\lambda$.
	As a result, we obtain $\bigcap_{\lambda\in\Lambda} \cC_\lambda=\left(\sum_{\lambda\in\Lambda} \cC_\lambda^\ast\right)^\ast$.
\end{proof}

\subsection{Group Actions on Positive Cone}\label{sect:2-1-3}

In this thesis,
we discuss group symmetry on positive cones.
In this thesis,
we mainly consider a subgroup $G$ of $\mathrm{GL}(\cV)$.

At first, we introduce the following symmetry (called \textit{$G$-symmetry}) for a set $X$ (or a positive cone $\cC$) under a subgroup $G$ of $\mathrm{GL}(\cV)$:
\begin{itemize}
	\item[$G$-symmetry] a set $X$ is $G$-symmetric $\Leftrightarrow$ $g(x)\in X$ for any $x \in X$ and any $g\in G$.
\end{itemize}
Also, we say that a set of families $\cX$ is $G$-symmetric
if any element $g\in G$ and any family $\{X_\lambda\}_{\lambda\in\Lambda}\in\cX$ satisfy
$\{g(X_\lambda)\}_{\lambda\in\Lambda}\in\cX$.

Next, we define the following condition about a subgroup $G\subset\mathrm{GL}(\cV)$.
\begin{itemize}
	\item[adjoint-closed group] $G$ is a closed subgroup of $\mathrm{GL}(\cV)$ including the identity map and $G$ satisfies $g^* \in G$ for any element $g \in G$.
\end{itemize}
\begin{lemma}\label{lem:g-closed}
	Let $G$ be an adjoint-closed subgroup of $\mathrm{GL}(\cV)$,
	and let $\cC$ be a $G$-symmetric positive cone.
	Then, $\cC^\ast$ also satisfies $G$-symmetry.
\end{lemma}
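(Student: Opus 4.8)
The plan is to unwind the definition of the dual cone \eqref{def:dual-cone} and exploit the defining property of the adjoint with respect to the inner product on $\cV$. Fix an arbitrary $y \in \cC^\ast$ and an arbitrary $g \in G$; the goal is to show that $g(y) \in \cC^\ast$, which by \eqref{def:dual-cone} amounts to verifying $\langle g(y), x\rangle \ge 0$ for every $x \in \cC$.

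The key step is the adjoint identity $\langle g(y), x\rangle = \langle y, g^*(x)\rangle$, valid for all $x, y \in \cV$ by the definition of $g^*$ as the adjoint of $g$ relative to the inner product. Here the hypothesis that $G$ is adjoint-closed does the essential work: since $g \in G$, we have $g^* \in G$, and since $\cC$ is $G$-symmetric, it follows that $g^*(x) \in \cC$ whenever $x \in \cC$.

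Combining these observations, for any $x \in \cC$ the element $g^*(x)$ again lies in $\cC$, so the assumption $y \in \cC^\ast$ gives $\langle y, g^*(x)\rangle \ge 0$, and hence $\langle g(y), x\rangle \ge 0$. Since $x \in \cC$ was arbitrary, this shows $g(y) \in \cC^\ast$; and since $y \in \cC^\ast$ and $g \in G$ were arbitrary, $\cC^\ast$ is $G$-symmetric.

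I would emphasize that there is no genuine obstacle here: the whole argument is a one-line consequence of transferring $g$ across the inner product via its adjoint. The only point that actually requires the stated hypotheses, rather than mere group structure, is that one needs $g^* \in G$ (not $g^{-1}$) in order to land back inside $\cC$ after applying $G$-symmetry; this is exactly what the \emph{adjoint-closed} condition supplies, and it explains why assuming only that $G$ is a subgroup would not be enough.
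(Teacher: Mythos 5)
Your proof is correct and is essentially identical to the paper's own argument: both move $g$ across the inner product via $\langle g(y),x\rangle=\langle y,g^\ast(x)\rangle$, then use adjoint-closedness to get $g^\ast\in G$ and $G$-symmetry of $\cC$ to conclude $g^\ast(x)\in\cC$, hence $g(y)\in\cC^\ast$. The only difference is notational (the roles of $x$ and $y$ are swapped relative to the paper).
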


\begin{proof}[Proof of lemma~\ref{lem:g-closed}]
	Take arbitrary elements $x\in\cC^\ast$, $y\in\cC$, and $g\in G$.
	Because $G$ is adjoint-closed, the relation $g^\ast\in G$ holds.
	Also, because $\cC$ is $G$-symmetric, the relation $g^\ast(y)\in\cC$ holds.
	Therefore, we obtain the inequality
	\begin{align}
		\langle g(x),y \rangle=\langle x,g^\ast(y)\rangle\ge 0,
	\end{align}
	which implies the relation $g(x)\in\cC^\ast$.
\end{proof}
Simply speaking,
lemma~\ref{lem:g-closed} shows that
adjoint-closedness transmits $G$-symmetry from $\cC$ to $\cC^\ast$.

\section{A Model of GPTs}\label{sect:2-2}

In this section,
we give a definition and examples of models of GPTs.

\subsection{Definition of a Model of GPTs}\label{sect:2-2-1}

First,
we define a model of GPTs by a proper cone in real vector space.

\begin{definition}[A Model of GPTs]\label{def:model}
	A model of GPTs is defined by a tuple $\bm{G}=(\cV,\langle\ ,\ \rangle,\cC,u)$,
	where $(\cV, \langle\ ,\ \rangle)$, $\cC$, and $u$ are a real-vector space with inner product, a proper cone, and an order unit of $\cC^\ast$, respectively.
\end{definition}

Given a model of GPTs,
a state is defined as normalized element in the cone by order unit.

\begin{definition}[State Space of GPTs]
	Given a model of GPTs $\bm{G}=(\cV,\langle\ ,\ \rangle,\cC,u)$,
	the state space of $\bm{G}$ is defined as
	\begin{align}\label{def:state}
		\cS(\cC,u):=\left\{\rho\in\cV\middle|\langle\rho,u\rangle=1\right\}.
	\end{align}
	Here, we call an element $\rho\in\cS(\cC,u)$ a state of $\bm{G}$.
\end{definition}

\begin{proposition}[{\cite[special case of Section 2.3.2]{BoydBook2004}}]\label{prop:state-sp}
	Given a model $\bm{G}=(\cV,\langle\ ,\ \rangle,\cC,u)$,
	the state space $\cS(\cC,u)$ is convex.
\end{proposition}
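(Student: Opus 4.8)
The plan is to verify convexity directly from the definition, by showing that an arbitrary convex combination of two states is again a state. Since a state is cut out by two conditions --- membership in the cone $\cC$ together with the normalization $\langle\,\cdot\,,u\rangle=1$ --- the proof splits into checking that each condition is preserved under convex combinations.

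First I would fix two states $\rho_1,\rho_2\in\cS(\cC,u)$ and a parameter $p\in[0,1]$, and set $\rho:=p\rho_1+(1-p)\rho_2$. For the cone condition I would invoke the defining property of a positive cone: by the second clause in the definition of a positive cone, $\cC$ is a closed convex set, so $\rho\in\cC$ follows immediately from $\rho_1,\rho_2\in\cC$ and $p\in[0,1]$. For the normalization condition I would use bilinearity of the inner product to compute
\begin{align}
	\langle\rho,u\rangle=p\langle\rho_1,u\rangle+(1-p)\langle\rho_2,u\rangle=p\cdot1+(1-p)\cdot1=1.
\end{align}
Combining the two checks gives $\rho\in\cS(\cC,u)$, which is exactly the statement that $\cS(\cC,u)$ is convex.

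There is essentially no obstacle here: the entire substance of the claim is carried by the convexity of $\cC$ (already guaranteed by the positive-cone axioms) and by the linearity of $\langle\,\cdot\,,u\rangle$, both of which are available as soon as the model $\bm{G}$ is defined. The only point worth being careful about is to treat \emph{both} defining constraints of $\cS(\cC,u)$ rather than the normalization alone, so that the argument produces a genuine state and not merely a point of the affine hyperplane $\{\rho\in\cV\mid\langle\rho,u\rangle=1\}$; the normalization constraint by itself already describes an affine subspace, whose convexity is trivial, but the cone-membership clause is what makes the statement the meaningful one.
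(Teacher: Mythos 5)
Your proof is correct, and it actually supplies more than the paper does: the paper states this proposition with only a citation to Boyd and Vandenberghe (Section 2.3.2 there) and gives no argument at all, so your direct verification --- convexity of $\cC$ (the second clause of the positive-cone definition) handles cone membership, and linearity of $\langle\,\cdot\,,u\rangle$ handles normalization --- is a self-contained elementary substitute for that citation. It is also essentially the argument the cited reference would give, namely that the intersection of a convex set with an affine hyperplane is convex, so nothing is lost by inlining it. One point in your favor that deserves emphasis: as literally written, the paper's definition \eqref{def:state} reads $\cS(\cC,u)=\left\{\rho\in\cV\,\middle|\,\langle\rho,u\rangle=1\right\}$ with no clause $\rho\in\cC$, under which reading the proposition is the trivial convexity of an affine hyperplane and the later notion of pure states as extremal points of $\cS(\cC,u)$ would be vacuous. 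Your closing paragraph correctly identifies that the intended definition must include cone membership, and your proof treats exactly that intended definition; this is the right call, and there is no gap in the argument.
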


Due to Proposition~\ref{prop:state-sp},
the state space $\cS(\cC,u)$ has extremal points.
Hereinafter,
we call an element $\mathrm{EP}(\cS(\cC,u))$ a pure state of $\bm{G}$.

Next,
we define effects and measurements.

\begin{definition}[Effect Space]
	Given a model  of GPTs $\bm{G}=(\cV,\langle\ ,\ \rangle,\cC,u)$,
	the effect space of $\bm{G}$ is defined as
	\begin{align}\label{def:eff}
		\cE(\cC,u):=\left\{E\in\cV\middle|E\ge_{\cC^\ast}0\right\}.
	\end{align}
	Here, we call an element $E\in\cE(\cC,u)$ an effect of $\bm{G}$.
	Also, we say that an effect $E$ is proper
	if $E$ satisfies $0\le_{\cC^\ast} E\le_{\cC^\ast}u$.
\end{definition}

\begin{definition}[Measurements of GPTs]\label{def:measurement}
	Given a model  of GPTs $\bm{G}=(\cV,\langle\ ,\ \rangle,\cC,u)$,
	we say that a family $\{M_i\}_{i\in I}$ is a measurement
	if $M_i\in\cE(\cC,u)$ and $\sum_{i\in I} M_i=u$.
	The index $i$ is called an outcome of the measurement.
	Here, we denote the set of all measurements as $\cM(\cC,u)$.
	Especially, we denote the set of measurements with $n$-number of outcomes as
	$\cM_n(\cC,u)$.
\end{definition}

Hereinafter,
we assume that the set $I$ is finite.
This assumption is also usual in GPTs
when the dimension of $\cV$ is finite.

Now, we give a proposition about the effect space and the measurement space.
\begin{proposition}
	Let $\bm{G}=(\cV,\langle\ ,\ \rangle,\cC,u)$ be a model of GPT.
	For any effect $E\in\cE(\cC,u)$,
	there exists a number $r>0$ such that $rE$ is proper effect
	and the family $\{rE,u-rE\}$ belongs to $\cM(\cC,u)$.
\end{proposition}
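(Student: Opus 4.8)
The plan is to reduce everything to the order-unit characterization already established in Proposition~\ref{prop-unit1}. First I observe that the summation condition in the definition of a measurement is automatic here: since $rE+(u-rE)=u$, the pair $\{rE,u-rE\}$ sums to $u$ regardless of the value of $r$. Hence the only real content of the statement is to produce a single $r>0$ for which both $rE$ and $u-rE$ are effects, i.e.\ both lie in $\cC^\ast$; this is exactly the assertion that $rE$ is a proper effect.

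One of the two membership requirements is free of charge. Because $E\in\cE(\cC,u)$ means precisely $E\in\cC^\ast$, and $\cC^\ast$ is a positive cone (hence closed under multiplication by nonnegative scalars), we get $rE\in\cC^\ast$ for every $r\ge 0$. So the whole problem collapses to finding $r>0$ with $u-rE\in\cC^\ast$, equivalently $rE\le_{\cC^\ast} u$.

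The key step is to apply Proposition~\ref{prop-unit1} not to $\cC$ but to the dual cone $\cC^\ast$, which is itself a proper cone and of which $u$ is an order unit by the definition of a model of GPTs. The implication (i)$\Rightarrow$(ii) of that proposition, applied with $x=E$, yields a natural number $n$ such that $E\le_{\cC^\ast} nu$, i.e.\ $nu-E\in\cC^\ast$. Setting $r:=1/n$ and using the cone property of $\cC^\ast$ once more gives
\begin{align}
	u-rE=\frac{1}{n}\left(nu-E\right)\in\cC^\ast.
\end{align}

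This is all that is needed: with this $r>0$ we have $0\le_{\cC^\ast} rE$ (from $rE\in\cC^\ast$) and $rE\le_{\cC^\ast} u$ (from $u-rE\in\cC^\ast$), so $rE$ is a proper effect, and both $rE$ and $u-rE$ are effects summing to $u$, whence $\{rE,u-rE\}\in\cM(\cC,u)$. There is no genuine obstacle in this argument; the only point worth flagging is the bookkeeping observation that the order-unit property actually in play is that of $u$ as an order unit of $\cC^\ast$, so Proposition~\ref{prop-unit1} must be invoked for the cone $\cC^\ast$ rather than for $\cC$.
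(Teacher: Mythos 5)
Your proof is correct and follows exactly the route the paper intends: the paper's entire justification is the one-line remark that the claim ``holds because $u$ is order unit of $\cC^\ast$ and Proposition~\ref{prop-unit1} holds,'' which is precisely your argument of applying Proposition~\ref{prop-unit1} to the proper cone $\cC^\ast$ with order unit $u$ and taking $r=1/n$. Your write-up just makes explicit the details (cone-closure giving $rE\in\cC^\ast$, the automatic summation to $u$) that the paper leaves implicit.
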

This proposition holds because $u$ is order unit of $\cC^\ast$ and Proposition~\ref{prop-unit1} holds.

Next, we describe the measurement processing.
Let us consider the case a state $\rho$ is measured by a measurement $\{M_i\}_{i\in I}$ (Figure~\ref{figure_measurement-2}).
Then, an outcome $i\in I$ is obtained with probability $\langle \rho,M_i\rangle$.
In this setting,
the family $\{\langle \rho,M_i\rangle\}_{i\in I}$ constitutes a probability distribution
because the above definitions imply the inequality 
\begin{align}\label{eq:prob-1}
	\langle \rho,M_i\rangle\ge0 \ (\forall i\in I)
\end{align}
and the equality
\begin{align}\label{eq:prob-2}
	\sum_{i\in I} \langle \rho,M_i\rangle
	=& \left\langle \rho,\left(\sum_{i\in I}M_i\right) \right\rangle
	= \langle \rho,u\rangle=1.
\end{align}
The definitions of states and measurements come from the postulate that
the family $\{\langle \rho,M_i\rangle\}_{i\in I}$ constitutes a probability distribution,
i.e.,
the family satisfies two relations \eqref{eq:prob-1} and \eqref{eq:prob-2},
and the mathematical structure of proper cone is a typical minimal structure to discuss such processing.

\begin{figure}[t]
	\centering
	\includegraphics[width=6cm]{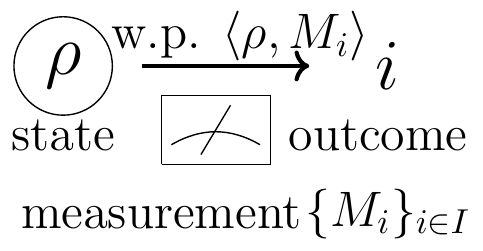}
	\caption[Measurement Processing]{
	When a state $\rho$ is measured by a measurement $\bm{M}=\{M_i\}_{i\in I}$,
	an outcome $i\in I$ is obtained with probability $\langle \rho,M_i\rangle$.
	}
	\label{figure_measurement-2}
\end{figure}

The above definitions are direct generalization of the model of classical and quantum theory
as seen in Section~\ref{sect:2-2-2}.

Finally, we define a transformation of GPTs.

\begin{definition}[Transformation of GPTs]\label{def:trans}
	Given a model  of GPTs $\bm{G}=(\cV,\langle\ ,\ \rangle,\cC,u)$,
	the transformation space of $\bm{G}$ is defined as
	\begin{align}
		\cT(\cC):=\left\{f\in\mathrm{GL}(\cV)\middle|f(\cC)=\cC\right\}.
	\end{align}
	Here, we call an element $f\in\cT(\cC)$ a transformation of $\bm{G}$.
	Also, we say that a transformation $f\in\cT(\cC)$ is a channel
	if $f$ satisfies $f(\cS(\cC,u))\subset\cS(\cC,u)$.
\end{definition}

In GPTs,
a transformation sometimes is regarded as a time-evolution of states.
In order to discuss dynamics on physical systems,
we need to deal with time-evolutions.
However,
Definition~\ref{def:trans} is not a direct generalization of time-evolution of traditional theory.
Therefore, some studies impose additional assumption for transformations (for example reversibility \cite{KBBM2017}) for the aim to deal with a transformation as a time-evolution.
Because this thesis does not aim to deal with time-evolution,
we apply Definition~\ref{def:trans} for convenience.

\subsection{Examples of Model of GPTs}\label{sect:2-2-2}

Next,
we give two examples of models of GPTs.

The first example is classical probability theory,
which corresponds to probabilistic structure of physical systems obeying classical theory.
\begin{example}[Classical probabilistic theory]
	Consider $\mathbb{R}^N$ as a vector space with the standard inner product $(\ , \ )$,
	take $\mathbb{R}^N_+:=\{(x_j)_{j=1}^N\in\mathbb{R}^N \mid x_j\ge0 \ (\forall j)\}$ and $\bm{e}=(1,1,\cdots,1)$,
	and consider the model $(\mathbb{R}^N,(\ ,\ ),\mathbb{R}^N_+,\bm{e})$.
	Then, the set of all states $\cS(\mathbb{R}^N_+,\bm{e})$ satisfies
	\begin{align}
		\cS(\mathbb{R}^N_+,\bm{e})=\{(x_j)_{j=1}^N\in\mathbb{R}^N \mid x_j\ge0 \ (\forall j), \ \sum_{j=1}^N x_j=1\}.
	\end{align}
	Therefore, $\cS(\mathbb{R}^N_+,\bm{e})$ is the set of all random variables on $\mathbb{R}^N$.
	Next, we consider measurements of $(\mathbb{R}^N,\mathbb{R}^N_+,\bm{e})$.
	Because $(\mathbb{R}^N_+)^\ast=\mathbb{R}^N_+$,
	the set of all measurements is given as follows:
	\begin{align}
		\cM(\mathbb{R}^N_+,\bm{e})=\left\{\{M_i\}_{i\in\Omega},M_i\in \mathbb{R}^N_+ \middle|
		\sum_{i=1}^N M_i=\bm{e}\right\}.
	\end{align}
	Therefore,  in this model, a measurement is equivalent to an event of the random variables.
\end{example}
For example, we regard the model $(\mathbb{R}^N, (\ ,\ ),\mathbb{R}^N_+,\bm{e})$ as the model of rolling dices with the case $N=6$.
First, a state $(x_j)_{j=1}^6$ corresponds to the skewed dice that the probability of the pip $j$ is given as $x_j$.
Next, we take $M_1=(1,0,1,0,1,0),M_2=(0,1,0,1,0,1)$, then the family $\{M_1,M_2\}$ is a measurement.
Now $\{M_1,M_2\}$ corresponds to the events of parity of pips,
that is, the probability of to roll odd pips is given as $(x_j)_{j=1}^6\cdot M_1=x_1+x_3+x_5$.
In this way, the model $(\mathbb{R}^N,(\ ,\ ),\mathbb{R}^N_+,\bm{e})$ corresponds to classical probabilistic theory.
Here, we simply call classical theory.

The second example is quantum theory.
\begin{example}[Quantum theory]
	Let $\cH$ be finite-dimensional Hilbert space, and let $\Her{\cH}$ be the set of all Hermitian matrices on $\cH$.
	Regard $\Her{\cH}$ as a real vector space with the inner product $\langle x,y\rangle:=\Tr xy$.
	Moreover, let $\Psd{\cH}$ be the set of all positive semi-definite matrices.
	Then we consider the model $(\Her{\cH},\Tr, \Psd{\cH},I)$, where $I$ is the identity matrices on $\cH$.
	First, the set of all states $\cS(\Psd{\cH},I)$ satisfies
	\begin{align}
		\cS(\Psd{\cH},I)=\{\rho\in\cT_+(\cH)\mid \Tr\rho=1\}.
	\end{align}
	Therefore, the set of all states $\cS(\Psd{\cH},I)$ is the set of all density matrices on $\cH$.
	Next, we consider measurements of $(\Her{\cH},\Tr, \Psd{\cH},I)$.
	Because $\left(\Psd{\cH}\right)^\ast=\Psd{\cH}$,
	the set of all measurements is given as follows:
	\begin{align}
		\cM(\Psd{\cH},I)=\left\{\{M_i\}_{i\in\Omega},\ M_i\in\Psd{\cH}\middle| \sum_i M_i=I\right\}.
	\end{align}
	Therefore, in this model, a measurement is equivalent to a Positive Operator Valued Measures (POVMs).
\end{example}
It is a standard definition of quantum theory whose states and measurements are defined as density matrices and POVMs, respectively.
In the perspective of their physical implementation,
density matrices and POVMs are available in a physical system.
In this way, the model $(\Her{\cH},\Tr, \Psd{\cH},I)$ corresponds to standard quantum theory.

Of course,
there are many other examples than classical and quantum theory.
Some of such models are seen in Section~\ref{sect:2-3}.

An important aim of studies of GPTs is to derive the above two examples.
For classical theory,
many operationally reasonable derivations has been found \cite{BBLW2007,ALP2019}.
On the other hand,
it is an open problem to derive quantum theory by reasonable postulates.
Especially, the essential problem is separation from entanglement structures as seen in Section~\ref{sect:2-3}.

\section{Entanglement Structures}\label{sect:2-3}

In this section,
we aim to define our main target, \textit{Entanglement Structures} (ESs).
For this aim,
we introduce a composite system in GPTs in Section~\ref{sect:2-3-1}.
Then, we give the definition of entanglement structure as a model of a typical composite system in GPTs in Section~\ref{sect:2-3-1}.

\subsection{Composite System in GPTs}\label{sect:2-3-1}

This thesis considers bipartite systems of two models $(\cV_A,\langle\,\ \rangle_A,\cC_A,u_A)$ and $(\cV_B,\langle\,\ \rangle_B,\cC_B,u_B)$.
A model of bipartite system is defined as follows.

\begin{definition}[Bipartite Model]\label{def:composite}
	Given two submodels $(\cV_A,\langle\,\ \rangle_A,\cC_A,u_A)$ and $(\cV_B,\langle\,\ \rangle_B,\cC_B,u_B)$,
	we say that a model $(\cV,\langle\,\ \rangle,\cC,u)$ is a model of bipartite composite system
	if the model satisfies the following conditions:
	\begin{align}
		\cV&=\cV_A\otimes\cV_B,\label{eq:com1}\\
		\langle x_1,x_2 \rangle&=\sum_{i,j}\langle a_1^{(i)},a_2^{(j)}\rangle_A \langle b_1^{(i)},b_2^{(j)}\rangle_B\nonumber\\
		&\mbox{for $x_1=\sum_i a_1^{(i)}\otimes b_1^{(i)}$ and $x_2=\sum_j a_2^{(j)}\otimes b_2^{(j)}$},\label{eq:com2}\\
		\cC_A\otimes\cC_B\subset&\cC\subset\left(\cC_A^\ast\otimes\cC_B^\ast\right)^\ast,\label{eq:composite}\\
		u&=u_A\otimes u_B\label{eq:com4},
	\end{align}
	where the tensor product $\cC_A\otimes\cC_B$ is defined as
	\begin{align}\label{eq:tensor}
		\cC_A\otimes\cC_B:=\{\sum_i a_i\otimes b_i \mid a_i\in\cC_A, b_i\in\cC_B\}.
	\end{align}
\end{definition}

As seen in Definition~\ref{def:composite},
a model of composite system is given by the tensor product.
The above conditions \eqref{eq:com1}, \eqref{eq:com2}, and \eqref{eq:com4} are not only natural
but also come from the following discussions.
Consider the case that Alice and Bob measure their local states independently.
On Alice's system, a measurement $\{e_A^{(i)}\}_{i\in I}\in\cM(\cC_A,u_A)$ affects a state $\rho_A\in\cS(\cC_A,u_A)$,
and also a measurement $\{e_B^{(j)}\}_{j\in J}\in\cM(\cC_B,u_B)$ affects a state $\rho_B\in\cS(\cC_B,u_B)$ on Bob's system.
There are $I\times J$ possibilities of obtained outcomes.
These measurements and states are not correlated,
and therefore, the possibility to get an outcome $(i,j)$ is given as
\begin{align}
	\langle \rho_A,e_A^{(i)}\rangle_A \langle \rho_B,e_B^{(j)}\rangle_B,
\end{align}
which is the same possibility that the product measurement $\{e_A^{(i)}\otimes e_B^{(j)}$ affects the product state $\rho_A\otimes\rho_B$ on a model of bipartite composite system
because of the conditions \eqref{eq:com2} and \eqref{eq:com4}.
In order to describe such an independent operation in tensor vector space,
we need the condition \eqref{eq:com1}.
Also, the reference \cite{Barrett2007,Janotta2014} shows that the condition \eqref{eq:com1} is derived from \textit{local tomography}, which states that any element in composite system is determined by only the joint probability of product measurements.
In this way, the above three conditions \eqref{eq:com1}, \eqref{eq:com2}, and \eqref{eq:com4} are reasonable for the minimal request about local operations.

On the other hand, because a positive cone is not a vector space (more strictly does not satisfies the universality of tensor product),
the condition \eqref{eq:composite} is not a trivial condition.
However, 
the definition of bipartite system in GPTs is so motivative that the condition \eqref{eq:composite} is derived from operational postulates.
Here, we give two ways to derive the definition of models of bipartite composite system.

The first way is  derived from availability of product elements.
\begin{postulate}[Availability of Product Elements \cite{Plavala2021}]\label{post:com1}
	In bipartite system, any product state is available,
	i.e.,
	the state $\rho=\rho_A\otimes\rho_B$ belongs to $\cS(\cC,u)$ for any $\rho_A\in\cS(\cC_A,u_A)$ and any $\rho_B\in\cS(\cC_B,u_B)$.
	Also, any product effect is available,
	i.e.,
	the effect $e=e_A\otimes e_B$ belongs to $\cE(\cC,u)$ for any $e_A\in\cE(\cC_A,u_A)$ and any $e_B\in\cE(\cC_B,u_B)$.
\end{postulate}
Postulate~\ref{post:com1} implies the condition \eqref{eq:composite} as follows.
Because $\cC$ is convex and because of the definition of positive cone,
the relation $\rho=\rho_A\otimes\rho_B\in\cS(\cC,u)$ implies the inclusion relation $\cC\supset\cC_A\otimes \cC_B$.
Similarly, the relation $e=e_A\otimes e_B\in\cE(\cC,u)$ implies the inclusion relation $\cC^\ast\supset \cC^\ast_A\otimes\cC^\ast_B$,
and we obtain $\cC\subset \left(\cC^\ast_A\otimes\cC^\ast_B\right)^\ast$ by apprying Proposition~\ref{prop:dual-inc} for the above inclusion relation.

For the second way to derive the condition~\eqref{eq:composite},
we define projection onto subsystem by elements.
\begin{definition}[Projection onto Subsystem by Elements]
	Define the projection onto $\cV_B$ by an element $x_A\in\cV_A$ as
	\begin{align}\label{def:projection}
	\begin{aligned}
		P_{x_A}:&\cV_A\otimes \cV_B\to \cV_B,\\
		P_{x_A}:&\sum_i \lambda_i a^{(i)}\otimes b^{(i)} \mapsto \sum_i\lambda_i\langle a^{(i)}, x_A\rangle_1 b^{(i)},
	\end{aligned}
	\end{align}
	where $a^{(i)}\in\cV_A$ and $b^{(i)}\in\cV_B$.
	Also, define the projection onto $\cV_A$ by the effect $x_B\in\cV_B$, similarly.
\end{definition}
By using the above projections,
we state the following postulate.
\begin{postulate}[Equivalence of Projections onto Subsystems]\label{post:com2}
	In bipartite system, any effect $e_A\in\cE(\cC_A,u_A)$ satisfies the equation $P_{e_A}(\cC)=\cC_B$.
	Also, any effect $e_B\in\cE(\cC_B,u_B)$ satisfies the equation $P_{e_B}(\cC)=\cC_A$.
	The same relations also hold for any states,
	i.e., any state $\rho_A\in\cS(\cC_A,u_A)$ satisfies the equation $P_{\rho_A}(\cC)=\cC_B$.
	Also, any state $\rho_B\in\cS(\cC_B,u_B)$ satisfies the equation $P_{\rho_B}(\cC)=\cC_A$.
\end{postulate}
Postulate~\ref{post:com2} also derives the condition~\eqref{eq:composite},
i.e., the following proposition holds.
\begin{proposition}\label{prop:def-com}
	If a model of bipartite system $(\cV,\langle\,\ \rangle,\cC,u)$ satisfies Postulate~\ref{post:com2},
	the inclusion relation~\eqref{eq:composite} holds.
\end{proposition}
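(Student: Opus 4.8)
The statement \eqref{eq:composite} is a pair of inclusions, $\cC_A\otimes\cC_B\subset\cC$ and $\cC\subset(\cC_A^\ast\otimes\cC_B^\ast)^\ast$, and the plan is to treat them separately after recording one elementary identity. Writing a general $x\in\cV=\cV_A\otimes\cV_B$ as $x=\sum_i a^{(i)}\otimes b^{(i)}$ and comparing \eqref{def:projection} with the inner product \eqref{eq:com2}, one obtains for all $x\in\cV$, $y_A\in\cV_A$, and $y_B\in\cV_B$ the adjoint relation
\begin{align}\label{eq:plan-adjoint}
	\langle x,\,y_A\otimes y_B\rangle=\langle P_{y_A}(x),\,y_B\rangle_B .
\end{align}
I will also use that, by the definition of the effect space, $\cE(\cC_A,u_A)$ and $\cE(\cC_B,u_B)$ are exactly the dual cones $\cC_A^\ast$ and $\cC_B^\ast$.

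For the upper inclusion $\cC\subset(\cC_A^\ast\otimes\cC_B^\ast)^\ast$ I would use only the containment (``into'') half of Postulate~\ref{post:com2} for effects. Fix $x\in\cC$ and arbitrary effects $e_A\in\cC_A^\ast$, $e_B\in\cC_B^\ast$. Since $P_{e_A}(\cC)=\cC_B$, in particular $P_{e_A}(x)\in\cC_B$, so \eqref{eq:plan-adjoint} together with the defining property \eqref{def:dual-cone} of $\cC_B^\ast$ gives
\begin{align}\label{eq:plan-upper}
	\langle x,\,e_A\otimes e_B\rangle=\langle P_{e_A}(x),\,e_B\rangle_B\ge 0 .
\end{align}
Every element of $\cC_A^\ast\otimes\cC_B^\ast$ is a finite sum of products $e_A\otimes e_B$ by \eqref{eq:tensor}, so bilinearity upgrades \eqref{eq:plan-upper} to $\langle x,y\rangle\ge 0$ for all $y\in\cC_A^\ast\otimes\cC_B^\ast$, which is exactly $x\in(\cC_A^\ast\otimes\cC_B^\ast)^\ast$ by \eqref{def:dual-cone}. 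This part is routine.

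For the lower inclusion $\cC_A\otimes\cC_B\subset\cC$ the surjectivity (``onto'') half of the postulate must enter, since a lower bound on $\cC$ cannot follow from containment statements alone. As $\cC$ is a closed convex cone, it suffices to prove $\rho_A\otimes\rho_B\in\cC$ for states $\rho_A,\rho_B$ and then rescale and sum. Using $\cC=(\cC^\ast)^\ast$, this amounts to $\langle z,\rho_A\otimes\rho_B\rangle=\langle P_{\rho_A}(z),\rho_B\rangle_B\ge 0$ for every $z\in\cC^\ast$, i.e.\ to $P_{\rho_A}(\cC^\ast)\subset\cC_B^\ast$. The structurally clean way to package this is to note that $\cC^\ast$ is itself a bipartite cone for the submodels $\cC_A^\ast,\cC_B^\ast$ (whose dual cones are again $\cC_A,\cC_B$), and that $P_{\rho_A}(\cC^\ast)\subset\cC_B^\ast$ for all states $\rho_A$ is precisely the upper inclusion of the previous paragraph applied to $\cC^\ast$. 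Hence the plan is to prove a transfer lemma---that Postulate~\ref{post:com2} for $\cC$ forces the ``into'' half of the analogous projection statement for $\cC^\ast$---and then rerun the previous paragraph verbatim on $\cC^\ast$, dualizing the resulting inclusion via Proposition~\ref{prop:dual-inc} to get $\cC_A\otimes\cC_B\subset\cC$.

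The transfer lemma is the step I expect to be the genuine obstacle. Dualizing the equality $P_{\rho_A}(\cC)=\cC_B$ directly (via $(T(K))^\ast=(T^\ast)^{-1}(K^\ast)$ with $T=P_{\rho_A}$, whose adjoint is $y_B\mapsto\rho_A\otimes y_B$) only pins down the \emph{product} elements of $\cC^\ast$, yielding the equivalence $\rho_A\otimes w\in\cC^\ast\Leftrightarrow w\in\cC_B^\ast$; it does not by itself control $P_{\rho_A}$ on a general, non-product $z\in\cC^\ast$. To close this gap I would argue by separation: if some product state failed to lie in $\cC$, a separating $z\in\cC^\ast$ would produce a witness $P_{\rho_A}(z)\notin\cC_B^\ast$, and I would try to contradict this using the full strength of Postulate~\ref{post:com2}---the surjectivity of $P_{\rho_A}$ \emph{and} $P_{\rho_B}$ for \emph{all} states and effects on both subsystems simultaneously. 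Converting this simultaneous surjectivity into the required contradiction is the delicate point, and is where I expect the real work of the proof to concentrate.
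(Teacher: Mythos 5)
Your treatment of the upper inclusion $\cC\subset(\cC_A^\ast\otimes\cC_B^\ast)^\ast$ is correct and is essentially the paper's own argument: the paper phrases it as a proof by contradiction, but the substance is the same adjoint identity $\langle x,e_A\otimes e_B\rangle=\langle P_{e_A}(x),e_B\rangle_B$, the containment half of Postulate~\ref{post:com2}, and bilinearity over \eqref{eq:tensor}. The genuine gap is exactly the one you flag yourself: the lower inclusion $\cC_A\otimes\cC_B\subset\cC$ is never established. Your reduction of it to $P_{\rho_A}(\cC^\ast)\subset\cC_B^\ast$ is right, and so is your observation that dualizing $P_{\rho_A}(\cC)=\cC_B$ via $(P_{\rho_A}(\cC))^\ast=(P_{\rho_A}^\ast)^{-1}(\cC^\ast)$ only yields the equivalence $\rho_A\otimes w\in\cC^\ast\Leftrightarrow w\in\cC_B^\ast$, i.e.\ controls product elements of $\cC^\ast$ and nothing more; but the ``transfer lemma'' that is supposed to finish the job is only announced, never proved, so as submitted the proposal is incomplete.

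You should know, however, that this gap cannot be closed: the lower inclusion does not follow from Postulate~\ref{post:com2} at all (the postulate being read, as it must be, for nonzero effects, since $P_{0}(\cC)=\{0\}$ can never equal $\cC_B$). Take $\cV_A=\cV_B=\mathbb{R}^2$, $\cC_A=\cC_B=\mathbb{R}^2_+$, $u_A=u_B=(1,1)$, identify $\cV_A\otimes\cV_B$ with real $2\times2$ matrices under the entrywise inner product, and set
\begin{align}
	\cC:=\bigl\{X=(x_{ij})\bigm| x_{ij}\ge0\ \forall i,j,\ \ x_{11}\le x_{12}+x_{21}+x_{22}\bigr\}.
\end{align}
This is a proper cone with non-empty interior, and $u=u_A\otimes u_B$ is an order unit of $\cC^\ast$ because $\cC\subset\mathbb{R}^{2\times2}_+$ forces $\cC^\ast\supset\mathbb{R}^{2\times2}_+$. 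Postulate~\ref{post:com2} holds in full: for a nonzero effect $e_A=(\alpha,\beta)$ one has $P_{e_A}(X)=(\alpha x_{11}+\beta x_{21},\,\alpha x_{12}+\beta x_{22})\in\mathbb{R}^2_+$, and any target $(c_1,c_2)\in\mathbb{R}^2_+$ is reached inside $\cC$ by taking $x_{21}=c_1/\beta$, $x_{22}=c_2/\beta$ and the other entries $0$ when $\beta>0$, and by taking $x_{11}=x_{21}=c_1/\alpha$, $x_{12}=c_2/\alpha$, $x_{22}=0$ when $\beta=0$; the maps $P_{e_B}$ are handled by the transpose symmetry of the defining inequality, and states are special cases of nonzero cone elements. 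Nevertheless $(1,0)\otimes(1,0)\notin\cC$, so $\cC_A\otimes\cC_B\not\subset\cC$. This also puts the paper's own proof in perspective: it proves only the upper inclusion and disposes of the lower one with the single sentence that it is ``similarly shown'' from $P_{\rho_A}(\cC)=\cC_B$ and $P_{\rho_B}(\cC)=\cC_A$ --- a step which, by the example above, cannot be carried out. So your argument establishes exactly as much of Proposition~\ref{prop:def-com} as actually follows from Postulate~\ref{post:com2}; the remaining half is not a defect of your strategy but of the statement itself (by contrast, Postulate~\ref{post:com1} does yield both inclusions, as the paper shows).
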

\begin{proof}[Proof of Proposition~\ref{prop:def-com}]
	First, we prove the inclusion relation $\cC\subset\left(\cC_A^\ast\otimes\cC_B^\ast\right)^\ast$ by contradiction.
	Assume that there exists an element $x\in\cC$ such that $x\not\in\left(\cC_A^\ast\otimes\cC_B^\ast\right)^\ast$,
	and $x$ can be written as $\sum_i\lambda_i a^{(i)}\otimes b^{(i)}$, where $\lambda_i\in\mathbb{R}$, $a^{(i)}\in\cC_A$, $b^{(i)}\in\cC_B$.
	Because of the relation $x\not\in\left(\cC_A^\ast\otimes\cC_B^\ast\right)^\ast$,
	there exists a separable effect $e\in \cC_A^\ast\otimes\cC_B^\ast$ such that $\langle x,e\rangle<0$.
	Because $\cC_A^\ast\otimes\cC_B^\ast$ is spanned by product elements,
	we choose $e$ as the product element $e_A\otimes e_B$, where $e_A\in\cC^\ast$ and $e_B\in\cC^\ast$,  without loss of generality.
	However, we obtain the following inequality:
	\begin{align}
		0&>\langle x, e_A\otimes e_B\rangle
		=\left\langle \sum_i \lambda_i a^{(i)}\otimes b^{(i)}, e_A\otimes e_B\right\rangle
		=\sum_i\lambda_i \langle a^{(i)},e_A\rangle_A \langle b^{(i)},e_B\rangle_B\nonumber\\
		&=\left\langle \sum_i\lambda_i \langle a^{(i)},e_A\rangle_A b^{(i)},e_B\right\rangle_B
		=\langle P_{e_A}(x),e_B\rangle_B.
	\end{align}
	This inequality implies $P_{e_A}(x)\not\in\cC_B$,
	which contradicts to the assumption.
	
	The opposite inclusion relation $\cC_A\otimes\cC_B\subset\cC$ is similarly shown by the equation $P_{\rho_A}(\cC)=\cC_B$ and $P_{\rho_B}(\cC)=\cC_A$.
\end{proof}

Both Postulate~\ref{post:com1} and Postulate~\ref{post:com2} are reasonable requests about local operations.
Then, a model of bipartite composite system is defined as Definition~\ref{def:composite}.
Because of the condition~\eqref{eq:composite},
a model of bipartite composite system is not uniquely determined from submodels in general,
which is most important fact for this thesis.
As seen before,
no-correlated operation corresponds to tensor product.
Especially, an element is called \textit{entangled} if the element cannot be written as any convex combination of tensor product elements.
In quantum information theory,
entangled elements are main resource for whole of informational tasks.
The cone $\cC$ of a model of a composite system (in Definition~\ref{def:composite}) rules the diversity of entangled elements,
i.e.,
the cone $\cC$ determines the limit of available resources in the system.
This is the reason why the above non-uniqueness of $\cC$ is important.

On the other hand,
it is empirically known that classical theory does not includes entanglement elements,
which is shown by the above definition of composite system in GPTs.
Let us consider a model of two classical-subsystems $(\mathbb{R}^{N_A},(\ ,\ ),\mathbb{R}^{N_A}_+,\bm{e})$ and $(\mathbb{R}^{N_B},(\ ,\ ),\mathbb{R}^{N_B}_+,\bm{e})$.
The left-hand-side of the inclusion relation \eqref{eq:composite} is given as
\begin{align}\label{eq:com-classical}
	\mathbb{R}^{N_A}_+\otimes \mathbb{R}^{N_B}_+=\mathbb{R}^{N_AN_B}_+.
\end{align}
Also, because of the equation $(\mathbb{R}^{N}_+)^\ast=\mathbb{R}^{N}_+$,
the right-hand-side of the inclusion relation \eqref{eq:composite} is given as
\begin{align}
	\left((\mathbb{R}^{N_A}_+)^\ast\otimes (\mathbb{R}^{N_B}_+)^\ast\right)^\ast
	=&\left(\mathbb{R}^{N_A}_+\otimes \mathbb{R}^{N_B}_+\right)^\ast
	=(\mathbb{R}^{N_AN_B}_+)^\ast=\mathbb{R}^{N_AN_B}_+.
\end{align}
Therefore, the both sides of inclusion relation \eqref{eq:composite} are equivalent.
In other words,
The model of composite system of two classical-subsystems is uniquely determined as
classical theory on large system,
which implies that the model of classical composite system has no entangled elements.
Moreover,
the reference \cite{ALP2019} shows that such no-entanglement property derives classical theory uniquely,
i.e.,
the left-hand-side and the right-hand side in \eqref{eq:composite} are equal if and only if
one of cones $\cC_A$ or $\cC_B$ is equal to $\mathbb{R}^{N}_+$ for some $N$.

As the above discussion,
the model of composite system of classical theory is naturally determined as classical theory on a large system.
On the other hand,
in the case of quantum theory,
the model of composite system is not unique.
This thesis addresses this problem
and investigate the diversity of the model of quantum composite systems.

\subsection{Diversity of Entanglement Structures}\label{sect:2-3-2}

In this thesis,
we consider models of bipartite composite system of two quantum subsystems.
Let $(\Her{\cH_A},\Tr, \Psd{\cH_A},I)$ and $(\Her{\cH_B},\Tr, \Psd{\cH_B},I)$ be models of quantum theory on Alice's system and Bob's system.
In this case,
a model of bipartite composite system is given as $(\Her{\cH_A\otimes\cH_B},\Tr, \cC,I)$
satisfying the condition \eqref{eq:composite}.
Because of the equation $(\Psd{\cH})^\ast=\Psd{\cH}$,
the condition \eqref{eq:composite} is modified as
\begin{align}\label{eq:quantum}
	\mathrm{SEP}(A;B)\subset\cC\subset\mathrm{SEP}(A;B)^\ast,
\end{align}
where the proper cone $\mathrm{SEP}(A;B)$ is defined as
\begin{align}\label{eq:sep}
	\mathrm{SEP}(A;B):=\Psd{\cH_A}\otimes\Psd{\cH_B}.
\end{align}
In this thesis,
the model $(\Her{\cH},\Tr, \cC,I)$ satisfying \eqref{eq:composite} is called an \textit{Entanglement Structure} (ES),
and we denote the model as $\cC$ by omitting other objects for simplicity.
Of course, there is the diversity of ESs.
In other words,
an entanglement structure is not uniquely determined by the postulates in Section~\ref{sect:2-3-1}.

On the other hand,
it is strongly believed that physical systems obey the model $(\Her{\cH},\Tr, \Psd{\cH},I)$.
In bipartite composite system,
the model is defined as the cone $\Psd{\cH_A\otimes\cH_B}$,
which is neither the smallest one nor the largest one in \eqref{eq:quantum}.
Moreover, it is not completely clarified how the entanglement structure $\Psd{\cH_A\otimes\cH_B}$ is derived.
In this thesis, we call the entanglement structure $\Psd{\cH_A\otimes\cH_B}$ the \textit{Standard Entanglement Structure} (SES),
and we denote the model $\Psd{\cH_A\otimes\cH_B}$ as $\mathrm{SES}(A;B)$.
Our interest is the question what uniquely determines ESs as the SES.

The diversity of ESs is so large that some ESs are counterexample to some important mathematical properties.
For example, the model $\mathrm{SEP}(A;B)$ is a typical example that does not satisfy entropy preserving spectrality \cite{MAB2022}.
Also, an ES satisfies 1-symmetry but does not satisfy 2-symmetry \cite{MAB2022}.
We explain the details of the above two examples and its importance in Appendix~\ref{appe:2}.
The SES satisfies the above mathematical structures,
and therefore, the class of ESs contains various models separated from the SES.
On the other hand,
there exist many near ESs to the SES, called Pseudo Standard Entanglement Structures (PSES)s,
which are introduced in Section~\ref{sect:4-4}.
In this way,
there are variable types of ESs,
and therefore, the derivation of the SES is important and difficult problem in GPTs.

\chapter{State Discrimination in GPTs}\label{chap:3}

In this chapter,
we investigate the performance for state discrimination tasks in ESs.
State discrimination is a fundamental information task considered in quantum information theory \cite{HayashiBook2017,Chernoff1952,Holevo1972,Helstrom1979,HiaiPetz1991,OgawaNagaoka2000,OgawaHayashi2002}.
Because some performance for many other information tasks is derived from the performance for state discrimination,
it is important to investigate the performance for state discrimination.
Therefore, the performance for discrimination tasks is one of candidates to derive the SES from ESs.
In this thesis, we investigate how drastically an extraordinary performance for discrimination tasks determines ESs.

First, 
we introduce state discrimination and its performance in Section~\ref{sect:3-1}.
In quantum information theory,
there are many different types of discrimination tasks.
This thesis mainly discusses two types of them (Definition~\ref{def:dist-per} and Definition~\ref{def:dist-min}).
Also, we review important preceding studies about state discrimination in Section~\ref{sect:3-1}.
The two types of discrimination tasks has been studied very well in quantum theory \cite{HayashiBook2017,Chernoff1952,Holevo1972,Helstrom1979,HiaiPetz1991,OgawaNagaoka2000,OgawaHayashi2002,Chefles2004}.
Especially, some types have been studied in GPTs\cite{Kimura2010,Bae2016,Yoshida2021} and certain ESs \cite{Arai2019,YAH2020}.

Second,
as Theme A,
we investigate the performance for discrimination tasks of Dual-Operator-Valued Measurements (DOVMs) in Section~\ref{sect:3-2}.
The preceding studies \cite{Arai2019,YAH2020} showed that DOVMs with a certain form has extraordinary performance for the discrimination task of Definition~\ref{def:dist-per}.
This thesis investigates the performance of general DOVMs,
and we give equivalent conditions when a DOVM has extraordinary performance for discrimination tasks.
Furthermore, we show how drastically this equivalent conditions determine ESs.

Third,
as Theme B,
this thesis discusses simulability of DOVMs in Section~\ref{sect:3-2-3}.
A DOVM (especially non POVM) cannot be simulated in standard quantum theory when the dimension of the system is equivalent.
However, there is a possibility to simulate a given DOVM in a high-dimensional system.
In this section,
we show that a certain class of DOVMs cannot be simulated in any high dimensional quantum system as an application of the result in Section~\ref{sect:3-2-3}.

The proofs of statements in this chapter are written in Section~\ref{sect:3-4}.

\section{Introduction and Preceding Studies}\label{sect:3-1}

In this section,
we introduce state discrimination and briefly review preceding studies about state discrimination.
In Section~\ref{sect:3-1-1},
we define two types of discrimination tasks (Definition~\ref{def:dist-per} and Definition~\ref{def:dist-min}),
and see their performance in standard quantum theory.
In Section~\ref{sect:3-1-2},
we explain our preceding studies about the performance for discrimination tasks in certain ESs,
and we point out the reason why the preceding studies restrict to certain ESs.

\subsection{General Definition of Discrimination Tasks}\label{sect:3-1-1}

In discrimination tasks,
a player is given a number of candidates $\{\rho_i\}_{i=1}^n$ of unknown state $\rho$.
The player apply arbitrary one-shot measurement $\{M_i\}_{i=1}^n$ and obtain an outcome $i$.
The player identifies the unknown state $\rho$ from the outcome $i$ with large probability
(Figure~\ref{figure-discrimination}).

\begin{figure}[t]
	\centering
	\includegraphics[width=10cm]{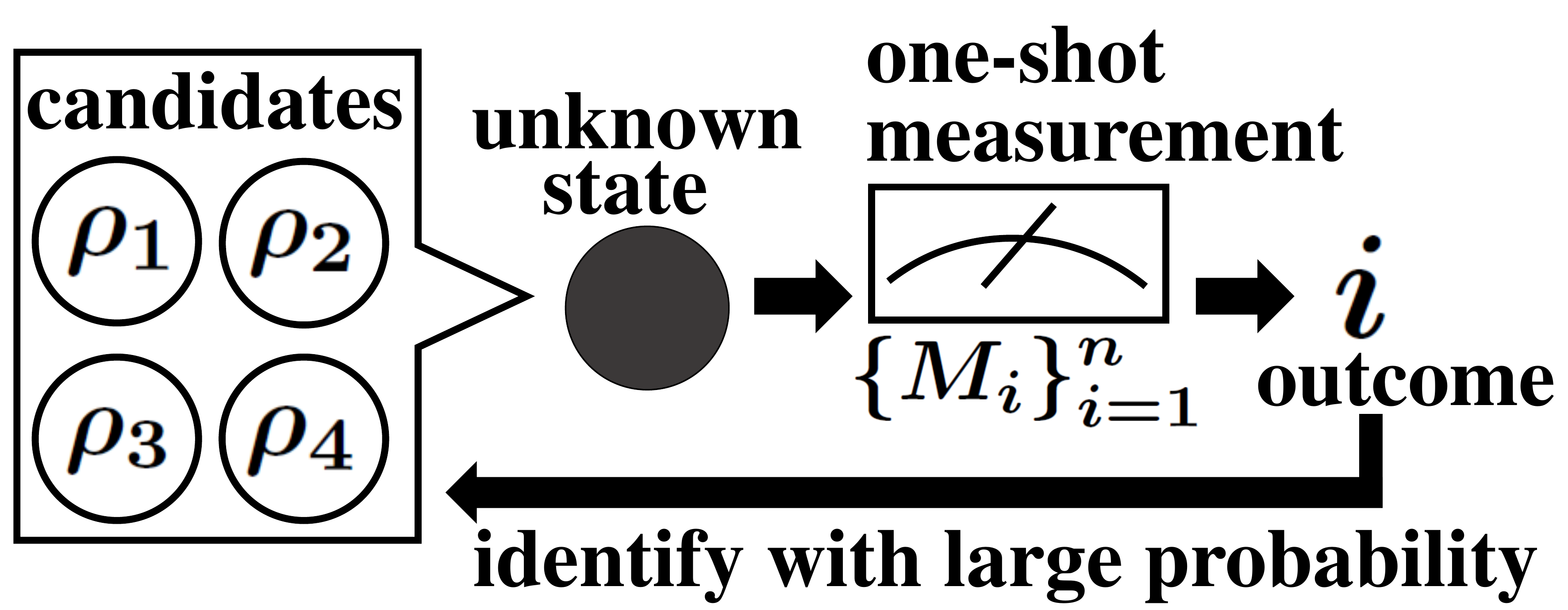}
	\caption[Discrimination Task]{
	When  number of candidates $\{\rho_i\}_{i=1}^n$ of unknown state $\rho$ is given,
	the player aims to identify $\rho$ with large probability.
	}
	\label{figure-discrimination}
\end{figure}

The aim of discrimination tasks is to find a measurement that minimize the error probability,
and there are many ways to describe the tasks mathematically.
In this thesis,
we mainly discuss the following two settings.

The first one is \textit{perfect discrimination}.

\begin{definition}[perfect distinguishablity]\label{def:dist-per}
	Let $\{\rho_k\}_{k=1}^n$ be a family of states $\rho_k\in\cS(\cC,u)$.
	Then, $\{\rho_k\}_{k=1}^n$ are perfectly distinguishable
	if there exists a measurement $\{M_k\}_{k=1}^n\in\cM(\cC,u)$ such that
	$\langle \rho_k,M_l\rangle=\delta_{kl}$.
\end{definition}

Because of the equation $\langle \rho_k,M_l\rangle=\delta_{kl}$ in Definition~\ref{def:dist-per},
it is the only possibility to get outcome $k$ that the unknown state $\rho$ is equal to $\rho_k$.
In other words,
if and only if the candidates of state $\{\rho_k\}_{k=1}^n$ are perfectly distinguishable,
the player estimates the unknown state $\rho$ with probability 1.

The second one is \textit{discimination with the minimization of the sum of error probabilities}.
In this setting, we consider the situation that the number of candidates is two.

\begin{definition}[Minimization of the Sum of Error Probabilities]\label{def:dist-min}
	Given a two-elements family of states $\{\rho_1,\rho_2\}$ with $\rho_k\in\cS(\cC,u)$,
	the sum of errors by $\bm{M}=\{M_1,M_2\}$ is defined as
	\begin{align}\label{def:error-1}
		\mathrm{Err}(\rho_1;\rho_2;\bm{M})=\Tr \rho_1M_2+\Tr \rho_2M_1.
	\end{align}
	Then, the minimization of the sum of errors in an entanglement structure $\cC$ is defined as
	\begin{align}\label{def:error-2}
		\mathrm{Err}_{\cC}(\rho_1;\rho_2):=\min_{\bm{M}\in\cM_2(\cC,u)} \mathrm{Err}(\rho_1;\rho_2;\bm{M}).
	\end{align}
\end{definition}

In this task, the player aims to find a measurement that minimize the sum of error probability.
In statistics, the part $\Tr \rho_1M_2$ and $\Tr \rho_2M_1$ are called Type I error and Type II error, respectively.
We remark that $\rho_1$ and $\rho_2$ are perfectly distinguishable if and only if $\mathrm{Err}_{\cC}(\rho_1;\rho_2)=0$
because the following equation holds:
\begin{align}
	\Tr \rho_i(M_1+M_2)=\Tr \rho_i I=1.
\end{align}

In the SES, more generally in standard quantum theory,
the discrimination tasks \ref{def:dist-per} and \ref{def:dist-min} are well-studied as follows \cite[Section~3.2]{HayashiBook2017}.
First, in the model $(\Her{\cH},\Tr, \Psd{\cH},I)$,
the following two conditions are equivalent:
\begin{enumerate}
	\item $\{\rho_i\}$ is perfectly distinguishable.
	\item $\Tr \rho_i \rho_j=0$ for any $i\neq j$.
\end{enumerate}
Second, in the model $(\Her{\cH},\Tr, \Psd{\cH},I)$,
especially in the SES,
the minimization of the sum of errors is given as
\begin{align}\label{eq:trace-norm}
	\mathrm{Err}_{\Psd{\cH}}(\rho_1;\rho_2)=1-\frac{1}{2}\|\rho_1-\rho_2\|_1,
\end{align}
whose minimizer measurement $\{M_1,M_2\}$ is given by
\begin{align}
	M_1&=\frac{1}{2}\left(\rho_1-\rho_2+|\rho_1-\rho_2|\right),\\
	M_2&=\frac{1}{2}\left(-\rho_1+\rho_2+|\rho_1-\rho_2|\right).
\end{align}

\subsection{Discrimination Tasks in Entanglement Structures}\label{sect:3-1-2}

Roughly speaking,
the width of measurement space $\cM(\cC,u)$ determines the performance for the above discrimination tasks.
Therefore, it is a possibility in an ES $\cC$ that the performance is further improved than that of $\mathrm{SES}(A;B)$
when its measurement space $\cM(\cC,I)$ is larger than $\cM(\mathrm{SES},I)$.
Recently, the preceding studies \cite{Arai2019,YAH2020} have investigated the performance for discrimination task \ref{def:dist-per} in some entanglement structures.

The reference \cite{Arai2019} has investigated the performance for the discrimination task \ref{def:dist-per} in the entanglement structure $\mathrm{SEP}(A;B)$,
and it has given an equivalent condition to discriminate two pure states in $\mathrm{SEP}(A;B)$ perfectly.

\begin{theorem}[\cite{Arai2019}]\label{preceding-1}
	Let $\rho^{(1)}=\rho_A^{(1)}\otimes\rho_B^{(1)}$ and $\rho^{(2)}=\rho^{(2)}_A\otimes\rho^{(2)}_B$ be pure states in $\mathrm{SEP}(A;B)$.
	$\rho_1$ and $\rho_2$ are perfectly distinguishable in $\mathrm{SEP}(A;B)$,
	i.e., there exists a measurement $\bm{M}\in\cM_2(\mathrm{SEP}(A;B),I)E$ such that $\Tr \rho_iM_j=\delta_{ij}$
	if and only if the following inequality holds:
	\begin{align}\label{eq:sep-dist}
		\Tr \rho_A^{(1)}\rho_A^{(2)}+\Tr \rho_B^{(1)}\rho_B^{(2)}\le 1.
	\end{align}
\end{theorem}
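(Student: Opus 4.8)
The plan is to work with the effect space of $\mathrm{SEP}(A;B)$, which by \eqref{def:eff} equals its dual cone $\mathrm{SEP}(A;B)^\ast=\left(\Psd{\cH_A}\otimes\Psd{\cH_B}\right)^\ast$, i.e.\ the block-positive operators: the Hermitian $W$ with $\langle x\otimes y|W|x\otimes y\rangle\ge0$ for every product vector. Thus a measurement $\bm M=\{M_1,M_2\}\in\cM_2(\mathrm{SEP}(A;B),I)$ is a pair of block-positive operators with $M_1+M_2=I$. Writing $\rho_A^{(k)}=\ketbra{a_k}{a_k}$, $\rho_B^{(k)}=\ketbra{b_k}{b_k}$, perfect distinguishability means the two wrong probabilities vanish, $\langle a_1 b_1|M_2|a_1 b_1\rangle=\langle a_2 b_2|M_1|a_2 b_2\rangle=0$, and then $M_1+M_2=I$ forces the correct probabilities to be $1$. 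Since $\Tr\rho_A^{(1)}\rho_A^{(2)}=|\langle a_1|a_2\rangle|^2$ and likewise for $B$, I set $\cos^2\alpha:=|\langle a_1|a_2\rangle|^2$ and $\cos^2\beta:=|\langle b_1|b_2\rangle|^2$ with $\alpha,\beta\in[0,\pi/2]$, so \eqref{eq:sep-dist} reads $\cos^2\alpha+\cos^2\beta\le1$, equivalently $\sin^2\alpha+\sin^2\beta\ge1$.

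For \emph{necessity}, assume such $M_1,M_2$ exist. For a fixed unit $y\in\cH_B$ define $A_y\in\Her{\cH_A}$ by $\langle x|A_y|x\rangle=\langle x\otimes y|M_2|x\otimes y\rangle$; block-positivity of $M_2$ gives $A_y\ge0$ and that of $M_1=I-M_2$ gives $A_y\le I_A$. Taking $y=b_1$ and using $\langle a_1 b_1|M_2|a_1 b_1\rangle=0$ forces $A_{b_1}\ket{a_1}=0$, hence $A_{b_1}\le I_A-\ketbra{a_1}{a_1}$, so $\langle a_2 b_1|M_2|a_2 b_1\rangle=\langle a_2|A_{b_1}|a_2\rangle\le 1-|\langle a_1|a_2\rangle|^2=\sin^2\alpha$. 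Symmetrically, define $B_x$ on $\cH_B$ by $\langle y|B_x|y\rangle=\langle x\otimes y|M_1|x\otimes y\rangle$; taking $x=a_2$ and using $\langle a_2 b_2|M_1|a_2 b_2\rangle=0$ gives $B_{a_2}\ket{b_2}=0$, whence $\langle a_2 b_1|M_1|a_2 b_1\rangle\le 1-|\langle b_2|b_1\rangle|^2=\sin^2\beta$. Evaluating $M_1+M_2=I$ at the product vector $a_2\otimes b_1$ and adding the two bounds yields $1=\langle a_2 b_1|M_1+M_2|a_2 b_1\rangle\le\sin^2\alpha+\sin^2\beta$, which is exactly $\cos^2\alpha+\cos^2\beta\le1$. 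This direction is short and self-contained.

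For \emph{sufficiency}, assume $\cos^2\alpha+\cos^2\beta\le1$. Only the two states matter, and block-positivity is preserved under the compression onto $\mathrm{span}\{a_1,a_2\}\otimes\mathrm{span}\{b_1,b_2\}$; conversely a measurement on this subspace extends to the full space by letting $M_1$ act as the subspace effect and as $0$ on the complement (one checks $I-M_1$ stays block-positive, since product vectors project into the subspace). So I would reduce to $\cH_A=\cH_B=\mathbb{C}^2$, with real coordinates $\ket{a_1}=\ket0$, $\ket{a_2}=\cos\alpha\ket0+\sin\alpha\ket1$ and similarly for $b$, and then construct an explicit $M_1$ that is block-positive with $I-M_1$ block-positive and $\langle a_1 b_1|M_1|a_1 b_1\rangle=1$, $\langle a_2 b_2|M_1|a_2 b_2\rangle=0$. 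The necessity analysis dictates the shape: the forced entries $\langle 00|M_2|00\rangle=\langle00|M_2|01\rangle=\langle00|M_2|10\rangle=0$ vanish, and I would parametrize the remaining (genuinely entangled) matrix elements, using that in $\mathbb{C}^2\otimes\mathbb{C}^2$ every block-positive operator decomposes as $P+\Gamma(Q)$ with $P,Q\ge0$.

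The hard part will be the simultaneous verification that both $M_1$ and $I-M_1$ are block-positive across the whole product-vector family. Parametrizing product vectors by their squared moduli $(\xi,\eta)$ and relative phases, each block-positivity condition collapses to a scalar inequality of the form $[\text{diagonal terms}]-2\bigl(|m_{14}|+|m_{23}|+\cdots\bigr)\sqrt{\xi(1-\xi)\eta(1-\eta)}\ge0$, and I expect the existence of parameters making both complementary inequalities hold while meeting the value constraint $\langle a_2 b_2|M_2|a_2 b_2\rangle=1$ to be equivalent to $\cos^2\alpha+\cos^2\beta\le1$. The main obstacle is selecting the off-diagonal entries with the correct relative phases so that the two complementary nonnegativity conditions hold throughout the \emph{interior} region, not merely on the boundary $\alpha+\beta=\tfrac{\pi}{2}$: low-parameter ans\"atze (e.g.\ forcing the reduced operators to rank-one projections) saturate on the boundary but fail slightly in the interior, so the genuinely entangled degrees of freedom of the witness must be used.
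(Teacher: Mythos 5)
Your ``only if'' direction is correct, complete, and self-contained: the local restrictions $A_{b_1}$ and $B_{a_2}$ are indeed positive semi-definite and bounded by the identity thanks to block-positivity of $M_2$ and $M_1=I-M_2$, the zero-error conditions force $A_{b_1}\ket{a_1}=0$ and $B_{a_2}\ket{b_2}=0$, and evaluating $M_1+M_2=I$ at $a_2\otimes b_1$ gives $1\le\sin^2\alpha+\sin^2\beta$, which is exactly \eqref{eq:sep-dist}. This is a clean argument, and it avoids any explicit matrix computation.

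However, there is a genuine gap in the ``if'' direction: you never construct the measurement. Your reduction to $\mathbb{C}^2\otimes\mathbb{C}^2$ is valid (the projection onto $\mathrm{span}\{a_1,a_2\}\otimes\mathrm{span}\{b_1,b_2\}$ maps product vectors to product vectors, so the zero-extension argument works), but after that you only describe a parametrization strategy and explicitly defer its verification (``I expect the existence of parameters \ldots to be equivalent to $\cos^2\alpha+\cos^2\beta\le1$''). Establishing that such parameters exist whenever \eqref{eq:sep-dist} holds \emph{is} the content of this direction, and it is precisely what the paper supplies, via \cite{Arai2019}, in the explicit measurement \eqref{eq:measurement-preceding}: one takes $M_i=T_i+\Gamma(T_i)$ with the displayed matrices $T_1,T_2$, which are positive semi-definite exactly under condition \eqref{eq:sep-dist}. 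That ansatz is also the efficient way to finish your plan: rather than verifying block-positivity of $M_1$ and $I-M_1$ pointwise over all product vectors (the ``hard part'' you flag, which your proposal leaves unresolved), one chooses both effects of the form $P+\Gamma(Q)$ with $P,Q\ge0$, so that membership in $\mathrm{SEP}(A;B)^\ast$ is automatic, and the only things left to check are two ordinary positive semi-definiteness conditions on $4\times4$ matrices together with $M_1+M_2=I$ and $\Tr\rho_iM_j=\delta_{ij}$. Until an explicit $T_1,T_2$ (or an equivalent existence argument) is written down and verified on the whole region $\cos^2\alpha+\cos^2\beta\le1$, not just on the boundary, the sufficiency direction remains unproven.
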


Also, the reference \cite{YAH2020} has investigated the performance for the discrimination task \ref{def:dist-per} in more general classes.
The reference \cite{YAH2020} defined the following two One-parameter family of entanglement structures.

\begin{definition}[One-parameter family of entanglement structures (I)]
	For $s\ge0$, we define the positive cone $\cC_s^{\nege}$ as 
	\begin{equation}
		\cC_s^{\nege} = \set{X\in\cT(AB) | \Tr\rho X\ge0 \;(\forall \rho:\mbox{separable}), \;\nege(X)\le s\Tr X},
	\end{equation}
	where the function $\nege\colon \cT(AB)\to[0,\infty)$ is defined as 
	\begin{align}
	\nege(X) = \max_{\lambda\text{ eigenvalue}\text{ of }X} \{-\lambda,0\}.
	\end{align}
\end{definition}

\begin{definition}[One-parameter family of entanglement structures (II)]
	For a vector $v\in\cH_A\otimes\cH_B$, let $\sco(v)$ be the value 
	\begin{align}
	\sco(v) = 
	\begin{cases}
		\lambda_1\lambda_2 & v\not=0,\\
		0 & v=0,
	\end{cases}
	\end{align}
	where $\lambda_1 \ge \lambda_2 \ge \cdots \ge \lambda_d$ are Schmidt coefficients of $v/\|v\|$.\\
	Then, for $s\ge0$, we define
	\begin{align}
	\begin{aligned}
		\cC_s^{(0)} &:= \conv\{ \ketbra{v}{v} \mid v\in\cH_A\otimes\cH_B,\ \sco(v)\le s \},\\
		\cC_s^{\sco} &:= \cT_+(AB) + \Gamma(\cC_s^{(0)}),
	\end{aligned}
	\end{align}
	where $\Gamma$ is partial transposition, i.e., $\Gamma=\id\otimes \top$.
\end{definition}

Then, the reference \cite{YAH2020} gave sufficient conditions to discriminate two separable pure states in the above entanglement structures.

\begin{theorem}[\cite{YAH2020}]\label{preceding-2}
	Given a pair of two pure separable states $\rho_1,\rho_2$,
	the states $\rho_1$ and $\rho_2$ are perfectly distinguishable by a measurement $\cM(\cC_s^{\nege},I)$
	if the point $(\Tr\rho_1^A\rho_2^A,\Tr\rho_1^B\rho_2^B)$ belongs to the set
		\begin{equation}
			\set{(x,y)\in[0,1]^2 | xy \le 16s^2(1-x)(1-y)} 
		\end{equation}
	for $s\in[0,1/4]$.
	Also,
	given a pair of two pure separable states $\rho_1,\rho_2$,
	the states $\rho_1$ and $\rho_2$ are perfectly distinguishable by a measurement $\cM(\cC_s^{\sco},I)$
	if the point $(\Tr\rho_1^A\rho_2^A, \Tr\rho_1^B\rho_2^B)$ belongs to the set
	\begin{equation}
		\set{(x,y)\in[0,1]^2 | xy \le t(1-x)(1-y)}
	\end{equation}
	for $t\in[0,1]$ with $s=\sqrt{t}/(1+t)$.
\end{theorem}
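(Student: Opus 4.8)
Both claims assert a \emph{sufficient} condition for perfect distinguishability, so in each case it is enough to exhibit one two-outcome measurement $\{M_1,M_2\}$ in the relevant model with $M_1+M_2=I$ and $\Tr\rho_k M_l=\delta_{kl}$ (Definition~\ref{def:dist-per}). First I would reduce to the $2\otimes2$ situation: since $\rho_1$ and $\rho_2$ are products of pure local states, everything takes place inside the tensor product of the two-dimensional subspaces of $\cH_A$ and $\cH_B$ spanned by the respective local vectors, so I may take $\cH_A=\cH_B=\mathbb{C}^2$ and, after local unitaries, parametrise the situation by the two overlaps $x=\Tr\rho_A^{(1)}\rho_A^{(2)}$ and $y=\Tr\rho_B^{(1)}\rho_B^{(2)}$. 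Because $M_2=I-M_1$ and $\Tr\rho_k I=1$, it suffices to produce a single effect $M_1$ with $\Tr\rho_1 M_1=1$ and $\Tr\rho_2 M_1=0$ for which both $M_1$ and $I-M_1$ lie in the cone in question; these interpolation conditions fix the Hermitian form of $M_1$ at the four product vectors $\ket{a_i b_j}$ and leave only a small number of free parameters.

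For the family $\cC_s^{\nege}$, membership of an effect $M$ amounts to two requirements: block positivity $\langle a b|M|a b\rangle\ge0$ for every product vector (that is, $M\in\mathrm{SEP}(A;B)^\ast$), and the negativity bound $\nege(M)\le s\Tr M$. I would write down an explicit one-parameter ansatz for $M_1$ meeting the four interpolation conditions, verify block positivity by minimising the product-vector form over the product sphere, and then compute $\emin$ and $\Tr$ of both $M_1$ and $I-M_1$. Imposing $\nege\le s\Tr$ on both effects and optimising the free parameter should reduce to a single scalar inequality of the form $\sqrt{xy}\le4s\sqrt{(1-x)(1-y)}$, which squares to the claimed region $xy\le16s^2(1-x)(1-y)$. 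As a consistency check, at $s=1/4$ this collapses to $x+y\le1$, exactly the condition of Theorem~\ref{preceding-1} for the extreme structure $\mathrm{SEP}(A;B)$, reflecting that the negativity constraint becomes inactive there.

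For the family $\cC_s^{\sco}$ the effects have the form $M=P+\Gamma(R)$ with $P\in\Psd{\cH_A\otimes\cH_B}$ and $R\in\conv\{\ketbra{v}{v}\mid \sco(v)\le s\}$, where $\Gamma$ is the partial transpose. I would therefore search for $M_1=P_1+\Gamma(R_1)$, using the partial transpose to supply the beyond-quantum negative directions needed to annihilate $\rho_2$ while accepting $\rho_1$. The verification reduces to exhibiting a decomposition of $R_1$ (and of the part coming from $I-M_1$) into rank-one terms $\ketbra{v}{v}$ and bounding $\sco(v)=\lambda_1\lambda_2$, the product of the two largest Schmidt coefficients, by $s$. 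Writing the Schmidt coefficients of the relevant vectors as functions of $x,y$ and a scaling parameter $t$, the worst-case bound $\sco(v)\le s$ becomes the stated relation $s=\sqrt{t}/(1+t)$, while the interpolation constraints turn out to be solvable precisely when $xy\le t(1-x)(1-y)$.

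The routine parts are the linear interpolation and the reduction to $2\otimes2$. The hard part will be the two spectral/geometric verifications: for $\cC_s^{\nege}$, computing $\emin$ of the explicit effects and, more delicately, checking block positivity, which is a non-convex minimisation of $\langle ab|M|ab\rangle$ over the product-vector manifold; for $\cC_s^{\sco}$, producing an explicit Schmidt decomposition of the partial-transposed part and controlling $\sco$ along it. Choosing the free parameter in the ansatz so that these bounds are simultaneously tight, and so that the resulting region is exactly the stated one rather than a proper subset, is where the real work lies.
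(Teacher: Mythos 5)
Your overall strategy --- reduce to the $2\otimes2$ subspace, build an explicit two-outcome measurement meeting the interpolation conditions $\Tr\rho_kM_l=\delta_{kl}$, and verify membership of both effects in the relevant cone --- is exactly the route taken in \cite{YAH2020}; the thesis itself does not reprove Theorem~\ref{preceding-2} but points to the concrete measurement \eqref{eq:measurement-preceding} of the form $M_i=T_i+\Gamma(T_i)$, and your ansatz $M=P+\Gamma(R)$ for $\cC_s^{\sco}$ is the same device. Your consistency check is also correct: at $s=1/4$ (equivalently $t=1$) both regions collapse to $x+y\le1$, the condition \eqref{eq:sep-dist} of Theorem~\ref{preceding-1}. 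One point you silently got right but should have made explicit: under Definition~\ref{def:measurement} the notation $\cM(\cC,I)$ places effects in $\cC^\ast$, and since $\cC_s^{\nege}$ and $\cC_s^{\sco}$ both contain $\Psd{\cH_A\otimes\cH_B}$, their duals sit inside $\Psd{\cH_A\otimes\cH_B}$, which would make the theorem vacuous; the statement is only meaningful when $\cC_s^{\nege}$ and $\cC_s^{\sco}$ are read as the \emph{effect} cones, with the states living in their duals (which do contain all separable states). That is the reading you adopted, but it deserves a sentence.

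The genuine gap is that the proof never actually happens: every quantitative assertion of the theorem is deferred. You do not write down the ansatz, so the constraints carrying all of the content --- for $\cC_s^{\nege}$, block positivity of both $M_1$ and $I-M_1$ together with $\nege(M_i)\le s\Tr M_i$; for $\cC_s^{\sco}$, an explicit decomposition $M_i=P_i+\Gamma(R_i)$ in which every rank-one component $\ketbra{v}{v}$ of $R_i$ obeys $\sco(v)\le s$ --- are never verified, only asserted to ``reduce to'' or ``turn out to be'' the stated regions. Since the theorem \emph{is} those two inequalities, a plan that postpones them establishes nothing: the trace $\Tr M_i$ entering the negativity bound and the Schmidt coefficients of the vectors composing $R_i$ depend on the free parameters of the ansatz in a way that must be computed and optimized, and this computation is precisely where the boundary curves $xy=16s^2(1-x)(1-y)$ and the relation $s=\sqrt{t}/(1+t)$ come from. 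There is also a small inaccuracy: the conditions $\Tr\rho_1M_1=1$ and $\Tr\rho_2M_1=0$ fix the quadratic form of $M_1$ only at the two product vectors $\ket{a_1b_1}$ and $\ket{a_2b_2}$, not at all four $\ket{a_ib_j}$. To turn the sketch into a proof you would essentially have to reconstruct matrices like those in \eqref{eq:measurement-preceding} and carry out the eigenvalue, block-positivity, and Schmidt-coefficient computations that you yourself label as ``where the real work lies.''
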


The preceding studies \cite{Arai2019,YAH2020} showed that the above certain ESs have extraordinary performance for discrimination task \ref{def:dist-per}.
However, the preceding studies \cite{Arai2019,YAH2020} cannot address more general ESs even if we consider only the ESs $\cC$ satisfying $\cC\subset\mathrm{SES}(A;B)$.

On the other hand,
there exists an indicator that does not change in any entanglement structure.
As an example,
we introduce the capacity of a model.
Define the number $\mathrm{Cap}(\cC)$ of a model $\cC$ as
the maximum number $m$ of perfectly distinguishable states $\{\rho_k\}_{k=1}^m$ in the model $\cC$.
The following proposition is known for the capacity
of entanglement structures.
\begin{proposition}[{\cite[proposition4.5]{Yoshida2021}}]\label{prop:cap}
	For any cone $\cC$,
	$\mathrm{Cap}(\cC)=\dim(\cH_A\otimes\cH_B)$ holds if $\cC$ satisfies $\mathrm{SEP}(A;B)\subset\cC\subset\mathrm{SEP}^\ast(A;B)$.
\end{proposition}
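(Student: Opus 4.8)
The plan is to prove the two inequalities $\mathrm{Cap}(\cC)\ge \dim(\cH_A\otimes\cH_B)$ and $\mathrm{Cap}(\cC)\le \dim(\cH_A\otimes\cH_B)$ separately. Throughout I write $d_A=\dim\cH_A$, $d_B=\dim\cH_B$ and $d=d_Ad_B=\dim(\cH_A\otimes\cH_B)$, and recall that for these composite models the order unit is $u=I$. The first thing I would record is that the hypothesis $\mathrm{SEP}(A;B)\subset\cC\subset\mathrm{SEP}(A;B)^\ast$ is self-improving under duality: applying Proposition~\ref{prop:dual-inc} together with $(\cC^\ast)^\ast=\cC$ gives $\mathrm{SEP}(A;B)\subset\cC^\ast\subset\mathrm{SEP}(A;B)^\ast$ as well, so both the state cone $\cC$ and the effect cone $\cC^\ast$ sit between the separable cone and its dual. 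This is what lets separable states serve as states of $\cC$ and separable effects serve as effects of $\cC$ at the same time.

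For the lower bound I would exhibit $d$ perfectly distinguishable states explicitly. Fix orthonormal bases $\{\ket{a_i}\}_{i=1}^{d_A}$ of $\cH_A$ and $\{\ket{b_j}\}_{j=1}^{d_B}$ of $\cH_B$ and set $\rho_{ij}=M_{ij}=\ketbra{a_i\otimes b_j}{a_i\otimes b_j}$. Each such rank-one product projector is separable, hence $\rho_{ij}\in\mathrm{SEP}(A;B)\subset\cC$ is a legitimate state and $M_{ij}\in\mathrm{SEP}(A;B)\subset\cC^\ast$ is a legitimate effect. They form a measurement because $\sum_{i,j}M_{ij}=(\sum_i\ketbra{a_i}{a_i})\otimes(\sum_j\ketbra{b_j}{b_j})=I=u$, and $\Tr(\rho_{ij}M_{kl})=|\langle a_i|a_k\rangle|^2\,|\langle b_j|b_l\rangle|^2=\delta_{ik}\delta_{jl}$. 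Thus the $d$ states $\{\rho_{ij}\}$ are perfectly distinguishable in $\cC$ in the sense of Definition~\ref{def:dist-per}, giving $\mathrm{Cap}(\cC)\ge d$.

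For the upper bound, suppose $\{\rho_k\}_{k=1}^n$ are perfectly distinguishable, witnessed by a measurement $\{M_k\}_{k=1}^n\in\cM(\cC,u)$ with $\Tr(\rho_kM_l)=\delta_{kl}$, $M_l\in\cC^\ast$ and $\sum_l M_l=I$. The biorthogonality already forces $\{\rho_k\}$ and $\{M_l\}$ to be linearly independent in $\Her{\cH_A\otimes\cH_B}$, but this only yields the weak bound $n\le d^2$; the whole point is to use positivity against separable operators to sharpen $d^2$ to $d$. The route I would take is a resolution-of-identity count: since $I\in\mathrm{SEP}(A;B)$ (the identity is separable) and $M_l\in\cC^\ast\subset\mathrm{SEP}(A;B)^\ast$, each $\Tr M_l=\langle M_l,I\rangle\ge0$ and $\sum_l\Tr M_l=\Tr I=d$. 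If one can establish the per-effect bound $\Tr M_k\ge1$ for every $k$, then $n\le\sum_k\Tr M_k=d$ and the proof closes.

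The hard part will be exactly this per-effect lower bound $\Tr M_k\ge1$. In standard quantum theory it is immediate, since a positive semidefinite effect satisfies $\Tr M_k\ge\lambda_{\max}(M_k)\ge\Tr(\rho_kM_k)=1$ whenever $\rho_k$ is a genuine density matrix; but here both $\rho_k$ and $M_k$ may be block-positive without being positive semidefinite, so this elementary estimate breaks. To handle the general sandwiched cone I would exploit the full coupling of the $n$ relations rather than $M_k$ in isolation: $I-M_k=\sum_{l\ne k}M_l\in\cC^\ast$ and $\rho_k\in\cC$, so duality gives $\Tr(\rho_k(I-M_k))\ge0$, pinning $\Tr(\rho_kM_k)$ to its extreme value, and one then argues that a block-positive effect saturating a block-positive trace-one state in this way cannot have trace below $1$. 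Making this last implication rigorous for arbitrary $\cC$ with $\mathrm{SEP}(A;B)\subset\cC\subset\mathrm{SEP}(A;B)^\ast$ is the genuine obstacle; alternatively, as the statement is quoted from \cite[Proposition 4.5]{Yoshida2021}, one may simply invoke the distinguishability bound proved there, the self-dual sandwiching above guaranteeing that its hypotheses apply to every such $\cC$.
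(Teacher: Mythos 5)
You should first note that the thesis itself contains no proof of Proposition~\ref{prop:cap}: it is imported verbatim from \cite[Proposition 4.5]{Yoshida2021}, so there is no internal argument to compare against and your attempt has to stand on its own. The half you do prove is correct and complete: from $\mathrm{SEP}(A;B)\subset\cC\subset\mathrm{SEP}^\ast(A;B)$ and Proposition~\ref{prop:dual-inc} one indeed gets $\mathrm{SEP}(A;B)\subset\cC^\ast\subset\mathrm{SEP}^\ast(A;B)$, so the rank-one product projectors $\ketbra{a_i\otimes b_j}{a_i\otimes b_j}$ are simultaneously legitimate states and legitimate effects, sum to $I=u$, and are biorthogonal; hence $\mathrm{Cap}(\cC)\ge \dim(\cH_A\otimes\cH_B)$.

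The genuine gap is the upper bound, which is the entire substance of the proposition. Your reduction to the per-effect estimate $\Tr M_k\ge 1$ would indeed close the argument, since then $n\le\sum_k\Tr M_k=\Tr I=d$; but that estimate is never established, and the mechanism you sketch for it is vacuous. The relation $\Tr\bigl(\rho_k(I-M_k)\bigr)\ge 0$ carries no information here: since $\Tr\rho_k=\sum_l\Tr(\rho_kM_l)=1$, one has $\Tr\bigl(\rho_k(I-M_k)\bigr)=0$ identically, so nothing is ``pinned'' beyond what the hypotheses already state. The obstruction you would have to overcome is that both $\rho_k$ and $M_k$ live only in $\mathrm{SEP}^\ast(A;B)$, and two elements of $\mathrm{SEP}^\ast(A;B)$ can have \emph{negative} Hilbert--Schmidt inner product (for example the swap operator $F$ and the antisymmetric projector $(I-F)/2$ on $\mathbb{C}^2\otimes\mathbb{C}^2$ satisfy $\Tr\bigl(F(I-F)/2\bigr)=-1$), so no pairing argument of the form ``$\Tr(XY)\ge 0$'' is available to convert the saturation $\Tr(\rho_kM_k)=1$ into a lower bound on $\Tr M_k$; the elementary quantum route via $\Tr M_k\ge\lambda_{\max}(M_k)\ge\Tr(\rho_kM_k)$ fails as well, exactly as you observe, because $M_k$ need not be positive semi-definite. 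It is moreover not clear that the effect-by-effect bound is even true in general---the capacity bound may require a global argument over the whole biorthogonal system rather than an estimate on each $M_k$ in isolation. As written, your proof of the hard half ultimately reduces to ``invoke \cite[Proposition 4.5]{Yoshida2021},'' which is precisely what the thesis does; so the attempt contributes a correct proof of the easy inequality, while the substantive inequality remains a citation rather than a proof.
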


The proofs of both Theorem~\ref{preceding-1} and Theorem~\ref{preceding-2} essentially depend on the following certain formed measurement $\{M_1,M_2\}$.
\begin{align}
\begin{aligned}\label{eq:measurement-preceding}
	M_1&=T_1+\Gamma(T_1),\quad M_2=T_2+\Gamma(T_2),\\
	T_1 &= \frac{1}{2\gamma}
	\begin{bmatrix}
		\gamma & 0 & 0 & -\beta_1\beta_2\gamma/\alpha_1\alpha_2\\
		0 & \gamma-1 & 0 & -(\gamma-1)\beta_1/\alpha_1\\
		0 & 0 & \gamma-1 & -(\gamma-1)\beta_2/\alpha_2\\
		-\beta_1\beta_2\gamma/\alpha_1\alpha_2 &
		-(\gamma-1)\beta_1/\alpha_1 & -(\gamma-1)\beta_2/\alpha_2 &
		 2-\gamma
	\end{bmatrix}
	,\\
	T_2 &= \frac{1}{2\gamma}
	\begin{bmatrix}
		\qquad 0\qquad&\qquad 0\qquad&\qquad 0\qquad&\qquad 0\\
		\qquad 0\qquad&\qquad 1\qquad&\quad \beta_1\beta_2\gamma/\alpha_1\alpha_2 \quad&\quad (\gamma-1)\beta_1/\alpha_1\\
		\qquad 0\qquad&\quad \beta_1\beta_2\gamma/\alpha_1\alpha_2 \quad&\qquad 1\qquad&\quad (\gamma-1)\beta_2/\alpha_2\\
		\qquad 0\qquad&\quad (\gamma-1)\beta_1/\alpha_1 \quad&\quad (\gamma-1)\beta_2/\alpha_2 \quad&\quad 2(\gamma-1)
	\end{bmatrix},
\end{aligned}
\end{align}
where $\Gamma=\id\otimes \top$, $\alpha_i\in[0,1]$, and $\gamma = \alpha_1+\alpha_2$.
Of course, the above measurement does not generate all measurements in entanglement structures.
In the next section,
we generalize the preceding studies \cite{Arai2019,YAH2020},
i.e.,
this thesis investigates general measurement and its performance for the discrimination tasks \ref{def:dist-per} and \ref{def:dist-min} in any entanglement structures.

\section{Theme A : Characterization of DOVMs via Discrimination Tasks}\label{sect:3-2}

In this section,
we classify Dual-Operator-Valued Measurements (DOVMs) and characterize them by the performance for information tasks as Main Result 1.
First, we define DOVMs and classify them into five types (Definition~\ref{def:bq}-\ref{def:povm}) by their eigenvalues in Section~\ref{sect:3-2-1}.
A DOVM is an available measurement in a certain ESs,
which consists of Hermitian matrices (not necessarily positive semi-definite).
This thesis classifies DOVMs by the interval between the maximum and the minimum eigenvalues.
Second, we characterize DOVMs by discrimination tasks (Definition~\ref{def:dist-per} and \ref{def:dist-min})
as Theorem~\ref{theorem:bq} and ~\ref{theorem:aq} in Section~\ref{sect:3-2-2}.
It is known that 
some types of DOVMs have extraordinary performance for perfect discrimination \cite{Arai2019,YAH2020}.
This thesis generalizes the preceding studies,
i.e.,
this thesis investigates the performance for discrimination tasks for general DOVMs.
As a result,
we give equivalent conditions for extraordinary performance for two types of discrimination tasks (Theorem~\ref{theorem:bq} and Theorem~\ref{theorem:aq}).
Moreover,
as an application of one of the equivalent conditions,
the performance for discrimination task derives the SES from ESs with an inclusion relation (Theorem~\ref{theorem:ses-dis}).
Furthermore,
as an application of Main Theme A,
in section~\ref{sect:3-2-3},
we define simulability of DOVMs and show non-simulability of BQ measurement (Theorem~\ref{theorem:non-sim-1}).

\subsection{Classes of DOVMs}\label{sect:3-2-1}

Because the condition \eqref{eq:quantum} implies the inclusion relation $\cC^\ast\subset\mathrm{SEP}^\ast(A;B)$ for any entanglement structure $\cC$,
any measurement in $\cM_2(\cC,I)$ belongs to $\cM_2(\mathrm{SEP}(A;B),I)$.
In this thesis, we call a measurement in $\cM_2(\mathrm{SEP}(A;B),I)$ a \textit{Dual-Operator-Valued Measure} (DOVM),
and we use the following notation:
\begin{align}\label{def:dovm}
	\rm{DOVM}(A;B):=\cM_2(\mathrm{SEP}(A;B),I).
\end{align}
Any effect of DOVMs is Hermitian,
and therefore, any effect has only real eigenvalues.
In this thesis,
DOVMs are classified into the following four classes of measurements by the relation about eigenvalues.
Hereinafter,
we denote the $k$-th eigenvalue of a Hermitian matrix $X$ in ascending order as $\lambda_k(X)$.

\begin{definition}[Beyond Quantum (BQ)]\label{def:bq}
	We say that a measurement $\cM=\{M_i\}_{i=1}^2\in\rm{DOVM}(A;B)$ is Beyond Quantum (BQ)
	if one of the effects $M_i$ satisfies the inequalities $\lambda_1(M_i)<0$ and $\lambda_d(M_i)\ge1$.
\end{definition}

\begin{definition}[Advantage Quantum (AQ)]\label{def:aq}
	We say that a measurement $\cM=\{M_i\}_{i=1}^2\in\rm{DOVM}(A;B)$ is Advantage Quantum (AQ)
	if one of the effects $M_i$ satisfies the inequalities $\lambda_1(M_i)<0$ and $1+\lambda_1(M_i)<\lambda_d(M_i)<1$.
\end{definition}

\begin{definition}[Non-Advantage Quantum (NAQ)]\label{def:naq}
	We say that a measurement $\cM=\{M_i\}_{i=1}^2\in\rm{DOVM}(A;B)$ is Non-Advantage Quantum (NAQ)
	if one of the effects $M_i$ satisfies the inequalities $\lambda_1(M_i)<0$ and $\lambda_d(M_i)\le1+\lambda_1(M_i)$.
\end{definition}

\begin{definition}[Positive-Operator-Valued-Measure (POVM)]\label{def:povm}
	We say that a measurement $\cM=\{M_i\}_{i=1}^2\in\rm{DOVM}(A;B)$ is a Positive-Operator-Valued-Measure (POVM)
	if any effects $M_i$ satisfies the inequalities $\lambda_1(M_i)\ge0$.
\end{definition}

Here, we remark that the set of POVM is equal to the set of (two-valued) measurements in the SES.
Also, we remark that the names of the above classes $\rm{BQ}(A;B)$, $\rm{AQ}(A;B)$, and $\rm{NAQ}(A;B)$ come from the results in Theme~A and Theme~B.
As seen in the following sections,
any measurement in $\rm{BQ}(A;B)$ cannot be simulated in the SES,
any measurement in $\rm{AQ}(A;B)$ has an advantage for the discrimination task \ref{def:dist-min} over any POVM,
and any measurement in $\rm{NAQ}(A;B)$ has no advantages for the discrimination tasks \ref{def:dist-per} and \ref{def:dist-min} over any POVM.

A typical example of DOVMs is the measurement \eqref{eq:measurement-preceding}.
The reference \cite{Arai2019} shows that the matrices $M_1$ and $M_2$ in \eqref{eq:measurement-preceding} are not positive semi-definite,
i.e., $\lambda_1(M_i)<0$.
Because of the equation $M_1+M_2=I$,
the matrix $M_i$ satisfies $\lambda_d(M_i)>1$,
and therefore, the measurement \eqref{eq:measurement-preceding} is BQ.

Now, we denotes each set of all measurements belonging to the above classes as $\rm{BQ}(A;B)$, $\rm{AQ}(A;B)$, $\rm{NAQ}(A;B)$, and $\rm{POVM}(A;B)$, respectively.
By the above definitions, the following relation holds:
\begin{align}
\begin{aligned}
	&\rm{DOVM}(A;B)\\
	=&\rm{BQ}(A;B)\uplus\rm{AQ}(A;B)\uplus\rm{NAQ}(A;B)\uplus\rm{POVM}(A;B).
\end{aligned}
\end{align}
Our purpose is to characterize each of the classes by the discrimination tasks \ref{def:dist-per} and \ref{def:dist-min}.
In the next section, we give a complete characterization for the classes.

\subsection{Extraordinary Performance of Discrimination Tasks}\label{sect:3-2-2}

Now, we give a characterization of the classes by discrimination tasks.
First, we give a necessary and sufficient condition when a DOVM has superior performance for the discrimination task \ref{def:dist-per}.

\begin{theorem}[Main Result A-1]\label{theorem:bq}
	Given a measurement $\bm{M}=\{M_i\}_{i=1,2}\in\rm{DOVM}(A;B)$,
	the following two conditions are equivalent
	\begin{enumerate}
		\item $\bm{M}\in\rm{BQ}(A;B)$
		\item There exists a pair of two pure states $\rho_1$ and $\rho_2$ in $\cS(\mathrm{SES}(A;B),I)$ such that $\Tr \rho_i M_j=\delta_{ij}$ and $\Tr \rho_1\rho_2>0$.
	\end{enumerate}
\end{theorem}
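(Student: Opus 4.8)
The plan is to first reduce the discrimination requirement to two numerical-range conditions on the single effect $M_1$, then prove each implication by spectral analysis. Writing $M := M_1$ so that $M_2 = I - M$, and noting that the states are pure, $\rho_k = \ketbra{\psi_k}{\psi_k}$, the four equations $\Tr\rho_i M_j = \delta_{ij}$ collapse (via $\Tr\rho_k(M_1+M_2)=1$) to the pair $\langle\psi_1|M|\psi_1\rangle = 1$ and $\langle\psi_2|M|\psi_2\rangle = 0$, while $\Tr\rho_1\rho_2 = |\langle\psi_1|\psi_2\rangle|^2$. The fact I would invoke repeatedly is that for a Hermitian $X$ on a $d$-dimensional space the set of expectation values $\{\langle\psi|X|\psi\rangle : \|\psi\|=1\}$ equals the interval $[\lambda_1(X),\lambda_d(X)]$, together with the standard facts that a positive semi-definite $X$ satisfies $\langle\psi|X|\psi\rangle=0$ iff $X\ket{\psi}=0$, and that eigenvectors for distinct eigenvalues are orthogonal.

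For the direction $(i)\Rightarrow(ii)$, I may assume (swapping the roles of $M_1,M_2$, equivalently of $\rho_1,\rho_2$, if necessary) that $M=M_1$ satisfies $\lambda_1(M)<0$ and $\lambda_d(M)\ge1$. Let $\ket{v_-}$ and $\ket{v_+}$ be unit eigenvectors for $\lambda_1(M)=:\mu$ and $\lambda_d(M)=:\nu$; they are orthonormal. Searching within their real span, a vector $\ket{\psi}=\sqrt{1-x}\,\ket{v_-}+\sqrt{x}\,\ket{v_+}$ has $\langle\psi|M|\psi\rangle=\mu+(\nu-\mu)x$, so I would set $x_1=(1-\mu)/(\nu-\mu)$ and $x_2=-\mu/(\nu-\mu)$ to realize expectations $1$ and $0$; both lie in $(0,1]$ precisely because $\mu<0\le1\le\nu$. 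The corresponding states then satisfy $\Tr\rho_i M_j=\delta_{ij}$, and their overlap $\langle\psi_1|\psi_2\rangle=\sqrt{(1-x_1)(1-x_2)}+\sqrt{x_1x_2}$ is strictly positive because $x_1,x_2>0$, giving $\Tr\rho_1\rho_2>0$.

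For $(ii)\Rightarrow(i)$, the two expectation conditions immediately force $\lambda_d(M)\ge1$ and $\lambda_1(M)\le0$ by the numerical-range fact, so only the strictness $\lambda_1(M)<0$ is at stake. If $\lambda_1(M)<0$ then $\bm M\in\rm{BQ}(A;B)$ via $M_1$. Otherwise $\lambda_1(M)=0$, i.e.\ $M$ is positive semi-definite; here I split on $\lambda_d(M)$. If $\lambda_d(M)>1$, then $M_2=I-M$ has $\lambda_1(M_2)=1-\lambda_d(M)<0$ and $\lambda_d(M_2)=1-\lambda_1(M)=1$, so $\bm M\in\rm{BQ}(A;B)$ via $M_2$. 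If instead $\lambda_d(M)=1$, then $0\le M\le I$ and $\bm M$ is an ordinary POVM; positivity gives $M\ket{\psi_2}=0$ and $(I-M)\ket{\psi_1}=0$, so $\ket{\psi_1},\ket{\psi_2}$ are eigenvectors of $M$ for the distinct eigenvalues $1$ and $0$, hence orthogonal, contradicting $\Tr\rho_1\rho_2>0$. This exhausts the cases.

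The routine parts are the two interval and eigenvalue computations; the step demanding the most care is the reverse direction's boundary case $\lambda_1(M)=0$, where one must correctly distribute between ``$M_2$ is itself in $\rm{BQ}(A;B)$'' and ``the configuration is a genuine POVM and orthogonality is forced.'' The strict positivity of the overlap in the forward direction is the other point to verify carefully, since it is exactly what upgrades perfect discrimination to perfect discrimination of \emph{non-orthogonal} states.
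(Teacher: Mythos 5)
Your proposal is correct and follows essentially the same route as the paper: the forward direction constructs exactly the same pair of pure states supported on the extreme eigenvectors of $M_1$ (your parametrization by $x_1=(1-\mu)/(\nu-\mu)$ and $x_2=-\mu/(\nu-\mu)$ reproduces the paper's coefficients verbatim), and the reverse direction rests on the same spectral facts. The only difference is organizational: where the paper first cites the standard fact that POVMs perfectly discriminate only orthogonal states and then argues by WLOG relabeling plus contradiction, you run an explicit case split on $\lambda_1(M_1)$ and $\lambda_d(M_1)$ (correctly identifying $M_2$ as the BQ witness when $\lambda_1(M_1)=0$, $\lambda_d(M_1)>1$) and prove the POVM orthogonality fact inline via the kernel argument --- a slightly more self-contained but equivalent treatment.
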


Theorem~\ref{theorem:bq} states that the class $\rm{BQ}(A;B)$ is 
characterized by the superior performance for discrimination task \ref{def:dist-per} to that of the SES.

Similarly,
we give a necessary and sufficient condition
when a DOVM has superior performance for the discrimination task \ref{def:dist-per}.

\begin{theorem}[Main Result A-2]\label{theorem:aq}
	Given a measurement $\bm{M}=\{M_i\}_{i=1,2}\in\rm{DOVM}(A;B)$,
	the following two conditions are equivalent
	\begin{enumerate}
		\item $\bm{M}\in\rm{BQ}(A;B)\cup\bm{AQ}(A;B)$
		\item There exists a pair of two states $\rho_1$ and $\rho_2$ in $\cS(\mathrm{SES}(A;B),I)$ such that 
		\begin{align}\label{eq-aq}
			\mathrm{Err}(\rho_1;\rho_2;\bm{M}) < 1-\frac{1}{2}\|\rho_1-\rho_2\|_1.
		\end{align}
	\end{enumerate}
	Moreover, if the condition 1 holds, $\rho_1$ and $\rho_2$ can be chosen as a state $\cS(\mathrm{SEP}(A;B),I)$ in condition 2.
\end{theorem}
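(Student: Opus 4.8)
The plan is to rewrite both conditions in a common linear-algebraic form and then link them through a single minimax duality. Since $M_1+M_2=I$, Definition~\eqref{def:error-1} gives
\[
\mathrm{Err}(\rho_1;\rho_2;\bm M)=\Tr\rho_1 M_2+\Tr\rho_2 M_1=1-\Tr[(\rho_1-\rho_2)M_1],
\]
while by \eqref{eq:trace-norm} the right-hand side $1-\tfrac12\|\rho_1-\rho_2\|_1$ is exactly the optimal quantum error $\mathrm{Err}_{\mathrm{SES}(A;B)}(\rho_1;\rho_2)$. Hence condition 2, for any prescribed class of states, is equivalent to the existence of $\rho_1,\rho_2$ in that class with $\Tr[(\rho_1-\rho_2)M_1]>\tfrac12\|\rho_1-\rho_2\|_1$, i.e.\ to the strict positivity of $h(X):=\Tr[XM_1]-\tfrac12\|X\|_1$ on differences $X=\rho_1-\rho_2$.

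Next I would translate condition 1 into one spectral inequality. Writing $\omega:=\lambda_d(M_i)-\lambda_1(M_i)$ for the spectral width of either effect (the two widths agree because $M_2=I-M_1$ has reversed, shifted spectrum, so $\lambda_1(M_2)=1-\lambda_d(M_1)$ and $\lambda_d(M_2)=1-\lambda_1(M_1)$), a short case analysis of Definitions~\ref{def:bq}--\ref{def:povm} shows that $\bm M\in\rm{BQ}(A;B)\cup\rm{AQ}(A;B)$ holds if and only if $\omega>1$, whereas $\rm{NAQ}(A;B)\cup\rm{POVM}(A;B)$ corresponds to $\omega\le1$. Thus the whole theorem reduces to proving that $h$ can be made positive by separable $\rho_1,\rho_2$ exactly when $\omega>1$.

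The implication (2)$\Rightarrow$(1) is the easy direction. Since $X=\rho_1-\rho_2$ is traceless, $\Tr[XM_1]=\Tr[X(M_1-cI)]$ for every scalar $c$; choosing $c=\tfrac12(\lambda_1(M_1)+\lambda_d(M_1))$ centers the spectrum so that the operator norm satisfies $\|M_1-cI\|_\infty=\omega/2$. Trace--operator-norm duality then gives $\Tr[XM_1]\le\|X\|_1\,\|M_1-cI\|_\infty=\tfrac{\omega}{2}\|X\|_1$, so if $\omega\le1$ no pair can meet the strict inequality. Contrapositively, condition 2 forces $\omega>1$, hence condition 1.

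The substantive direction, and the main obstacle, is (1)$\Rightarrow$(2) with \emph{separable} witnesses, since the extreme eigenvectors of $M_1$ that realize the width $\omega$ are generically entangled and cannot be approximated by product vectors, so no naive perturbation of the entangled witness works. My plan is to avoid any explicit construction and argue by duality. Let $\mathcal D$ denote the compact, convex, centrally symmetric set of differences $\rho_1-\rho_2$ of separable states, and use $\tfrac12\|X\|_1=\max_{\|Y\|_\infty\le 1/2}\Tr[XY]$ to write $h(X)=\min_{\|Y\|_\infty\le1/2}\Tr[X(M_1-Y)]$. Applying Sion's minimax theorem to this bilinear form over the two compact convex sets yields
\[
\sup_{X\in\mathcal D} h(X)=\min_{\|Y\|_\infty\le 1/2} w(M_1-Y),\qquad w(M'):=\max_{\sigma\ \mathrm{sep}}\Tr[\sigma M']-\min_{\sigma\ \mathrm{sep}}\Tr[\sigma M'].
\]
The key lemma is that $w(M')=0$ if and only if $M'=cI$: because product states span $\Her{\cH_A\otimes\cH_B}$ (local tomography, \eqref{eq:com1}), a functional constant on all separable states is constant on the whole trace-one hyperplane. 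Consequently, when $\omega>1$ no admissible $Y$ can make $M_1-Y$ a multiple of the identity (that would force $\|M_1-cI\|_\infty\ge\omega/2>1/2$), so $w(M_1-Y)>0$ for every $Y$ in the compact ball; continuity of $w$ upgrades this pointwise positivity to a strictly positive minimum. Hence $\sup_{X\in\mathcal D}h$ is positive and attained at some $X^\ast=\rho_1^\ast-\rho_2^\ast$ with separable $\rho_i^\ast$, which gives condition 2 with separable states and closes the cycle (1)$\Rightarrow$(2 with separable states)$\Rightarrow$(2)$\Rightarrow$(1). The delicate points to verify are the applicability of Sion's theorem and the compactness/continuity estimate for $w$.
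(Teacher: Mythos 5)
Your proposal is correct, and it establishes the full statement (including the ``moreover'' clause) by a route genuinely different from the paper's. For $1\Rightarrow 2$ the paper is constructive: writing $E_1,E_d$ for the extreme spectral projections of $M_1$, it exhibits the explicit pair $\rho_1=\frac{1}{d}I$ and $\rho_2=\frac{1}{d}I+\frac{1}{\sqrt{2}d}\left(E_1-E_d\right)$, certifies separability of $\rho_2$ via the Gurvits--Barnum separable ball (Proposition~\ref{prop:sep-ball}), and computes the error directly; this buys concrete witnesses and a quantitative violation of size $(\omega-1)/(\sqrt{2}d)$, and it also shows that your worry about perturbing entangled eigenvectors is moot, since the paper's witnesses are mixtures near $I/d$ rather than perturbed eigenvectors. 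Your minimax route is non-constructive --- no explicit witnesses, no size of the violation --- but it needs only the elementary spanning lemma for product states in place of Proposition~\ref{prop:sep-ball}, and it isolates the precise obstruction: whether $M_1$ admits a translate $M_1-cI$ inside the operator-norm ball of radius $1/2$. The steps you flag as delicate do go through: Sion's theorem applies because the payoff is bilinear and both your set $\mathcal{D}$ and the ball are compact convex sets in finite dimension, and $w$ is a difference of two Lipschitz functions of its argument, hence continuous, so its minimum over the compact ball is attained and strictly positive. For $2\Rightarrow 1$, the paper splits into the POVM case (Proposition~\ref{prop:trace-norm}) and the NAQ case, which it treats by rescaling $M_2$ to the POVM effect $M_2/\lambda_d(M_2)$; your centering bound $\Tr[XM_1]=\Tr[X(M_1-cI)]\le\frac{\omega}{2}\|X\|_1$ handles all of $\mathrm{NAQ}\cup\mathrm{POVM}$ uniformly and yields the sharper estimate $\mathrm{Err}(\rho_1;\rho_2;\bm{M})\ge 1-\frac{\omega}{2}\|\rho_1-\rho_2\|_1$. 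Your version is in fact more robust here: the paper's identity $\mathrm{Err}(\rho_1;\rho_2;\bm{M})=\lambda_d(M_2)\,\mathrm{Err}(\rho_1;\rho_2;\bm{M}')$ is off by an additive constant (the correct relation is $\mathrm{Err}(\rho_1;\rho_2;\bm{M})=\lambda_d(M_2)\,\mathrm{Err}(\rho_1;\rho_2;\bm{M}')-(\lambda_d(M_2)-1)$, so the displayed chain fails already at $\rho_1=\rho_2$), whereas the additive shift $M_2\mapsto M_2-\lambda_1(M_2)I$ --- precisely your centering --- is the clean repair of that step.
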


Theorem~\ref{theorem:bq} states that the class $\rm{BQ}(A;B)\cup\bm{AQ}(A;B)$ is 
characterized by the superior performances for discrimination task \ref{def:dist-min} to that of the SES.

We summarize the characterization as a table (Table~\ref{table-dovm}).

\begin{table}[h]
	\centering
	\caption[Characterization of DOVMs by discrimination tasks]{Characterization of DOVMs by the performance for discrimination tasks. When the class has a performance superior to POVM, we denote checkmarks.}
	\label{table-dovm}
	\begin{tabular}{|c||c|c|c|c|c|}
		\hline
		discrimination task & BQ  & AQ & NAQ & POVM \\ \hline \hline
		perfect discrimination (\ref{def:dist-per}) & $\checkmark$  & $\times$ & $\times$ & -\\
		discrimination with minimum error (\ref{def:dist-min}) & $\checkmark$ & $\checkmark$ & $\times$ & - \\ \hline
	\end{tabular}
\end{table}

Applying Theorem~\ref{theorem:aq} for the characterization of the SES,
we obtain the following theorem.

\begin{theorem}[Main Result A-3]\label{theorem:ses-dis}
	Given an entanglement structure $\cC\subset\mathrm{SES}(A;B)$,
	the following conditions are equivalent:
	\begin{enumerate}
		\item $\cC=\mathrm{SES}(A;B)$
		\item Any pair of two state $\rho_1,\rho_2\in\cS(\cC,I)$ satisfies $\mathrm{Err}_{\cC}(\rho_1;\rho_2)=1-\frac{1}{2}\|\rho_1-\rho_2\|_1$.
	\end{enumerate}
\end{theorem}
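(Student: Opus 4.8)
The plan is to prove the two implications separately; the whole content sits in $(2)\Rightarrow(1)$, which I would argue by contraposition with Theorem~\ref{theorem:aq} used as a black box. The implication $(1)\Rightarrow(2)$ is immediate: if $\cC=\mathrm{SES}(A;B)=\Psd{\cH_A\otimes\cH_B}$, then the standard minimum-error formula~\eqref{eq:trace-norm} for $\Psd{\cH}$ applies verbatim to every pair $\rho_1,\rho_2\in\cS(\cC,I)$, giving $\mathrm{Err}_{\cC}(\rho_1;\rho_2)=1-\frac12\|\rho_1-\rho_2\|_1$.

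For $(2)\Rightarrow(1)$ I would assume $\cC\neq\mathrm{SES}(A;B)$ and exhibit a pair violating condition~2. Since $\cC\subset\mathrm{SES}(A;B)$ by hypothesis, the assumption means $\cC\subsetneq\mathrm{SES}(A;B)$. Because $\mathrm{SES}(A;B)$ is self-dual ($(\Psd{\cH})^\ast=\Psd{\cH}$), Proposition~\ref{prop:dual-inc} converts this into $\mathrm{SES}(A;B)\subsetneq\cC^\ast$, the strictness being preserved since $\ast$ is an involution on proper cones. Hence there is an effect $E\in\cC^\ast$ that is not positive semidefinite, so $\lambda_1(E)<0$. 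Moreover $\cC\supset\mathrm{SEP}(A;B)$ by~\eqref{eq:quantum}, whence $E\in\cC^\ast\subset\mathrm{SEP}(A;B)^\ast$ is block positive; pairing $E$ with the separable maximally mixed state $I/\Tr I$ gives $\Tr E\ge0$, and as $E\neq0$ this forces $\lambda_d(E)>0$.

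The key step is to turn this witness into a genuinely beyond-quantum two-outcome measurement inside $\cM_2(\cC,I)$. I would set $M_1:=E/\lambda_d(E)$ and $M_2:=I-M_1$. Then $M_1\in\cC^\ast$ as a positive rescaling of $E$, and since $\lambda_d(M_1)=1$ every eigenvalue of $M_1$ is at most $1$, so $M_2$ is positive semidefinite, giving $M_2\in\Psd{\cH_A\otimes\cH_B}\subset\cC^\ast$; thus $\bm{M}=\{M_1,M_2\}\in\cM_2(\cC,I)\subset\mathrm{DOVM}(A;B)$. Its eigenvalues obey $\lambda_1(M_1)=\lambda_1(E)/\lambda_d(E)<0$ and $\lambda_d(M_1)=1\ge1$, so $\bm{M}\in\mathrm{BQ}(A;B)\subset\mathrm{BQ}(A;B)\cup\mathrm{AQ}(A;B)$. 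Theorem~\ref{theorem:aq}, with its ``moreover'' clause, then supplies separable states $\rho_1,\rho_2\in\cS(\mathrm{SEP}(A;B),I)$ with $\mathrm{Err}(\rho_1;\rho_2;\bm{M})<1-\frac12\|\rho_1-\rho_2\|_1$. As $\mathrm{SEP}(A;B)\subset\cC$ these are states of $\cC$, and as $\bm{M}\in\cM_2(\cC,I)$ the minimization only helps, so $\mathrm{Err}_{\cC}(\rho_1;\rho_2)\le\mathrm{Err}(\rho_1;\rho_2;\bm{M})<1-\frac12\|\rho_1-\rho_2\|_1$, contradicting condition~2 and closing the contrapositive.

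The main obstacle is conceptual rather than computational: realizing that a single scalar rescaling $M_1=E/\lambda_d(E)$ promotes an arbitrary element of $\cC^\ast\setminus\Psd{\cH_A\otimes\cH_B}$ to a bona fide BQ measurement whose complementary effect is automatically positive, so that all the analytic work can be offloaded to Theorem~\ref{theorem:aq}. Beyond that, the only new checks are the two sign facts $\lambda_1(E)<0$ and $\lambda_d(E)>0$, where the latter relies on the block-positivity/nonnegative-trace argument noted above.
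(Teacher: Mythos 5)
Your proposal is correct and follows essentially the same route as the paper's own proof: the forward implication via the standard formula \eqref{eq:trace-norm}, and the contrapositive of $(2)\Rightarrow(1)$ by dualizing the strict inclusion, rescaling a non-PSD element of $\cC^\ast$ by its largest eigenvalue to obtain a two-outcome measurement in $\cM_2(\cC,I)$, and invoking Theorem~\ref{theorem:aq} (with its separable-states clause) to produce the violating pair. Your version is in fact slightly more careful than the paper's on two points — you correctly classify the rescaled measurement as BQ (the paper calls it AQ, though both lie in $\mathrm{BQ}\cup\mathrm{AQ}$, which is all Theorem~\ref{theorem:aq} needs) and you spell out why the witnessing states lie in $\cS(\cC,I)$ — but these are expository refinements, not a different argument.
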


Theorem~\ref{theorem:ses-dis} states that the SES is uniquely determined by discrimination tasks \ref{def:dist-per} from any ES with smaller state space than that of the SES.
In other words,
an ES has extraordinary performance for discrimination task
if the ES has smaller state space than that of the SES,
which is regarded as a generalization of the result of \cite{Arai2019,YAH2020}
in the viewpoint of characterization of ESs by discrimination tasks.

\subsection{Theme B : Non-Simulability of BQ Measurements}\label{sect:3-2-3}

In this section,
as Theme B,
we define simulability of DOVMs and reveal the impossibility to simulate BQ measurement.
In Section~\ref{sect:3-2-3},
we define simulability of DOVMs and show non-simulability of BQ measurement (Theorem~\ref{theorem:non-sim-1}).

It is believed and well-verified that our physical systems obey standard quantum theory or the SES.
Therefore, a measurement in $\rm{DOVM}\setminus\rm{POVM}$ cannot be implemented in physical system.
However, there is a possibility that such a measurement beyond standard quantum theory can be implemented in high-dimensional standard quantum theory.
An effect $e$ of a measurement in $\rm{DOVM}\setminus\rm{POVM}$ is a non-positive matrix,
and there exists  a state of the SES such that $\Tr \rho e<0$.
In order to exclude such a ``negative probability'',
we introduce the domain of a measurement as follows.
\begin{definition}[Domain of an element in $\mathrm{SEP}^\ast$]
	Given an element $X\in\mathrm{SEP}^\ast$,
	we define the domain of $X$ as
	\begin{align}\label{def:domain-1}
		\cD(X):=\{\rho\in\cL_{\mathrm{H}}^+(\cH_A\otimes\cH_B)\mid\Tr \rho X\ge0\}.
	\end{align}
\end{definition}
\begin{definition}[Domain of DOVM]
	We define the domain of a DOVM $\bm{M}=\{M_i\}_{i\in I}$ as
	\begin{align}\label{def:domain-2}
		\cD(\bm{M}):=\bigcap_{i\in I} \cD(M_i).
	\end{align}
\end{definition}
By definition,
any DOVM $\bm{M}$ satisfies $\mathrm{SEP}(A;B)\subset\cD(\bm{M})\subset\mathrm{SEP}^\ast(A;B)$.

Next, we define $n$-simulability of a DOVM in standard quantum theory as follows.
\begin{definition}[quantum $n$-simulability of a DOVM]\label{def:simulable}
	Let $\cM=\{M_i\}_{i\in I}$ be a DOVM.
	We say that $\cM$ is quantum $n$-simulable
	if there exists a natural number $n$ and a POVM $\cN=\{N_i\}_{i\in I}$ on $(\cH_A\otimes\cH_B)^{\otimes n}$ such that
	\begin{align}\label{eq:sim}
		\Tr \rho^{\otimes n} N_i=\Tr \rho M_i \quad ^\forall\rho\in\cD(\cM)
	\end{align}
\end{definition}
If a DOVM $\bm{M}=\{M_i\}_{i\in I}$ is $n$-simulable,
the probability distribution $\{\Tr \rho M_i\}$ is simulated by $\{\Tr \rho^{\otimes n} N_i\}$ independent of a given (unknown) state $\rho$.
Of course,
it does not attain $n$-simulability to simulate a probability distribution $\{\Tr \rho M_i\}$ for a certain state.

In general,
any unknown state $\rho$ cannot be copied.
However, in physical situation,
an initial state is prepared by a certain way.
The same preparation generates the same state $\rho$.
In this situation,
we can apply the measurement $\{N_i\}_{i\in I}$ for $\rho^{\otimes n}$
even if the state $\rho$ is unknown.
On the other hand,
someone wants to consider a situation that we cannot copy the given unknown $\rho$.
In this case,
we can apply only adaptive measurements (or one way Local Operation and Classical Information measurement) \cite{Chefles2004,HayashiBook2017},
which is implemented by $n$-times sequence of measurement on the local system.
If we want to consider such setting,
we restrict the class of resource measurements.

Also,
we note that it is useless to consider ``infinite-simulability''.
When we apply an infinite number of operation for the copies of an unknown state $\rho$,
we completely extract the information of $\rho$.
Then, we can easily simulate the probability $\{\Tr\rho M_i\}$
because we can determine the value $\Tr\rho M_i$ completely.
Therefore, this thesis considers $n$-simulability for an exact finite number $n$.

This thesis considers POVMs as resource measurements.
Even though we consider the class of POVMs,
any measurement in $\bm{BQ}$ is not $n$-simulable for any natural number $n$.

\begin{theorem}[Main Result B-1]\label{theorem:non-sim-1}
	Let us consider $\cH=\cH_A\otimes\cH_B$.
	Any DOVM $\cM\in\bm{BQ}$ is not quantum $n$-simulable for any natural number $n$.
\end{theorem}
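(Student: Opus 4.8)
The plan is to argue by contradiction, exploiting the fact that a BQ measurement perfectly discriminates a pair of \emph{non-orthogonal} pure states, a feat that is impossible in standard quantum theory even after passing to tensor powers. Concretely, I would first invoke Theorem~\ref{theorem:bq}: since $\cM=\{M_i\}_{i=1,2}\in\bm{BQ}$, there exist pure states $\rho_1,\rho_2\in\cS(\mathrm{SES}(A;B),I)$ with $\Tr\rho_iM_j=\delta_{ij}$ and $\Tr\rho_1\rho_2>0$. The strictly positive overlap is the crucial feature that standard quantum measurements cannot reproduce.

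The first technical point to check is that these witnessing states lie in the domain $\cD(\cM)$, so that the simulability identity \eqref{eq:sim} actually applies to them. This is immediate: since $\Tr\rho_iM_j=\delta_{ij}\ge0$ for each $j$, we have $\rho_i\in\cD(M_j)$ for $j=1,2$, hence $\rho_i\in\cD(\cM)=\cD(M_1)\cap\cD(M_2)$. Now suppose for contradiction that $\cM$ were quantum $n$-simulable for some $n$, witnessed by a POVM $\{N_1,N_2\}$ on $(\cH_A\otimes\cH_B)^{\otimes n}$. Applying \eqref{eq:sim} to $\rho_1$ and $\rho_2$ yields $\Tr\rho_i^{\otimes n}N_j=\Tr\rho_iM_j=\delta_{ij}$, so the POVM $\{N_1,N_2\}$ would perfectly discriminate the two states $\rho_1^{\otimes n}$ and $\rho_2^{\otimes n}$ in standard quantum theory.

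It then remains to see this is impossible. Writing $\rho_i=\ketbra{\psi_i}{\psi_i}$, one has $\rho_i^{\otimes n}=\ketbra{\psi_i^{\otimes n}}{\psi_i^{\otimes n}}$, and therefore $\Tr\rho_1^{\otimes n}\rho_2^{\otimes n}=\abs{\langle\psi_1|\psi_2\rangle}^{2n}=(\Tr\rho_1\rho_2)^n>0$. By the perfect-discrimination criterion for the model $(\Her{\cH},\Tr,\Psd{\cH},I)$ recalled in Section~\ref{sect:3-1-1} (two states are perfectly distinguishable by a POVM if and only if their trace inner product vanishes), the states $\rho_1^{\otimes n}$ and $\rho_2^{\otimes n}$ cannot be perfectly discriminated by any POVM, contradicting the previous paragraph. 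Since $n$ was arbitrary, $\cM$ is not quantum $n$-simulable for any natural number $n$.

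I expect the argument to be essentially routine once Theorem~\ref{theorem:bq} is in hand; the only genuine things to verify are the domain membership $\rho_i\in\cD(\cM)$ (without which the simulability hypothesis \eqref{eq:sim} cannot legitimately be invoked at $\rho_1,\rho_2$) and the multiplicativity of the overlap under tensor powers. The conceptual heart, rather than any hard computation, is the observation that $n$-simulability is required to reproduce the measurement statistics on the \emph{entire} domain $\cD(\cM)$, which in particular contains the non-orthogonal witnessing pair supplied by Definition~\ref{def:bq}; and no finite number of independent copies can wash out a strictly positive inner product into an orthogonal one.
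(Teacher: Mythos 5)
Your proof is correct and follows essentially the same route as the paper's own argument: invoke Theorem~\ref{theorem:bq} to obtain a perfectly distinguishable non-orthogonal pair, note that $\Tr\rho_iM_j=\delta_{ij}$ places these states in $\cD(\cM)$, and then use $\Tr\rho_1^{\otimes n}\rho_2^{\otimes n}=(\Tr\rho_1\rho_2)^n>0$ to conclude that no POVM on the $n$-fold system can reproduce the statistics. Your explicit verification of the domain membership and of the tensor-power overlap matches the paper's reasoning step for step.
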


This theorem is obtained from Theorem~\ref{theorem:bq} as follows.
Take an arbitrary DOVM $\cM\in\bm{BQ}$.
Because of Theorem~\ref{theorem:bq},
there are two non-orthogonal states $\rho_1,\rho_2\in\cS(\Psd{\cH_A\otimes\cH_B},I)$ such that
$\rho_1,\rho_2$ are perfectly distinguishable by $\bm{M}=\{M_1,M_2\}$,
i.e.,
$\Tr \rho_iM_j=\delta_{i,j}$.
The relation $\Tr \rho_iM_j=\delta_{i,j}$ implies that $\rho_i$ belongs to the domain of $\bm{M}$.
However,
because the states $\rho_1$ and $\rho_2$ are non-orthogonal,
i.e.,
$\Tr \rho_1\rho_2>0$,
the following inequality holds:
\begin{align}
	\Tr \rho_1^{\otimes n}\rho_2^{\otimes n}
	=\left(\Tr \rho_1\rho_2\right)^n>0
\end{align}
In other words,
the two states $\rho_1^{\otimes n}$ and $\rho_2^{\otimes n}$ are non-orthogonal.
Therefore, any POVM $\bm{N}$ does not discriminate $\rho_1^{\otimes n}$ and $\rho_2^{\otimes n}$  perfectly,
which implies the equality \eqref{eq:sim} never holds.

In this way,
this thesis clarifies that
any measurement in $\bm{BQ}$ is not $n$-simulable for any natural number $n$.
On the other hand,
it is an open problem whether other classes of DOVMs are $n$-simulable.

Here, we remark that non-simulability is derived from the possibility to discriminate non-orthogonal states perfectly as seen in the above proof.
This relation between non-simulability and perfect discrimination of non-orthogonal states holds not only in ESs but also in more general models.
In this paper, we give an example of such models that contains a non-simulable measurement in Appendix~\ref{appe:2-3}.

\section{Proofs of theorems in Chapter~\ref{chap:3}}\label{sect:3-4}

In this section,
we prove statements in Chapter~\ref{chap:3}.

\subsection{Proof of Theorem~\ref{theorem:bq}}\label{sect:3-4-1}

\begin{proof}
	\textbf{[STEP1]} $1\Rightarrow 2$
	
	Without loss of generality,
	we assume that $M_1$ satisfies $\lambda_1(M_i)<0$ and $\lambda_{d}(M_1)\ge1$.
	By spectral decomposition, the matrix $M_1$ is decomposed into projections $E_k$ as
	\begin{align}
		M_1:=\sum_{k=1}^d \lambda_k(M_1)E_k.
	\end{align}
	Here, we denote the eigenvector of $E_k$ as $\ket{\psi_k}$.
	Because $\{M_1,M_2\}$ is a measurement, i.e., $M_1+M_2=I$,
	the matrix $M_2$ is also decomposed into projections $E_k$ as
	\begin{align}
		M_2:=\sum_{k=1}^d \left(1-\lambda_k(M_1)\right)E_k.
	\end{align}
	Take a pair of matrices $\rho_1$ and $\rho_2$ as
	\begin{align}
		\rho_i:&=\ketbra{\phi_i}{\phi_i} \quad(i=1,2)\\
		\ket{\phi_1}:&=\sqrt{\cfrac{\lambda_d(M_1)-1}{\lambda_d(M_1)-\lambda_1(M_1)}}\ket{\psi_1}
		+\sqrt{\cfrac{1-\lambda_1(M_1)}{\lambda_d(M_1)-\lambda_1(M_1)}}\ket{\psi_d},\\
		\ket{\phi_2}:&=\sqrt{\cfrac{\lambda_d(M_1)}{\lambda_d(M_1)-\lambda_1(M_1)}}\ket{\psi_1}
		+\sqrt{\cfrac{-\lambda_1(M_1)}{\lambda_d(M_1)-\lambda_1(M_1)}}\ket{\psi_d}.
	\end{align}
	Then, the matrices $\rho_1$ and $\rho_2$ are rank 1 and positive semi-definite,
	i.e., $\rho_1$ and $\rho_2$ belong to $\cS(\mathrm{SES},I)$.
	Also, the choice of $\rho_i$ implies the following equations:
	\begin{align}
		\Tr \rho_i M_j=\delta_{ij}.
	\end{align}
	Therefore, $\rho_1$ and $\rho_2$ are perfectly distinguishable by $\bm{M}$.
	Finally, we obtain the equation $\Tr \rho_1\rho_2>0$ as follows:
	\begin{align}
		&\Tr \rho_1\rho_2=|\braket{\phi_1|\phi_2}|^2\nonumber\\
		=&\cfrac{\left(\lambda_d(M_1)^2+\lambda_1(M_1)^2-\lambda_d(M_1)-\lambda_1(M_1) \right)^2}{\left(\lambda_d(M_1)-\lambda_1(M_1)\right)^2}\nonumber\\
		=&\cfrac{\left(\lambda_d(M_1)\left(\lambda_d(M_1)-1\right)+\lambda_1(M_1)^2-\lambda_1(M_1) \right)^2}{\left(\lambda_d(M_1)-\lambda_1(M_1)\right)^2}
		\stackrel{(a)}{>}0.
	\end{align}
	The inequality $(a)$ is shown by the inequalities $\lambda_1(M_1)<0$ and $\lambda_d(M_1)\ge1$.
	\\
	
	\textbf{[STEP2]} $2\Rightarrow 1$
	
	At first,
	because a POVM perfectly discriminates only orthogonal states,
	$\bm{M}\not\in \rm{POVM}(A;B)$.
	Then, one of the effects $M_1$ is not positive semi-definite.
	Without loss of generality,
	we assume that $M_1$ is not positive semi-definite,
	which implies $\lambda_1(M_1)<0$.
	
	Now, we show $\cM\in\rm{BQ}(A;B)$ by contradiction.
	Then, we assume that $\lambda_d(M_1)<1$.
	Because of the equation $M_1+M_2=I$,
	we obtain the inequality
	\begin{align}
		\lambda_1(M_2)=1-\lambda_d(M_1)>0,
	\end{align}
	which implies that $\lambda_k(M_2)>0$ for any $k=1,\cdots,d$.
	Therefore, $\Tr M_2\rho>0$ for any non-zero $\rho\in\Psd{\cH_A\otimes\cH_B}$.
	This contradicts the existence of $\rho_1\in\cS(\mathrm{SES}(A;B),I)$ such that $\Tr \rho_2M_1=0$.
	Therefore, we obtain $\lambda_d(M_1)\ge1$,
	which implies that $\bm{M}$ belongs to $\rm{BQ}(A;B)$.
\end{proof}

\subsection{Proof of Theorem~\ref{theorem:aq}}\label{sect:3-4-2}

For the proof of Theorem~\ref{theorem:aq},
we apply the following facts.

\begin{proposition}[{\cite[equation (3.59)]{HayashiBook2017}}]\label{prop:trace-norm}
	Given a pair of states $\rho_1$ and $\rho_2$ in $\cS(\Psd{\cH},I)$,
	the following equation holds:
	\begin{align}
		\min_{0\le T\le I}\Tr \rho_1(I-T)+\Tr \rho_2T=1-\frac{1}{2}\|\rho_1-\rho_2\|_1,
	\end{align}
	where the order relation $\le$ is defined by $T\ge0\Leftrightarrow T\in\Psd{\cH}$.
\end{proposition}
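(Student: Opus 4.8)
The plan is to recast the minimization as a linear optimization in the single test operator $T$ and then solve it explicitly via the spectral decomposition of the difference $\rho_1-\rho_2$. First I would use the normalization $\Tr\rho_1=1$ to rewrite the objective function as
\begin{align}
	\Tr\rho_1(I-T)+\Tr\rho_2 T = 1-\Tr\bigl[(\rho_1-\rho_2)T\bigr],
\end{align}
so that minimizing the error sum over the feasible region $0\le T\le I$ is equivalent to maximizing $\Tr[\Delta T]$, where $\Delta:=\rho_1-\rho_2$. Since $\rho_1,\rho_2\in\cS(\Psd{\cH},I)$, the matrix $\Delta$ is Hermitian and traceless.

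Next I would introduce the Jordan decomposition $\Delta=\Delta_+-\Delta_-$, where $\Delta_+,\Delta_-\in\Psd{\cH}$ are obtained from the spectral decomposition of $\Delta$ by collecting the positive and the negative eigenvalues respectively, so that $\Delta_+$ and $\Delta_-$ have mutually orthogonal supports. The key observation is that for any $T$ satisfying $0\le T\le I$ one has $\Tr[\Delta_+ T]\le\Tr\Delta_+$ (from $T\le I$ and $\Delta_+\ge0$, e.g.\ via $\Delta_+^{1/2}T\Delta_+^{1/2}\le\Delta_+$) and $\Tr[\Delta_- T]\ge0$ (from $T\ge0$ and $\Delta_-\ge0$), hence
\begin{align}
	\Tr[\Delta T]=\Tr[\Delta_+T]-\Tr[\Delta_-T]\le\Tr\Delta_+.
\end{align}

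I would then exhibit the maximizer: taking $T_\ast$ to be the orthogonal projection onto the support of $\Delta_+$ (equivalently, onto the nonnegative eigenspace of $\Delta$) gives $\Tr[\Delta_+ T_\ast]=\Tr\Delta_+$ and $\Tr[\Delta_- T_\ast]=0$, so the upper bound is attained and $\max_{0\le T\le I}\Tr[\Delta T]=\Tr\Delta_+$. Finally, tracelessness $\Tr\Delta=\Tr\Delta_+-\Tr\Delta_-=0$ forces $\Tr\Delta_+=\Tr\Delta_-=\tfrac12(\Tr\Delta_++\Tr\Delta_-)=\tfrac12\|\Delta\|_1$, which combined with the rewriting of the objective yields exactly the claimed value $1-\tfrac12\|\rho_1-\rho_2\|_1$.

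The computation is largely routine, and the only point deserving care is the equality analysis in the two operator inequalities: one must verify that the orthogonality of the supports of $\Delta_+$ and $\Delta_-$ is precisely what makes the single projection onto the positive part simultaneously optimal for both terms, so that the separate bounds are saturated at the same $T_\ast$. The trace-norm identity $\|\Delta\|_1=\Tr\Delta_++\Tr\Delta_-$ used at the end is immediate from this orthogonality.
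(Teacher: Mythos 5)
Your proof is correct and complete: the reduction to maximizing $\Tr[(\rho_1-\rho_2)T]$, the Jordan decomposition bound $\Tr[\Delta T]\le\Tr\Delta_+$, the saturation by the projection onto the support of $\Delta_+$, and the tracelessness argument giving $\Tr\Delta_+=\tfrac12\|\Delta\|_1$ are all sound. Note that the paper does not prove this proposition at all — it imports it as a known fact from the cited textbook (Hayashi, eq.~(3.59)) — and your argument is essentially the standard proof underlying that reference, so there is no substantive difference in approach to report.
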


\begin{proposition}[\cite{GB2003}]\label{prop:sep-ball}
	If a Hermitian matrix $X\in\Her{\cH_A\otimes\cH_B}$ satisfies $\|I-X\|_2\le1$,
	then $X$ belongs to $\mathrm{SEP}(A;B)$.
\end{proposition}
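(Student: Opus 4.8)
The plan is to follow the Gurvits--Barnum argument \cite{GB2003} by dualizing the membership problem. First I would observe that the hypothesis already forces $X$ to be positive semi-definite: if $\lambda_1,\dots,\lambda_d$ denote the eigenvalues of the Hermitian matrix $X$, then $\|I-X\|_2\le 1$ reads $\sum_k(1-\lambda_k)^2\le 1$, whence every $\lambda_k\in[0,2]$ and in particular $X\succeq 0$. Since $\mathrm{SEP}(A;B)$ is a proper cone, the dual-of-dual identity $\left(\mathrm{SEP}^\ast(A;B)\right)^\ast=\mathrm{SEP}(A;B)$ shows that proving $X\in\mathrm{SEP}(A;B)$ is equivalent to verifying $\langle W,X\rangle\ge 0$ for every $W\in\mathrm{SEP}^\ast(A;B)$. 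Testing the defining inequality of the dual cone against the rank-one product elements $\ketbra{\alpha}{\alpha}\otimes\ketbra{\beta}{\beta}$, which generate $\mathrm{SEP}(A;B)$, one sees that $W\in\mathrm{SEP}^\ast(A;B)$ is exactly the block-positivity condition $\langle\alpha\otimes\beta|W|\alpha\otimes\beta\rangle\ge 0$ for all unit $\alpha\in\cH_A$, $\beta\in\cH_B$.

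The computational core is the chain
\[
\langle W,X\rangle=\Tr W-\langle W,I-X\rangle\ge \Tr W-\|W\|_2\,\|I-X\|_2\ge \Tr W-\|W\|_2,
\]
obtained from the Cauchy--Schwarz inequality for the Hilbert--Schmidt inner product together with the hypothesis $\|I-X\|_2\le 1$. Consequently the whole proposition reduces to the single inequality
\[
\Tr W\ge \|W\|_2 \quad\text{for every block-positive } W,
\]
which is where all the content sits. I would first settle the decomposable case, namely $W=P+\Gamma(Q)$ with $P,Q\succeq 0$ (this exhausts all witnesses in the low-dimensional situations $2\times 2$ and $2\times 3$). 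There $\Tr W=\Tr P+\Tr Q$ because $\Gamma$ preserves the trace, while $\|W\|_2\le\|P\|_2+\|\Gamma(Q)\|_2=\|P\|_2+\|Q\|_2$ because $\Gamma$ is a Hilbert--Schmidt isometry; finally $\|R\|_2\le\Tr R$ for any $R\succeq 0$ (since $\sum_i r_i^2\le(\sum_i r_i)^2$), so $\|W\|_2\le\Tr P+\Tr Q=\Tr W$.

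The hard part will be the general case, where non-decomposable witnesses appear (starting already in dimension $3\times 3$) and the clean partial-transpose bound is unavailable. Here block positivity must be exploited quantitatively: writing $W=W_+-W_-$ for its positive and negative parts, the inequality $\langle\alpha\beta|W_+|\alpha\beta\rangle\ge\langle\alpha\beta|W_-|\alpha\beta\rangle$ on all product states controls $W_-$ through the fact that any eigenvector of $W_-$ has product-state overlap bounded by its largest Schmidt coefficient. Converting this into a bound of the form $\Tr(W_-^2)\le (\Tr W)^2-\Tr(W_+^2)$, i.e. a sharp estimate of how negative a block-positive operator can be relative to its Frobenius norm, is precisely the Gurvits--Barnum estimate; I expect this to be the main obstacle and would establish it either by a local-unitary averaging (second-moment) computation over product states or by the explicit separable-decomposition construction of \cite{GB2003}.
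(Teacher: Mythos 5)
The thesis never proves this proposition; it imports it verbatim from \cite{GB2003} and uses it as a black box in Section~\ref{sect:3-4-2}, so your attempt has to be judged against the Gurvits--Barnum argument itself. Your reduction is correct and is indeed the standard opening of that argument: $\|I-X\|_2\le1$ forces $X\succeq0$, the bidual identity turns membership in $\mathrm{SEP}(A;B)$ into $\langle W,X\rangle\ge0$ for all block-positive $W$, and Cauchy--Schwarz reduces everything to the single inequality $\Tr W\ge\|W\|_2$ for block-positive $W$. Your decomposable case $W=P+\Gamma(Q)$ is also handled correctly, but by St{\o}rmer--Woronowicz it covers only the $2\times2$ and $2\times3$ situations, while the thesis needs the proposition on $\cH_A\otimes\cH_B$ of arbitrary dimension (it is applied with general $d$ in the proof of Theorem~\ref{theorem:aq}).

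The genuine gap is exactly the step you yourself flag as ``where all the content sits'': $\Tr W\ge\|W\|_2$ for a general, possibly non-decomposable, block-positive $W$. None of your three suggested routes closes it. First, the bound you propose to ``convert'' block positivity into, $\Tr(W_-^2)\le(\Tr W)^2-\Tr(W_+^2)$, is not an intermediate step: since $W_+W_-=0$, one has $\|W\|_2^2=\Tr(W_+^2)+\Tr(W_-^2)$, so this is a verbatim restatement of the target inequality and no reduction has occurred. Second, averaging over Haar-random product states yields only $\mathbb{E}\bigl[\langle\alpha\beta|W|\alpha\beta\rangle\bigr]=\Tr W/\dim(\cH_A\otimes\cH_B)$, hence only $\Tr W\ge0$, which is far weaker; the second-moment (swap-trick) refinement stalls because
\begin{align}
	(\Tr W)^2-\Tr (W^2)=2\Tr\Bigl[(W\otimes W)\Bigl(P^{AA'}_{\mathrm{sym}}\otimes P^{BB'}_{\mathrm{anti}}+P^{AA'}_{\mathrm{anti}}\otimes P^{BB'}_{\mathrm{sym}}\Bigr)\Bigr],
\end{align}
and these projectors are not separable with respect to the $A|B|A'|B'$ splitting (the antisymmetric projector is entangled across $B|B'$), so positivity of $W$ on product vectors gives no sign control on the right-hand side; likewise, the Schmidt-coefficient overlap idea naturally bounds $\|W_-\|_\infty$ against $\|W_+\|_\infty$, which is the wrong norm for the target. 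Third, appealing to ``the explicit separable-decomposition construction of \cite{GB2003}'' is circular, because that construction is precisely the proof of this proposition. In short, what you have is a correct and well-organized reduction plus the easy decomposable case, but the theorem's actual content --- the dedicated estimate for block-positive operators established in \cite{GB2003} --- is still missing.
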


\begin{proof}[Proof of Theorem~\ref{theorem:aq}]
	\textbf{[STEP1]} $1\Rightarrow 2$
	
	Because of the relation $\bm{M}\in\rm{BQ}(A;B)\cup\bm{AQ}(A;B)$,
	one of the matrices $M_i$ satisfies $\lambda_d(M_i)-\lambda_1(M_i)>1$.
	Without loss of generality,
	we assume that the matrix $M_1$ satisfies $\lambda_d(M_1)-\lambda_1(M_1)>1$.
	By spectral decomposition, the matrix $M_1$ is decomposed into projections $E_k$ as
	\begin{align}
		M_1:=\sum_{k=1}^d \lambda_k(M_1)E_k.
	\end{align}
	Then, we take two states $\rho_1$ and $\rho_2$ as
	\begin{align}
		\rho_1:=&\frac{1}{d}I,\\
		\rho_2:=&\frac{1}{d}I+\frac{1}{\sqrt{2}d}\left(E_1-E_d\right).
	\end{align}
	The state $\rho_1$ belongs to $\mathrm{SEP}(A;B)$.
	Because the following inequality holds:
	\begin{align}
		\|d\rho_2-I\|_2
		=&\|\frac{1}{\sqrt{2}}\left(E_1-E_d\right)\|_2=1,
	\end{align}
	Proposition~\ref{prop:sep-ball} implies $\rho_2\in\mathrm{SEP}(A;B)$.
	Then, the following inequality shows \eqref{eq-aq}.
	\begin{align}
		&\mathrm{Err}(\rho_1;\rho_2;\bm{M})
		=\Tr \rho_1M_2+\rho_2M_1\nonumber\\
		=&\Tr \rho_1 +\Tr\left(\rho_2-\rho_1\right)M_1
		=1+\Tr \frac{1}{\sqrt{2}d}\left(E_1-E_d\right)M_1\nonumber\\
		=&1+\frac{1}{\sqrt{2}d}\lambda_1-\frac{1}{\sqrt{2}d}\lambda_d
		=1+\frac{1}{\sqrt{2}d}\left(\lambda_1-\lambda_d\right)\nonumber\\
		<&1-\frac{1}{\sqrt{2}d}
		=1-\frac{1}{2}\|\rho_1-\rho_2\|_1.
	\end{align}
	\\
	
	\textbf{[STEP2]} $2\Rightarrow 1$
	
	We show the contraposition,
	i.e.,
	the statement that the relation $\bm{M}\in\rm{NAQ}\cup\rm{POVM}$ implies
	the inequality
	\begin{align}\label{eq:naq}
		\mathrm{Err}(\rho_1;\rho_2;\bm{M}) \ge 1-\frac{1}{2}\|\rho_1-\rho_2\|_1.
	\end{align}
	Proposition~\ref{prop:trace-norm} shows the inequality \eqref{eq:naq} in the case $\bm{M}\in\rm{POVM}$.
	Therefore, the remaining case is only $\bm{M}\in\rm{NAQ}$.
	Without loss of generality,
	$\lambda_1(M_1)<0$ and $\lambda_d(M_i)\le1+\lambda_1(M_i)$ hold.
	
	Because $\lambda_1(M_1)<0$ and $\lambda_d(M_1)\le1+\lambda_1(M_1)$ hold,
	the inequality $\lambda_d(M_1)<1$ holds,
	which implies that $M_2=I-M_1$ is positive semi-definite.
	Therefore, the relation $0\le \frac{1}{\lambda_d(M_2)}M_2\le I$,
	i.e., $\bm{M'}\in\cM(\mathrm{SES}(A;B),I)$ holds,
	and we define $\bm{M'}$ as $\bm{M'}=\{I-\frac{1}{\lambda_d(M_2)}M_2,\frac{1}{\lambda_d(M_2)}M_2\}$.
	Then, the following inequality holds:
	\begin{align}
		&\mathrm{Err}(\rho_1;\rho_2;\bm{M})
		=\lambda_d(M_2)\left(\mathrm{Err}(\rho_1;\rho_2;\bm{M}')\right)\nonumber\\
		\ge&\lambda_d(M_2)\mathrm{Err}_{\mathrm{SES}(A;B)}(\rho_1;\rho_2)
		=\lambda_d(M_2)\left(1-\frac{1}{2}\|\rho_1-\rho_2\|_1\right)\nonumber\\
		\stackrel{(a)}{>}&1-\frac{1}{2}\|\rho_1-\rho_2\|_1.
	\end{align}
	The inequality $(a)$ is shown by the inequality $\lambda_d(M_2)=1-\lambda_1(M_1)>1$.
	As a result,
	$\bm{M}\in\rm{NAQ}$ satisfies \eqref{eq:naq}.
\end{proof}

\subsection{Proof of Theorem~\ref{theorem:ses-dis}}\label{sect:3-4-3}

\begin{proof}[Prof of Theorem~\ref{theorem:ses-dis}]
	The statement $1\Rightarrow 2$ has already been shown by \eqref{eq:trace-norm}.
	Then, we show the contraposition of $2\Rightarrow1$ as follows.
	
	Assume that the inclusion relation $\cC\subsetneq\mathrm{SES}(A;B)$ holds,
	which implies the inclusion relation $\cC^\ast\supsetneq\mathrm{SES}(A;B)$.
	Then, there exists a matrix $T\in\cC^\ast\setminus\mathrm{SES}(A;B)$.
	Especially, the matrix $T$ belongs to $\mathrm{SEP}^\ast(A;B)\setminus\mathrm{SES}(A;B)$,
	which implies that $\lambda_1(T)<0$ and $\lambda_d(T)>0$ hold.
	Now, we take the matrix $T':=\frac{1}{\lambda_d(T)}T$,
	which belongs to $\cC^\ast$.
	Because of the equation $\lambda_d(T')=1$,
	the matrix $I-T$ belongs to $\mathrm{SES}(A;B)\subset\cC^\ast$.
	As a result, the family $\bm{T'}=\{T',I-T'\}$ belongs to $\cM(\cC,I)$.
	Also, $T'$ satisfies $\lambda_1(T')=\frac{\lambda_1(T)}{\lambda_d(T)}<0$.
	As a result, the measurement $\bm{T'}$ is AQ.
	Hence, Theorem~\ref{theorem:aq} shows the negation of the condition 2.
\end{proof}

\chapter{Self-duality and Existence of PSESs}\label{chap:4}

In this chapter,
we explore ESs with self-duality and group symmetry.
Self-duality and homogeneity play an important role to derive algebraic structure in physical systems.
When a model satisfies self-duality and homogeneity, 
the proper cone of the model is characterized by Euclidean Jordan Algebras \cite{Jordan1934,Koecher1957,Barnum2019,BMA2020},
which essentially lead limited types of models including classical and quantum theory \cite{Jordan1934,Koecher1957}.
Also, the successful result \cite{Barnum2019} derives the models corresponding to Jordan Algebras.
However,
it is an open problem how drastically one of the above two properties determines ESs.
This thesis attacks this problem and investigates the diversity of ESs with self-duality and symmetric conditions.

First,
in Section~\ref{sect:4-1},
we briefly introduce self-duality and homogeneity,
and we investigate ESs with group symmetric conditions as Theme C.
In this section,
we show that an ES with self-duality and homogeneity is limited to the SES (Theorem~\ref{prop:sym1}).
Also, we show that an ES with global unitary symmetry is limited to the SES (Theorem~\ref{prop:global2}).
These two results imply that global unitary symmetry is a weaker condition than the condition in the reference \cite{Barnum2019}.

Second,
in Section~\ref{sect:4-2},
as Theme D,
in order to investigate ESs with self-duality, 
we give a general theory about self-duality.
In this section,
we define a pre-dual cone and show that any pre-dual cone can be modified to a self-dual cone (Theorem~\ref{theorem:sd}).
Next, we clarify the relation between exact hierarchy of pre-dual cones and independent family of self-dual cones (Theorem~\ref{theorem:hie1}).

The above general theory implies there are infinitely many self-dual ESs.
Moreover, this thesis attacks harder problem,
i.e.,
we investigate the existence of a self-dual ES that is near the SES in Section~\ref{sect:4-3-1}.
This thesis defines a condition called \textit{$\epsilon$-undistinguishability},
where the model cannot be distinguished from the SES by verification of maximally entangled states \cite{HMT,Ha09-2,HayaM15,PLM18,ZH4,Markham,MST,KSKWW,Bavaresco,FVMH,JWQ}.
In this thesis,
we call a model with self-duality and $\epsilon$-undistinguishability \textit{$\epsilon$-Pseudo Standard Entanglement Structures} ($\epsilon$-PSESs),
and we show that infinite existence of $\epsilon$-PSESs for any $\epsilon>0$ (Theorem~\ref{theorem:main}) as Theme E.
In contrast to the similarity in terms of $\epsilon$-undistinguishability,
we see operational difference between the SES and PSESs in terms of state discrimination tasks.
In Section~\ref{sect:4-3-2},
we show that some types of PSESs have non-orthogonal perfectly distinguishable states (Theorem~\ref{theorem:dist}).

The proofs of statements in Chapter~\ref{chap:4} are written in Section~\ref{sect:4-4}.

\section{Topics about Self-Duality and Group Symmetry in ESs}\label{sect:4-1}

In this section,
we review the preceding studies about self-duality and group symmetric conditions in GPTs,
and we investigates the diversity of ESs with group symmetric conditions.

First,
we introduce self-duality and group symmetric conditions, including homogeneity in Section~\ref{sect:4-1-1}.
Also, we introduce cones with a pair of conditions, i.e., symmetric cones,
which is essentially classified into five types.

Second,
as Main Result 3,
we clarify the uniqueness of the ES with certain group symmetric conditions
in Section~\ref{sect:4-1-2}.
We show that a model with symmetric cones is limited to the SES (Theorem~\ref{prop:sym1}).
Furthermore, we show that a weak group symmetric condition derives the SES in ESs (Theorem~\ref{prop:global2}).

\subsection{Self-Duality and Homogeneity}\label{sect:4-1-1}

In the studies of proper cones,
two properties \textit{self-duality} and \textit{homogeneity} play an important role.

First, we define \textit{self-dual cone}.
\begin{definition}[Self-Duality]\label{def:self-dual}
	We say that a positive cone $\cC$ is self-dual
	when the cone $\cC$ satisfies $\cC=\cC^\ast$.
\end{definition}
In some studies of GPTs,
Definition~\ref{def:self-dual} is called strong self-duality.
On the other hand,
some studies say that a positive cone $\cC\subset\cV$ is weakly self-dual
when there exists a linear automorphism $f$ on $\cV$ such that
$f(\cC^\ast)=\cC$.
It is known that any proper cone $\cC$ is weakly self-dual \cite{Janotta2013}.
The reference \cite{Janotta2013} has shown that the state space and the effect space
of any model can be transformed by linear map from one to another, where the effect space is considered as the subset of $\cV^\ast$. 
That is, the result \cite{Janotta2013} can be interpreted in our setting as follows; the state space and effect space become equivalent by changing inner product.
This process is called self-dualization in \cite{Janotta2013}. 
However, this thesis addresses entanglement structures,
whose local structures are completely equivalent to standard quantum theory.
Standard quantum theory fixes the inner product,
and therefore, entanglement structures also possess the fixed inner product $\Tr$.
Therefore, this thesis discusses strong self-duality,
and the result \cite{Janotta2013} cannot be used for our purpose.
Hereinafter, we call the property $\cC=\cC^\ast$ self-dual (omitting ``strong'').

Another important symmetric property is \textit{homogeneity}.
\begin{definition}
	For a positive cone $\cC$ in a vector space $\cV$, define the set $\mathrm{Aut}(\cC)$ as
	\begin{align}\label{def:auto}
		\mathrm{Aut}(\cC):=\{f\in\mathrm{GL}(\cV) \mid f(\cC)=\cC\}.
	\end{align}
	Then, we say that a positive cone $\cC$ is homogeneous
	if there exists a map $g\in\mathrm{Aut}(\cC)$ for any two elements $x,y\in \cC^\circ$ such that $g(x)=y$.
\end{definition}

A proper cone with self-duality and homogeneity is called a \textit{symmetric cone},
which is essentially classified into finite kinds of cones including the SES \cite{Jordan1934,Koecher1957}.
In order to introduce the classification of symmetric cones,
we define the direct sum of cones as follows:
\begin{definition}
	If a family of positive cones $\{\cC_i\}_{i=1}^k$ satisfies $\cC_i\cap\cC_j=\{0\}$ for any $i\neq j$,
	the direct sum of $\cC_i$ is defined as
	\begin{align}\label{def:direct-sum}
		\bigoplus_{i=1}^k \cC_i:=\left\{\sum_{i=1}^k x_i\middle| x_i\in\cC_i\right\}.
	\end{align}
\end{definition}
Here, we say that a positive cone $\cC$ is irreducible if
the cone $\cC$ cannot be decomposed by a direct sum over more than 1 positive cones as
\begin{align}
	\cC=\bigoplus_{i=1}^k \cC_i.
\end{align}

Irreducible symmetric cones are classified into the following five cases \cite{Jordan1934,Koecher1957}:
\begin{inparaenum}[(i).]
	\item $\mathrm{PSD}(m,\mathbb{R})$,
	\item $\mathrm{PSD}(m,\mathbb{C})$,
	\item $\mathrm{PSD}(m,\mathbb{H})$,
	\item $\mathrm{Lorentz}(1,n-1)$,
	\item $\mathrm{PSD}(3,\mathbb{O})$,
\end{inparaenum}
where $n$ and $m$ are arbitrary positive integers,
$\mathrm{PSD}(m,\mathbb{K})$ denotes the set of positive semi-definite matrices on a $m$-dimensional Hilbert space over a field $\mathbb{K}$ and $\mathrm{Lorentz}(1,n-1)$ is defined as
\begin{align}
	&\mathrm{Lorentz}(1,n-1)
	:=\{(z,x)\in\mathbb{R}\oplus\mathbb{R}^{n-1}\mid |z|^2\ge|x|^2,\ z\ge0\}.
\end{align}
About reducible symmetric cones,
it is known that
any symmetric cone $\cC$ can be decomposed by a direct sum over irreducible symmetric cones $\cC_i$ as
\begin{align}
	\cC=\bigoplus_{i=1}^k \cC_i
\end{align} \cite{FKsymmetric}.

In this way,
symmetric cones have been studied well as a general theory of proper cones.
However,
it is an open problem
how drastically symmetric cones determine entanglement structures.
Also, it is an open problem
how drastically either self-duality or homogeneity determine entanglement structures.
In this thesis,
we explore these problems.

\subsection{Theme C : ESs with group symmetry}\label{sect:4-1-2}

Here, we investigate ESs with group symmetric conditions.

First,
we show that an ES $\cC$ is equal to $\mathrm{SES}(A;B)$ if the cone $\cC$ is symmetric.
\begin{theorem}[Main Result C-1]\label{prop:sym1}
	Assume a symmetric cone $\cC$ satisfies \eqref{eq:quantum}.
	Then, $\cC=\mathrm{SES}(A;B)$.
\end{theorem}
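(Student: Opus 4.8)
The plan is to exploit that a symmetric cone carries the structure of a Euclidean Jordan algebra, so that the classification of symmetric cones recalled above applies, and to pin down both the Jordan type and the concrete realization using the sandwich $\mathrm{SEP}(A;B)\subset\cC\subset\mathrm{SEP}(A;B)^\ast$ together with Proposition~\ref{prop:cap}. A useful preliminary reduction is that self-duality makes a single inclusion enough: if one shows $\cC\subseteq\mathrm{SES}(A;B)$, then Proposition~\ref{prop:dual-inc} gives $\mathrm{SES}(A;B)=\Psd{\cH_A\otimes\cH_B}^\ast\subseteq\cC^\ast=\cC$, hence equality, and the same works starting from the opposite inclusion. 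Thus it suffices to prove either one of the two inclusions between $\cC$ and $\Psd{\cH_A\otimes\cH_B}$, which collapses the statement to a one-sided comparison.

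First I would fix the rank. Writing $d:=\dim(\cH_A\otimes\cH_B)$, the ambient space $\Her{\cH_A\otimes\cH_B}$ has real dimension $d^2$. Since a family of perfectly distinguishable states together with its discriminating measurement is exactly a Jordan frame of the underlying Euclidean Jordan algebra, the capacity of $\cC$ equals its Jordan rank, so Proposition~\ref{prop:cap} forces $\rank\cC=d$. Next I would decompose $\cC=\bigoplus_{i=1}^{k}\cC_i$ into irreducible symmetric cones and argue $k=1$. Here the role of $\mathrm{SEP}(A;B)\subset\cC$ is decisive: the rank-one product projectors $\ketbra{a\otimes b}{a\otimes b}$ all lie in $\cC$, they form a connected set, and they linearly span $\Her{\cH_A\otimes\cH_B}$. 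If each such projector is an extremal ray of $\cC$, then, since an extremal ray of a direct sum lies in a single (linear) summand, connectedness forces all product projectors into one summand, whose span is then all of $\Her{\cH_A\otimes\cH_B}$; this makes the decomposition trivial, i.e. $\cC$ is irreducible. With $k=1$ and rank $d$, the dimension constraint $r+\delta\binom{r}{2}=d^2=r^2$, where $\delta\in\{1,2,4\}$ is the real dimension of the underlying division algebra and the Lorentz and octonionic cases are checked separately, forces $\delta=2$, so $\cC\cong\mathrm{PSD}(d,\mathbb{C})$ as a symmetric cone.

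Finally I would upgrade this abstract isomorphism to the concrete identity $\cC=\Psd{\cH_A\otimes\cH_B}$. Choosing product vectors $\{e_i\otimes f_j\}$ forming an orthonormal basis of $\cH_A\otimes\cH_B$, the associated product projectors are pairwise trace-orthogonal idempotents summing to the order unit $I$, hence a Jordan frame of $\cC$ built from genuine matrix projections; analysing the Peirce decomposition relative to this frame in the complex type identifies the Jordan product of $\cC$ with the matrix Jordan product, so that $\cC$ is exactly the cone of positive semidefinite matrices, i.e. $\mathrm{SES}(A;B)$. I expect the main obstacle to be the extremality step: showing that every separable rank-one projector is genuinely an extremal ray of $\cC$, not merely an element of it, since $\cC$ may a priori contain non-positive-semidefinite extremal rays. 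The argument there must use that $\Tr[x\sigma]=0$ whenever $x\in\cC\subseteq\mathrm{SEP}(A;B)^\ast$ and $\sigma$ is a separable state orthogonal to the projector, together with the fact that the product projectors of a tensor subspace span its Hermitian part, to force any conic decomposition of a product projector to stay on its own ray. Pinning the concrete realization rather than the abstract Jordan type is the second delicate point, and is precisely where self-duality with respect to the \emph{fixed} trace inner product, rather than an arbitrary one, is essential.
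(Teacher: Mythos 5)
Up to its last step, your proposal follows the same route as the paper's own proof: decompose the symmetric cone into irreducible symmetric cones, use extremality of the product projectors in $\cC$ together with connectedness of their set to force $\mathrm{SEP}(A;B)$ into a single irreducible summand, and then play the capacity (Proposition~\ref{prop:cap}) against the ambient dimension $d^2$, $d:=\dim(\cH_A\otimes\cH_B)$, to exclude every case of the classification except the complex type. This is exactly the content of the paper's Lemma~\ref{lem:sym} plus the connectedness argument in its proof of Theorem~\ref{prop:sym1}. Where you go beyond the paper is the final ``concrete realization'' step, and that is where the genuine gap sits: the step is not merely unfinished, it cannot be carried out with the ingredients you invoke.

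The obstruction is $\cC=\Gamma(\mathrm{SES}(A;B))$ with $\Gamma=\id\otimes\top$. Since $\Gamma$ is an involution that is self-adjoint for the trace inner product, $\Gamma(\mathrm{SES}(A;B))$ is self-dual with respect to the \emph{same fixed} inner product; it is homogeneous, because $\mathrm{Aut}(\Gamma(\mathrm{SES}(A;B)))=\Gamma\circ\mathrm{Aut}(\mathrm{SES}(A;B))\circ\Gamma$; it satisfies \eqref{eq:quantum}, because $\Gamma(\mathrm{SEP}(A;B))=\mathrm{SEP}(A;B)$; its Euclidean Jordan algebra is of complex type and rank $d$; and, choosing the basis $\{f_j\}$ of $\cH_B$ real, the product projectors $\ketbra{e_i\otimes f_j}{e_i\otimes f_j}$ are $\Gamma$-invariant and form a Jordan frame of $\Gamma(\mathrm{SES}(A;B))$ exactly as they do of $\mathrm{SES}(A;B)$. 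Yet $\Gamma(\mathrm{SES}(A;B))\neq\mathrm{SES}(A;B)$. A Jordan frame does not determine the Jordan product: transporting the matrix product through $\Gamma$ yields a second complex-type product admitting the very same frame, whose Peirce space attached to the pair $\{e_i\otimes f_j,\,e_k\otimes f_l\}$ is the matrix-product Peirce space attached to $\{e_i\otimes f_l,\,e_k\otimes f_j\}$. Hence your step~5 would ``prove'' the false identity $\Gamma(\mathrm{SES}(A;B))=\mathrm{SES}(A;B)$, and your hope that self-duality with respect to the fixed trace inner product resolves the ambiguity is unfounded, since $\Gamma(\mathrm{SES}(A;B))$ is self-dual for that very inner product. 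You should be aware that the paper's proof makes the identical leap: Lemma~\ref{lem:sym} concludes $\cC=\mathrm{SES}(A;B)$ from a classification that fixes $\cC$ only up to a linear isomorphism of $\Her{\cH_A\otimes\cH_B}$, even though the thesis itself, in its Open Problems, lists $\Gamma(\mathrm{SES}(A;B))$ as a self-dual entanglement structure distinct from the SES. So you have correctly isolated the delicate point, but neither your repair nor the paper's silence closes it; indeed $\Gamma(\mathrm{SES}(A;B))$ satisfies every hypothesis of Theorem~\ref{prop:sym1} as literally stated. (A secondary remark: your sketched extremality argument is also incomplete, because the trace-orthogonal complement of the span of the separable states annihilating $P=\ketbra{a}{a}\otimes\ketbra{b}{b}$ is much larger than $\mathbb{R}P$; extremality of $P$ in $\mathrm{SEP}(A;B)^\ast$ does hold, but proving it requires exploiting block-positivity of \emph{both} summands in a decomposition of $P$ via a perturbative sign argument, not the span argument alone.)
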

In other words,
the combination of self-duality and homogeneity uniquely determine the SES.
The preceding study \cite{Barnum2019} gives a condition about symmetry to derive symmetric cone.
Therefore, restricting ESs, i.e.,
under the condition \eqref{eq:composite},
the condition in \cite{Barnum2019} derives the SES.

Moreover, the SES is uniquely determined by another group symmetric condition.
Here, we define the class of global unitary maps given as
\begin{align}
	\mathrm{GU}(A;B):=&\{g\in\mathrm{GL}(\cT(\cH_A\otimes\cH_B)) \mid g(\cdot):=U^\dag (\cdot) U,\nonumber\\
	& U \ \mbox{is a unitary matrix on $\cH_A\otimes\cH_B$}\}.\label{eq:gu}
\end{align}
Then, the condition $\mathrm{GU}(A;B)$-symmetry uniquely derives the SES from all ESs.
\begin{theorem}[Main Result C-2]\label{prop:global2}
	Assume that a model $\cC$ satisfies \eqref{eq:quantum} and $\mathrm{GU}(A;B)$-symmetric.
	Then, $\cC=\mathrm{SES}(A;B)$.
\end{theorem}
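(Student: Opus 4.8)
The plan is to prove the two inclusions $\mathrm{SES}(A;B)\subseteq\cC$ and $\cC\subseteq\mathrm{SES}(A;B)$ separately, exploiting that $\mathrm{GU}(A;B)$-symmetry makes $\cC$ invariant under conjugation by every unitary on $\cH_A\otimes\cH_B$, together with the single structural fact driving everything: the $\mathrm{GU}(A;B)$-orbit of a rank-one projection onto a product vector is the set of \emph{all} rank-one projections on $\cH_A\otimes\cH_B$, since the unitary group acts transitively on unit vectors.

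First I would establish $\mathrm{SES}(A;B)=\Psd{\cH_A\otimes\cH_B}\subseteq\cC$. By the left inclusion in \eqref{eq:quantum}, every separable state lies in $\cC$; in particular each product pure state $\ketbra{\psi_A\otimes\psi_B}{\psi_A\otimes\psi_B}$, a rank-one orthogonal projection, belongs to $\cC$. Given any unit vector $\phi\in\cH_A\otimes\cH_B$, choose a unitary $U$ with $U\ket{\psi_A\otimes\psi_B}=\ket{\phi}$; then $\ketbra{\phi}{\phi}=U\ketbra{\psi_A\otimes\psi_B}{\psi_A\otimes\psi_B}U^\dag$ lies in the $\mathrm{GU}(A;B)$-orbit of a product projection, hence in $\cC$. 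Thus $\cC$ contains every rank-one projection, and since $\cC$ is a convex cone it contains every non-negative combination $\sum_k\lambda_k\ketbra{e_k}{e_k}$; by the spectral decomposition these are exactly the elements of $\Psd{\cH_A\otimes\cH_B}$, giving $\Psd{\cH_A\otimes\cH_B}\subseteq\cC$.

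Next I would show $\cC\subseteq\Psd{\cH_A\otimes\cH_B}$ by contradiction, using the right inclusion in \eqref{eq:quantum}. Recall that $X\in\mathrm{SEP}^\ast(A;B)$ if and only if $\langle\psi_A\otimes\psi_B|X|\psi_A\otimes\psi_B\rangle\ge0$ for every product vector, because $\mathrm{SEP}(A;B)$ is generated by product states. Suppose some $X\in\cC$ satisfies $\lambda_1(X)<0$, with unit eigenvector $v$ so that $\langle v|X|v\rangle=\lambda_1(X)<0$. Pick a unitary $U$ with $U\ket{\psi_A\otimes\psi_B}=\ket{v}$ for a fixed product vector $\psi_A\otimes\psi_B$. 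Then $Y:=U^\dag X U\in\cC$ by $\mathrm{GU}(A;B)$-symmetry, yet $\langle\psi_A\otimes\psi_B|Y|\psi_A\otimes\psi_B\rangle=\langle v|X|v\rangle<0$, so $Y\notin\mathrm{SEP}^\ast(A;B)$, contradicting $\cC\subseteq\mathrm{SEP}^\ast(A;B)$. Hence every $X\in\cC$ is positive semi-definite.

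Combining the two inclusions yields $\cC=\Psd{\cH_A\otimes\cH_B}=\mathrm{SES}(A;B)$. I do not expect a serious obstacle; the only points requiring care are the two alignment facts — that every unit vector is a unitary image of a product vector (so the orbit of product projections exhausts all rank-one projections) and that a negative eigendirection can be rotated onto a product vector to violate block-positivity. Both follow at once from transitivity of the unitary group on unit vectors, which is the real engine of the argument, and neither half of \eqref{eq:quantum} can be dropped: the lower bound supplies the product projections, while the upper bound forbids negative eigenvalues.
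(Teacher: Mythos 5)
Your proof is correct, and its decisive step --- conjugating an element of $\cC$ with a negative eigendirection so that the offending eigenvector lands on a product vector, thereby violating membership in $\mathrm{SEP}^\ast(A;B)$ --- is exactly the engine of the paper's proof: there one takes $x\in\cC\setminus\mathrm{SES}(A;B)$, a pure state $\rho$ with $\Tr\rho x<0$, and a global unitary $g\in\mathrm{GU}(A;B)$ with $g(\rho)$ separable, so that $\Tr g(\rho)g(x)=\Tr\rho x<0$ contradicts $g(x)\in\cC\subset\mathrm{SEP}^\ast(A;B)$; your computation with $Y=U^\dag XU$ paired against a fixed product vector is the same argument. Where you genuinely diverge is in the other half. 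The paper argues by contradiction and splits into two cases, dismissing the case $\cC\subsetneq\mathrm{SES}(A;B)$ by asserting that $\cC$ would then fail to be self-dual (since $\cC\subsetneq\mathrm{SES}(A;B)\subsetneq\cC^\ast$); but self-duality is not among the hypotheses of this theorem, so that step reads as a leftover from the symmetric-cone result (Theorem~\ref{prop:sym1}) and leaves a gap. Your direct proof of the inclusion $\mathrm{SES}(A;B)\subseteq\cC$ --- product pure states lie in $\cC$ by the left inclusion of \eqref{eq:quantum}, their $\mathrm{GU}(A;B)$-orbit exhausts all rank-one projections by transitivity of the unitary group on unit vectors, and the convex-cone property plus spectral decomposition then yield all of $\Psd{\cH_A\otimes\cH_B}$ --- uses only the stated hypotheses and therefore closes that gap. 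In short, your upper inclusion coincides with the paper's argument, while your lower inclusion is a different, self-contained route that is in fact tighter than the paper's own treatment of that case.
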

The condition $\mathrm{GU}(A;B)$-symmetry is weaker condition than homogeneity as the following proposition.
\begin{proposition}\label{prop:sym2}
	A symmetric cone $\cC$ with \eqref{eq:quantum} satisfies that
	$\mathrm{Aut}(\cC)\supset\mathrm{GU}(A;B)$.
\end{proposition}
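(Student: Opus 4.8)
The plan is to reduce the claim to an elementary fact about the positive semi-definite cone by first invoking the classification already established in Theorem~\ref{prop:sym1}. Since $\cC$ is assumed to be a symmetric cone satisfying \eqref{eq:quantum}, Theorem~\ref{prop:sym1} forces $\cC=\mathrm{SES}(A;B)=\Psd{\cH_A\otimes\cH_B}$. Hence the assertion $\mathrm{Aut}(\cC)\supset\mathrm{GU}(A;B)$ collapses to the statement that every global-unitary conjugation is an automorphism of the cone $\Psd{\cH_A\otimes\cH_B}$, which I would verify directly.

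For the verification I would fix $g\in\mathrm{GU}(A;B)$ and write $g(X)=U^\dag X U$ for a unitary $U$ on $\cH:=\cH_A\otimes\cH_B$. First I would observe that $g$ is a linear bijection of $\Her{\cH}$ onto itself, with inverse $Y\mapsto UYU^\dag$, so that $g$ indeed lies in the relevant general linear group. Next, for any $X\in\Psd{\cH}$ and any vector $\ket{v}$, positivity is preserved because $\bra{v}U^\dag X U\ket{v}=\bra{Uv}X\ket{Uv}\ge0$; therefore $g(\Psd{\cH})\subseteq\Psd{\cH}$. Applying the identical argument to $g^{-1}$ gives the reverse inclusion, so $g(\cC)=\cC$, i.e.\ $g\in\mathrm{Aut}(\cC)$. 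Since $g$ was arbitrary, this yields $\mathrm{GU}(A;B)\subset\mathrm{Aut}(\cC)$.

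I expect essentially all of the weight of the proposition to rest on Theorem~\ref{prop:sym1}; once a symmetric cone obeying \eqref{eq:quantum} is known to coincide with the SES, global-unitary invariance is just the standard observation that unitary conjugation preserves positivity, so the final verification presents no genuine obstacle. The one conceptual point worth flagging is that one should not hope to bypass Theorem~\ref{prop:sym1}: a self-dual homogeneous cone sandwiched between $\mathrm{SEP}(A;B)$ and $\mathrm{SEP}^\ast(A;B)$ has no reason to be globally-unitary invariant other than its forced equality with $\Psd{\cH_A\otimes\cH_B}$ (indeed $\mathrm{SEP}(A;B)$ itself, while satisfying the sandwich bounds, is \emph{not} $\mathrm{GU}(A;B)$-invariant). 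Finally, this is precisely what is needed to read off the surrounding remark: $\mathrm{GU}(A;B)$-symmetry, the hypothesis of Theorem~\ref{prop:global2}, is implied by being a symmetric cone, the hypothesis of Theorem~\ref{prop:sym1}, so that Theorem~\ref{prop:global2} genuinely subsumes Theorem~\ref{prop:sym1}.
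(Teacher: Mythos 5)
Your proposal is correct and follows essentially the same route as the paper: the paper proves Proposition~\ref{prop:sym2} in one line, citing Theorem~\ref{prop:sym1} together with the inclusion $\mathrm{Aut}(\mathrm{SES}(A;B))\supset\mathrm{GU}(A;B)$, which is exactly your reduction plus your explicit unitary-conjugation check $\bra{v}U^\dag X U\ket{v}=\bra{Uv}X\ket{Uv}\ge0$. Your closing remarks (that the sandwich condition alone does not force $\mathrm{GU}(A;B)$-invariance, witnessed by $\mathrm{SEP}(A;B)$, and that the hypothesis of Theorem~\ref{prop:sym1} is thereby seen to be stronger than that of Theorem~\ref{prop:global2}) also match the paper's own surrounding discussion.
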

Theorem~\ref{prop:sym1} implies that $\mathrm{Aut}(\cC)$ is larger than $\mathrm{GU}(A;B)$ under the condition that $\cC$ is a symmetric cone with \eqref{eq:quantum}.
Proposition~\ref{prop:sym2} is shown by Theorem~\ref{prop:sym1} and the inclusion relation $\mathrm{Aut}(\mathrm{SES}(A;B))\supset\mathrm{GU}(A;B)$.
Since Theorem~\ref{prop:sym1} requires a larger symmetry than 
theorem~\ref{prop:global2} under the condition \eqref{eq:quantum},
we can conclude that
the assumption of Theorem~\ref{prop:sym1} is stronger than that of Theorem~\ref{prop:global2},
which implies the condition in \cite{Barnum2019} is stronger than $\mathrm{GU}(A;B)$-symmetry under the condition \eqref{eq:quantum}.

In this way,
a kind of group symmetric conditions can determine entanglement structures well.
On the other hand,
it is an open and difficult problem how drastically self-duality restricts entanglement structures.
In order to investigate self-dual entanglement structures,
we state general theory in the next section.

\section{Theme D : Pre-Dual Cone and Self-Dual Modification}\label{sect:4-2}

In this section,
we introduce pre-dual cones and state general theories for the construction of self-dual models satisfying \eqref{eq:quantum}.
In Section~\ref{sect:4-2-0},
we introduce pre-dual cone and discuss the meaning of pre-duality in physics.
In this thesis, we introduce pre-duality as a mathematical representation of the existence of projective measurements.
In Section~\ref{sect:4-2-1},
we show that pre-dual cone can be modified to a self-dual cone (Theorem~\ref{theorem:sd}).
Also, we show the relation between exact hierarchy of pre-dual cones and independent family of self-dual cones (Theorem~\ref{theorem:hie1}).

\subsection{Pre-dual Cone and Its Meaning in Physics}\label{sect:4-2-0}


In standard quantum theory,
there exists a measurement $\{e_i\}_{i\in I}$ such that the post-measurement state with the outcome $i$ is given as $e_i/\Tr e_i$ independently of the initial state when the effect $e_i$ is pure.
Such a measurement is called a projective measurement \cite{Neumann1932,Luders1951,Davies1970,Ozawa1984,Chefles2003,Buscemi2004}\footnote{The above property is sometimes called repeatability, but this thesis defines repeatability as the latter property (Figure~\ref{figure-repeatable}). As the following discussion, the two properties are strictly distinguished from each other in general. However, in finite-dimensional standard quantum theory, the two properties are equivalent, as shown in \cite{Buscemi2004}.}.
The measurement projectivity is one of the postulates of standard quantum theory \cite{Neumann1932,Luders1951,Davies1970,Ozawa1984}.
Therefore, in this paper, we impose that any model $\cC$ satisfies the following condition:
for any pure effect $e\in\cE(\cC,u)$, there exists a measurement $\{e_i\}$ such that
an element $e_{i_0}$ is equal to $e$, and the post-measurement state is given as $\overline{e_{i_0}}:=e_{i_0}/\Tr e_{i_0}$.

Pure effects span the effect space $\cE(\cC,u)$ with convex combination,
and the effect space $\cE(\cC,u)$ generates the dual cone $\cC^\ast$ with constant time.
Therefore, the existence of projective measurement implies the inclusion relation $\cC\supset\cC^\ast$.
In this paper, this property $\cC\supset\cC^\ast$ is called pre-duality.
\begin{definition}[Pre-Dual Cone]\label{def:pre-dual}
	Given a proper cone $\cC$ in $\cV$,
	we say that $\cC$ is pre-dual
	if $\cC$ satisfies the inclusion relation $\cC\supset\cC^\ast$.
\end{definition}

\begin{figure}[t]
	\centering
	\includegraphics[width=4cm]{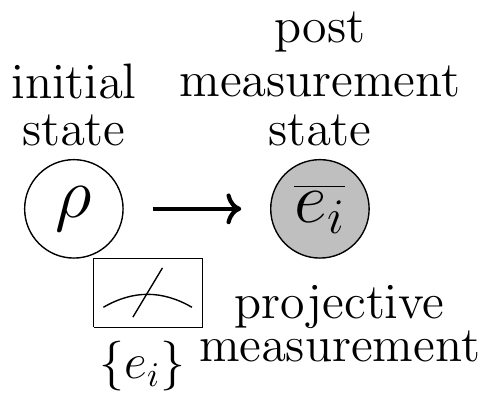}
	\caption[Projective Measuremets]{
	When a projective measurement $\{e_i\}$ is applied to the system with an initial state $\rho$,
	we obtain an outcome $i$ and the corresponding post-measurement state $\overline{e_i}=e_i/\Tr e_i$ independent of the initial state $\rho$.
	}
	\label{figure-projection}
\end{figure}

Here, we remark on the relation between projectivity and repeatability.
The following relation between projectivity and repeatability is also discussed in \cite[Discussion]{Luders1951} and \cite{Buscemi2004}.
Repeatability is a postulate of standard quantum theory, sometimes included in the projection postulate \cite{Neumann1932,Luders1951}\footnote{One may think that the settings in \cite{Neumann1932} and \cite{Luders1951} should not be categorized as repeatability in our setting.
However, the reference \cite{Neumann1932} starts from a similar assumption like repeatability defined in this thesis and derives projectivity of measurements from the assumption.
Also,
the discussion of the translated version of \cite[Discussion]{Luders1951} emphasizes distinguishability.
Therefore, we divide the references \cite{Neumann1932,Luders1951} from the references \cite{Davies1970,Ozawa1984,Chefles2003}.
As mentioned in the former footnote, these two properties are equivalent in finite-dimensional quantum theory.
On the other hand, this thesis distinguishes projectivity and repeatability from the viewpoint of GPTs,
and the main statement in this section is that ``projectivity derives pre-duality''.}.
Repeatability ensures that 
the same effect is observed with probability 1 in the sequence of the same measurements,
and the effects do not change the post-measurement state (figure~\ref{figure-repeatable}).
\begin{figure}[t]
	\centering
	\includegraphics[width=6cm]{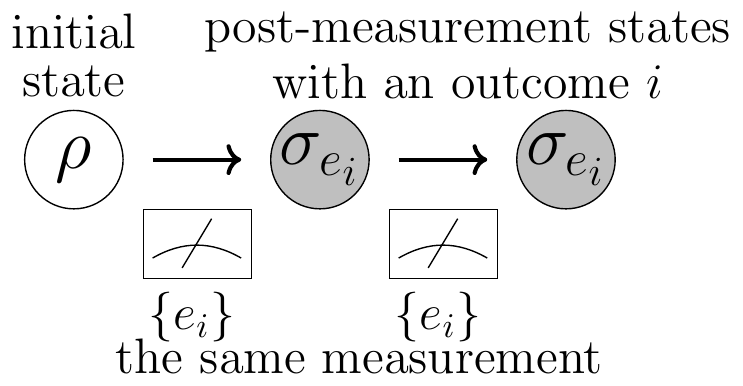}
	\caption[Repeatable Measurement]{When an initial state is measured by a measurement $\{e_i\}$ twice,
	the post-measurement states of first and second measurement with an outcome $i$ are equivalent.
	}
	\label{figure-repeatable}
\end{figure}
Repeatability requests that the tuple of post-measurement states $\{\sigma_{e_i}\}_{i\in I}$ is perfectly distinguishable,
i.e.,
the equation $\Tr \sigma_{e_i} e_j=\delta_{i,j}$ holds.
In other words,
repeatability requests the $|I|$ number of constraints for the post-measurement state $\sigma_{e_i}$.
On the other hand,
projectivity determines post-measurement states completely.
In other words,
projectivity requests the same number of constraints for the post-measurement state as the dimension of $\cC^\ast$.
In general, the number of outcomes $|I|$ is smaller than the dimension of $\cC^\ast$;
therefore, projectivity is a stronger postulate than repeatability in terms of the number of constraints.

Now, we consider pre-dual models of composite systems $(\cT(\cH_A\otimes \cH_B),\cC,I_{A;B})$.
For example,
let us consider the model that contains only separable measurements.
In such a model, the dual cone is given as $\cC^\ast=\mathrm{SEP}(A;B)$,
and therefore, the model satisfies $\cC=\mathrm{SEP}(A;B)^\ast$ because the dual of a dual cone is equal to the original cone.
However, the state space $\cS(\mathrm{SEP}(A;B)^\ast,I_{A;B})$ has excessive many states;
the state space $\cS(\mathrm{SEP}(A;B)^\ast,I_{A;B})$ has not only all quantum states but also all entanglement witnesses with trace 1.
Then, there exist two state $\rho_1,\rho_2\in\cS(\cC,I_{A;B})$ such that they satisfy $\Tr\rho_1\rho_2<0$.
Not only the case $\cC=\mathrm{SEP}(A;B)^\ast$,
but also any pre-dual model has two states $\rho_1,\rho_2$ with $\Tr\rho_1\rho_2<0$ unless $\cC=\cC^\ast$.
In this way, pre-dual models have a gap between the state space and the effect space unless $\cC=\cC^\ast$.
In order to remove such a gap,
we apply a theorem called \textit{self-dual modification} in the next section,
i.e.,
we show the possibility to extend the measurement effect space and restrict the state space with satisfying $\tilde{\cC}\supset\tilde{\cC}^\ast$.

\subsection{Self-Dual Modification and Hierarchy of Pre-Dual Cones}\label{sect:4-2-1}

Then, we show that any pre-dual model can be always modified to self-dual model.
\begin{theorem}[Main Result D-1]\label{theorem:sd}
	Let $\cC$ be a pre-dual cone in $\cV$.
	Then, there exists a positive cone $\tilde{\cC}$ such that 
	\begin{align}\label{eq:sd}
		\cC\supset\tilde{\cC}=\tilde{\cC}^\ast\supset\cC^\ast.
	\end{align}
\end{theorem}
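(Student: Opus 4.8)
The plan is to realize $\tilde{\cC}$ as a maximal element, under set inclusion, of the family of \emph{sub-dual} cones trapped between $\cC^\ast$ and $\cC$, and then to show that maximality forces self-duality. Concretely, I would introduce
\[
	\mathcal{F} := \{\, \cK \mid \cK \text{ is a proper cone},\ \cC^\ast \subseteq \cK \subseteq \cC,\ \cK \subseteq \cK^\ast \,\}.
\]
This family is nonempty: taking $\cK = \cC^\ast$, the double-dual identity $(\cC^\ast)^\ast = \cC$ together with the hypothesis $\cC^\ast \subseteq \cC$ gives $\cC^\ast \subseteq (\cC^\ast)^\ast = \cC$, so $\cC^\ast$ is itself sub-dual and obviously lies between $\cC^\ast$ and $\cC$.

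To apply Zorn's lemma I would order $\mathcal{F}$ by inclusion and verify that every chain $\{\cK_\alpha\}$ has an upper bound, which I take to be $\cK_\infty := \overline{\bigcup_\alpha \cK_\alpha}$. Since each $\cK_\alpha$ contains the fixed proper cone $\cC^\ast$ and is contained in the closed cone $\cC$, the set $\cK_\infty$ is again a proper cone with $\cC^\ast \subseteq \cK_\infty \subseteq \cC$ (nonempty interior from $\cC^\ast$, and $\cK_\infty \cap (-\cK_\infty) \subseteq \cC \cap (-\cC) = \{0\}$). The crucial check is sub-duality: for $x, y \in \bigcup_\alpha \cK_\alpha$, both elements lie in a single member of the chain (the larger of the two indices), whose sub-duality yields $\langle x, y \rangle \geq 0$; passing to limits extends this to all of $\cK_\infty$, so $\cK_\infty \subseteq \cK_\infty^\ast$. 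Thus $\cK_\infty \in \mathcal{F}$ is an upper bound, and Zorn's lemma supplies a maximal element $\tilde{\cC} \in \mathcal{F}$.

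The remaining step, proving $\tilde{\cC} = \tilde{\cC}^\ast$, is the part I expect to be the heart of the argument. First note that $\cC^\ast \subseteq \tilde{\cC}$ together with Proposition~\ref{prop:dual-inc} gives $\tilde{\cC}^\ast \subseteq (\cC^\ast)^\ast = \cC$, so $\tilde{\cC}^\ast$ is also sandwiched inside $\cC$. Suppose for contradiction that $\tilde{\cC} \subsetneq \tilde{\cC}^\ast$ and pick $z \in \tilde{\cC}^\ast \setminus \tilde{\cC}$. I would enlarge $\tilde{\cC}$ by adjoining $z$, setting $\tilde{\cC}' := \overline{\tilde{\cC} + \mathbb{R}_{\geq 0} z}$. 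For $x = a + s z$ and $y = b + t z$ with $a, b \in \tilde{\cC}$ and $s, t \geq 0$, expanding
\[
	\langle x, y \rangle = \langle a, b \rangle + s\langle z, b \rangle + t\langle a, z \rangle + st\langle z, z \rangle
\]
shows each term is nonnegative: $\langle a, b \rangle \geq 0$ by sub-duality of $\tilde{\cC}$, the two mixed terms by $z \in \tilde{\cC}^\ast$, and $\langle z, z \rangle \geq 0$ trivially. By continuity $\tilde{\cC}'$ is sub-dual, while $\tilde{\cC}' \subseteq \cC$ follows since $z \in \tilde{\cC}^\ast \subseteq \cC$ and $\cC$ is a convex cone. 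Hence $\tilde{\cC}' \in \mathcal{F}$ strictly contains $\tilde{\cC}$, contradicting maximality. Therefore $\tilde{\cC} = \tilde{\cC}^\ast$, and combining with $\cC^\ast \subseteq \tilde{\cC} \subseteq \cC$ yields the desired chain $\cC \supseteq \tilde{\cC} = \tilde{\cC}^\ast \supseteq \cC^\ast$. The only delicate point is ensuring that adjoining a single dual vector preserves sub-duality, which is exactly what the sign analysis of the four cross terms provides.
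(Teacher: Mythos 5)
Your proposal is correct, and it proves the theorem by running the paper's argument on the dual side. The paper applies Zorn's lemma to pairs $(\cC_X,\cC_X^\ast)$ of pre-dual cones inside $\cC$ together with their duals, ordered so that the cone shrinks; the upper bound of a chain is the intersection $\bigcap_{B}\cC_B$, and showing that this intersection is still pre-dual requires Lemma~\ref{lem:sum-cone} (the dual of an uncountable intersection equals the closed sum of the duals), which in turn rests on Propositions~\ref{prop:def-cone1} and~\ref{prop:def-cone2} about uncountable operations on cones. You instead order the sub-dual cones $\cC^\ast\subseteq\cK\subseteq\cK^\ast$ with $\cK\subseteq\cC$ (precisely the duals of the paper's pre-dual cones) by inclusion, so that a chain's upper bound is the closure of a union, and its sub-duality follows from the total ordering of the chain plus continuity of the inner product; this removes the need for any intersection/sum duality machinery and makes the argument more elementary and self-contained. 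The final step is the same in spirit in both proofs --- adjoin an element of $\tilde{\cC}^\ast\setminus\tilde{\cC}$ and contradict maximality --- but your explicit four-term expansion $\langle a+sz,\,b+tz\rangle\ge0$ supplies a verification that the paper's STEP3 leaves implicit: there, the pair $(\cC',\cC'^\ast)$ is asserted to lie in $\cX$ without checking that $\cC'$ is pre-dual (equivalently, that $\tilde{\cC}^\ast+\mathbb{R}_{\ge0}x_0$ remains sub-dual), and the adjoined element is drawn from $\tilde{\cC}\setminus\mathrm{Clo}(\cC^\ast)$ where $\tilde{\cC}\setminus\tilde{\cC}^\ast$ is clearly intended. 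In short, the two proofs are dual mirror images of one another; yours buys elementarity and fills in the key sign check, while the paper's packages the chain-bound bookkeeping into a lemma it reuses elsewhere.
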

A self-dual cone $\tilde{\cC}$ satisfing \eqref{eq:sd} is called a Self-Dual Modification (SDM) of $\cC$.
Here, we remark that the reference \cite{BF1976} has also shown a result essentially similar to Theorem~\ref{theorem:sd}.
In the reference \cite{BF1976}, a cone is defined as a closed convex set satisfying only the property that $rx\in\cC$ for any $r\ge0$ and any $x\in\cC$.
This thesis assumes additional properties, $\cC$ has non-empty interior and $\cC\cap(-\cC)=\{0\}$.
Actually, we can easily modify the proof in \cite{BF1976} for our definition,
but this thesis gives another proof based on Zorn's Lemma for reader's convenience in Section~\ref{sect:proof-sd}.

In this way,
when we want to investigate self-dual cones,
it is sufficient to focus on pre-dual cones.
Here, we remark that
SDM is not uniquely determined by a given pre-dual cone
because the proof of Theorem~\ref{theorem:sd} is derived from Zorn's Lemma,
also because the proof in \cite{BF1976} is neither constructive nor deterministic
\footnote{Because a SDM $\tilde{\cC}$ is not uniquely determined by $\cC$,
the notation $\tilde{\cC}$ is slightly confusing,
but
for a convenience in the latter discussion,
we often denote a SDM of $\cC$ as $\tilde{\cC}$ in this thsis.
In other words, when we use the notation $\tilde{\cC}$,
the set $\tilde{\cC}$ is a self-dual cone satisfying $\cC\supset\tilde{\cC}\supset\cC^\ast$.}.
However, in order to examine the uniqueness even with the above symmetric condition,
we prepare the following two concepts among several cones.
Indeed, even when two self-dual cones are self-dual modifications of different pre-dual cones, they are not necessarily different self-dual cones in general.
For example, when we have three different self-dual cone $\cC_1,\cC_2,\cC_3$,
then $\cC_1+\cC_2$ and $\cC_2+\cC_3$ are pre-dual cones,
but $\cC_2$ is regarded as a modification of $\cC_1+\cC_2$ and $\cC_2+\cC_3$.
Hence, the following two concepts are useful to clarify the difference among self-dual modifications.
\begin{definition}[$n$-independence]
	For a natural number $n$, we say that a family of sets $\{\cC_i\}_{i=1}^n$ is $n$-independent if no sets $\cC_i\ (1\le i\le n)$ satisfy that $\cC_i \subset \sum_{j\neq i} \cC_j$.
	Especially, we say that $\{\cC_i\}_{i=1}^n$ is $n$-independent family of cones when any $\cC_i$ is a positive cone.
\end{definition}
\begin{definition}[exact hierarcy with depth $n$]
	For a natural number $n$, we say that pre-dual cone $\cC$ has an exact hierarchy with depth $n$
	if there exists a family of sets $\{\cC_i\}_{i=1}^n$ such that
	\begin{align}
			\cC\supset\cC_1\supsetneq\cC_2\supsetneq\cdots\supsetneq\cC_n
			\supset\cC_n^\ast\supsetneq\cdots\supsetneq\cC_1^\ast\supset\cC^\ast.
	\end{align}
	Especially, we say that $\{\cC_i\}_{i=1}^n$ is an exact hierarchy of cones when any $\cC_i$ is a positive cone.
\end{definition}

Then, as an extension of theorem~\ref{theorem:sd},
the following theorem shows the equivalence between
the existence of an $n$-independent family of self-dual cones
and the existence of an exact hierarchy of pre-dual cones with depth $n$.
\begin{theorem}[Main Result D-2]\label{theorem:hie1}
	Let $\cC$ be a positive cone.
	The following two statements are equivalent:
	\begin{enumerate}
		\item there exists an exact hierarchy of pre-dual cones $\{\cC_i\}_{i=1}^n$ satisfying $\cC\supset\cC_i\supset\cC^\ast$.
		\item there exists an $n$-independent family of self-dual cones $\{\tilde{\cC}_i\}_{i=1}^n$ satisfying that $\tilde{\cC}_i$ is a self-dual modification of $\cC_i$,
		i.e., 
		$\tilde{\cC}_i$ is a self-dual cone satisfying $\cC_i\supset\tilde{\cC}_i\supset\cC_i^\ast$.
	\end{enumerate}
\end{theorem}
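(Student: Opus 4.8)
The plan is to prove the two implications separately, using throughout the duality dictionary of Proposition~\ref{prop:dual-inc} (inclusion reverses under $(\cdot)^\ast$) and Proposition~\ref{prop-dual} (sums and intersections swap under $(\cdot)^\ast$, which I extend to finite families by induction). First I would record a self-dual reformulation of $n$-independence that will be used in both directions: for self-dual cones $\{\tilde{\cC}_i\}_{i=1}^n$, the containment $\tilde{\cC}_i\subset\sum_{j\neq i}\tilde{\cC}_j$ is equivalent, after dualizing and using $\left(\sum_{j\neq i}\tilde{\cC}_j\right)^\ast=\bigcap_{j\neq i}\tilde{\cC}_j^\ast=\bigcap_{j\neq i}\tilde{\cC}_j$, to $\bigcap_{j\neq i}\tilde{\cC}_j\subset\tilde{\cC}_i$. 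Hence $\{\tilde{\cC}_i\}$ is $n$-independent if and only if for every $i$ there is an element lying in all $\tilde{\cC}_j$ with $j\neq i$ but not in $\tilde{\cC}_i$.

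For the implication $2\Rightarrow1$, which I expect to be the clean direction, I would set $\cD_i:=\sum_{j=i}^{n}\tilde{\cC}_j$. Each $\cD_i$ is a positive cone (a finite sum of positive cones contained in the proper cone $\cC$, so $\cD_i\cap(-\cD_i)\subset\cC\cap(-\cC)=\{0\}$, and it contains $\tilde{\cC}_i$, so it has an interior point), it satisfies $\cC\supset\cD_i\supset\cC^\ast$, and by Proposition~\ref{prop-dual} its dual is $\cD_i^\ast=\bigcap_{j=i}^{n}\tilde{\cC}_j$. Dropping summands gives the decreasing chain $\cD_1\supset\cdots\supset\cD_n=\tilde{\cC}_n=\cD_n^\ast$, while the dual chain $\cD_n^\ast\supset\cdots\supset\cD_1^\ast$ is the corresponding increasing family of intersections; moreover each $\tilde{\cC}_i$ is a self-dual modification of $\cD_i$ since $\cD_i\supset\tilde{\cC}_i=\tilde{\cC}_i^\ast\supset\cD_i^\ast$. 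The only nontrivial point is strictness: if $\cD_i=\cD_{i+1}$ for some $i$, then $\tilde{\cC}_i\subset\cD_{i+1}=\sum_{j>i}\tilde{\cC}_j\subset\sum_{j\neq i}\tilde{\cC}_j$, contradicting $n$-independence, and strictness of the dual chain then follows from Proposition~\ref{prop:dual-inc}. This exhibits the required exact hierarchy of depth $n$.

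For the implication $1\Rightarrow2$, I would apply Theorem~\ref{theorem:sd} to each pre-dual cone $\cC_i$ of the hierarchy to obtain a self-dual modification $\tilde{\cC}_i$ with $\cC_i\supset\tilde{\cC}_i\supset\cC_i^\ast$; these automatically lie between $\cC^\ast$ and $\cC$ and each is its own parent, so statement~2's parent requirement is free and the entire content is to arrange $n$-independence. The hard part, and the main obstacle, is precisely this: self-dual modifications are not unique, and a careless choice collapses them (any self-dual modification of $\cC_{i+1}$ is also one of $\cC_i$, so naive picks may coincide, and even a strictly nested choice fails independence because $\tilde{\cC}_i\subset\tilde{\cC}_1\subset\sum_{j\neq i}\tilde{\cC}_j$). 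I would exploit the strictness of the hierarchy: since $\cC_i\supsetneq\cC_{i+1}$ there is a witness $x_i\in\cC_i\setminus\cC_{i+1}$, and by the reformulation of the first paragraph it suffices to choose the modifications so that, for each $i$, some element of $\bigcap_{j\neq i}\tilde{\cC}_j$ escapes $\tilde{\cC}_i$. Concretely I would run a downward induction $i=n,n-1,\dots,1$, at each step selecting a self-dual modification of $\cC_i$ that ``sticks out'' of $\cC_{i+1}$ while preserving the independence already achieved among $\tilde{\cC}_{i+1},\dots,\tilde{\cC}_n$.

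The crux of that induction is a refinement of Theorem~\ref{theorem:sd}: one must show that a pre-dual cone admits a self-dual modification containing a prescribed direction (the witness of strict inclusion) or, dually, avoiding a prescribed separating functional, so that the modification is neither absorbed into nor absorbs the span of the others. I would obtain this by re-running the Zorn's-lemma argument underlying Theorem~\ref{theorem:sd} with the witness built into the seed, and then verify, again through Proposition~\ref{prop-dual} and the reformulation above, that the resulting family satisfies $\bigcap_{j\neq i}\tilde{\cC}_j\not\subset\tilde{\cC}_i$ for every $i$. I expect the bookkeeping ensuring that the newly added cone stays independent of all previously chosen ones to be the most delicate step of the whole proof.
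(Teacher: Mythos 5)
Your dual reformulation of $n$-independence and your proof of $2\Rightarrow1$ are both correct, and the latter is essentially the paper's argument verbatim: the paper also sets $\cC_i:=\sum_{j\ge i}\tilde{\cC}_j$, obtains pre-duality from $\cC_i^\ast=\bigcap_{j\ge i}\tilde{\cC}_j\subset\tilde{\cC}_i\subset\cC_i$, and derives strictness of the chain from independence. The genuine gap is in $1\Rightarrow2$: what you give there is a plan, not a proof. Its crux --- producing a self-dual modification of $\cC_i$ that contains a prescribed witness $\rho_i\in\cC_i\setminus\cC_{i+1}$ --- is exactly the step you defer (``re-running the Zorn's-lemma argument with the witness built into the seed''), and the point where real work is required is precisely the one you skip: one must check that the seed enlarged by the witness is still pre-dual. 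On top of that you stack a downward induction with independence bookkeeping that you also never construct.

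The paper closes this gap without any refinement of Theorem~\ref{theorem:sd} and without induction. Fix $\rho_i\in\cC_i\setminus\cC_{i+1}$ and pass to $\cC_i':=\left(\cC_i^\ast+\{\rho_i\}\right)^\ast$, whose dual $\cC_i^{\prime\ast}$ is the cone generated by $\cC_i^\ast$ and $\rho_i$. Pre-duality of $\cC_i'$ is a two-line computation: for $x+\lambda\rho_i$ and $y+\mu\rho_i$ with $x,y\in\cC_i^\ast$ and $\lambda,\mu\ge0$, the inner product $\langle x+\lambda\rho_i,\,y+\mu\rho_i\rangle$ is nonnegative because $\langle x,y\rangle\ge0$ (pre-duality of $\cC_i$), $\langle x,\rho_i\rangle\ge0$ and $\langle y,\rho_i\rangle\ge0$ (since $\rho_i\in\cC_i$), and $\langle\rho_i,\rho_i\rangle>0$. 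Applying Theorem~\ref{theorem:sd} to $\cC_i'$ as a black box then yields a self-dual $\tilde{\cC}_i$ with $\cC_i\supset\cC_i'\supset\tilde{\cC}_i\supset\cC_i^{\prime\ast}\supset\cC_i^\ast$, i.e.\ an SDM of $\cC_i$ that automatically contains $\rho_i$. Incomparability then needs no bookkeeping: $\rho_i\in\tilde{\cC}_i$ while $\tilde{\cC}_j\subset\cC_j\subset\cC_{i+1}\not\ni\rho_i$ for $j>i$ gives $\tilde{\cC}_i\not\subset\tilde{\cC}_j$, and the reverse non-inclusion follows by dualizing a hypothetical containment (Proposition~\ref{prop:dual-inc} plus self-duality). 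One caveat your own first paragraph exposes: this verification --- and it is also the one the paper actually performs --- establishes only pairwise incomparability, whereas for $n\ge3$ the stated definition of $n$-independence, equivalently your intersection criterion $\bigcap_{j\neq i}\tilde{\cC}_j\not\subset\tilde{\cC}_i$, is strictly stronger. Your instinct that meeting the stronger criterion requires more delicate choices is sound, but your proposal does not supply them either; so on the strong reading both arguments stop at the same place, and on the pairwise reading your entire inductive apparatus is unnecessary.
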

Because of Theorem~\ref{theorem:hie1},
when we want to show the existence of different self-dual cones,
it is sufficient to show the existence of an exact hierarchy of pre-dual cones.
When we apply Theorem~\ref{theorem:hie1} to the existence of self-dual ESs,
we need to show the existence of an exact hierarchy of pre-dual ESs.
Such a hierarchy is easily constructed by adding an extremal ray to $\mathrm{SEP}(A;B)$
because the set $\mathrm{SES}(A;B)\setminus\mathrm{SEP}(A;B)$ is an infinite set.
In Section~\ref{sect:4-3},
we apply Theorem~\ref{theorem:hie1} to the existence of self-dual ESs with an important additional condition.

\section{Theme E : Existence of PSESs and Difference from the SES}\label{sect:4-3}

In this section,
we investigate PSESs.
PSESs are introduced by verification of maximally entangled states.
Recently, many studies discussed verification of maximally entangled states 
from theory \cite{HMT,Ha09-2,HayaM15,PLM18,ZH4,Markham} to experiment \cite{MST,KSKWW,Bavaresco,FVMH,JWQ}.
However, their verification ensures only that the constructed state is close to 
the maximally entangled state
because their verification inevitably contains small errors.
Therefore,
an ES attains verification of maximally entangled states
if the ES has all states that are close to 
the maximally entangled states.
Simply considering,
ESs with large state space attains the verification,
but self-duality bans such surplus of states.
Then,
this thesis investigates a class of ESs, called PSESs, with self-duality and attainment of the verification with small errors.

First, in Section~\ref{sect:4-3-1},
we define $\epsilon$-PSESs and explain the importance of the definition.
Next, in Section~\ref{sect:4-3-2},
we show the infinite existence of $\epsilon$-PSESs for any $\epsilon>0$ (Theorem~\ref{theorem:main}).
Finally,
in Section~\ref{sect:4-3-3},
we show that there exist $\epsilon$-PSESs with non-orthogonal perfect discrimination for any $\epsilon>0$ (Theorem~\ref{theorem:dist}).

\subsection{Definition of PSES and Its Importance}\label{sect:4-3-1}

Here, we want to consider self-dual ESs near the SES,
i.e.,
self-dual ESs that have a small distance from the SES.
In this thesis,
as an operational meaningful distance between models,
we introduce experimental verification of a given model.
To consider the experimental verification of a given model,
we consider the distinguishability of two state spaces of two given models $\cC_1$ and $\cC_2$.
Given a state $\sigma\in\cS(\cC_2,u_2)$, the quantity 
\begin{align}\label{def:distance1}
D(\cC_1\|\sigma):= \min_{\rho \in \cS(\cC_1,u_1)}\| \rho-\sigma\|_1
\end{align}
expresses how well 
the state $\sigma$ is distinguished from states in $\cC_1$.
Optimizing the state $\sigma$, we consider the quantity
\begin{align}\label{def:distance2}
D(\cC_1\|\cC_2):=\max_{\sigma \in \cS(\cC_2,u_2)} D(\cC_1\|\sigma),
\end{align}
which expresses the optimum distinguishability of the model $\cC_2 $
from the model $\cC_1$.
Hence, the quantity $D(\mathrm{SES}(A;B) \|\cC)$ expresses
how the standard model $\mathrm{SES}(A;B)$
can be distinguished from a model $\cC$.

However, we often consider the verification of 
a maximally entangled state
because a maximally entangled state is the furthest state from separable states.
That is, when the range of the above maximization \eqref{def:distance2} is restricted to maximally entangled states, 
the distinguishability of the standard model $\mathrm{SES}(A;B)$ from
the model $\cC$ is measured by the following quantity:
\begin{align}\label{def:distance}
D(\cC)
:= &\max_{\sigma \in \mathrm{ME}(A;B) } D(\cC\|\sigma),
\end{align}
where the set $\mathrm{ME}(A;B)$ is denoted as the set of all maximally entangled states on $\cH_A\otimes\cH_B$.
Given a model $\cC$,
we introduce \textit{$\epsilon$-undistinguishable condition} as
\begin{align}\label{cd:epsilon}
	D(\cC)\le\epsilon.
\end{align}
That is, if a model $\cC$ satisfies $\epsilon$-undistinguishablity,
even when we pass verification test for a maximally entangled state with $\epsilon$-errors,
we cannot deny the possibility that our system is the model $\cC$ (figure~\ref{figure-verification}).
Clearly, there are many models satisfying this condition,
for example, 
$\mathrm{SEP}^\ast$ satisfies it because $D(\mathrm{SEP}^\ast)=0$.
However, such an ES is not self-dual,
therefore, we can easily deny the possibility of $\mathrm{SEP}^\ast$ by imposing self-duality.

\begin{figure}[t]
	\centering
	\includegraphics[width=12cm]{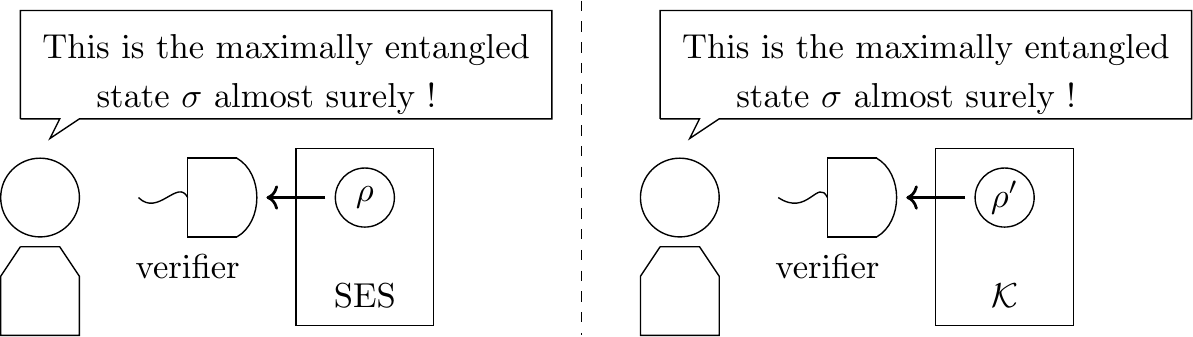}
	\caption[Verification Tasks]{
	Even if the verifier's system is subject to an $\epsilon$-undistinguishable entanglement structure $\cC\neq\mathrm{SES}(A;B)$,
	the verifier achieves the verification task of a given maximally entangled state $\sigma$ with error $\epsilon$ by preparing a state $\rho'\in\cC$ satisfying $\|\rho'-\sigma\|_1\le\epsilon$.
	In this sense, such verification tasks can not distinguish the entanglement structures $\mathrm{SES}(A;B)$ and $\cC$ when $\cC$ satisfies $\epsilon$-undistinguishability.
	}
	\label{figure-verification}
\end{figure}

Then,
we call an ES \textit{$\epsilon$-Pseudo Standard Entanglement Structures} ($\epsilon$-PSESs)
if the ES satisfies self-duality and $\epsilon$-undistinguishablity.
In other words,
$\epsilon$-PSESs are models that cannot be denied by self-duality and verification test for a maximally entangled state with $\epsilon$-errors.
A typical example of $\epsilon$-PSESs is of course the SES,
but
another example of $\epsilon$-PSESs is not known,
especially in the case when $\epsilon$ is very small.
In this section,
we investigate the problem whether there exists another example of $\epsilon$-PSESs.
As a result,
we give an infinite number of examples of PSESs by applying a general theory given in the next section.

\subsection{Existence of PSESs}\label{sect:4-3-2}

In this section,
in order to discuss variety of $\epsilon$-PSESs,
we apply Theorem~\ref{theorem:hie1} to a model of the quantum composite system on $\cH_A\otimes\cH_B$ with $\dim(\cH_A\otimes\cH_B)=d^2$.
Then, we show infinite number of exactly different $\epsilon$-PSESs.


First,
we denote
$\mathrm{MEOP}(A;B)$
as the set of all maximally entangled orthogonal projections on $\cH_A\otimes\cH_B$,
i.e.,
\begin{align}\label{eq:proj}
\begin{aligned}
	&\mathrm{MEOP}(A;B):=\Bigl\{\vec{E}=\{\ketbra{\psi_k}{\psi_k}\}_{k=1}^{d^2} \Bigm\vert \braket{\psi_k|\psi_l}=\delta_{kl}, \\
	&\ketbra{\psi_k}{\psi_k} :\mbox{ maximally entangled state on }\cH_A\otimes\cH_B\Bigr\}.
\end{aligned}
\end{align}
Now, we define the followin sets for the construction of PSESs.
\begin{definition}\label{definition:Kr}
	Given a subset $\cP\subset\mathrm{MEOP}(A;B)$ and a parameter $r\ge0$,
	we define the following set of non-positive matrices:
	\begin{align}
	\begin{aligned}
		&\mathrm{NPM}_r(\cP)\\
		:=&\Bigl\{\rho=-\lambda E_1+(1+\lambda)E_2+\frac{1}{2}\sum_{k=3}^{d^2}E_k\Big|
		0\le\lambda\le r,\ \vec{E}=\{E_k\}\in\cP
		\Bigr\}.\label{def:NPM}
	\end{aligned}
	\end{align}
	Using  the above set $\mathrm{NPM}_r(\cP)$,
	given a parameter $r\ge0$,
	we define the following two cones
	$\cC^{(0)}_r(\cP)$
	and
	$\cC_r(\cP)$
	as
	\begin{align}
		\cC^{(0)}_r(\cP)
		:&=\mathrm{SES}(A;B)+
		\mathrm{NPM}_r(\cP)
		,\\
		\cC_r(\cP)
		:&=\left(
		\cC^{(0)\ast}_r(\cP)+\mathrm{NPM}_r(\cP)
		\right)^\ast.\label{def:Kr}
	\end{align}
\end{definition}
Then, the following proposition holds.
\begin{proposition}\label{prop:construction1}
	Given $\cH_A$, $\cH_B$,
	define a real number $r_0(A;B)$ as
	\begin{align}\label{def:r0}
		r_0(A;B):=\left(\sqrt{2d}-2\right)/4.
	\end{align}
	When two parameters $r_1$ and $r_2$ satisfy $r_2\le r_1\le r_0(A;B)$,
	two cones
	$\cC_{r_1}(\cP)$
	and
	$\cC_{r_2}(\cP)$
	are pre-dual cones satisfying \eqref{eq:quantum}
	and the inclusion relation
	\begin{align}\label{eq:con-hie}
		\cC_{r_2}(\cP)\subsetneq\cC_{r_1}(\cP).
	\end{align}
\end{proposition}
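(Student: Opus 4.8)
Throughout write $S:=\mathrm{SES}(A;B)$ and $N_r:=\mathrm{NPM}_r(\cP)$, and recall $S^\ast=S$. The plan is to reduce all three assertions to elementary statements about the bounded set $N_r$ by means of the dual calculus in Proposition~\ref{prop-dual}. The first step is bookkeeping: since $\sum_k E_k=I$ for every $\vec E\in\cP$, each element of $N_r$ can be put in the normal form $\tfrac12 I+(\lambda+\tfrac12)(E_2-E_1)$ with $0\le\lambda\le r$, so in particular $\Tr\rho=\tfrac{d^2}{2}$ for all $\rho\in N_r$. Using \eqref{prop-dual1} and $S^\ast=S$ I would compute $\cC^{(0)\ast}_r(\cP)=(S+N_r)^\ast=S\cap N_r^\ast$, and then dualize once more with \eqref{prop-dual1}--\eqref{prop-dual2} to obtain the compact description $\cC_r(\cP)=\big((S\cap N_r^\ast)+N_r\big)^\ast=\cC^{(0)}_r(\cP)\cap N_r^\ast$. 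This last identity is the workhorse for everything that follows.

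Condition \eqref{eq:quantum} then reduces to the single containment $N_r\subset\mathrm{SEP}^\ast(A;B)$. Indeed $S\subset\mathrm{SEP}^\ast$ always holds, so $N_r\subset\mathrm{SEP}^\ast$ gives $\cC^{(0)}_r(\cP)\subset\mathrm{SEP}^\ast$ and hence $\cC_r(\cP)=\cC^{(0)}_r(\cP)\cap N_r^\ast\subset\mathrm{SEP}^\ast$; the reverse containment $\mathrm{SEP}(A;B)\subset S\cap N_r^\ast\subset\cC_r(\cP)$ follows from the same fact after Proposition~\ref{prop:dual-inc}. To check $N_r\subset\mathrm{SEP}^\ast$ it suffices to evaluate on pure product states $\ketbra{a\otimes b}{a\otimes b}$: the normal form turns the requirement into $\tfrac12+(\lambda+\tfrac12)(p_2-p_1)\ge0$ with $p_i=|\langle\psi_i|a\otimes b\rangle|^2$, and the maximal-entanglement overlap bound $|\langle\psi_i|a\otimes b\rangle|^2\le 1/d$ controls $p_1$, which yields the inequality for all $r$ in the stated range.

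The heart of the matter is pre-duality, $\cC_r(\cP)\supset\cC_r(\cP)^\ast$. Since $\cC_r(\cP)^\ast$ is the cone generated by $(S\cap N_r^\ast)+N_r$, the inclusion $\cC_r(\cP)\supset\cC_r(\cP)^\ast$ is equivalent to nonnegativity of the inner product of any two generators; the pairings $S$-with-$S$ and $N_r^\ast$-with-$N_r$ are nonnegative for free, so everything collapses to $\Tr(mn)\ge0$ for all $m,n\in N_r$. With the normal form one finds $\Tr(mn)=\tfrac{d^2}{4}+(\lambda+\tfrac12)(\mu+\tfrac12)\,\Tr\big((E_2-E_1)(F_2-F_1)\big)$, and the task is to bound the remaining matrix inner product from below using that the $E_k,F_k$ are rank-one maximally entangled orthogonal projectors. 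Imposing the resulting worst-case nonnegativity for $\lambda,\mu\le r$ is precisely what selects the threshold $r_0(A;B)$ of \eqref{def:r0}. I expect this step --- isolating the extremal pair of maximally entangled orthonormal families and extracting the exact constant $r_0(A;B)$ --- to be the main obstacle; the overlap estimate of the previous paragraph gives a strictly weaker restriction, so pre-duality is the binding constraint.

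Finally, the nesting and its strictness are governed by the monotonicity $N_{r_2}\subseteq N_{r_1}$ for $r_2\le r_1$, which enlarges $\cC^{(0)}_r(\cP)$ while shrinking $N_r^\ast$; the inclusion of cones then follows by tracking these two factors of $\cC_r(\cP)=\cC^{(0)}_r(\cP)\cap N_r^\ast$ through Proposition~\ref{prop:dual-inc} together with the $\Tr(mn)\ge0$ bound already established for $r\le r_0(A;B)$. Strictness I would witness explicitly by the extreme element $m_{r_1}:=-r_1E_1+(1+r_1)E_2+\tfrac12\sum_{k\ge3}E_k\in N_{r_1}$: a trace comparison shows it cannot be represented by $S+\mathrm{cone}(N_{r_2})$, since reproducing its eigenvalue $-r_1$ on $\psi_1$ from elements of $N_{r_2}$ (whose $\psi_1$-expectations are bounded below by $-r_2$) together with the constraint $\Tr m_{r_1}=\tfrac{d^2}{2}$ forces $\lambda\le r_2<r_1$, a contradiction. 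Hence $m_{r_1}\notin\cC^{(0)}_{r_2}(\cP)\supset\cC_{r_2}(\cP)$, which separates the two cones and upgrades the inclusion to $\cC_{r_2}(\cP)\subsetneq\cC_{r_1}(\cP)$.
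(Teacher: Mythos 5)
Your first three steps are correct and essentially reproduce the paper's own proof. The reduction of \eqref{eq:quantum} to $\mathrm{NPM}_r(\cP)\subset\mathrm{SEP}^\ast(A;B)$ via the product-state overlap bound is the paper's Lemma~\ref{lem:con1}; the reduction of pre-duality to termwise nonnegativity of pairings of the generators of $\cC_r^\ast(\cP)$, with the cross pairings free and everything collapsing to $\Tr(mn)\ge0$ on $\mathrm{NPM}_r(\cP)\times\mathrm{NPM}_r(\cP)$, is exactly the paper's argument; and the step you flag as ``the main obstacle'' is in fact a one-liner: from your normal form, $\Tr\bigl((E_2-E_1)(F_2-F_1)\bigr)\ge-\Tr E_1F_2-\Tr E_2F_1\ge-2$ since $0\le\Tr E_iF_j\le1$, so $\Tr(mn)\ge\tfrac{d^2}{4}-2\bigl(r+\tfrac12\bigr)^2$, whose nonnegativity is exactly the threshold \eqref{def:r0}; this is the paper's Lemma~\ref{lem:con2}. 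Up to here your route and the paper's coincide.

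The genuine gap is the inclusion $\cC_{r_2}(\cP)\subset\cC_{r_1}(\cP)$, and it cannot be closed by ``tracking the two factors'' of $\cC_r(\cP)=\overline{\mathrm{cone}}\bigl(\mathrm{SES}(A;B)+\mathrm{NPM}_r(\cP)\bigr)\cap\mathrm{NPM}_r(\cP)^\ast$ through Proposition~\ref{prop:dual-inc}: as you note, the factors move in opposite directions, and dualizing shows the inclusion is equivalent to $\mathrm{NPM}_{r_1}(\cP)\subset\cC_{r_2}^\ast(\cP)$, i.e.\ every $x\in\cC_{r_2}(\cP)$ must also pair nonnegatively with the \emph{new} elements $n_\lambda$, $\lambda\in(r_2,r_1]$ --- which nothing you (or the paper) established implies. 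Indeed, with the stated definitions it fails for the paper's own choice $\cP=\cP_0(\vec{P})$ (or a singleton): write $n_\lambda=\tfrac12 I+\bigl(\lambda+\tfrac12\bigr)(E_2-E_1)$, fix $0\le r_2<r_1\le r_0(A;B)$, and set $t:=r_2\big/\bigl(\tfrac{d^2}{4}+2(r_2+\tfrac12)^2\bigr)$, $x:=E_1+t\,n_{r_2}$. Then $x\in\cC_{r_2}(\cP)$: it pairs nonnegatively with every $w\in\cC^{(0)\ast}_{r_2}(\cP)$ termwise (as $E_1,w$ are positive semi-definite and $w\in\mathrm{NPM}_{r_2}(\cP)^\ast$), with every $n_\mu$, $0\le\mu\le r_2$ (the pairing is affine in $\mu$, positive at $\mu=0$, and zero at $\mu=r_2$ by the choice of $t$), and with the swapped family (using $\Tr E_1m_\mu=1+\mu$ and your $\Tr(mn)\ge0$ bound). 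Yet $\Tr x\,n_{r_1}=(r_2-r_1)\bigl(\tfrac{d^2}{4}+r_2+\tfrac12\bigr)\big/\bigl(\tfrac{d^2}{4}+2(r_2+\tfrac12)^2\bigr)<0$ while $n_{r_1}\in\mathrm{NPM}_{r_1}(\cP)\subset\cC^{(0)\ast}_{r_1}(\cP)+\mathrm{NPM}_{r_1}(\cP)$, so $x\notin\cC_{r_1}(\cP)$ and $\cC_{r_2}(\cP)\not\subset\cC_{r_1}(\cP)$. Your trace witness is correct and shows $n_{r_1}\in\cC_{r_1}(\cP)\setminus\cC_{r_2}(\cP)$, so the cones are genuinely distinct, but that cannot substitute for the missing (and, as above, false in this generality) inclusion. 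For context: the paper's entire justification of \eqref{eq:con-hie} is the single clause that $\mathrm{NPM}_{r_2}(\cP)\subsetneq\mathrm{NPM}_{r_1}(\cP)$, which meets exactly the same obstruction; the family that really is strictly nested, pre-dual, and satisfies \eqref{eq:quantum} --- and would serve the later hierarchy argument --- is $\overline{\mathrm{cone}}\bigl(\cC^{(0)}_r(\cP)\bigr)$, not $\cC_r(\cP)$.
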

Proposition~\ref{prop:construction1} guarantees that
$\cC_r(\cP)$
is pre-dual for any $r\le r_0$.
Therefore, Theorem~\ref{theorem:sd} gives a self-dual modification of
$\cC_r(\cP)$
with \eqref{eq:quantum}.
Next, we calculate the value $D(\tilde{\cC}_r(\cP))$.
The following proposition estimates the value $D(\tilde{\cC}_r(\cP))$.
\begin{proposition}\label{prop:construction2}
	Given a parameter $r$ with $0<r\le r_0(A;B)$
	and a self-dual modification
	$\tilde{\cC}_r(\cP)$,
	the following inequality holds:
	\begin{align}\label{eq:est-distance}
		D(\tilde{\cC}_r(\cP))
		\le2\sqrt{\cfrac{2r}{2r+1}}.
	\end{align}
\end{proposition}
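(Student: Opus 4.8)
The plan is to reduce the bound to an explicit subcone of $\tilde{\cC}_r(\cP)$ and then to a single overlap estimate for pure states. First I would use the self-dual modification relation \eqref{eq:sd}, which gives $\tilde{\cC}_r(\cP)\supset\cC_r(\cP)^\ast$, together with \eqref{def:Kr} and Proposition~\ref{prop-dual} to get $\cC_r(\cP)^\ast\supset\cC^{(0)\ast}_r(\cP)=\mathrm{SES}(A;B)\cap\mathrm{NPM}_r(\cP)^\ast$, where the last equality uses that $\mathrm{SES}(A;B)$ is self-dual. Since a larger cone has a larger state space, $\cS(\cC^{(0)\ast}_r(\cP),I)\subset\cS(\tilde{\cC}_r(\cP),I)$, so for every maximally entangled $\sigma$ we obtain $D(\tilde{\cC}_r(\cP)\|\sigma)\le D(\cC^{(0)\ast}_r(\cP)\|\sigma)$ and hence $D(\tilde{\cC}_r(\cP))\le D(\cC^{(0)\ast}_r(\cP))$. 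This removes the non-constructive cone $\tilde{\cC}_r(\cP)$ from the problem and leaves only the explicit states of $\mathrm{SES}(A;B)\cap\mathrm{NPM}_r(\cP)^\ast$ to work with.

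Next I would derive a clean membership criterion. A pure state $\rho=|\phi\rangle\langle\phi|$ is automatically in $\mathrm{SES}(A;B)$, so only the constraints $\langle\rho,n\rangle\ge0$ for $n\in\mathrm{NPM}_r(\cP)$ remain. Writing $n=-\lambda E_1+(1+\lambda)E_2+\frac{1}{2}\sum_{k=3}^{d^2}E_k$ and using $\sum_k E_k=I$ together with $\Tr\rho=1$, a short computation gives $\langle\rho,n\rangle=\frac{1}{2}+\left(\lambda+\frac{1}{2}\right)\left(\langle\rho,E_2\rangle-\langle\rho,E_1\rangle\right)$. The worst case over $\lambda\in[0,r]$ is $\lambda=r$, so $\rho\in\mathrm{NPM}_r(\cP)^\ast$ holds whenever $\langle\rho,E_1\rangle-\langle\rho,E_2\rangle\le\frac{1}{2r+1}$ for every family $\vec{E}\in\cP$; since $\langle\rho,E_2\rangle\ge0$, it suffices that $|\langle\phi|\psi_1\rangle|^2\le\frac{1}{2r+1}$ for every first vector $\psi_1$ of a family in $\cP$.

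Then, for a maximally entangled target $\sigma=|\psi\rangle\langle\psi|$, I would construct a pure state $\rho=|\phi\rangle\langle\phi|$ meeting this criterion while keeping $|\langle\phi|\psi\rangle|^2$ as large as the criterion permits, namely $|\langle\phi|\psi\rangle|^2=\frac{1}{2r+1}$ in the extremal case where $\psi$ is itself a first vector (when $\psi$ is unconstrained one simply takes $\phi=\psi$, giving distance $0$). The identity $\big\||\phi\rangle\langle\phi|-|\psi\rangle\langle\psi|\big\|_1=2\sqrt{1-|\langle\phi|\psi\rangle|^2}$ then yields $\|\rho-\sigma\|_1\le2\sqrt{1-\frac{1}{2r+1}}=2\sqrt{\frac{2r}{2r+1}}$, and taking the maximum over $\sigma\in\mathrm{ME}(A;B)$ gives the stated bound.

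The main obstacle is the construction in the third step: the vector $\phi$ must satisfy the closeness condition $|\langle\phi|\psi\rangle|^2=\frac{1}{2r+1}$ and the family constraints $|\langle\phi|\psi_1\rangle|^2\le\frac{1}{2r+1}$ for all first vectors of $\cP$ simultaneously, and this is precisely where the structure of $\cP$ (for the existence result, the explicit family $\cP_0(\vec{P})$) and the hypothesis $r\le r_0(A;B)$ enter. I would exhibit $\phi$ in the form $\sqrt{1/(2r+1)}\,|\psi\rangle+\sqrt{2r/(2r+1)}\,|\chi\rangle$ with $|\chi\rangle\perp|\psi\rangle$ chosen to have controlled overlap with the remaining first vectors, and verify feasibility using the dimension bound encoded in $r_0(A;B)=(\sqrt{2d}-2)/4$. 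A minor routine point to settle separately is that the computation of $\langle\rho,n\rangle$ and the reduction $D(\tilde{\cC}_r(\cP))\le D(\cC^{(0)\ast}_r(\cP))$ are unaffected by $\mathrm{NPM}_r(\cP)$ being a set rather than a cone, since the relevant duals depend only on the generated cone.
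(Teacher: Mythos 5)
Your first two steps are sound and essentially coincide with the paper's argument: the reduction $D(\tilde{\cC}_r(\cP))\le D(\cC_r^{(0)\ast}(\cP))$ is the paper's STEP1, and your membership criterion is a pure-state version of the paper's Lemma~\ref{lem:max-ent} (the duality subtleties you flag about $\mathrm{NPM}_r(\cP)$ not being a cone are indeed routine, since $0\in\mathrm{SES}(A;B)$). The genuine gap is exactly the step you yourself call ``the main obstacle'': constructing the close state. You insist the witness be \emph{pure}, and your ansatz $\ket{\phi}=\sqrt{1/(2r+1)}\,\ket{\psi}+\sqrt{2r/(2r+1)}\,\ket{\chi}$ with $\ket{\chi}\perp\ket{\psi}$ does not satisfy your own criterion in general. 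The proposition is asserted for an \emph{arbitrary} subset $\cP\subset\mathrm{MEOP}(A;B)$, which may contain families whose first vectors are maximally entangled vectors arbitrarily close to (but distinct from) $\ket{\psi}$; for $\ket{\psi'}=(I\otimes e^{i\epsilon H})\ket{\psi}$ the cross term $\bra{\chi}(I\otimes H)\ket{\psi}$ generically makes $|\braket{\phi|\psi'}|^2>\frac{1}{2r+1}$ to first order in $\epsilon$, so the constraint fails. In fact, a pure state has its maximal overlap with the maximally entangled manifold attained \emph{at} $\ket{\psi}$ only when its coefficient matrix (in the basis where $\ket{\psi}$ corresponds to the identity) is positive semidefinite up to a phase; choosing $\ket{\chi}$ to be a product state, or with merely ``controlled overlap,'' does not produce this, and the bound $r\le r_0(A;B)$ by itself does not repair the ansatz. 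So step~3, presented as a feasibility check, is actually the unproved crux (it \emph{can} be completed, e.g.\ by taking the coefficient matrix of $\ket{\chi}$ traceless Hermitian with suitably bounded negative part, but that insight is absent from your sketch).

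The paper sidesteps the difficulty entirely by dropping purity. It takes a \emph{mixed} $\rho_0\in\mathrm{SES}(A;B)$ with $F_{\mathrm{max}}(\rho_0)\le\frac{1}{2r+1}$ and $F(\rho_0,\sigma)=\frac{1}{2r+1}$ --- for instance a mixture of $\sigma$ with the maximally mixed state, for which the maximum overlap over all maximally entangled states is attained at $\sigma$ automatically --- applies Lemma~\ref{lem:max-ent} to get $\rho_0\in\cC_r^{(0)\ast}(\cP)$, and concludes with the inequality $\|\rho_0-\sigma\|_1\le2\sqrt{1-F(\rho_0,\sigma)}$, which holds for mixed states as well (your exact pure-state identity is not needed, only this one-sided bound). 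With that single change your steps~1 and~2 complete the proof; as written, the proposal leaves the essential construction unestablished.
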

For the latter use, we define the parameter $\epsilon_r$ as
\begin{align}\label{def:er}
	\epsilon_r:=2\sqrt{\cfrac{2r}{2r+1}}.
\end{align}
Proposition~\ref{prop:construction2} implies that the model
$\tilde{\cC_r}(\cP)$
is an $\epsilon_r$-PSES with \eqref{eq:quantum}.
Also,
due to \eqref{eq:con-hie} in Proposition~\ref{prop:construction1},
for an arbitrary number $n$,
an exact inequality
\begin{align}\label{eq:ri}
	0<r_n<\cdots<r_1\le r_0(A;B)
\end{align}
gives an exact hierarchy of pre-dual cones
$\{\cC_{r_i}(\cP)\}_{i=1}^n$
with \eqref{eq:quantum}.
Thus,
Theorem~\ref{theorem:hie1} gives an independent family
$\{\tilde{\cC}_{r_i}(\cP)\}$
with \eqref{eq:quantum},
and the distance
$D(\tilde{\cC}_{r_i}(\cP))$
is estimated as
\begin{align}
	D(\tilde{\cC}_{r_i}(\cP))
	\le2\sqrt{\cfrac{2r_i}{2r_i+1}}<2\sqrt{\cfrac{2r_1}{2r_1+1}}=\epsilon_{r_1}
\end{align}
by inequalities \eqref{eq:est-distance} and \eqref{eq:ri}.
In other words, the family
$\{\tilde{\cC}_{r_i}(\cP)\}$
is an $n$-independent family of $\epsilon_{r_1}$-PSESs.
Because $n$ is arbitrary and $\epsilon_{r_1}\to0$ holds with $r_1\to0$,
we obtain the following theorem.
\begin{theorem}[Main Result E-1]\label{theorem:main}
	For any $\epsilon>0$,
	there exists an infinite number of $\epsilon$-PSESs.
\end{theorem}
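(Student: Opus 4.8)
The plan is to obtain Theorem~\ref{theorem:main} as an assembly of the construction in Definition~\ref{definition:Kr} together with Proposition~\ref{prop:construction1}, Proposition~\ref{prop:construction2}, and Theorem~\ref{theorem:hie1}; essentially all of the analytic content has already been isolated in these supporting results, so the remaining task is logical bookkeeping. Fix $\epsilon>0$ and fix a nonempty subset $\cP\subset\mathrm{MEOP}(A;B)$. Since $\epsilon_r=2\sqrt{2r/(2r+1)}$ is continuous and strictly increasing in $r$ with $\epsilon_r\to0$ as $r\to0^+$, I would first choose a parameter $r_1$ with $0<r_1\le r_0(A;B)$ small enough that $\epsilon_{r_1}\le\epsilon$. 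The point of this choice is that every $r\le r_1$ then satisfies $\epsilon_r\le\epsilon_{r_1}\le\epsilon$, so the distance bound \eqref{def:er} stays below the prescribed $\epsilon$ for the whole chain built below.

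Next, for an arbitrary natural number $n$ I would select a strictly decreasing chain of parameters $0<r_n<\cdots<r_1\le r_0(A;B)$ as in \eqref{eq:ri}. By Proposition~\ref{prop:construction1} each $\cC_{r_i}(\cP)$ is a pre-dual cone satisfying \eqref{eq:quantum}, and \eqref{eq:con-hie} gives the strict inclusions $\cC_{r_n}(\cP)\subsetneq\cdots\subsetneq\cC_{r_1}(\cP)$. Applying Proposition~\ref{prop:dual-inc} to these strict inclusions yields the reversed strict chain $\cC_{r_1}(\cP)^\ast\subsetneq\cdots\subsetneq\cC_{r_n}(\cP)^\ast$ for the dual cones, while pre-duality supplies the central link $\cC_{r_n}(\cP)\supset\cC_{r_n}(\cP)^\ast$. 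Taking the ambient cone to be $\mathrm{SEP}(A;B)^\ast$, whose dual is $\mathrm{SEP}(A;B)$, the family $\{\cC_{r_i}(\cP)\}_{i=1}^n$ is therefore an exact hierarchy of pre-dual cones with depth $n$ in the sense required by Theorem~\ref{theorem:hie1}.

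I would then invoke Theorem~\ref{theorem:hie1} to pass from this hierarchy to an $n$-independent family $\{\tilde{\cC}_{r_i}(\cP)\}_{i=1}^n$ of self-dual cones, where each $\tilde{\cC}_{r_i}(\cP)$ is a self-dual modification of $\cC_{r_i}(\cP)$ and hence also satisfies \eqref{eq:quantum}. Each member is self-dual by construction, and Proposition~\ref{prop:construction2} together with the monotonicity fixed above gives $D(\tilde{\cC}_{r_i}(\cP))\le\epsilon_{r_i}\le\epsilon$, so every $\tilde{\cC}_{r_i}(\cP)$ is an $\epsilon$-PSES. Since $n$-independence forbids $\tilde{\cC}_{r_i}(\cP)\subset\sum_{j\neq i}\tilde{\cC}_{r_j}(\cP)$, no two members can coincide, so the family consists of $n$ pairwise distinct $\epsilon$-PSESs. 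As $n$ is arbitrary, the collection of all $\epsilon$-PSESs contains at least $n$ elements for every $n$ and is therefore infinite, which is the claim.

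The genuinely hard work does not lie in this assembly but in the supporting results I would cite: verifying that $\cC_r(\cP)$ remains pre-dual precisely up to the threshold $r_0(A;B)$ (Proposition~\ref{prop:construction1}) and controlling the trace-norm distance from the maximally entangled states (Proposition~\ref{prop:construction2}). Conceptually, the one step that is easy to get wrong is distinctness: two different pre-dual cones may admit the \emph{same} self-dual modification, so merely producing $n$ self-dual modifications does not by itself yield $n$ distinct models. This is exactly why I route the argument through the notion of $n$-independence in Theorem~\ref{theorem:hie1} rather than applying Theorem~\ref{theorem:sd} to each $\cC_{r_i}(\cP)$ separately, since independence is what certifies that the constructed cones are mutually different and thus that infinitely many distinct $\epsilon$-PSESs exist.
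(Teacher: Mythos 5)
Your proposal is correct and follows essentially the same route as the paper: choose $r_1$ with $\epsilon_{r_1}\le\epsilon$ using $\epsilon_r\to0$, build the strict chain $0<r_n<\cdots<r_1\le r_0(A;B)$, use Proposition~\ref{prop:construction1} to get an exact hierarchy of pre-dual cones, pass to an $n$-independent family of self-dual modifications via Theorem~\ref{theorem:hie1}, and bound each distance by Proposition~\ref{prop:construction2}. Your added care about distinctness (independence ruling out coincident modifications) and about verifying the dual chain is exactly the role Theorem~\ref{theorem:hie1} plays in the paper's own argument, so there is no substantive difference.
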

In other words,
there exist infinite entanglement structures that cannot be distinguished from the SES by a verification of a maximally entanglement state with small errors
even if the entanglement structure is self-dual.

\subsection{Non-Orthogonal Discrimination in PSESs}\label{sect:4-3-3}

In this section,
we discuss the difference between $\epsilon$-PSESs and the SES in terms of informational tasks.
We see the difference between the behaviors of perfect discrimination in
$\tilde{\cC}_r(\cP)$
and $\mathrm{SES}$.
For this aim,
we show that
any self-dual modification $\tilde{\cC}_r(\cP)$ in Section~\ref{sect:4-3-1}
has a measurement to discriminate non-orthogonal states in $\tilde{\cC}_r(\cP)$ perfectly
for a certain subset $\cP\subset\mathrm{MEOP}(A;B)$.

First,
given a vector $\vec{P}=\{P_k\}_{k=1}^{d^2}\in\mathrm{MEOP}(A;B)$,
we define a vector $\vec{E_P}=\{P'_k\}_{k=1}^{d^2}\in\mathrm{MEOP}(A;B)$ as
\begin{align}\label{eq:E-E'}
	P'_1:=P_2,\quad P'_2:=P_1, \quad P'_k=P_k \ (k\ge3).
\end{align}
Then,
given a vector $\vec{P}=\{P_k\}_{k=1}^{d^2}\in\mathrm{MEOP}(A;B)$,
we define a subset $\cP_0(\vec{P})\subset\mathrm{MEOP}(A;B)$ as
\begin{align}\label{def:PE}
	\cP_0(\vec{P}):=\{\vec{P},\vec{E_P}\}.
\end{align}

Now, we consider perfect discrimination in a self-dual modification $\tilde{\cC}_r(\cP_0(\vec{E}))$.
By the equations \eqref{def:NPM} and \eqref{def:PE},
the following two matrices belong to $\mathrm{NPM}_r(\cP_0(\vec{E}))$ for any $\vec{E}$ and any $0\le\lambda\le r$:
\begin{align}
\begin{aligned}
	\label{eq:measurement}
	M_1(\lambda;\vec{P})&:=-\lambda P_1+(1+\lambda)P_2+\frac{1}{2}\sum_{k\ge3}P_k,\\
	M_2(\lambda;\vec{P})&:=-\lambda P_1'+(1+\lambda)P_2'+\frac{1}{2}\sum_{k\ge3}P_k'
	=(1+\lambda) P_1-\lambda P_2+\frac{1}{2}\sum_{k\ge3}P_k,
\end{aligned}
\end{align}
which implies that $M_i(\lambda;\vec{P})\in\cC_r^{(0)\ast}(\cP_0(\vec{P}))\subset\tilde{\cC}_r(\cP_0(\vec{P}))$ for $i=1,2$.
Also, because of the equation \eqref{eq:E-E'},
the equation $M_1(\lambda;\vec{P})+M_2(\lambda;\vec{P})=I$ holds.
Therefore, the family $M(\lambda;\vec{P})=\{M_i(\lambda;\vec{P})\}_{i=1,2}$ is a measurement in $\tilde{\cC}_r(\cP_0(\vec{P}))$ when $0\le\lambda\le r$.

Next, we choose a pair of distinguishable states by $M(\lambda;\vec{P})$.
Let $\ket{\psi_k}$ be a normalized eigenvector of $P_k$.
Then, we define two states $\rho_1,\rho_2$ as follows:
\begin{align}
\begin{aligned}
	\rho_1:&=\ketbra{\phi_1}{\phi_1},\quad \rho_2:=\ketbra{\phi_2}{\phi_2},\\
	\ket{\phi_1}:&=\sqrt{\cfrac{r}{2r+1}}\ket{\psi}_1+\sqrt{\cfrac{r+1}{2r+1}}\ket{\psi}_2,\\
	\ket{\phi_2}:&=\sqrt{\cfrac{r+1}{2r+1}}\ket{\psi}_1+\sqrt{\cfrac{r}{2r+1}}\ket{\psi}_2.
\end{aligned}
\end{align}
Because of the relation $\vec{P}\in\mathrm{MEOP}(A;B)$, the projections $P_i$ and $P_j$ are orthogonal for $i\neq j$,
which implies
the equations
\begin{align}\label{eq:psi1-2}
	\braket{\psi_i|\psi_j}&=\delta_{i,j},\\
	\braket{\psi_i|P_j|\psi_i}&=\delta_{i,j}.\label{eq:psi-E}
\end{align}
Therefore, the following relation holds for $i,j=1,2$:
\begin{align}
	\Tr \rho_iM_j(r;\vec{P})=\delta_{i,j},
\end{align}
i.e.,
the states $\rho_1$ and $\rho_2$ are distinguishable by the measurement $M(\lambda;\vec{P})$.

Next, we show that $\rho_1,\rho_2\in\tilde{\cC}_r(\cP_0(\vec{P}))$,
which is shown as follows.
Because of the equation
$\mathrm{NPM}_r(\cP_0(\vec{P})):=\{M_i(\lambda;\vec{P}) | 0\le \lambda\le r,\ i=1,2\}$,
any extremal element $x\in\cC_r^{(0)}(\cP_0(\vec{P}))$ can be written as $x=\sigma+M_i(\lambda;\vec{P})$,
where $\sigma\in\mathrm{SES}(A;B)$, $0\le \lambda\le r$, $i=1,2$.
Moreover, the following two inequalities hold:
\begin{align}
	\Tr \rho_1 M_1(\lambda;\vec{P})&=-\lambda\cfrac{r}{2r+1}+(1+\lambda)\cfrac{r+1}{2r+1}
	=(\lambda+r+1)\cfrac{1}{2r+1}\nonumber\\
	&\stackrel{(a)}{\ge}\cfrac{r+1}{2r+1}\ge0,\label{eq:rho1M1}\\
	\Tr \rho_1 M_2(\lambda;\vec{P})&=-\lambda\cfrac{r+1}{2r+1}+(1+\lambda)\cfrac{r}{2r+1}=(-\lambda+r)\cfrac{1}{2r+1}\stackrel{(b)}{\ge}0.\label{eq:rho1M2}
\end{align}
The equations $(a)$ and $(b)$ are shown by the inequality $0\le \lambda \le r$.
Because the inequality $\Tr \rho_1\sigma\ge0$ holds for any $\sigma\in\mathrm{SES}(A;B)$,
we obtain $\Tr \rho_1 x\ge0$ for any $x\in\cC_r^{(0)}(\cP_0(\vec{P}))$,
which implies $\rho_1\in\cC^{(0)\ast}_r(\cP_0(\vec{P}))$.
Therefore, $\rho_1\in\tilde{\cC}_r(\cP_0(\vec{P}))$ because of the inclusion relation $\cC^{(0)\ast}_r(\cP_0(\vec{P}))\subset \tilde{\cC}_r(\cP_0(\vec{P}))$.
The same discussion derives that $\rho_2\in\tilde{\cC}_r(\cP_0(\vec{P}))$.
As a result,
we obtain a measurement and a distinguishable pair of two states by the measurement in
$\tilde{\cC}_r(\cP_0(\vec{P}))$.

Finally, the following equality implies that $\rho_1$ and $\rho_2$ are non-orthogonal for $r>0$:
\begin{align}
	\Tr\rho_1\rho_2
	&=2\cfrac{r(r+1)}{(2r+1)^2}>0.\label{eq:orthogonal1}
\end{align}
That is to say,
$\rho_1$ and $\rho_2$ are perfectly distinguishable non-orthogonal states.
Here, we apply Proposition~\ref{prop:construction2} to the case with
$\epsilon=2\sqrt{(2r)/(2r+1)}$.
Then,
$\tilde{\cC}_r(\cP_0(\vec{P}))$ is an $\epsilon$-PSES that contains a pair of two perfectly distinguishable states  $\rho_1$ and $\rho_2$ with
\begin{align}\label{eq:orthogonal2}
	\Tr\rho_1\rho_2
	&\stackrel{(a)}{\ge}\cfrac{\epsilon^2(\epsilon^2+8)}{32}
\end{align}
if $r$ satisfies $\epsilon=2\sqrt{(2r)/(2r+1)}$.
The inequality $(a)$ is shown by simple calculation as seen in Section~\ref{append-4-2} (Proposition~\ref{prop:dist2}).
We summarize the result as the following theorem.
\begin{theorem}[Main Result E-2]\label{theorem:dist}
	For any $\epsilon>0$ satisfying $\epsilon=2\sqrt{(2r)/(2r+1)}$,
	there exists an $\epsilon$-PSES that has a measurement and a pair of two perfectly distinguishable states $\rho_1,\rho_2$ with \eqref{eq:orthogonal2}.
\end{theorem}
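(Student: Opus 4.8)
The plan is to make the existence statement completely explicit rather than abstract: I would build one concrete self-dual entanglement structure from the family of Definition~\ref{definition:Kr} and then read off the measurement and the distinguishable pair directly. Fix a vector $\vec{P}=\{P_k\}_{k=1}^{d^2}\in\mathrm{MEOP}(A;B)$, form the two-element family $\cP_0(\vec{P})=\{\vec{P},\vec{E_P}\}$ of \eqref{def:PE}, and choose $r$ with $0<r\le r_0(A;B)$ so that the target accuracy is $\epsilon=\epsilon_r=2\sqrt{2r/(2r+1)}$ as in \eqref{def:er}. By Proposition~\ref{prop:construction1} the cone $\cC_r(\cP_0(\vec{P}))$ is pre-dual and satisfies \eqref{eq:quantum}, so Theorem~\ref{theorem:sd} supplies a self-dual modification $\tilde{\cC}_r(\cP_0(\vec{P}))$ with $\cC_r\supset\tilde{\cC}_r=\tilde{\cC}_r^\ast\supset\cC_r^\ast$, and Proposition~\ref{prop:construction2} gives $D(\tilde{\cC}_r(\cP_0(\vec{P})))\le\epsilon$. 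Hence $\tilde{\cC}_r(\cP_0(\vec{P}))$ is already an $\epsilon$-PSES, and it remains only to locate inside it a measurement together with a non-orthogonal perfectly distinguishable pair satisfying \eqref{eq:orthogonal2}.

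For the measurement I would take $\bm{M}=M(r;\vec{P})=\{M_1(r;\vec{P}),M_2(r;\vec{P})\}$ from \eqref{eq:measurement} at $\lambda=r$. Both effects lie in $\mathrm{NPM}_r(\cP_0(\vec{P}))$ by \eqref{def:NPM} and \eqref{def:PE}, and the swap \eqref{eq:E-E'} makes $M_1+M_2=I$, so $\bm{M}$ is a two-outcome family; its admissibility follows once each $M_i$ is placed in $\cC_r^{(0)\ast}(\cP_0(\vec{P}))\subset\tilde{\cC}_r=\tilde{\cC}_r^\ast$. For the states I would take $\rho_1,\rho_2$ built from normalized eigenvectors $\ket{\psi_1},\ket{\psi_2}$ of $P_1,P_2$ as in the construction preceding \eqref{eq:orthogonal1}. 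The orthogonality relations \eqref{eq:psi1-2} and \eqref{eq:psi-E}, which hold because $\vec{P}\in\mathrm{MEOP}(A;B)$, yield $\Tr\rho_i M_j(r;\vec{P})=\delta_{ij}$, i.e.\ perfect discrimination. For membership I would check, via the nonnegativity estimates \eqref{eq:rho1M1}--\eqref{eq:rho1M2} valid for $0\le\lambda\le r$ together with $\Tr\rho_i\sigma\ge0$ for $\sigma\in\mathrm{SES}(A;B)$, that $\Tr\rho_i x\ge0$ on every extremal element $x=\sigma+M_j(\lambda;\vec{P})$ of $\cC_r^{(0)}(\cP_0(\vec{P}))$, so that $\rho_i\in\cC_r^{(0)\ast}\subset\tilde{\cC}_r$ and hence $\rho_i\in\cS(\tilde{\cC}_r,I)$.

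The quantitative part is then a direct computation. Expanding $\Tr\rho_1\rho_2=|\braket{\phi_1|\phi_2}|^2$ gives the positive value in \eqref{eq:orthogonal1}, which already certifies non-orthogonality for every $r>0$. To convert this into the $\epsilon$-form \eqref{eq:orthogonal2}, I would invert the defining relation $\epsilon=2\sqrt{2r/(2r+1)}$, solving $2r=\epsilon^2/(4-\epsilon^2)$, and substitute; the outcome is an elementary rational function of $\epsilon$, and the inequality $(a)$ is exactly the routine estimate isolated as Proposition~\ref{prop:dist2} in Section~\ref{append-4-2}. Assembling the $\epsilon$-PSES property, the measurement, the perfectly distinguishable pair, and the bound then yields the theorem.

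The step I expect to be most delicate is the cone-membership bookkeeping in the middle. Because the self-dual modification $\tilde{\cC}_r$ is produced non-constructively through Zorn's Lemma in Theorem~\ref{theorem:sd}, membership in $\tilde{\cC}_r$ cannot be tested directly; the argument must instead route each object through the explicit chain $\cC_r^{(0)\ast}\subset\cC_r^\ast\subset\tilde{\cC}_r=\tilde{\cC}_r^\ast$, so that self-duality certifies the $M_i$ as valid effects and the $\rho_i$ as valid states at once. By contrast, the perfect-discrimination identity reduces to the orthogonality of the maximally entangled projections, and the overlap-to-$\epsilon$ conversion is purely mechanical.
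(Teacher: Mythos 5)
Your proposal is correct and is essentially the paper's own argument: the same two-element family $\cP_0(\vec{P})$, the same measurement $M(r;\vec{P})$ and states $\rho_1,\rho_2$ built from $\ket{\psi_1},\ket{\psi_2}$, the same membership bookkeeping through the explicit dual cones into the Zorn-produced self-dual modification, and the same conversion of \eqref{eq:orthogonal1} into \eqref{eq:orthogonal2} via Proposition~\ref{prop:dist2}. One containment should be corrected (the paper's own text makes the same slip): since $\cC_r^{(0)\ast}(\cP_0(\vec{P}))=\bigl(\mathrm{SES}(A;B)+\mathrm{NPM}_r(\cP_0(\vec{P}))\bigr)^\ast\subset\mathrm{SES}(A;B)$ contains only positive semi-definite matrices, the effects $M_i(r;\vec{P})$, which have the eigenvalue $-r<0$, cannot lie in $\cC_r^{(0)\ast}$; they enter $\tilde{\cC}_r=\tilde{\cC}_r^\ast$ one step later in the chain you cite in your closing paragraph, namely via $M_i(r;\vec{P})\in\mathrm{NPM}_r(\cP_0(\vec{P}))\subset\cC_r^\ast(\cP_0(\vec{P}))\subset\tilde{\cC}_r(\cP_0(\vec{P}))$, whereas only the states $\rho_1,\rho_2$ belong to $\cC_r^{(0)\ast}$.
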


In this way,
$\epsilon$-PSESs are different from the SES in terms of state discrimination.
This result implies the possibility that orthogonal discrimination can characterize the standard entanglement structure rather than self-duality.
Such a discussion provides an important conjecture as seen in Section~\ref{sect:5-2}.

\section{Proofs in Chapter~\ref{chap:4}}\label{sect:4-4}

\subsection{Proof of theorem~\ref{prop:sym1}}

In order to prove Theorem~\ref{prop:sym1},
we consider the capacity of a model.
About the symmetric cones in the list in Section~\ref{sect:4-1-1},
preceding studies investigated the capacity of models and the dimension of vector spaces as follows \cite{FKsymmetric}.
In this table,
the dimension is defined as the dimension of vector space including the symmetric cone,
and dim/cap means the ratio of the dimension / the capacity.

\begin{table}[htb]
	\caption{List about irreducible symmetric cones}
	\label{table2}
	\centering
	\begin{tabular}{cccc}
	\hline
	symmetric cone  & capacity  & dimension & dim/cap \\ \hline \hline
	$\mathrm{PSD}(m,\mathbb{R})$ & $m$&$m(m+1)/2$ & $(m+1)/2$ \\ \hline
	$\mathrm{PSD}(m,\mathbb{C})$ &$m$&$m^2$ & $m$  \\ \hline
	$\mathrm{PSD}(m,\mathbb{H})$ &$m$ &$m(2m-1)$ &$2m-1$ \\ \hline
	$\mathrm{Lorentz}(1,n-1)$ &2 &$n$& $n/2$ \\ \hline
	$\mathrm{PSD}(3,\mathbb{O})$ &3 &8 &$8/3$ \\ \hline
	\end{tabular}
\end{table}

Then, we obtain the following lemma.

\begin{lemma}\label{lem:sym}
	Let $\cH_A,\cH_B$ be finite-dimensional Hilbert spaces with dimension larger than 1.
	If an irreducible symmetric cone $\cC$ satisfies \eqref{eq:quantum},
	i.e.,
	$\mathrm{SEP}(A;B)\subset\cC\subset\mathrm{SEP}^\ast(A;B)$,
	the cone $\cC$ is the SES,
	i.e.,
	$\cC=\mathrm{PSD}(m,\mathbb{C})$ for $m=\dim(\cH_A\otimes\cH_B)$.
\end{lemma}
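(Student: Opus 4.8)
The plan is to pin down $\cC$ among the five irreducible symmetric cones of Table~\ref{table2} by matching two numerical invariants: its capacity and the dimension of the ambient space. First I would set $m := \dim(\cH_A\otimes\cH_B)=\dim\cH_A\cdot\dim\cH_B$ and note that, since both factors have dimension at least $2$, we have $m\ge 4$; moreover the ambient real vector space $\Her{\cH_A\otimes\cH_B}$ has real dimension $m^2$. Because $\cC$ satisfies \eqref{eq:quantum}, Proposition~\ref{prop:cap} yields $\mathrm{Cap}(\cC)=m$. Thus $\cC$ is an irreducible symmetric cone of capacity $m$ living in an $m^2$-dimensional space, so its dimension-to-capacity ratio equals $m$.

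Next I would eliminate the four non-complex entries of Table~\ref{table2}. The cones $\mathrm{Lorentz}(1,n-1)$ and $\mathrm{PSD}(3,\mathbb{O})$ have capacities $2$ and $3$, which cannot equal $m\ge 4$, so they are excluded at once. For the three families $\mathrm{PSD}(m',\mathbb{K})$ the condition $\mathrm{Cap}(\cC)=m$ forces $m'=m$, and then the ambient dimension equals $m(m+1)/2$, $m^2$, or $m(2m-1)$ according as $\mathbb{K}=\mathbb{R},\mathbb{C},\mathbb{H}$. Requiring this to be $m^2$ gives $m=1$ in the real and quaternionic cases, again contradicting $m\ge 4$, and holds automatically in the complex case. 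Hence $\cC$ is of type $\mathrm{PSD}(m,\mathbb{C})$, matching both invariants of the SES.

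The last and hardest step is to upgrade this ``same abstract type'' conclusion to the literal identity $\cC=\Psd{\cH_A\otimes\cH_B}=\mathrm{SES}(A;B)$. I would stress that self-duality together with the sandwich \eqref{eq:quantum} cannot by itself force this, since Theme~E produces many self-dual cones strictly between $\mathrm{SEP}(A;B)$ and $\mathrm{SEP}^\ast(A;B)$; the extra leverage must come from homogeneity (irreducibility). My plan here is to work at the level of extremal rays: a symmetric cone of type $\mathrm{PSD}(m,\mathbb{C})$ has its extremal rays parametrised by its primitive idempotents, forming a manifold of real dimension $2(m-1)$. I would show that each extremal generator of $\cC$ is in fact a rank-one positive semidefinite matrix---using $\cC\subset\mathrm{SEP}^\ast(A;B)$ together with primitivity and the fact that the fixed inner product $\Tr$ must be proportional to the Jordan trace form of $\cC$ (which is exactly what self-duality with respect to $\Tr$ encodes for an irreducible, hence simple, cone)---so that $\mathrm{ER}(\cC)\subset\mathrm{ER}(\mathrm{SES}(A;B))$. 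A dimension count would then show these rays exhaust all rank-one projections $\mathbb{CP}^{m-1}$, and since a proper cone is the closed conic hull of its extremal rays, this gives $\cC=\Psd{\cH_A\otimes\cH_B}$. I expect the genuinely delicate point to be establishing the positive semidefiniteness of the extremal generators of $\cC$, i.e.\ ruling out that a primitive idempotent of the abstract $\mathrm{PSD}(m,\mathbb{C})$-structure is realised by a non-positive Hermitian matrix lying in $\mathrm{SEP}^\ast(A;B)$.
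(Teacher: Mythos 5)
Your first two paragraphs coincide with the paper's entire proof: Proposition~\ref{prop:cap} gives $\mathrm{Cap}(\cC)=m\ge 4$, which eliminates $\mathrm{Lorentz}(1,n-1)$ and $\mathrm{PSD}(3,\mathbb{O})$, and the ambient dimension $m^2$ of $\Her{\cH_A\otimes\cH_B}$ eliminates $\mathrm{PSD}(m,\mathbb{R})$ and $\mathrm{PSD}(m,\mathbb{H})$ via the dim/cap column of Table~\ref{table2}. The divergence is your third paragraph: the paper stops at that point and declares the statement shown, i.e., it treats ``$\cC$ has the same classification type as $\mathrm{PSD}(m,\mathbb{C})$'' as if it were the literal equality $\cC=\Psd{\cH_A\otimes\cH_B}$, whereas you correctly point out that the Jordan--Koecher classification holds only up to linear isomorphism, so an additional argument is required. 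That observation identifies a real gap in the paper's own proof.

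However, your plan for closing that gap cannot succeed, and the ``genuinely delicate point'' you flag is exactly where it fails: the sub-claim $\mathrm{ER}(\cC)\subset\mathrm{ER}(\mathrm{SES}(A;B))$ does not follow from the hypotheses, because the hypotheses do not single out the SES. Take $\cC=\Gamma(\mathrm{SES}(A;B))$, where $\Gamma=\id\otimes\top$ is the partial transpose (the cone the paper itself lists as example (EI) in its open problems). Since $\Gamma$ is an invertible linear involution, self-adjoint with respect to $\Tr$, and satisfies $\Gamma(\mathrm{SEP}(A;B))=\mathrm{SEP}(A;B)$, one checks directly that (i) $\mathrm{SEP}(A;B)\subset\Gamma(\mathrm{SES}(A;B))\subset\mathrm{SEP}^\ast(A;B)$, so \eqref{eq:quantum} holds; (ii) $\bigl(\Gamma(\mathrm{SES}(A;B))\bigr)^\ast=\Gamma\bigl(\mathrm{SES}(A;B)^\ast\bigr)=\Gamma(\mathrm{SES}(A;B))$, so this cone is self-dual for the fixed inner product $\Tr$; (iii) $\mathrm{Aut}(\Gamma(\mathrm{SES}(A;B)))\supset\Gamma\circ\mathrm{Aut}(\mathrm{SES}(A;B))\circ\Gamma$ acts transitively on the interior, so it is homogeneous; and (iv) irreducibility is preserved by linear isomorphisms. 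Thus $\Gamma(\mathrm{SES}(A;B))$ satisfies every hypothesis of Lemma~\ref{lem:sym}, yet its extremal rays are generated by the matrices $\Gamma(\ketbra{\psi}{\psi})$, which have negative eigenvalues whenever $\ket{\psi}$ is entangled (the fact underlying the PPT criterion), and $\Gamma(\mathrm{SES}(A;B))\neq\mathrm{SES}(A;B)$. So no argument---yours or the paper's---can reach the literal conclusion as stated: the capacity/dimension computation yields only $\cC\cong\mathrm{PSD}(m,\mathbb{C})$ as abstract cones, and the statement can hold only in that weakened form, the natural repair being to allow $\Gamma(\mathrm{SES}(A;B))$ in the conclusion as well. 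The same caveat propagates to Theorem~\ref{prop:sym1}, whose proof invokes this lemma to obtain literal equality.
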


\begin{proof}[Proof of Lemma~\ref{lem:sym}]
	At first, $\cC$ is restricted to the five cases in the list.
	proposition~\ref{prop:cap} implies that $\cC$ has the capacity $m=\dim(\cH_A\otimes\cH_B)$,
	which denies the possibilities $\cC=\mathrm{Lorentz}(1,n-1)$ and $\cC=\mathrm{PSD}(3,\mathbb{O})$
	because $\dim(\cH_A\otimes\cH_B)\ge4$.
	Also, the cone $\cC$ is contained by the vector space of Hermitian matrices on $\cH_A\otimes\cH_B$ with $\mathbb{C}$-valued entries.
	Therefore, the dimension is given by $m^2$.
	Only the case with $\cC = \mathrm{PSD}(m,\mathbb{C})$ satisfies the ratio of the dimension and the capacity of $\cC$
	among the cones listed in Table~\ref{table2}, which shows the desired statement.
\end{proof}

Then, we prove Theorem~\ref{prop:sym1}.

\begin{proof}[Proof of Theorem~\ref{prop:sym1}]
	First, we decompose $\cC$ by a direct sum over irreducible symmetric cones $\cC_i$ as
	\begin{align}
		\cC=\bigoplus_{i=1}^k \cC_i.
	\end{align}
	Because each $\mathrm{ER}(\cC_i)$ is disjoint
	and because any pure state $\rho$ cannot be written as $\rho=\rho_1+\rho_2$ for any Hermitian matrices $\rho_1,\rho_2$ that are not transformed by multiplying any real number,
	the inclusion relation
	\begin{align}
		\mathrm{ER}(\mathrm{SEP}(A;B))\subset\bigcup_{i=1}^k \mathrm{ER}(\cC_i)
	\end{align}
	holds.
	Because the set $\mathrm{ER}(\mathrm{SEP}(A;B))$ is topologically connected,
	$\mathrm{ER}(\mathrm{SEP}(A;B))$ cannot be written as the disjoint sum of closed sets.
	Therefore, there exists an index $i_0$ such that $\mathrm{ER}(\mathrm{SEP}(A;B))\subset\mathrm{ER}(\cC_{i_0})$,
	which implies $\mathrm{SEP}(A;B)\subset\cC_{i_0}$.
	Because $\cC_{i_0}$ is self-dual, the inclusion relation $\mathrm{SEP}^\ast(A;B)\supset\cC_{i_0}$ holds.
	Hence, we apply lemma~\ref{lem:sym} to $\cC_{i_0}$,
	and we obtain $\cC_{i_0}=\mathrm{SES}(A;B)$.
	Thus, we obtain the inclusion relation $\cC\supset\mathrm{SES}(A;B)$,
	which implies $\cC=\mathrm{SES}(A;B)$ because $\cC$ is self-dual.
\end{proof}

\subsection{Proof of Theorem~\ref{prop:global2}}

\begin{proof}[Proof of Theorem~\ref{prop:global2}]
	We show the statement by contradiction.
	Assume that $\cC\neq\mathrm{SES}(A;B)$.
	If $\cC$ satisfies $\cC\subsetneq\mathrm{SES}(A;B)$,
	$\cC$ is not self-dual because of the inclusion relation $\cC\subsetneq\mathrm{SES}(A;B)\subsetneq\cC^\ast$.
	Therefore, we assume the existence of the element $x\in\cC\setminus\mathrm{SES}(A;B)$ without loss of generality.
	Because $x\in\mathrm{SEP}^\ast(A;B)\setminus\mathrm{SES}(A;B)$,
	there exists a pure state $\rho\in\mathrm{SES}(A;B)$ such that $\Tr \rho x<0$.
	Because $\rho$ is pure,
	there exists a unitary map $g\in\mathrm{GU}(A;B)$ such that $g(\rho)\in\mathrm{SEP}(A;B)$.
	Also, because $\cC$ is $\mathrm{GU}(A;B)$-symmetry,
	$g(x)\in\cC$.
	However, $\Tr g(\rho)g(x)=\Tr \rho x<0$,
	and therefore,
	$g(x)\not\in\mathrm{SEP}^\ast(A;B)$.
	This contradicts to $g(x)\in\cC\subset\mathrm{SEP}^\ast(A;B)$.
\end{proof}

\subsection{Proof of theorem~\ref{theorem:sd}}\label{sect:proof-sd}

\begin{proof}[Proof of theorem~\ref{theorem:sd}]
	\textbf{[OUTLINE]}
	First, as STEP1, we define $\cX$ as a set of all pairs of pre-dual cone and its dual cone,
	and we also define an order relation on $\cX$.
	Next, as STEP2,
	we show the existence of a maximal element in $\cX$ by Zorn's lemma,
	i.e.,
	we show that any totally ordered subset $\cB\subset\cX$ has an upper bound in $\cX$.
	Finally, as STEP3, we show that any maximal element corresponds to self-dual cone.\\
	
	\textbf{[STEP1]}
	Definition of $\cX$ and an order relation on $\cX$.
	
	Define the set $\cX$ of all pairs of pre-dual cone and its dual as:
	\begin{align}
	\begin{aligned}\label{eq:set}
		\cX:=\Bigl\{X:=&(\cC_X,\cC_X^\ast)\subset\cV\times\cV \Big| \cC\supset \cC_X, \ \cC_X \mbox{ is pre-dual cone}\Bigr\}.
	\end{aligned}
	\end{align}
	Also, we define an order relation $\preceq$ on $\cX$ as
	\begin{quote}
		$X\preceq Y \ \Leftrightarrow \cC_X\supseteq \cC_Y, \mbox{ and } \cC_X^\ast \subseteq \cC_Y^\ast$
		for any $X=(\cC_X,\cC_X^\ast)$, $Y=(\cC_Y,\cC_Y^\ast)$.
	\end{quote}
	
	\textbf{[STEP2]}
	The existence of the maximal element.
	
The aim of this step is showing
the existence of the maximal element of $\cX$
by applying Zorn's lemma.
For this aim, we need to show the existence of an upper bound
for every totally ordered subset in $\cX$.
That is, it is needed to show 
that the element written as
	\begin{align}
		X':=\left(\bigcap_{B\in\cB} \cC_B, \Big(\bigcap_{B\in\cB} \cC_B\Big)^\ast\right)
	\end{align}
is an upper bound in $\cX$ 
for a totally ordered subset $\cB\subset\cX$.
Since any $X\in\cB$ satisfies $X\preceq X'$ by definition of $X'$,
	non-trivial thing is $X'\in\cX$. 
Therefore, we show this membership relation in the following.
	
	Because any $\cC_B$ satisfies $\cC_B\supset\cC^\ast$, the subset
	$\bigcap_{B\in\cB}\cC_B$ has non-empty interior.
	Therefore, it is sufficient to show that $\bigcap_{B\in\cB} \cC_B$ is pre-dual in order to show $X'\in\cX$.
That is, the condition $X'\in\cX$ follows from 
the relation $\Big(\bigcap_{B\in\cB} \cC_B\Big)^\ast\subset
\bigcap_{B\in\cB} \cC_B$.

	For any $X=(\cC_X,\cC_X^\ast)\in\cB$ and $Y=(\cC_Y,\cC_Y^\ast)\in\cB$,
	one of the following inclusion relations holds by total order of $\cB$:
	\begin{align}
		\cC_X\supseteq \cC_X^\ast \supseteq \cC_Y^\ast  &\quad \left(X\preceq Y\right),\\
		\cC_X\supseteq \cC_Y \supseteq \cC_Y^\ast &\quad \left(Y\preceq X\right).
	\end{align}
	Therefore, $\cC_X\supset\cC_Y^\ast$ holds for any $X,Y\in\cB$,
	which implies $\bigcap_{B\in\cB} \cC_B\supset \cC_X$ for any $X\in\cB$.
	Hence, we have
	$\sum_{B\in\cB} \cC_B^\ast \subset \bigcap_{B\in\cB} \cC_B$ because
	the set $\bigcap_{B\in\cB} \cC_B^\ast\supset \cC_X^\ast$ is a positive cone,
	i.e.,
	closed under linear combination of non-negative scalars.
	Because lemma~\ref{lem:sum-cone} guarantees 
	the relation
	$\Big(\bigcap_{B\in\cB} \cC_B\Big)^\ast=\sum_{B\in\cB} \cC_B^\ast$, the above discussion implies
	$X'$ is pre-dual,
	and therefore, we obtain the relation $X'\in\cX$.
	
	Consequently, we have finished showing that every totally ordered in $\cX$ has an upper bound in $\cX$.
	Therefore, Zorn's lemma ensures the existence of the maximal element $X\preceq \tilde{X}\in\cX$.
	\\
	
	\textbf{[STEP3]}
	Self-duality of any maximal element.
	
	We consider maximal element of $X=(\cC,\cC^\ast)$ and write the maximal element as $\tilde{X}=(\tilde{\cC},\tilde{\cC}^\ast)$.
	Here, we will show $\tilde{\cC}$ is self-dual by contradiction.
	Assume $\tilde{\cC}$ is not self-dual, i,e,, $\tilde{\cC}\supsetneq\tilde{\cC}^\ast$,
	and, we take an element $x_0\in\tilde{\cC}\setminus\mathrm{Clo}\left({\cC}^\ast\right)$.
	Then, $\cC'^\ast:=\tilde{\cC}^\ast+\{x_0\}$ satisfies $\cC\supsetneq\cC'$ because $\cC^\ast\subsetneq\cC'^{\ast}$.
	Hence, $\cC'\in\cB$ and $\tilde{X}\prec(\cC',\cC'^\ast)$ hold.
	However, this contradicts to the maximality of $\tilde{X}$.
	As a result, $\tilde{\cC}$ is self-dual.
\end{proof}

\subsection{Proof of Theorem~\ref{theorem:hie1}}

\begin{proof}[Proof of theorem~\ref{theorem:hie1}]
	\textbf{[STEP1]}
	$(i)\Rightarrow(ii)$.
	
	Let $\{\cC_i\}_{i~1}^n$ be an exact hierarchy of pre-dual cones with $\cC\supset\cC_i\supset\cC^\ast$.
	By fixing an element $\rho_i\in\cC_i\setminus\cC_{i+1}$,
	define cones $\cL_i$ as the self-dual modification of $\cC_i'$
	\begin{align}\label{def:cone-sd}
		\cC_i':=\left(\cC_i^\ast + \{\rho_i\}\right)^\ast.
	\end{align}
	Let us show the pre-duality of $\cC_i'$.
	Take any two elements $x',y'\in\cC_i^{\prime\ast}$.
	Because of \eqref{def:cone-sd},
	the elements $x',y'$ is written as $x'=x+\rho_i$, $y'=y+\rho_i$,
	where $x,y\in\cC_i^\ast$.
	Pre-duality of $\cC_i$ implies
	that $\langle x,y \rangle\ge0$.
	Also, the definition of dual implies $\langle x,\rho_i\rangle\ge0$ and $\langle y,\rho_i\rangle\ge0$.
	Because $\langle \rho_i,\rho_i\rangle=||\rho_i||>0$,
	$\langle x',y'\rangle\ge0$ holds,
	which implies that $\cC_i^{\prime\ast}\subset(\cC_i^{\prime\ast})^\ast=\cC_i'$.
	Hence, $\cC_i'$ is a pre-dual cone,
	and Theorem~\ref{theorem:sd} guarantees the existence of a SDM $\tilde{\cC}_i$ satisfying $\cC_i'\supset\tilde{\cC'}_i\supset\cC_i^{\prime\ast}$.
	Also, the definition \eqref{def:cone-sd} implies the inclusion relation $\cC_i^\ast\subset\cC_i^{\prime\ast}$,
	and therefore, we obtain the inclusion relation
	\begin{align}
		\cC_i\supset\cC_i'\supset\tilde{\cC'}_i\supset\cC_i^{\prime\ast}\supset\cC_i^\ast.
	\end{align}
	
	Now, we show the independence of $\{\tilde{\cC'}_i\}$,
	i.e.,
	we show $\tilde{\cC'}_i\not\subset\tilde{\cC'}_j$ and $\tilde{\cC'}_i\not\supset\tilde{\cC'}_j$ for any $i>j$.
	We remark that any two elements $a,b$ in a self-dual cone satisfies $\langle a,b\rangle\ge0$.
	Because $\rho_i$ belongs to $\cC_i\setminus\cC_{i+1}$,
	$\rho_i\not\in\cC_{i+1}\supset\cC_j\supset\tilde{\cC'}_j$ holds,
	which implies $\tilde{\cC'}_i\not\subset\tilde{\cC'}_j$.
	The opposite side $\tilde{\cC'}_i\not\supset\tilde{\cC'}_j$ is shown by contradiction.
	Assume that $\tilde{\cC'}_i\supset\tilde{\cC'}_j$,
	and therefore, Proposition~\ref{prop:dual-inc} implies $\tilde{\cC'}_i\subset\tilde{\cC'}_j$.
	This contradicts to $\tilde{\cC'}_i\not\subset\tilde{\cC'}_j$.
	As a result, we obtain $\tilde{\cC'}_i\not\supset\tilde{\cC'}_j$.
	\\
	
	\textbf{[STEP2]}
	$(ii)\Rightarrow(i)$.

	Let $\{\tilde{\cC}_i\}_{i=1}^n$ be an independent family of self-dual cones with $\cC\supset\tilde{\cC}_i\supset\cC^\ast$.
	Now, we define a cone $\cC_i$ as
	\begin{align}
		\cC_i:=\sum_{j\ge i} \tilde{\cC}_j.
	\end{align}
	The choice of $\cC_i$ implies the inclusion relation $\cC_i\supset\tilde{\cC}_i$,
	i.e.,
	$\tilde{\cC}_i$ is a self-dual modification of $\cC_i$.
	Also, because of the inclusion relation $\cC\supset\tilde{\cC}_i\supset\cC^\ast$,
	the choice of $\cC_i$ implies the inclusion relation $\cC\supset\cC_i\supset\cC^\ast$.
	Moreover, the independence of $\tilde{\cC}_i$ implies the inclusion relation
	\begin{align}
	\cC_i=\sum_{j\ge i} \tilde{\cC}_j\supsetneq\sum_{j\ge i+1} \tilde{\cC}_j=\cC_{i+1},
	\end{align}
	which implies that $\{\cC_i\}_{i=1}$ is an exact hierarchy of pre-dual cones.
\end{proof}

\subsection{Proof of Proposition~\ref{prop:construction1}}\label{append-construction}

For the proof of Proposition~\ref{prop:construction1},
we give the following lemmas.

\begin{lemma}\label{lem:con1}
	For given $\cH_A$ and $\cH_B$,
	the relation
	\begin{align}\label{eq:rn}
		\mathrm{NPM}_r(\cP)
		\subset\mathrm{SEP}(A;B)^\ast
	\end{align}
	holds for $0\le r \le (\sqrt{d}-1)/2$.
\end{lemma}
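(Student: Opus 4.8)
The plan is to show that every matrix in $\mathrm{NPM}_r(\cP)$ lies in $\mathrm{SEP}(A;B)^\ast$, which by definition means its trace-inner-product with every separable state is non-negative. A generic element of $\mathrm{NPM}_r(\cP)$ has the form
\begin{align}
	\rho=-\lambda E_1+(1+\lambda)E_2+\frac{1}{2}\sum_{k=3}^{d^2}E_k,
\end{align}
where $0\le\lambda\le r$ and $\{E_k\}\in\cP\subset\mathrm{MEOP}(A;B)$, so the $E_k$ are mutually orthogonal rank-one projections onto maximally entangled states. The only obstruction to membership in $\mathrm{SEP}(A;B)^\ast$ is the negative coefficient $-\lambda$ on $E_1$; all other terms have non-negative coefficients and hence contribute non-negatively against any separable state.

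First I would fix an arbitrary product pure state $\sigma=\ketbra{a}{a}\otimes\ketbra{b}{b}$, since $\mathrm{SEP}(A;B)^\ast$ is determined by testing against the extremal rays of $\mathrm{SEP}(A;B)$, namely product states, and non-negativity against these suffices by convexity. Writing $E_1=\ketbra{\psi_1}{\psi_1}$ with $\ket{\psi_1}$ maximally entangled, the key quantity to bound is the overlap $\Tr E_1\sigma=|\braket{\psi_1|a\otimes b}|^2$. The central estimate is that a maximally entangled state has Schmidt coefficients all equal to $1/\sqrt d$, so $|\braket{\psi_1|a\otimes b}|^2\le 1/d$ for every product vector $\ket{a}\otimes\ket{b}$; this is the standard bound on the maximal overlap between a maximally entangled state and a product state. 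Then I would compute
\begin{align}
	\Tr\rho\,\sigma=-\lambda\,\Tr E_1\sigma+(1+\lambda)\Tr E_2\sigma+\frac{1}{2}\sum_{k\ge3}\Tr E_k\sigma.
\end{align}
Discarding the manifestly non-negative middle and tail terms gives $\Tr\rho\,\sigma\ge -\lambda\,\Tr E_1\sigma\ge -\lambda/d\ge -r/d$, but this crude bound is not yet enough, so the real work is to exploit the orthogonality relation $\sum_k\Tr E_k\sigma=\Tr\sigma=1$ together with the individual bound $\Tr E_k\sigma\le 1/d$ to control the positive terms from below.

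The main obstacle, and the step I expect to require genuine care, is turning the bound $\Tr\rho\,\sigma\ge0$ into the precise threshold $r\le(\sqrt d-1)/2$. The plan is to set $x=\Tr E_1\sigma$ and $y=\Tr E_2\sigma$, each lying in $[0,1/d]$ and satisfying $x+y\le 1$, and to minimize
\begin{align}
	f=-\lambda x+(1+\lambda)y+\tfrac{1}{2}\bigl(1-x-y\bigr)
\end{align}
over the admissible region, using $\sum_{k\ge3}\Tr E_k\sigma=1-x-y$. The worst case drives $x$ to its maximum $1/d$ and $y$ to its minimum; substituting and requiring the minimum of $f$ to stay non-negative for all $\lambda\le r$ should yield a quadratic condition in $\sqrt d$ whose solution is exactly $r\le(\sqrt d-1)/2$. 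I would verify that the extremal configuration is actually achievable by some product state (so the bound is tight and the threshold is correct), and then conclude that $\Tr\rho\,\sigma\ge0$ holds for every product $\sigma$, hence for every separable state by convexity, establishing $\rho\in\mathrm{SEP}(A;B)^\ast$ and therefore the claimed inclusion \eqref{eq:rn}.
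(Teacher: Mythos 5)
Your strategy coincides with the paper's own proof: reduce to testing against separable (product) pure states, use completeness to rewrite the tail $\frac{1}{2}\sum_{k\ge 3}E_k=\frac{1}{2}(I-E_1-E_2)$, and control the single negative term by the maximal overlap between a maximally entangled state and a product state; the paper likewise drops the $E_2$ contribution (your $y=0$ endpoint) and substitutes the overlap bound, which is exactly your proposed minimization.

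However, your key estimate is mis-normalized, and as written your final step would not produce the claimed threshold. In this lemma, $d$ denotes the \emph{total} dimension $\dim(\cH_A)\dim(\cH_B)$ (this is how $d$ is defined in Lemma~\ref{lem:con2}, and it is the convention under which the threshold $(\sqrt{d}-1)/2$ is stated), so the local dimension is $\sqrt{d}$ and the maximal overlap of a maximally entangled state with a product state is $1/\sqrt{d}$; this is precisely the bound the paper invokes from the literature. You instead state the overlap bound as $1/d$, which is correct only when $d$ is the local Schmidt rank. Carrying out your plan with $x\le 1/d$ and $y\ge 0$: since
\begin{align}
	f=\frac{1}{2}-\Bigl(\lambda+\frac{1}{2}\Bigr)x+\Bigl(\lambda+\frac{1}{2}\Bigr)y,
\end{align}
the minimum over your box is $\frac{1}{2}-\bigl(\lambda+\frac{1}{2}\bigr)/d$, and non-negativity yields $\lambda\le(d-1)/2$, not $(\sqrt{d}-1)/2$. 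Moreover, no ``quadratic condition in $\sqrt{d}$'' ever arises: $f$ is linear in $(x,y)$, so the worst case sits at a corner and the threshold falls out in one line once the overlap bound carries the correct normalization $1/\sqrt{d}$. The repair is therefore purely one of conventions---replace $1/d$ by $1/\sqrt{d}$---after which your computation reproduces the paper's inequality chain verbatim. Finally, the tightness check you propose (that the extremal $(x,y)$ is attained by an actual product state) is unnecessary: the lemma asserts only the inclusion for $r$ up to the stated bound, not that this bound is optimal.
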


\begin{proof}[Proof of lemma~\ref{lem:con1}]

	The aim of this proof is showing that any element
	$x\in\mathrm{NPM}_r(\cP)$
	satisfies $x\in\mathrm{SEP}^\ast(A;B)$,
	i.e.,
	the element $x$ satisfies $\Tr xy\ge0$ for any $y\in\mathrm{SEP}(A;B)$.
	Take an arbitrary element
	$x\in\mathrm{NPM}_r(\cP)$.
	Then, the element $x$ is written as
	\begin{align}\label{def:Nr}
		x=N(r;\{E_k\}):=-rE_1+(1+r)E_2+\frac{1}{2}\sum_{k=3}^{d^2} E_k,
	\end{align}
	$\{E_k\}\in\cP$
	and $r>0$.
	Here, we remark that any $E_k$ is a maximaly entangled state.
	Therefore, any separable pure state $y$ satisfies the following inequality:
	\begin{align}
		\Tr xy
		=&\Tr y\left(-rE_1+(1+r)E_2+\frac{1}{2}\sum_{j=3}^{d^2} E_j\right)\nonumber\\
		=&\Tr y\left(-\left(r+\frac{1}{2}\right)E_1+\left(r+\frac{1}{2}\right)E_2+\frac{1}{2}I\right)\nonumber\\
		\stackrel{(a)}{\ge}&\Tr y\left(-\left(r+\frac{1}{2}\right)E_1+\frac{1}{2}I\right)\nonumber\\
		\stackrel{(b)}{\ge}&-\cfrac{r}{\sqrt{d}}-\cfrac{1}{2\sqrt{d}}+\frac{1}{2}
		\stackrel{(c)}{\ge}-\cfrac{\sqrt{d}-1}{2\sqrt{d}}-\cfrac{1}{2\sqrt{d}}+\cfrac{1}{2}=0.
	\end{align}
	The inequality $(a)$ is shown by the inequalities $\Tr y E_1\le1$ and $\Tr y E_2\ge0$.
	The inequality $(b)$ is shown by the fact that the inequality $\Tr\sigma\rho\le(1/\sqrt{d})$ holds for any separable pure state $\sigma$ and any maximally entangled state $\rho$ \cite[Eq. (8.7)]{HayashiBook2017}.
	The inequality $(c)$ is shown by the assumption $0\le r \le (\sqrt{d}-1)/2$.
	Therefore, $\Tr xy\ge0$ holds,
	which implies that $x\in\mathrm{SEP}(A;B)^\ast$.
\end{proof}

\begin{lemma}\label{lem:con2}
	For given $\cH_A$ and $\cH_B$,
	define the dimension $d=\dim(\cH_A)\dim(\cH_B)$.
	Then, any two elements
	$x,y\in\mathrm{NPM}_r(\cP)$
	satisfy $\Tr xy\ge0$ if the parameter $r$ satisfies
	\begin{align}\label{ineq:d}
		0\le r\le \cfrac{\sqrt{2d}-2}{4}.
	\end{align}
\end{lemma}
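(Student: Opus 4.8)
The plan is to evaluate $\Tr xy$ for arbitrary $x,y\in\mathrm{NPM}_r(\cP)$ by isolating the traceless parts of $x$ and $y$, and then to control the only potentially negative contribution. Writing a family $\vec{E}=\{E_k\}\in\cP$ as a complete orthonormal system of rank-one projections, so that $\sum_k E_k=I$ and hence $\tfrac12\sum_{k\ge3}E_k=\tfrac12\left(I-E_1-E_2\right)$, I would first rewrite a generic element of $\mathrm{NPM}_r(\cP)$ as
\[
	x=-\lambda E_1+(1+\lambda)E_2+\tfrac12\sum_{k\ge3}E_k=\tfrac12 I+a\left(E_2-E_1\right),\qquad a:=\lambda+\tfrac12\in\left[\tfrac12,\,r+\tfrac12\right],
\]
and similarly $y=\tfrac12 I+b\left(F_2-F_1\right)$ with $b:=\mu+\tfrac12\in[\tfrac12,r+\tfrac12]$ for a second family $\{F_k\}$ and parameter $0\le\mu\le r$.

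Next I would expand the product. Since $E_2-E_1$ and $F_2-F_1$ are traceless, the two cross terms involving the identity vanish, leaving
\[
	\Tr xy=\tfrac14\Tr I+ab\,\Tr\left[(E_2-E_1)(F_2-F_1)\right]=\tfrac{d}{4}+ab\,T,
\]
where $d=\Tr I=\dim(\cH_A)\dim(\cH_B)$ and $T:=\Tr(E_2F_2)+\Tr(E_1F_1)-\Tr(E_2F_1)-\Tr(E_1F_2)$. Each term $\Tr(E_iF_j)=|\langle\psi_i|\phi_j\rangle|^2$ lies in $[0,1]$, so the two positive terms are nonnegative and the two subtracted terms are at most $1$; hence $T\ge-2$. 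Combining this with $0<ab\le(r+\tfrac12)^2$ gives
\[
	\Tr xy\ge\tfrac{d}{4}-2\left(r+\tfrac12\right)^2.
\]

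Finally, the right-hand side is nonnegative precisely when $(r+\tfrac12)^2\le d/8$, i.e.\ when $r\le\sqrt{d/8}-\tfrac12=(\sqrt{2d}-2)/4$, which is exactly the bound \eqref{ineq:d}. I expect the only real subtlety to be confirming that the lower bound $T\ge-2$ is the correct one to pair with the worst case $a=b=r+\tfrac12$; I would note that $T=-2$ is actually attained (take $F_1,F_2$ equal to $E_2,E_1$), so the threshold $(\sqrt{2d}-2)/4$ is tight rather than an artefact of a lossy estimate. It is worth remarking that, in contrast to Lemma~\ref{lem:con1}, maximal entanglement of the $E_k$ and $F_k$ plays no role here: the argument uses only that they are rank-one orthogonal projections summing to the identity.
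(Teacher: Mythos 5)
Your proof is correct and is essentially the paper's own argument: both rewrite each element as $\tfrac12 I+\left(\lambda+\tfrac12\right)\left(E_2-E_1\right)$ using $\sum_k E_k=I$, expand the trace of the product, bound the overlap traces $\Tr E_iF_j$ within $[0,1]$, and arrive at $\Tr xy\ge \tfrac{d}{4}-2\left(r+\tfrac12\right)^2\ge0$ precisely under \eqref{ineq:d}. The only differences are cosmetic: you keep the parameters $\lambda,\mu\le r$ general (the paper writes both elements with parameter exactly $r$), and your remarks that the bound is tight (attained at $F_1=E_2$, $F_2=E_1$) and that maximal entanglement of the projections plays no role in this lemma are correct additions not made explicit in the paper.
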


\begin{proof}[Proof of lemma~\ref{lem:con2}]
	Take two arbitrary elements
	$x,y\in\mathrm{NPM}_r(\cP)$.
	By the definition \eqref{def:NPM},
	the two elements $x,y$ are written as
\begin{align}
\begin{aligned}
	x&=-rE_1+(1+r)E_2+\frac{1}{2}\sum_{k=3}^{d^2} E_k,\\
	y&=-rE_1'+(1+r)E_2'+\frac{1}{2}\sum_{l=3}^{d^2} E_l',
\end{aligned}
\end{align}
where
$\{E_k\},\{E_l'\}\in\cP$.
Then, the following inequality holds:
\begin{align}
	&\Tr xy \nonumber \\
	=&\Tr\left(-rE_1+(1+r)E_2+\frac{1}{2}\sum_{k=3}^{d^2} E_k\right)\left(-rE_1'+(1+r)E_2'+\frac{1}{2}\sum_{l=3}^{d^2} E_l'\right)\nonumber\\
	=&\Tr\left(-rE_1+(1+r)E_2+\frac{1}{2}(I-E_1-E_2)\right) \nonumber\\
	& \rightline{$\displaystyle \left(-rE_1'+(1+r)E_2'+\frac{1}{2}(I-E_1'-E_2')\right)$}\nonumber\\
	=&\Tr\left(-\left(r+\frac{1}{2}\right)E_1+\left(r+\frac{1}{2}\right)E_2+\frac{1}{2}I\right) \nonumber\\
	&\rightline{$\displaystyle \left(-\left(r+\frac{1}{2}\right)E_1'+\left(r+\frac{1}{2}\right)E_2'+\frac{1}{2}I\right)$}\nonumber\\
	=&\Tr\Biggl( \left(r+\frac{1}{2}\right)^2E_1E_1'-\left(r+\frac{1}{2}\right)^2E_1E_2'-\frac{1}{2}\left(r+\frac{1}{2}\right)E_1-\left(r+\frac{1}{2}\right)^2E_2E_1'\nonumber\\
	&\rightline{$\displaystyle +\left(r+\frac{1}{2}\right)^2E_2E_2'+\frac{1}{2}\left(r+\frac{1}{2}\right)E_2-\frac{1}{2}\left(r+\frac{1}{2}\right)E_1'+\frac{1}{2}\left(r+\frac{1}{2}\right)E_2'+\frac{1}{4}d \Biggr)$}\nonumber\\
	\stackrel{(a)}{\ge}&-\left(r+\frac{1}{2}\right)^2-\frac{1}{2}\left(r+\frac{1}{2}\right)-\left(r+\frac{1}{2}\right)^2+\left(r+\frac{1}{2}\right)^2\nonumber\\
	&\rightline{$\displaystyle +\frac{1}{2}\left(r+\frac{1}{2}\right)-\frac{1}{2}\left(r+\frac{1}{2}\right)+\frac{1}{2}\left(r+\frac{1}{2}\right)+\frac{1}{4}d$}\nonumber\\
	=&-2\left(r+\frac{1}{2}\right)^2+\frac{1}{4}d
	\stackrel{(b)}{\ge}-2\left(\cfrac{\sqrt{2d}-2}{4}+\frac{1}{2}\right)^2+\frac{1}{4}d\nonumber\\
	=&-2\left(\cfrac{\sqrt{2d}}{4}\right)^2+\frac{1}{4}d=-\frac{4d}{16}+\frac{1}{4}d=0.\label{eq:lem:con2}
\end{align}
	The inequality $(a)$ is shown by $E_1E_1'\ge0$, $E_1E_2'\le1$ and so on.
	The inequality $(b)$ is shown by the assumption \eqref{ineq:d} of lemma~\ref{lem:con2}.
	The inequality \eqref{eq:lem:con2} is the desired inequality.
\end{proof}

\begin{proof}[Proof of Proposition~\ref{prop:construction1}]
	We remark the following inequality:
	\begin{align}
		\cfrac{\sqrt{2d}-2}{4n}
		&\le \cfrac{\sqrt{2d}-2}{4}
		\le \cfrac{2\sqrt{d}-2}{4}=\cfrac{\sqrt{d}-1}{2}.
	\end{align}
	Therefore, we apply lemma~\ref{lem:con1} and lemma~\ref{lem:con2} to
	$\cC_r(\cP)$
	with $r\le r_0(A;B)$.
	
	First, we show pre-duality of $\cC_r(\cP)$ for $r\le r_0(A;B)$,
	i.e., any two elements $x,y\in\cC_r^\ast(\cP)$ satisfy $\Tr xy\ge0$.
	Take two elements $x,y\in\cC_r^\ast(\cP)$,
	and we need to show $\Tr xy\ge0$.
	Because of the definition~\ref{def:Kr},
	the elements $x,y$ are written as $x=x_1+x_2$ ,$y=y_1+y_2$ for $x_1,y_1\in\cC_r^{(0)\ast}(\cP)$, $x_2,y_2\in\mathrm{NPM}_r(\cP)$.
	By lemma~\ref{lem:con2},
	the inequality $\Tr x_2y_2\ge0$ holds.
	Because $\mathrm{SES}(A;B)\supset\cC_r^{(0)}(\cP)$ holds,
	$\cC_r^{(0)\ast}(\cP)$ is pre-dual,
	and therefore, the inequality $\Tr x_1y_1\ge0$ holds.
	Because 
	$\cC_r^{(0)\ast}(\cP)\subset\mathrm{NPM}_r(\cP)$ holds,
	the inequalities $\Tr x_1y_2\ge0$ and $\Tr y_1x_2\ge0$ hold.
	As a result,
	we obtain $\Tr xy\ge0$, which implies that $\cC_r^\ast(\cP)$ is pre-dual.
	
	Next, we show the exact inclusion relation $\cC_{r_2}(\cP)\subsetneq\cC_{r_1}(\cP)$,
	which is shown by $\mathrm{NPM}_{r_2}(\cP)\subsetneq\mathrm{NPM}_{r_1}(\cP)$ holds.
	Finally, $\cC_r(\cP)$ satisfies \eqref{eq:quantum} because of the definition \eqref{def:Kr} and lemma~\ref{lem:con1}.
\end{proof}

\subsection{Proof of Proposition~\ref{prop:construction2}}\label{append-max-ent}

For the proof of Proposition~\ref{prop:construction2},
we define a function $F_{\mathrm{max}}$ by fidelity $F(\rho,\sigma)$ of two states $\rho,\sigma$ as
	\begin{align}
		F_{\mathrm{max}}(\rho):&=\max_{\sigma\in\mathrm{ME}(A;B)}F(\rho,\sigma)\stackrel{(a)}{=}\max_{\sigma\in\mathrm{ME}(A;B)}\Tr \rho\sigma.\label{eq:fid-tra}
	\end{align}
The equality $(a)$ holds because any maximally entangled state is pure.
Also, we remark the relation between trace norm and fidelity.
The following inequality holds for any state $\rho,\sigma\in\mathrm{SES}(A;B)$:
\begin{align}\label{eq:F1}
	\|\rho-\sigma\|_1\le 2\sqrt{1-F(\rho,\sigma)}.
\end{align}
In order to show Proposition~\ref{prop:construction2},
we give the following lemma.
\begin{lemma}\label{lem:max-ent}
When a state $\rho\in\mathrm{SES}$ and a parameter $r$ satisfy the inequality
	\begin{align}\label{eq:index2}
		F_{\mathrm{max}}(\rho)\le\cfrac{1}{2r+1},
	\end{align}
we have 
	\begin{align}
		\rho\in\cC_r^{(0)\ast}(\cP).
	\end{align}
	\end{lemma}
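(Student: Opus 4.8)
The plan is to prove the membership $\rho\in\cC_r^{(0)\ast}(\cP)$ directly from the definition of the dual cone, by checking the inner-product inequality against every generator of $\cC_r^{(0)}(\cP)=\mathrm{SES}(A;B)+\mathrm{NPM}_r(\cP)$. First I would observe that an element $y$ lies in $\cC_r^{(0)\ast}(\cP)$ if and only if $\Tr y(\sigma+N)\ge0$ for all $\sigma\in\mathrm{SES}(A;B)$ and all $N\in\mathrm{NPM}_r(\cP)$. Taking $\sigma=0$ and, separately, scaling $\sigma$ to infinity shows this requirement splits into the two conditions $y\in\mathrm{SES}(A;B)^\ast=\mathrm{SES}(A;B)$ and $\Tr yN\ge0$ for every $N\in\mathrm{NPM}_r(\cP)$. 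Since $\rho$ is assumed to be a state in $\mathrm{SES}(A;B)$, the first condition holds automatically by self-duality of $\mathrm{SES}(A;B)$, so the whole problem reduces to verifying $\Tr\rho N\ge0$ for each $N\in\mathrm{NPM}_r(\cP)$.

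Next I would carry out this verification by a short computation exploiting that each family $\{E_k\}_{k=1}^{d^2}\in\cP\subset\mathrm{MEOP}(A;B)$ is a complete set of orthogonal rank-one projections, so $\sum_{k=1}^{d^2}E_k=I$. Writing $p_j:=\Tr\rho E_j\ge0$ and using $\Tr\rho=1$ to rewrite the tail as $\tfrac12\sum_{k\ge3}E_k=\tfrac12(I-E_1-E_2)$, the inner product collapses to
\begin{align*}
\Tr\rho N
=-\lambda p_1+(1+\lambda)p_2+\tfrac12(1-p_1-p_2)
=\left(\lambda+\tfrac12\right)(p_2-p_1)+\tfrac12 .
\end{align*}
If $p_2\ge p_1$ this is at least $\tfrac12>0$, so the only dangerous case is $p_2<p_1$; there the positive coefficient $\lambda+\tfrac12$ multiplies a negative number, so the expression is smallest at the endpoint $\lambda=r$ together with $p_2=0$.

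The key quantitative step is then the bound on $p_1$. Because $E_1=\ketbra{\psi_1}{\psi_1}$ is the projection onto a maximally entangled state, $E_1$ is itself an element of $\mathrm{ME}(A;B)$, and hence $p_1=\Tr\rho E_1\le\max_{\sigma\in\mathrm{ME}(A;B)}\Tr\rho\sigma=F_{\mathrm{max}}(\rho)$. Invoking the hypothesis $F_{\mathrm{max}}(\rho)\le 1/(2r+1)$ and the exact identity $(r+\tfrac12)\cdot\tfrac{1}{2r+1}=\tfrac12$, the worst case yields
\begin{align*}
\Tr\rho N\ge -\left(r+\tfrac12\right)p_1+\tfrac12\ge-\left(r+\tfrac12\right)\frac{1}{2r+1}+\tfrac12=0 ,
\end{align*}
so $\Tr\rho N\ge0$ for every $N\in\mathrm{NPM}_r(\cP)$, which finishes the argument.

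I do not expect a genuine obstacle here: the proof is elementary once the dual-cone condition is split into the SES part (automatic) and the NPM part. The only points needing care are the justification that the SES contribution can be dropped (self-duality plus the scaling argument) and the recognition that $F_{\mathrm{max}}$ is exactly the right quantity to control $\Tr\rho E_1$, since $E_1$ ranges over maximally entangled projections as $\{E_k\}$ ranges over $\cP$. The tightness of the final estimate — equality at $p_1=1/(2r+1)$, $p_2=0$, $\lambda=r$ — also explains why the threshold in the hypothesis is exactly $1/(2r+1)$ and not anything larger.
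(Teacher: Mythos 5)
Your proof is correct and follows essentially the same route as the paper's: reduce membership in $\cC_r^{(0)\ast}(\cP)$ to the two inequalities $\Tr\rho\sigma\ge0$ (automatic from self-duality of $\mathrm{SES}(A;B)$) and $\Tr\rho N\ge0$ for $N\in\mathrm{NPM}_r(\cP)$, then bound the latter by dropping the $E_2$ contribution, using $\Tr\rho E_1\le F_{\mathrm{max}}(\rho)\le 1/(2r+1)$ since $E_1$ is maximally entangled, and using $\lambda\le r$. Your explicit case split on $p_2\ge p_1$ versus $p_2<p_1$ and the closing tightness remark are only cosmetic refinements of the paper's estimate chain.
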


\begin{proof}[Proof of lemma~\ref{lem:max-ent}]
We choose a state $\rho\in\mathrm{SES}$ and a parameter $r$ to satisfy the inequality \eqref{eq:index2}.
In the following, we show $\rho\in\cC_r^{(0)\ast}(\cP)$.

Any element of $\cC_r^{(0)}(\cP)=\mathrm{SES}(A;B)+\mathrm{NPM}_r(\cP)$ is written as 
$\sigma+N(\lambda;\{E_k\})$ with 
$\sigma \in \mathrm{SES}(A;B)$ and $N(\lambda;\{E_k\})\in \mathrm{NPM}_r(\cP)$ given in \eqref{def:Nr}.
As $\rho \in \mathrm{SES}(A;B)$, we have
\begin{align}
\Tr \rho \sigma \ge 0 \label{HH1}.
\end{align}

	Since the element $N(\lambda;\{E_k\})\in\mathrm{NPM}_r(\cP)$ is written as the following form by $\{E_k\}\in(\cP)$ and $0\le \lambda\le r$
	\begin{align*}
		N(\lambda;\{E_k\})=-\lambda E_1+(1+\lambda)E_2+\frac{1}{2}\sum_{k=3}^{d^2} E_k,
	\end{align*}
	we obtain the following inequality
	by using \eqref{eq:index2}; 
	\begin{align}
		&\Tr\rho N(\lambda;\{E_k\})\nonumber\\
		=&\Tr\rho\left(-\lambda E_1+(1+\lambda)E_2+\frac{1}{2}(I-E_1-E_2)\right)\nonumber\\
		=&\Tr\rho\left(-\left(\lambda+\frac{1}{2}\right) E_1+\left(\lambda+\frac{1}{2}\right)E_2+\frac{1}{2}I\right)
		\stackrel{(a)}{\ge}\Tr\left(-\left(\lambda+\frac{1}{2}\right)\rho E_1+\frac{1}{2}\rho I\right)\nonumber\\
		\stackrel{(b)}{\ge}&-\left(\lambda+\frac{1}{2}\right)\cfrac{1}{2r+1}+\frac{1}{2}
		\stackrel{(c)}{\ge}-\left(r+\frac{1}{2}\right)\cfrac{1}{2r+1}+\frac{1}{2}=0.\label{HH2}
	\end{align}
	The inequality $(a)$ is shown by $\Tr\rho E_2\ge0$.
	The inequality $(b)$ holds because $E_1$ is a maximally entangled state and because the equations \eqref{eq:fid-tra}, \eqref{eq:index2} hold.
	The inequality $(c)$ is shown by $\lambda\le r$.
Therefore, combining \eqref{HH1} and \eqref{HH2},
we obtain
\begin{align}
\Tr \rho (\sigma+N(\lambda;\{E_k\})) \ge 0,
\end{align}
which implies
the relation $\rho\in\cC_r^{(0)\ast}(\cP)$.
\end{proof}

By using Lemma~\ref{lem:max-ent},
we prove Proposition~\ref{prop:construction2}.

\begin{proof}[Proof of Proposition~\ref{prop:construction2}]
	\textbf{[OUTLINE]}
	First, as STEP1, we simplify the minimization of $D(\tilde{\cC_r}(A;B))$.
	Next, as STEP2, we estimate the simplified minimization.
	Finally, as STEP3,
	combining STEP1 and STEP2,
	we derive \eqref{eq:est-distance}.\\
	
	\textbf{[STEP1]}
	Simplification of the minimization.
	
	Because the inclusion relations
	\begin{align}
		\tilde{\cC_r}(\cP)\supset\cC_r^\ast(\cP)\supset\cC_r^{(0)\ast}(\cP)
	\end{align}
	hold,
	the following inequality holds for any $\sigma\in\mathrm{ME}(A;B)$:
	\begin{align}
		D(\tilde{\cC_r}(\cP) \| \sigma)
		=&\min_{\rho\in\tilde{\cC_r}(\cP)}\|\rho-\sigma\|_1
		\le \min_{\rho\in\cC_r^{(0)\ast}(\cP)}\|\rho-\sigma\|_1.\label{eq:prop:const2-1}
	\end{align}
	
	\textbf{[STEP2]}
	Estimation of the minimization.
	
	Given an arbitrary maximally entangled state $\sigma$,
	take an element $\rho_0\in\mathrm{SES}(A;B)$ satisfying the following equality:
	\begin{align}
		F(\rho_0,\sigma)=\cfrac{1}{2r+1}.
	\end{align}
	lemma~\ref{lem:max-ent} implies the relation $\rho_0\in\cC_r^{(0)\ast}(\cP)$.
	Then, we obtain the following inequality:
	\begin{align}
		&\min_{\rho\in\cC_r^{(0)\ast}(\cP)}\|\rho-\sigma\|_1
		\le\|\rho_0-\sigma\|_1\nonumber\\
		\stackrel{(a)}{\le} &2\sqrt{1-F(\rho_0,\sigma)}
		=2\sqrt{1-\cfrac{1}{2r+1}}
		=2\sqrt{\cfrac{2r}{2r+1}}.\label{eq:prop:const2-2}
	\end{align}
	The inequality $(a)$ is shown by the inequality \eqref{eq:F1}.
	\\
	
	\textbf{[STEP3]}
	Combination of STEP1 and STEP2.
	
	Because $\sigma$ is an arbitrary element in $\mathrm{ME}(A;B)$,
	the following inequality holds:
	\begin{align}
		&D(\tilde{\cC_r}(\cP))
		=\max_{\sigma\in\mathrm{ME}(A;B)}D(\tilde{\cC_r}(\cP) \| \sigma)\nonumber\\
		\stackrel{(a)}{\le} &\max_{\sigma\in\mathrm{ME}(A;B)}\min_{\rho\in\cC_r^{(0)\ast}(\cP)}\|\rho-\sigma\|_1
		\stackrel{(b)}{\le}2\sqrt{\cfrac{2r}{2r+1}}\label{eq:d-e}.
	\end{align}
	The inequality $(a)$ is shown by \eqref{eq:prop:const2-1}.
	The inequality $(b)$ holds because the inequality \eqref{eq:prop:const2-2} holds for any $\sigma\in\mathrm{ME}(A;B)$.
	Hence, we obtain \eqref{eq:est-distance}.
\end{proof}

\subsection{Proof of inequality \eqref{eq:orthogonal2}}\label{append-4-2}

Here,
we show the inequality \eqref{eq:orthogonal2}.
In other words,
we show the following Proposition.
\begin{proposition}\label{prop:dist2}
	Let $\epsilon>0$ and $0<r\le r_0(A;B)$ be parameters satisfying $\epsilon=2\sqrt{(2r)/(2r+1)}$,
	and let $\rho_1,\rho_2$ be states satisfying \eqref{eq:orthogonal1}.
	Then, the following inequality holds;
	\begin{align}
		\Tr\rho_1\rho_2\ge\cfrac{\epsilon^2(\epsilon^2+8)}{32}.
	\end{align}
\end{proposition}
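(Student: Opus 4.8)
The statement is entirely computational once one observes that, after using the defining relation $\epsilon = 2\sqrt{(2r)/(2r+1)}$ to eliminate $\epsilon$, both $\Tr\rho_1\rho_2$ and the claimed lower bound become explicit rational functions of the single parameter $r$ sharing the denominator $(2r+1)^2$. The plan is therefore to rewrite the target inequality purely in terms of $r$, clear this positive denominator, and reduce the claim to an elementary polynomial inequality that can be checked on the admissible interval $0 < r \le r_0(A;B)$.

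First I would square the defining relation to get $\epsilon^2 = 8r/(2r+1)$, and substitute it into the right-hand side of \eqref{eq:orthogonal2}. Since
\begin{align}
\epsilon^2 + 8 = \frac{8r}{2r+1} + 8 = \frac{8(3r+1)}{2r+1},
\end{align}
this yields the closed form
\begin{align}
\frac{\epsilon^2(\epsilon^2+8)}{32} = \frac{2r(3r+1)}{(2r+1)^2}.
\end{align}
For the left-hand side I would simply invoke the exact value of $\Tr\rho_1\rho_2$ already recorded in \eqref{eq:orthogonal1}, which follows from $\Tr\rho_1\rho_2 = |\braket{\phi_1|\phi_2}|^2$ together with the orthonormality relations \eqref{eq:psi1-2}. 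Both sides now carry the common positive denominator $(2r+1)^2$, so the proposition is equivalent to the polynomial inequality obtained by comparing the two numerators, after which the common factor $2r > 0$ can be cancelled.

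The substitutions are routine, so the only real point of care is this final polynomial comparison: it is not an identity valid for all $r > 0$ but depends on restricting $r$ to the admissible range. Consequently the decisive step is to verify, using the explicit bound $r_0(A;B) = (\sqrt{2d}-2)/4$ from \eqref{def:r0}, that the reduced inequality indeed holds throughout $0 < r \le r_0(A;B)$ — in particular that it is not violated near the upper endpoint $r = r_0(A;B)$. This range check is the main (though still elementary) obstacle; everything else is direct substitution and cancellation.
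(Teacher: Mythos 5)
Your reduction is set up correctly, but the decisive step you defer---the final polynomial comparison---is exactly where the argument breaks, and it breaks completely. Carrying out your own substitution $\epsilon^2=8r/(2r+1)$ gives, as you say,
\begin{align}
	\frac{\epsilon^2(\epsilon^2+8)}{32}=\frac{2r(3r+1)}{(2r+1)^2},
\end{align}
while \eqref{eq:orthogonal1} gives $\Tr\rho_1\rho_2=2r(r+1)/(2r+1)^2$. After clearing the common denominator and cancelling the positive factor $2r$, the required inequality is $r+1\ge 3r+1$, i.e.\ $r\le 0$. This fails at \emph{every} point of the admissible range $0<r\le r_0(A;B)$, not merely near the upper endpoint; the comparison does not involve $r_0(A;B)=(\sqrt{2d}-2)/4$ at all, so no range check can rescue it. Your expectation that the reduced inequality ``indeed holds throughout $0<r\le r_0(A;B)$'' is therefore wrong, and a proof along the lines you propose cannot be completed.

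What your (correct) algebra actually exposes is a defect in the source rather than in your strategy. The paper's proof performs the inverse substitution, solving $r=\epsilon^2/(2(4-\epsilon^2))$ and inserting it into $2r(r+1)/(2r+1)^2$; in that chain the factor $r+1=\bigl(\epsilon^2+2(4-\epsilon^2)\bigr)/\bigl(2(4-\epsilon^2)\bigr)$ is simplified as if $\epsilon^2+2(4-\epsilon^2)=\epsilon^2+8$, whereas it equals $8-\epsilon^2$. Done correctly, the paper's own computation yields $\epsilon^2(8-\epsilon^2)/32$, which is strictly \emph{smaller} than the claimed $\epsilon^2(\epsilon^2+8)/32$ for every $\epsilon>0$---in exact agreement with your reduction. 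So the proposition is not derivable from \eqref{eq:orthogonal1} as stated. (A further issue: \eqref{eq:orthogonal1} itself understates the overlap, since the definitions of $\ket{\phi_1},\ket{\phi_2}$ give $|\braket{\phi_1|\phi_2}|^2=4r(r+1)/(2r+1)^2$; even with that repair, your comparison becomes $2(r+1)\ge 3r+1$, i.e.\ $r\le 1$, which the hypothesis $r\le r_0(A;B)$ guarantees only when $\sqrt{2d}\le 6$.) The honest endpoint of your route is thus a correction of the statement---e.g.\ replacing the bound by $\epsilon^2(8-\epsilon^2)/32$, or fixing \eqref{eq:orthogonal1} and adding $r\le1$---not a proof of it as written.
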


\begin{proof}[Proof of Proposition~\ref{prop:dist2}]
	The equation $\epsilon=2\sqrt{(2r)/(2r+1)}$ is reduced as follows:
	\begin{align}
		\epsilon=&2\sqrt{(2r)/(2r+1)}\nonumber\\
		\epsilon^2/4=&2r/(2r+1)\nonumber\\
		\epsilon^2/4=&1-1/(2r+1)\nonumber\\
		1/(2r+1)=&(4-\epsilon^2)/4\nonumber\\
		r=&\epsilon^2/(2(4-\epsilon^2)).\label{eq:r-e}
	\end{align}
	Then, \eqref{eq:r-e} implies the following equation:
	\begin{align}
		\Tr \rho_1\rho_2 \stackrel{(a)}{\ge}& \cfrac{2r(r+1)}{(2r+1)^2}
		=\cfrac{2\epsilon^2}{2(4-\epsilon^2)}\cdot\cfrac{\epsilon^2+(2(4-\epsilon^2)}{2(4-\epsilon^2)}\cdot\left(\cfrac{4-\epsilon^2}{4}\right)^2
		=\cfrac{\epsilon^2(\epsilon^2+8)}{32}.
	\end{align}
	Here, we remark that the inequality $(a)$ is \eqref{eq:orthogonal1}.
\end{proof}

\chapter{Conclusion}\label{chap:5}

\section{Summary}\label{sect:5-2}

This thesis has investigated the diversity of entanglement structures in general probabilistic theories.
In Chapter~\ref{chap:2},
we have introduced positive cones and GPTs.
Also, we have introduced ESs and its non-uniqueness,
and we have given important examples of ESs (in Appendix~\ref{appe:2}).
In Chapter~\ref{chap:3} and Chapter~\ref{chap:4},
this thesis has addressed the following five themes.
\begin{enumerate}
	\item[A.] Characterization of Dual-Operator-Valued Measurement
	\item[B.] Non-Simulability of Beyond Quantum Measurement
	\item[C.] Entanglement Structures with Group Symmetry
	\item[D.] Self-Dual Modification
	\item[E.] Existence of PSES and Difference from the SES
\end{enumerate}

In Chapter~\ref{chap:3},
we have discussed state discrimination tasks in ESs,
and have investigated Theme A and B.
We have classified DOVMs and have characterized the classes by the performance for discrimination tasks (Theorem~\ref{theorem:bq} and Theorem~\ref{theorem:aq}).
Besides the results,
as an application of Theorem~\ref{theorem:aq},
this thesis has given a derivation of the SES from ESs with a inclusion relation (Theorem~\ref{theorem:ses-dis}).
Also, as an application of Theorem~\ref{theorem:bq},
this thesis has discussed simulability of BQ measurement (Theme B),
and this thesis has shown non-simulability of BQ measurement (Theorem~\ref{theorem:non-sim-1}).

In Chapter~\ref{chap:4},
we have discussed ESs with self-duality and group symmetry.
As a result of Theme C,
we have shown that an ES with symmetric cone is uniquely determined as the SES (Theorem~\ref{prop:sym1}).
Also, we have revealed that an ES with global unitary symmetry is limited to the SES (Theorem~\ref{prop:global2}).
Next,
this thesis has given a general theory about self-duality for the investigation of ESs with self-duality.
We have shown that any pre-dual cone can be modified to a self-dual cone (Theorem~\ref{theorem:sd})
and that an exact hierarchy of pre-dual cones corresponds to an independent family of self-dual cones (Theorem~\ref{theorem:hie1}).
Applying this general theory,
we have shown that infinite existence of $\epsilon$-PSESs for any $\epsilon>0$ (Theorem~\ref{theorem:main}).
Moreover,
as the operational difference between the SES and PSESs,
we have shown that there exist $\epsilon$-PSESs with non-orthogonal perfect discrimination for any $\epsilon>0$ (Theorem~\ref{theorem:dist}).

In this way,
we have investigated the diversity of ESs,
i.e.,
possible structure of quantum composite systems in GPTs.
This thesis has clarified that there are several types of ESs and some of them are similar to the SES but different from the SES.

\section{Open Problems}\label{sect:5-2}

Finally, we enumerate open problems in this thesis.

In Theme A,
as Theorem~\ref{theorem:ses-dis},
we have given a characterization of the SES by the condition $\mathrm{Err}_{\cC}(\rho_1;\rho_2)=1-\frac{1}{2}\|\rho_1-\rho_2\|_1$
when we impose an additional condition $\cC\subset\mathrm{SES}(A;B)$.
No counterexample is known when we relax the condition $\cC\subset\mathrm{SES}(A;B)$.
Therefore,
it is an open problem whether the condition $\mathrm{Err}_{\cC}(\rho_1;\rho_2)=1-\frac{1}{2}\|\rho_1-\rho_2\|_1$
characterizes the SES without the assumption of the condition $\cC\subset\mathrm{SES}(A;B)$.

In Theme B,
as Theorem~\ref{theorem:non-sim-1},
we have shown the non-simulability of BQ measurement.
The proof of Theorem~\ref{theorem:non-sim-1} depends on the extraordinary performance of BQ measurement.
Similarly,
it is considered that AQ measurement might be non-simulable
because AQ measurement also has an extraordinary performance.
This is an open problem.

In Theme C,
as Theorem~\ref{prop:sym2},
we have shown that a $\mathrm{GU}(A;B)$-symmetric ES is uniquely determined as the SES.
However, $\mathrm{GU}(A;B)$-symmetry is not derived reasonably from local structures.
On the other hand,
the symmetry of the local unitary group $\mathrm{LU}(A;B)$,
defined as
\begin{align}
\begin{aligned}
	\mathrm{LU}(A;B):=\{g\in\mathrm{GL}(\cT(\cH_A\otimes\cH_B))&\mid g(\cdot):=(U_A^\dag\otimes U_B^\dag) (\cdot) (U_A\otimes U_B)\\
	 U_A,&U_B \ \mbox{are unitary matrices on $\cH_A,\cH_B$}\},\label{eq:lu}
\end{aligned}
\end{align}
is naturally derived from local structures.
Therefore,
an important problem is variety of entanglement structures with $\mathrm{LU}(A;B)$-symmetry.
Here, we give the following two important examples:
\begin{enumerate}[(EI)]
	\item $\Gamma(\mathrm{SES}(A;B))$ (where $\Gamma$ is the partial transposition map that transposes Bob's system)
	\item $\cC_r^\ast(\cP)$ (where $\cP$ is an $\mathrm{LU}(A;B)$-symetric subset of $\mathrm{MEOP}(A;B)$)
\end{enumerate}
These two examples satisfy two of three conditions,
$\mathrm{LU}(A;B)$-symmetry, self-duality, and $\epsilon$-undistinguishablity.
On the other hand, no known example satisfies the above three conditions except for $\mathrm{SES}(A;B)$.
Therefore, it remains open whether there exists a model that satisfies these three conditions and that is different from SES.

In Theme D,
as Theorem~\ref{theorem:sd},
we have shown the existence of a self-dual modification.
However, because our proof depends on Zorn's Lemma,
we have not given a self-dual cone explicitly.
Also, the reference \cite{BF1976} does not give constructive self-dual cones.
It is an open problem to give a self-dual modification by an explicit form.

In Theme E,
as Theorem~\ref{theorem:dist},
we have shown that some types of PSESs have an extraordinary performance for discrimination tasks.
This result implies the possibility that orthogonal discrimination can characterize the standard entanglement structure rather than self-duality.
In other words, we propose the following conjecture as a considerable statement, which is a future work.
\begin{conjecture}\label{conj1}
	If a model of the quantum composite system $\cC$ is not equivalent to the SES,
	$\cC$ has a pair of two non-orthogonal states discriminated perfectly by a measurement in $\cC$.
\end{conjecture}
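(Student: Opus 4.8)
The plan is to recast the conjecture through the characterization already established in Theorem~\ref{theorem:bq}. That theorem shows a two-outcome DOVM perfectly discriminates a non-orthogonal pair of pure states precisely when it is BQ, and since $\mathrm{SEP}(A;B)\subset\cC$ forces $\cC^{\ast}\subset\mathrm{SEP}^{\ast}(A;B)$, every measurement in $\cM_2(\cC,I)$ is a DOVM. Hence the conjecture is equivalent to the following: whenever $\cC\neq\mathrm{SES}(A;B)$ there exist states $\rho_1,\rho_2\in\cS(\cC,I)$ with $\Tr\rho_1\rho_2>0$ and a measurement $\{M_1,M_2\}\in\cM_2(\cC,I)$ realizing $\Tr\rho_iM_j=\delta_{ij}$. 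This splits into an \emph{effect-side} task (realizing a BQ measurement inside $\cC^{\ast}$) and a \emph{state-side} task (placing the witnessing states inside $\cC$, not merely inside the SES).

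I would first organize the effect side by a case split on the position of $\cC$ relative to $\mathrm{SES}(A;B)$. When $\cC\subsetneq\mathrm{SES}(A;B)$, Proposition~\ref{prop:dual-inc} gives $\cC^{\ast}\supsetneq\mathrm{SES}(A;B)$, and the construction in the proof of Theorem~\ref{theorem:ses-dis}—taking a non-positive $T\in\cC^{\ast}\setminus\mathrm{SES}(A;B)$ and rescaling to $T'=T/\lambda_d(T)$ so that $\{T',I-T'\}\in\cM_2(\cC,I)$ is BQ—settles the effect side immediately, because $I-T'\in\mathrm{SES}(A;B)\subset\cC^{\ast}$. When $\cC\supsetneq\mathrm{SES}(A;B)$ we instead have $\cC^{\ast}\subsetneq\mathrm{SES}(A;B)$, every effect is positive semidefinite, and no BQ measurement exists; here one must exploit the enlarged state space, seeking non-positive $\rho_1,\rho_2\in\cC\setminus\mathrm{SES}(A;B)$ separated by a separable POVM while keeping $\Tr\rho_1\rho_2>0$. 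The remaining (incomparable) case is the hardest, as one must simultaneously produce an effect $M\in\cC^{\ast}$ with $\lambda_1(M)<0$, $\lambda_d(M)\ge1$ and $I-M\in\cC^{\ast}$, which is no longer automatic.

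The state-side task is where the genuine difficulty lies, and I expect it to be the main obstacle. In the sub-case $\cC\subsetneq\mathrm{SES}(A;B)$ the states produced by the proof of Theorem~\ref{theorem:bq} are specific superpositions of the extreme eigenvectors of $T'$, which are generically entangled and may fail to lie in the smaller cone $\cC$. My plan is to use the freedom in choosing $T'$: letting it range over the non-positive extremal rays of $\cC^{\ast}$ and tracking the witnessing states as a continuous family, I would argue that some choice—plausibly one near the separable boundary, whose witnessing states approach product states and hence lie near $\mathrm{SEP}(A;B)\subseteq\cC$—places both states in $\cC$. The explicit PSES computation behind Theorem~\ref{theorem:dist}, where membership $\rho_1,\rho_2\in\tilde{\cC}_r(\cP_0(\vec{P}))$ was verified directly through the positivity estimates \eqref{eq:rho1M1}--\eqref{eq:rho1M2}, should serve as the template to abstract into a general membership criterion.

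Finally I would try to merge the cases by the duality that perfect discrimination of $\rho_1,\rho_2\in\cC$ by $M_1,M_2\in\cC^{\ast}$ is formally symmetric under exchanging the roles of states and effects, i.e. under $\cC\leftrightarrow\cC^{\ast}$; since $\mathrm{SES}(A;B)\subsetneq\cC$ is equivalent to $\cC^{\ast}\subsetneq\mathrm{SES}(A;B)$, this swap carries the effect-side machinery of the easy sub-case onto the hard one. The residual and genuinely obstructive point—and the reason I regard the statement as open—is that the non-orthogonality requirement $\Tr\rho_1\rho_2>0$ is a condition on the states alone and is \emph{not} preserved by the $\cC\leftrightarrow\cC^{\ast}$ exchange, so duality cannot by itself close the argument and a quantitative control of the eigenvector geometry of the extremal rays of $\cC$ and $\cC^{\ast}$ seems unavoidable.
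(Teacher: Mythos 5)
You have not proved this statement, and neither does the paper: it is stated as Conjecture~\ref{conj1} in the Open Problems section, offered as future work motivated by Theorem~\ref{theorem:dist}, so there is no proof in the thesis to compare against. The only question is whether your partial analysis is sound and whether you locate the obstruction where it actually lies; on both counts you do. Your reduction via Theorem~\ref{theorem:bq} is stated carefully and correctly: that theorem places the witnessing pure states only in $\cS(\mathrm{SES}(A;B),I)$, not in $\cS(\cC,I)$, and this is precisely why the paper can prove Theorem~\ref{theorem:ses-dis} (a minimum-error statement, whose witnesses $\rho_1=\frac{1}{d}I$ and $\rho_2=\frac{1}{d}I+\frac{1}{\sqrt{2}d}(E_1-E_d)$ can be taken inside $\mathrm{SEP}(A;B)\subset\cC$ via Proposition~\ref{prop:sep-ball}) but cannot prove the conjecture even in the sub-case $\cC\subsetneq\mathrm{SES}(A;B)$: the perfect-discrimination witnesses from the proof of Theorem~\ref{theorem:bq} are superpositions of extreme eigenvectors of the BQ effect, which are generically entangled and need not lie in $\cC$. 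One small point in your favor against the paper's own text: the measurement $\{T',I-T'\}$ built in the proof of Theorem~\ref{theorem:ses-dis} has $\lambda_d(T')=1$ and $\lambda_1(T')<0$, so by Definition~\ref{def:bq} it is BQ, as you say (the paper's proof calls it AQ, which is inconsistent with Definition~\ref{def:aq}; the application of Theorem~\ref{theorem:aq} goes through either way since it only needs $\mathrm{BQ}\cup\mathrm{AQ}$).

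Your identification of the genuinely open ingredient is also the right one. The only positive evidence the paper has for the conjecture is Theorem~\ref{theorem:dist}, where membership $\rho_1,\rho_2\in\tilde{\cC}_r(\cP_0(\vec{P}))$ is verified by the hand computation \eqref{eq:rho1M1}--\eqref{eq:rho1M2} for one explicitly constructed family of models; turning that into a membership criterion valid for an arbitrary $\cC\neq\mathrm{SES}(A;B)$ (including the incomparable case, and the case $\cC\supsetneq\mathrm{SES}(A;B)$ where all effects are positive semi-definite and the burden falls entirely on non-positive states) is exactly what nobody knows how to do. Your closing observation that the state--effect swap $\cC\leftrightarrow\cC^\ast$ cannot finish the argument because non-orthogonality $\Tr\rho_1\rho_2>0$ is a condition on the state side alone is correct and is a useful way to see why no purely formal duality argument will close the conjecture. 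In short: your proposal contains no false steps, but it is, as you yourself conclude, a program rather than a proof---which is the same status the statement has in the paper.
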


\renewcommand{\bibname}{References}
\addcontentsline{toc}{chapter}{References}

\appendix

\chapter{Examples of Models of GPTs}

\section{Examples of ESs with Important Properties}\label{appe:2}

In this section,
we give ESs that are counterexamples for some important mathematical properties.
In Section~\ref{appe:2-1},
we show that the model $\mathrm{SEP}(A;B)$ is a typical example that does not satisfy entropy preserving spectrality.
In Section~\ref{appe:2-2},
we give an ES that satisfies 1-symmetry but does not satisfy 2-symmetry.

\subsection{Counterexample of Entropy Preserving Spectrality}\label{appe:2-1}

First, we define entropy preserving spectrality.
\begin{definition}[entropy preserving spectrality \cite{MAB2022}]\label{def:ent}
	Given a state $\rho$,
	we say that $\rho$ has entropy non-preserving spectral decompositions
	if there exist two decompositions of $\rho$ over pairs of perfectly distinguishable pure state $\{\rho_i\}$ and $\{\sigma_j\}$ as
	\begin{align}
		\rho=\sum_i p_i\rho_i=\sum_j q_j \sigma_j
	\end{align}
	satisfying the relation
	\begin{align}
		-\sum_i p_i\log p_i\neq-\sum_j q_j\log q_j.
	\end{align}
	where $(p_i)$ and $(q_j)$ are probability vectors.
	
	We then say that an ES $\cC$ satisfies entropy preserving spectrality
	if any state $\rho\in\cS(\cC,I)$ does not have entropy non-preserving spectral decompositions.
\end{definition}
From the viewpoint of physics,
entropy preserving spectrality means the consistency with thermodynamics.
The SES satisfies Definition~\ref{def:ent}.
On the other hand, there exists an ES that does not satisfy Definition~\ref{def:ent}.
The ES is constructed as follows.

At first,
Theorem~\ref{preceding-1} shows that
the following two separable states are perfectly distinguishable in $\mathrm{SEP}(A;B)$:
\begin{align}
    \rho_1
    &=
    \begin{bmatrix}
        1 & 0\\
        0 & 0
    \end{bmatrix}
    \otimes
    \begin{bmatrix}
        1 & 0\\
        0 & 0
    \end{bmatrix},\\
    \rho_2
    &=
    \cfrac{1}{2}\begin{bmatrix}
        1 & 1\\
        1 & 1
    \end{bmatrix}
    \otimes
    \cfrac{1}{2}\begin{bmatrix}
        1 & 1\\
        1 & 1
    \end{bmatrix}.
\end{align}
Also, the reference \cite{Arai2019} gives the following measurement $\{e_1,e_2\}$ that discriminate $\{\rho_1,\rho_2\}$ perfectly:
\begin{align}\label{eq:meas-1}
    e_1(\rho)
    &=\Tr \left\{
    \cfrac{1}{2}
    \begin{bmatrix}
    	2 & 0 & 0 & -1\\
    	0 & 0 & -1 & 0\\
    	0 & -1 & 0 & 0\\
    	-1 & 0 & 0 & 2
    \end{bmatrix}
    \rho
    \right\},\\
    e_2(\rho)\label{eq:meas-2}
    &=\Tr \left\{
    \cfrac{1}{2}
    \begin{bmatrix}
    	0 & 0 & 0 & 1\\
    	0 & 2 & 1 & 0\\
    	0 & 1 & 2 & 0\\
    	1 & 0 & 0 & 0
    \end{bmatrix}
    \rho
    \right\}.
\end{align}
Next,
we extend $\mathrm{SEP}(A;B)$ slightly.
Consider the following density matrices with unit rank:
\begin{align}
	\sigma_1
	&=\cfrac{1}{6}
	\begin{bmatrix}
    	3 & \sqrt{3} & \sqrt{3} & \sqrt{3}\\
    	\sqrt{3} & 1 & 1 & 1\\
    	\sqrt{3} & 1 & 1 & 1\\
    	\sqrt{3} & 1 & 1 & 1
    \end{bmatrix},\\
    \sigma_2
    &=\cfrac{1}{6}
    \begin{bmatrix}
    	3 & -\sqrt{3} & -\sqrt{3} & -\sqrt{3}\\
    	-\sqrt{3} & 1 & 1 & 1\\
    	-\sqrt{3} & 1 & 1 & 1\\
    	-\sqrt{3} & 1 & 1 & 1
    \end{bmatrix}.
\end{align}
Because $\sigma_1$ and $\sigma_2$ are not separable,
$\sigma_1,\sigma_2\not\in \mathrm{SEP}(A;B)$.
Then consider the following ES $\cC$:
\begin{align}
	\cC:=
	\mathrm{Hul}\left(\mathrm{SEP}(A;B)\cup\{\sigma_1,\sigma_2\}\right).
\end{align}
We remark that $\rho_i$ and $\sigma_j$ are pure because they are rank 1 matrices.

Because the inclusion relation $\mathrm{SEP}(A;B)\subset\cC\subset\mathrm{SES}(A;B)$,
the inclusion relation $\cM(\mathrm{SES}(A;B),I)\subset\cM(\cC,I)\subset\cM(\mathrm{SEP}(A;B),I)$ holds.
In particular,
because $e_j$ given in in Eqs.~\eqref{eq:meas-1} and~\eqref{eq:meas-2} satisfies $e_j(\sigma_i)\ge0$ for all $i,j$,
the measurement $\{e_1,e_2\}$ belongs to $\cM(\cC,I)$.
Because the two states $\sigma_1,\sigma_2$ are orthogonal quantum states, they are perfectly distinguished by a measurement in $\cM(\mathrm{SES}(A;B),I)$.
Therefore, the states $\sigma_1,\sigma_2$ are perfectly distinguishable in $\cC$.
This implies that the state $\rho:=\cfrac{1}{3}\rho_1+\cfrac{2}{3}\rho_2$ can be decomposed into perfectly distinguishable pure states in two different ways, as follows:
\begin{align}
	\rho&=\frac{1}{3}\rho_1+\frac{2}{3}\rho_2,\\
	&=\frac{3+\sqrt{3}}{6}\sigma_1+\frac{3-\sqrt{3}}{6}\sigma_2\;,
\end{align}
which clearly possess two different values of entropy.

\subsection{Example of 1-Symmetry and not 2-Symmetry}\label{appe:2-2}

Next, we give an example of ES that satisfies 1-symmetry but does not satisfy 2-symmetry.
First, we define $k$-symmetry.
\begin{definition}[$k$-symmetry]
	We say that a model $\cC$ is $k$-symmetric if there exists a transformation
	$f\in\cT(\cC,u)$ such that $\rho_i = f(\sigma_i)$, for $i = 1,\cdots, k$, for any pair of $k$-tuples of perfectly distinguishable pure states $\{\rho_i\}_{i=1}^k$ and $\{\sigma_i\}_{i=1}^k$.
\end{definition}

Now we show the difference between strong symmetry and weak symmetry by giving the following counterexample.
\begin{theorem}\label{theorem:k-sym}
	$\mathrm{SEP}(A;B)$ is $1$-symmetric but not $2$-symmetric.
\end{theorem}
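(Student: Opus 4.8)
The plan is to treat the two claims separately, using the fact (from the extremal‑ray structure of $\mathrm{SEP}(A;B)$) that the pure states of $\mathrm{SEP}(A;B)$ are exactly the product pure states $\ketbra{a}{a}\otimes\ketbra{b}{b}$ with $\ket{a},\ket{b}$ unit vectors, together with the perfect‑distinguishability criterion of Theorem~\ref{preceding-1}. For $1$-symmetry I would exhibit an explicit channel that moves one pure state to another; for the failure of $2$-symmetry I would produce two pairs of perfectly distinguishable product pure states that differ in an invariant preserved by every admissible transformation, namely the Hilbert--Schmidt overlap $\Tr(\cdot\,\cdot)$.

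For $1$-symmetry, given two product pure states $\sigma_1=\ketbra{c_A}{c_A}\otimes\ketbra{c_B}{c_B}$ and $\rho_1=\ketbra{a_A}{a_A}\otimes\ketbra{a_B}{a_B}$, I would pick unitaries $U_A,U_B$ with $U_A\ket{c_A}=\ket{a_A}$ and $U_B\ket{c_B}=\ket{a_B}$ and set $f(X)=(U_A\otimes U_B)X(U_A\otimes U_B)^\dagger$. Local unitary conjugation sends separable states to separable states and is trace preserving, so $f\in\cT(\mathrm{SEP}(A;B),I)$, and $f(\sigma_1)=\rho_1$. Since $\sigma_1,\rho_1$ are arbitrary pure states of $\mathrm{SEP}(A;B)$, this gives $1$-symmetry.

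For the failure of $2$-symmetry, I would take the orthogonal pair $\rho_1=\ketbra{00}{00}$, $\rho_2=\ketbra{11}{11}$ and the non‑orthogonal pair $\sigma_1=\ketbra{00}{00}$, $\sigma_2=\ketbra{++}{++}$ with $\ket{+}=(\ket0+\ket1)/\sqrt2$. Theorem~\ref{preceding-1} shows both pairs are perfectly distinguishable in $\mathrm{SEP}(A;B)$, since $0+0\le 1$ for the first pair and $\tfrac12+\tfrac12\le 1$ for the second. Their overlaps differ: $\Tr(\rho_1\rho_2)=0$ while $\Tr(\sigma_1\sigma_2)=\lvert\braket{00|++}\rvert^{2}=\tfrac14$. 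Any channel $f\in\cT(\mathrm{SEP}(A;B),I)$ preserves the Hilbert--Schmidt inner product: by the classification of the automorphism group of the separable cone \cite{Friedland2011}, the cone automorphisms of $\mathrm{SEP}(A;B)$ have the form $X\mapsto (A\otimes B)X(A\otimes B)^\dagger$ (optionally composed with transposition and, when $\dim\cH_A=\dim\cH_B$, the swap), and trace preservation forces $(A\otimes B)^\dagger(A\otimes B)=I$, i.e.\ $A\otimes B$ unitary; all of these building blocks preserve $\Tr(XY)$. Hence a transformation $f$ with $f(\sigma_i)=\rho_i$ would give $\tfrac14=\Tr(\sigma_1\sigma_2)=\Tr\big(f(\sigma_1)f(\sigma_2)\big)=\Tr(\rho_1\rho_2)=0$, a contradiction, so no such $f$ exists and $\mathrm{SEP}(A;B)$ is not $2$-symmetric.

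The main obstacle is the subtlety in the class of admissible transformations. A \emph{general} cone automorphism $X\mapsto(A\otimes B)X(A\otimes B)^\dagger$ with non‑unitary invertible $A,B$ need not preserve $\Tr(XY)$ and can in fact carry the orthogonal pair onto the non‑orthogonal pair while keeping the two chosen states normalized; for instance, with $A\ket0=\ket0$ and $A\ket1=\ket+$ one checks $A$ is invertible and $(A\otimes A)(\cdot)(A\otimes A)^\dagger$ sends $\{\ketbra{00}{00},\ketbra{11}{11}\}$ to $\{\ketbra{00}{00},\ketbra{++}{++}\}$. Thus the argument genuinely relies on $\cT(\mathrm{SEP}(A;B),I)$ denoting \emph{trace‑preserving} automorphisms (channels), and the crucial step is precisely showing that trace preservation forces $A\otimes B$ to be unitary so that Hilbert--Schmidt overlaps, and hence orthogonality of product pure states, are invariant. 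This is exactly where \cite{Friedland2011} is invoked.
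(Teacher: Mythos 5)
Your proof is correct and follows essentially the same route as the paper's: $1$-symmetry via transitivity of local unitaries on product pure states, and failure of $2$-symmetry via exactly the same two dichotomies $\{\ketbra{00}{00},\ketbra{11}{11}\}$ and $\{\ketbra{00}{00},\ketbra{++}{++}\}$, the Friedland--Li--Poon--Sze classification (the paper's Lemma~\ref{lemma:sep_transf}), and invariance of the Hilbert--Schmidt overlap. Your closing caveat---that the argument requires reading the admissible transformations as trace-preserving (channel) automorphisms, since non-unitary local conjugations such as $(A\otimes A)(\cdot)(A\otimes A)^\dagger$ with $A\ket{0}=\ket{0}$, $A\ket{1}=\ket{+}$ are cone automorphisms that do carry the orthogonal pair onto the non-orthogonal one---is a genuine sharpening of a point the paper leaves implicit in its statement of Lemma~\ref{lemma:sep_transf}.
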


In order to show this theorem,
we apply the following lemma \cite[Theorem 3]{Friedland2011}.
\begin{lemma}\label{lemma:sep_transf}
	For the linear map $f$ from $\Her{\cH_{A}\otimes\cH_{B}}\to\Her{\cH_{A}\otimes\cH_{B}}$, the following are equivalent:
	\begin{enumerate}[(i)]
		\item $f\in\cT(\mathrm{SEP(A;B)})$.
		\item $f(\mathrm{Ext}(\Psd{\cH_A})\otimes\mathrm{Ext}(\Psd{\cH_B}))=\mathrm{Ext}(\Psd{\cH_A})\otimes\mathrm{Ext}(\Psd{\cH_B})$.
		\item $f(X\otimes Y)=f_{A}(X)\otimes f_{B}(Y)$, or $\dim\cH_{A}=\dim\cH_{B}$ and $f(X\otimes Y)=f_{B}(Y)\otimes f_{A}(X)$, where $f_{A}(X)=U_{A} X U_{A}^\dag$ or $U_{A} X^{\top}U_{A}^\dag$ and $f_{B}(Y)=V_{B}YV_{B}^\dag$ or $V_{B}Y^{\top}V_{B}^\dag$.
	\end{enumerate}
\end{lemma}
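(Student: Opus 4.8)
\textbf{[OUTLINE]}
The plan is to prove the cycle $(i)\Leftrightarrow(ii)$ and $(ii)\Rightarrow(iii)\Rightarrow(i)$, so that essentially all of the work is concentrated in the single implication $(ii)\Rightarrow(iii)$. Throughout I read condition $(i)$ in the sense of the cited theorem, namely that $f$ preserves the separable state space (equivalently, $f\in\cT(\mathrm{SEP}(A;B))$ together with preservation of the order unit, i.e.\ of the trace); this normalization is exactly what distinguishes unitary congruences from arbitrary ones and makes the characterization sharp.

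\textbf{[Easy directions.]}
First I would check $(iii)\Rightarrow(i)$ by direct computation: each factor map $f_A(X)=U_A X U_A^\dagger$ or $U_A X^\top U_A^\dagger$ sends a trace-one rank-one projection $|\psi\rangle\langle\psi|$ to $|U_A\psi\rangle\langle U_A\psi|$ (resp.\ its complex conjugate), again a trace-one rank-one projection, so $f_A\otimes f_B$---and, when $\dim\cH_A=\dim\cH_B$, its composition with the swap---carries the spanning set $\{|\psi\rangle\langle\psi|\otimes|\phi\rangle\langle\phi|\}$ of $\mathrm{SEP}(A;B)$ bijectively onto itself and fixes the trace, giving $(i)$. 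For $(i)\Leftrightarrow(ii)$ I would use that the separable state space is a convex body whose extreme points are precisely the pure product states; since $f$ restricts to a linear bijection of this body, it permutes its extreme points, which is condition $(ii)$, and conversely a linear bijection fixing this extreme set preserves its convex hull and hence the separable states.

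\textbf{[Main implication $(ii)\Rightarrow(iii)$.]}
Given that $f$ permutes the pure product states, it induces a bijection $\Phi$ of $\mathbb{P}(\cH_A)\times\mathbb{P}(\cH_B)$, and the heart of the argument is to recover the two tensor factors intrinsically. For each pure $|\psi\rangle$ the set $|\psi\rangle\langle\psi|\otimes\Psd{\cH_B}$ is a face of $\mathrm{SEP}(A;B)$ linearly isomorphic to the full cone $\Psd{\cH_B}$, and symmetrically $\Psd{\cH_A}\otimes|\phi\rangle\langle\phi|$ is isomorphic to $\Psd{\cH_A}$; these are exactly the maximal faces of $\mathrm{SEP}(A;B)$ isomorphic to a full positive-semidefinite cone. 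Since $f$ preserves the face lattice, it permutes these two families of coordinate slices, so $\Phi$ either respects the two rulings of the grid---yielding $\Phi=\alpha\times\beta$ for bijections $\alpha$ of $\mathbb{P}(\cH_A)$ and $\beta$ of $\mathbb{P}(\cH_B)$---or interchanges the families, which forces $\Psd{\cH_A}\cong\Psd{\cH_B}$, hence $\dim\cH_A=\dim\cH_B$, and produces the swap alternative. To pin each factor, I would fix $|\phi\rangle$ and restrict $f$ to the slice $\Psd{\cH_A}\otimes|\phi\rangle\langle\phi|$; after the canonical identifications this is a linear automorphism of $\Psd{\cH_A}$ carrying trace-one rank-one projections to trace-one rank-one projections, i.e.\ a symmetry of the quantum state space of $\cH_A$, which by the Wigner--Kadison theorem is $X\mapsto U_A X U_A^\dagger$ or $X\mapsto U_A X^\top U_A^\dagger$; the transpose supplies the anti-unitary case. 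The same gives $f_B$, and since the pure product states span $\Her{\cH_A\otimes\cH_B}$, the action of $f$ on them determines $f$ globally and matches the product (or swapped) map assembled from $f_A,f_B$, which is the form in $(iii)$.

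\textbf{[Main obstacle.]}
The hard part is the face-lattice step: proving that the two families of coordinate slices are genuinely distinguished inside the face lattice of $\mathrm{SEP}(A;B)$, so that $\Phi$ must respect the product grid rather than mixing the two factors, and controlling precisely when the swap can occur. Once the factorization is in hand, the reduction to the Wigner--Kadison rigidity theorem on each factor is routine.
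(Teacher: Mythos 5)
The paper itself offers no proof of this lemma: it is imported verbatim as \cite[Theorem 3]{Friedland2011}, so your proposal has to stand on its own rather than be matched against an argument in the text. Your skeleton is the right one and is, in outline, how preserver theorems of this type are proved: $(iii)\Rightarrow(i)$ and $(i)\Leftrightarrow(ii)$ are routine, and your normalization remark is a genuine catch---as literally stated, $\cT(\mathrm{SEP}(A;B))$ contains positive scalings such as $2\,\mathrm{id}$, which satisfy $(i)$ and $(ii)$ (read on rays) but not $(iii)$, so the trace-preserving reading from the cited theorem is needed for the equivalence to be sharp. Your plan for $(ii)\Rightarrow(iii)$---recover the two rulings $\ketbra{\psi}{\psi}\otimes\Psd{\cH_B}$ and $\Psd{\cH_A}\otimes\ketbra{\phi}{\phi}$ intrinsically, then apply Wigner--Kadison rigidity on each factor and use that pure product states span $\Her{\cH_A\otimes\cH_B}$---is sound in shape, and the points you leave implicit are easily patched: distinct $A$-slices meet only in $\{0\}$ while every $A$-slice meets every $B$-slice in a ray, so injectivity of $f$ forbids sending some $A$-slices to $A$-slices and others to $B$-slices; and trace preservation removes the scalar ambiguity in $f(P\otimes Q)=c(P,Q)\,f_A(P)\otimes f_B(Q)$, after which agreement on a spanning set finishes the proof.

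However, the step you yourself flag as the ``main obstacle'' is a genuine gap, not a deferred verification: you assert without proof that the two families of slices are \emph{exactly} the maximal faces of $\mathrm{SEP}(A;B)$ isomorphic to a full positive-semidefinite cone, and the whole of the only hard implication hangs on this. The face lattice of the separable cone is genuinely wild, and there exist faces that superficially resemble slices: for $\dim\cH_A=\dim\cH_B=2$, the witness $W=I-2\ketbra{\Omega}{\Omega}$ (with $\Omega$ maximally entangled) cuts out the exposed face of $\mathrm{SEP}(A;B)$ whose extreme rays are precisely the Bloch sphere of ``twisted'' product states $\ketbra{u}{u}\otimes\ketbra{\bar u}{\bar u}$, where $\bar u$ is the entrywise conjugate in the Schmidt basis of $\Omega$. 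Excluding such sphere-parametrized families is the actual content of your claimed characterization. (This particular face is indeed not isomorphic to $\Psd{\mathbb{C}^2}$: it contains the full-rank isotropic element $\int \ketbra{u}{u}\otimes\ketbra{\bar u}{\bar u}\,\mathrm{d}u$, which cannot be a sum of two of its extreme rays, whereas every element of $\Psd{\mathbb{C}^2}$ is a sum of at most two extreme rays---but an argument of exactly this kind, in all dimensions and for all candidate faces, is what you owe.) A cleaner, fully elementary substitute is to classify only the faces generated by \emph{pairs} of extreme rays: if $x=P_1\otimes Q_1$ and $y=P_2\otimes Q_2$ with $P_1\neq P_2$ and $Q_1\neq Q_2$, the two-dimensional span of the underlying product vectors contains exactly two product directions (a Schmidt-rank computation), so the generated face is the simplicial cone on $x$ and $y$; if exactly one local factor coincides, the generated face is a slice $\cong\Psd{\mathbb{C}^2}$ with a continuum of extreme rays. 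This dichotomy is manifestly $f$-invariant, its maximal cliques are exactly the two rulings, and with it the rest of your argument goes through. As written, though, your proposal proves the easy implications and only gestures at the decisive one.
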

\begin{proof}[Proof of Theorem \ref{theorem:k-sym}]
	Since $\cF(\mathrm{SEP(A;B)})$ contains all local unitary maps, $\mathrm{SEP(A;B)}$ clearly satisfies $1$-symmetry.
	
	Now, we show that $\mathrm{SEP(A;B)}$ is not $2$-symmetric by giving a counterexample.
	Take the following four separable pure states:
	\begin{align}
		\rho_1&=\rho_1^{\mathrm{A}}\otimes\rho_1^{\mathrm{B}}=
		\begin{bmatrix}
			1 & 0\\
			0 & 0
		\end{bmatrix}\otimes
		\begin{bmatrix}
			1 & 0\\
			0 & 0
		\end{bmatrix},\label{eq:ex_sep1}\\
		\rho_2&=\frac{1}{2}
		\begin{bmatrix}
			1 & 1\\
			1 & 1
		\end{bmatrix}\otimes\frac{1}{2}
		\begin{bmatrix}
			1 & 1\\
			1 & 1
		\end{bmatrix},\label{eq:ex_sep2}\\
		\sigma_1&=
		\begin{bmatrix}
			1 & 0\\
			0 & 0
		\end{bmatrix}\otimes
		\begin{bmatrix}
			1 & 0\\
			0 & 0
		\end{bmatrix},\label{eq:ex_sep3}\\
		\sigma_2&={}
		\begin{bmatrix}
			0 & 0\\
			0 & 1
		\end{bmatrix}\otimes
		\begin{bmatrix}
			0 & 0\\
			0 & 1
		\end{bmatrix}\label{eq:ex_sep4}.
	\end{align}
	By direct inspection, we can verify that the two dichotomies $\{\rho_1,\rho_2\}$ and $\{\sigma_1,\sigma_2\}$ both satisfy condition~\eqref{eq:sep-dist} and thus, by Theorem \ref{preceding-1}, both contain perfectly distinguishable pure states in $\mathrm{SEP(A;B)}$.
	
	Assume that there is a map $f\in\cT(\mathrm{SEP}(A;B))$ where $\sigma_1=f(\rho_1)$ and $\sigma_2=f(\rho_2)$.
	From Lemma \ref{lemma:sep_transf}, the following equality should hold:
	\begin{equation}
		\begin{split}
			\Tr\{\sigma_1\sigma_2\}&=\Tr\{f_{A}(\rho_1^{A})f_{A}(\rho_2^{A})\otimes f_{B}(\rho_1^{B})f_{B}(\rho_2^{B})\}\\
			&=\Tr\{\rho_1^{A}\rho_2^{A}\otimes\rho_1^{B}\rho_2^{B}\}\\
			&=\Tr\{\rho_1\rho_2\}.\label{eq:sep_tr}
		\end{split}
	\end{equation}
	However, now we have
	\begin{equation}
		\Tr\{\rho_1\rho_2\}=\frac{1}{4},\quad \Tr\{\sigma_1\sigma_2\}=0.
	\end{equation}
	This contradicts \eqref{eq:sep_tr}.
	Thus $\mathrm{SEP}(A;B)$ is $1$-symmetric but not $2$-symmetric.
\end{proof}

\section{Example of General Models with Non-Simulable Measurements}\label{appe:2-3}

In this section,
we address a general model of GPTs that is not an ES.
As mentioned in Section~\ref{sect:3-2-3},
general models sometimes contain non-simulable measurements,
and here, we give an example of such models, called \textit{shrunk Bloch sphere}.

Let us consider the set $\Her{\cH}$ with $\dim\cH=2$.
For a parameter $p\in(0,1)$, we define a cone $\cC_p$ as
\begin{align}
	\cC_p:=\left\{p\rho+\frac{1-p}{2}I\ \middle| \ \rho\in\Psd{\cH}\right\},
\end{align}
and we consider the model $(\Her{\cH},\Tr,\cC_p,I)$.
Now, we show that the model contains a non-simulable measurement as follows.

First, we see that for any orthonormal basis $\vec{P}=(P_1,P_2)$ on $\Her{\cH}$, a measurement $\bm{M}(\vec{P}):=\{M(\vec{P}),I-M(\vec{P})\}$
defined as
\begin{align}
	M(\vec{P}):=-\frac{1-p}{2p}P_1+\frac{1+p}{2p}P_2
\end{align}
belongs to $\cM(\cC_p,I)$.
The relation $\bm{M}(\vec{P})\in\cM(\cC_p,I)$ is shown by the inequalities $\Tr \sigma M(\vec{P})\ge0$ and $\Tr \sigma_p \left(I-M(\vec{P})\right)\ge0$ for any $\sigma_p\in\cS(\cC_p,I)$ written as $\sigma_p=p\sigma+\frac{1-p}{2}I$ with $\sigma\in\Psd{\cH}$,
which are shown as follows:
\begin{align}
	&\Tr \sigma M(\vec{P})=\Tr \left(p\sigma+\frac{1-p}{2}I\right) \left(-\frac{1-p}{2p}P_1+\frac{1+p}{2p}P_2\right)\\
	=&-\frac{(1-p)}{2}\Tr \sigma P_1+\frac{(1-p)}{2}\Tr \sigma P_2-\frac{(1-p)^2}{4p}+\frac{(1+p)(1-p)}{4p}\\
	\stackrel{(a)}{\ge} &-\frac{(1-p)}{2}-\frac{(1-p)^2}{4p}+\frac{(1+p)(1-p)}{4p}
	=0.
\end{align}
The inequality $(a)$ is shown by the inequalities $\Tr \sigma P_1\le1$ and $\Tr \sigma P_2\ge0$.
Here, we remark that the equality of $(a)$ is attained with $\sigma_p=\rho_1:=pP_1+\frac{1-p}{2}I$,
which implies the equations $\Tr \rho_1M(\vec{P})=0$ and $\Tr \rho_1\left(I-M(\vec{P})\right)=1$.
The other inequality $\Tr \sigma_p \left(I-M(\vec{P})\right)\ge0$ is shown similarly,
and the state $\sigma_p=\rho_2:=pP_2+\frac{1-p}{2}I$ satisfies the equations $\Tr \rho_2M(\vec{P})=1$ and $\Tr \rho_2\left(I-M(\vec{P})\right)=0$.
In this way,
the measurement $\bm{M}(\vec{P})$ belongs to $\cM(\cC_p,I)$,
and moreover, the measurement $\bm{M}(\vec{P})$ discriminates two states $\rho_1$ and $\rho_2$.
Besides,
the two states $\rho_1$ and $\rho_2$ satisfy the inequality $\Tr \rho_1\rho_2>0$ by their constructions.
In other words,
the measurement $\bm{M}(\vec{P})$ discriminates two non-orthogonal states.

Similarly to the discussion in Section~\ref{sect:3-2-3},
because the measurement $\bm{M}(\vec{P})$ discriminates two non-orthogonal states
any POVM $\bm{N}=\{N_1,N_2\}$ with $N_1,N_2\in\Psd{\cH^{\otimes n}}$ never satisfies the relation
\begin{align}
	M(\vec{P}) \rho=N_1\rho^{\otimes n}\quad \forall \rho\in\cD(\bm{M}(\vec{P}))\cap\cD(\bm{N})
\end{align}
for any natural number $n$.
In other words,
the measurement $\bm{M}(\vec{P})$ is not $n$-simulable for any natural number $n$.

In this way,
the model $(\Her{\cH},\Tr,\cC_p,I)$ contains a non-simulable measurement
even though the model is not an ES.

\end{document}